\tikzstyle{plan}=[draw, rounded corners,align=center]
\tikzstyle{line} = [draw, -latex']
\numberwithin{equation}{section}
\DeclareMathOperator*{\E}{\mathbb{E}}
\DeclareMathOperator*{\Var}{\mathrm{Var}}
\DeclareMathOperator*{\PP}{\mathbb{P}}
\DeclareMathOperator*{\argmin}{arg\,min}
\renewcommand{\d}{\,\mathrm{d}}
\newcommand{\fake}{\mathsf{f}}
\newcommand{\true}{\mathsf{t}}
\newtheorem{theorem}{Theorem}[section]
\newtheorem{lemma}[theorem]{Lemma}
\theoremstyle{definition}
\newtheorem{remark}{Remark}
\newtheorem{proposition}[theorem]{Proposition}
\newtheorem{corollary}[theorem]{Corollary}
\newtheorem{definition}[theorem]{Definition}
\newtheorem{claim}{Claim}
\newtheorem{example}[theorem]{Example}
\title{Efficiently matching random inhomogeneous graphs\\ via degree profiles}
\author{Jian Ding\thanks{Partially supported by NSFC Key Program Project No. 12231002.}\\Peking University \and Yumou Fei\\Peking University \and Yuanzheng Wang\\Peking University}
\begin{document}

\maketitle

\begin{abstract}
	In this paper, we study the problem of recovering the latent vertex correspondence between two correlated random graphs with vastly inhomogeneous and unknown edge probabilities between different pairs of vertices. Inspired by and extending the matching algorithm via degree profiles by Ding, Ma, Wu and Xu (2021), we obtain an efficient matching algorithm as long as the minimal average degree is at least $\Omega(\log^{2} n)$ and the minimal correlation is at least $1 - O(\log^{-2} n)$. 
\end{abstract}
\section{Introduction}

In this paper we study the problem of recovering the latent matching between two correlated random graphs. A key novelty in our work, as illustrated in the following definition, is that we hugely relax the model assumptions in the correlated Erd\H{o}s--R\'enyi graph model.

\begin{definition}\label{def:graphensemble}
A pair of random graphs $(G,H)$ on $n$ vertices is said to be of an $(\alpha,d,\delta)$-edge-correlated distribution if the following hold:
\begin{itemize}[itemsep=0pt]
    \item (Matrix-encoded random graphs). Consider $G$ and $H$ as random symmetric $n\times n$ matrices. The random vectors $(G_{ij},H_{ij})$, for $1\leq i\leq j\leq n$, are mutually independent and take values in $\{0,1\}^{2}$. 
    \item (Inhomogeneous edge probabilities). There are fixed parameters $p_{ij}\in [0,1-\alpha]$ for $i,j\in [n]$, with $p_{ij}=p_{ji}$ and $p_{ii}=0$, such that
    $$\PP[G_{ij}=1]=\PP[H_{ij}=1]=p_{ij},\text{ for }1\leq i\leq j\leq n.$$
    \item (Edge correlations). There are fixed parameters $\delta_{ij}\in [0,\delta]$ for $i,j\in [n]$, with $\delta_{ij}=\delta_{ji}$ and $\delta_{ii}=0$, such that $$\PP[G_{ij}=1,\,H_{ij}=0]=\PP[G_{ij}=0,\,H_{ij}=1]=p_{ij}\delta_{ij}, \text{ for } 1\leq i \leq j\leq n.$$
    \item (Minimal average degree). For each $i\in [n]$, if we define $d_{i}=\sum_{j=1}^{n}p_{ij}$, then $d_{i} \geq d$.
\end{itemize}
\end{definition}

In \Cref{def:graphensemble}, the only minor assumption on the edge probabilities $p_{ij}$, $i,j\in [n]$ (here $[n]=\{1,2,\dots,n\}$) is that they are uniformly bounded away from $1$, quantified by the parameter $\alpha$. We pose this assumption to facilitate the analysis of our variance-stabilizing transformation; see \Cref{subsec:intro_comparison} for more detailed explanations. The edge correlations are $(1-\delta_{ij}-p)/(1-p)$ for $i,j\in [n]$, and thus all the edge correlations are at least $1-\delta$. The parameter $d_{i}$ is the expected degree of the $i$-th vertex in $G$ and $H$, and we assume that all the expected degrees are uniformly bounded from below by the minimal average degree $d$, so as to ``ensure sufficient data''.

In what follows, we showcase the generality of \Cref{def:graphensemble} by giving examples of specific models that fall into the setting, including the correlated Erd\H{o}s-R\'{e}nyi model, the correlated stochastic block model, and the correlated Chung-Lu model.

\begin{example}[Correlated Erd\H{o}s-R\'{e}nyi model $\mathbb{G}(n,q;s)$]\label{eg:correlated er}
Given a positive integer $n$ and $q,s\in [0,1]$, the correlated Erd\H{o}s-R\'{e}nyi model $\mathbb{G}(n,q;s)$ is defined as follows. First sample a parent Erd\H{o}s-R\'{e}nyi graph $A\sim \mathbb{G}(n,q/s)$. Conditioned on $A$, we then sample $G$ and $H$  as independent subgraphs of $A$ obtained by keeping each edge in $A$ with probability $s$ independently. Thus, $(G,H)$ is a pair of correlated Erd\H{o}s-R\'{e}nyi random graphs, each of which has marginal distribution $\mathbb{G}(n,q)$, and the correlations between corresponding edges are $(s-q)/(1-q)$. Then the law of $(G, H)$, denoted as $\mathbb G(n, q; s)$, is a special case of \Cref{def:graphensemble} by taking $p_{ij}=q, \delta_{ij}=1-s$ for $i,j\in [n]$ and taking $d_{i}=(n-1)q$ for $i\in [n]$.
\end{example}

\begin{example}[Correlated stochastic block models]\label{eg:correlated sbm}
Given a positive integer $r\leq n$, we partition the vertex set $[n]$ into disjoint subsets $C_{1},C_{2},\dots,C_{r}$, and also introduce a symmetric $r\times r$ matrix $P$ of edge probabilities. Then under the correlated stochastic block model, the pair of graphs $(G,H)$ is obtained from the parent graph $A$ via subsampling in the same way as in \Cref{eg:correlated er}, whereas $A$ is a random graph consisting of independent edges and edge probability between $u, v$ is $P_{ij}/s$ for different vertices $u,v$ with $u\in C_i$, $v\in C_j$ and $i, j\in [r]$. This model, which introduces inhomogenity to the correlated Erd\H{o}s-R\'{e}nyi model, is again a special case of \Cref{def:graphensemble} by choosing $p_{uv}=P_{ij}$ for $u\in C_{i}$ and $v\in C_{j}$.  
\end{example}

\begin{example}[Correlated Chung-Lu model]\label{eg:correlated cl}
Following the Chung-Lu model introduced in \cite{CL02}, we sample our random graph $A$ as follows: given an expected degree sequence $(w_{1},w_{2},\dots,w_{n})\in\mathbb{R}_{+}^{n}$ that satisfy $\max_{i}w_{i}^{2}<\sum_{i}w_{i}$, for $i, j\in [n]$ we connect an edge between vertices $i$ and $j$ independently with probability  $w_{ij}=w_{i}w_{j}/(\sum_{i}w_{i})$. Given the parent graph $A$, we again sample a pair of conditionally independent subgraphs $(G,H)$ via subsampling in the same manner as in \Cref{eg:correlated er}. Then the correlated Chung-Lu model, i.e., the law of $
(G, H)$, is also a special case of \Cref{def:graphensemble} (except a minor adaptation that we allow self-loops here) by taking $p_{ij}=w_{ij}s$ and $\delta_{ij}=1-s$ for $i,j\in [n]$. The expected degree $d_{i}$ is now equal to $w_{i}s$.  
\end{example}

\Cref{def:graphensemble} generalizes Examples~\ref{eg:correlated er}, \ref{eg:correlated sbm} and \ref{eg:correlated cl} by incorporating the flexibility of vast inhomogeneity for edge probabilities between different pairs of vertices. Our primary contribution in this work is the following theorem. We denote by $\mathfrak S_n$ the collection of all permutations on $[n]$.
\begin{theorem}\label{thm:graphmatching}
For any constant $\alpha \in(0,1)$, there exist constants $c_{1},c_{2},c_{3}$ such that if a pair of random graphs $(G,H)$ follows an $(\alpha,c_{1}\log^{2}n, c_{2}\log^{-2}n)$-edge-correlated distribution, then
$$\PP\left[\mathcal{A}\left(\left(G_{ij}\right)_{i,j\in[n]},\;\left(H_{\pi^{-1}(i),\pi^{-1}(j)}\right)_{i,j\in[n]},\;c_{3}\log n\right)=\pi \text{ for all } \pi \in \mathfrak{S}_{n}\right]\geq 1-O(n^{-1}),$$
where $\mathcal{A}$ stands for \Cref{alg:graphmatching}.
\end{theorem}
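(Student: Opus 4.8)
The plan is to establish the statement for $\pi=\mathrm{id}$ only: by the relabelling invariance of \Cref{def:graphensemble}, recovering a general planted $\pi$ from $\big((G_{ij}),(H_{\pi^{-1}(i),\pi^{-1}(j)})\big)$ is the same as recovering the identity from a directly-observed coupled pair $(G,H)$ drawn from an $(\alpha,c_1\log^2 n,c_2\log^{-2}n)$-edge-correlated distribution. Following the degree-profile paradigm of Ding, Ma, Wu and Xu, I expect \Cref{alg:graphmatching} to run in two stages: a first stage that produces a preliminary matching $\hat\pi_0$ correct on all but $n^{1-\Omega(1)}$ vertices by comparing \emph{variance-stabilized degree profiles}, and a second ``boosting'' stage, run for $c_3\log n$ rounds, that repairs the remaining errors using the edge correlations directly and certifies exactness. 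The key device in the first stage is a variance-stabilizing map $\phi$ of Anscombe / $\arcsin\sqrt{\cdot}$ type, so that $\phi(d_w^G)=\arcsin\sqrt{d_w^G/(n-1)}$ fluctuates on scale $\Theta(1/\sqrt n)$ \emph{uniformly in $w$}, independently of the unknown mean $d_w$; this is exactly the point flagged before \Cref{subsec:intro_comparison}, and it is what lets a single threshold separate true from false pairs even though the raw neighbour-degrees live on wildly different scales. Note the hypothesis $p_{ij}\le 1-\alpha$ enters here: it keeps $d_w$ bounded away from $n-1$, so $\phi'$ does not blow up.

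For the first stage I would proceed as follows. (i) For each vertex $i$ let $\hat\mu_i^G$ be the empirical law of $\{\phi(d_w^G):w\sim_G i\}$, and $\hat\mu_j^H$ likewise; define $\mathrm{cost}(i,j)$ to be a suitable (Wasserstein-type, or projected-$\ell_2$) distance between $\hat\mu_i^G$ and $\hat\mu_j^H$. (ii) \emph{True pairs are close.} For $(i,i)$ the vertex sets $N_G(i)$ and $N_H(i)$ agree except for $O(d_i\delta)=O(d_i/\log^2 n)$ private neighbours, and on the common part the coupling forces $\phi(d_w^G)\approx\phi(d_w^H)$ up to a discrepancy of relative size $O(\sqrt\delta)=O(1/\log n)$ coming only from $w$'s own private edges; hence $\mathrm{cost}(i,i)$ is small except with probability $n^{-\omega(1)}$, which survives a union bound over the $n$ vertices. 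Here $\delta=O(\log^{-2}n)$ is used to keep the private-neighbour contamination lower order. (iii) \emph{False pairs are far.} For $i\neq j$ the neighbourhoods $N_G(i)$ and $N_H(j)$ are independent, so $\hat\mu_i^G$ and $\hat\mu_j^H$ are essentially independent empirical measures built from $\Theta(d_i)$ and $\Theta(d_j)$ samples; by an anti-concentration / empirical-process estimate their distance is $\gtrsim 1/\sqrt{\min(d_i,d_j)}$ except with probability $o(n^{-2})$, which dominates the true-pair bound provided $d\ge c_1\log^2 n$ and $\delta\le c_2\log^{-2}n$ for appropriate constants — the squares arising because the separation margin $\asymp 1/\sqrt d$ must beat the $\sqrt{\log n}$-inflated fluctuations forced by the $\binom{n}{2}$-fold union bound. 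Setting $\hat\pi_0(i)=\argmin_j\mathrm{cost}(i,j)$ (or solving a global assignment) then gives a matching correct on all but $n^{1-\Omega(1)}$ vertices with probability $1-O(n^{-1})$.

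For the second stage I would run the standard refinement iteration: given the current matching $\hat\pi$, set $\hat\pi_{\mathrm{new}}(i)$ to the $j$ maximizing the neighbourhood agreement $\#\{w:\{i,w\}\in G,\ \{j,\hat\pi(w)\}\in H\}$. Starting from $\hat\pi_0$, a vertex $i$ whose neighbourhood is matched correctly on a large majority is re-matched correctly: its true image $i$ collects agreement $\approx d_i(1-2\delta)$ minus the damage from currently-misplaced neighbours, while any competitor $j\neq i$ collects only $\approx\sum_w p_{iw}p_{jw}+(\text{fluctuation})\le(1-\alpha)d_i+O(\sqrt{d_i\log n})$, so the constant-factor gap $\alpha d_i\gtrsim\log^2 n$ dominates both the $\sqrt{d_i\log n}$ fluctuation and the total damage from the $n^{1-\Omega(1)}$ misplaced vertices. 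Hence the error set shrinks by a polynomial factor per round and is empty after $c_3\log n$ rounds; a final agreement check certifies that the output equals $\mathrm{id}$. Combining the failure probabilities of the two stages yields the claimed $1-O(n^{-1})$.

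The step I expect to be the main obstacle is the uniform false-pair separation in (iii) under genuine inhomogeneity. In the Erd\H{o}s--R\'enyi case the neighbour-degrees are i.i.d.\ and one can invoke classical empirical-process and anti-concentration bounds off the shelf; here each $\phi(d_w^G)$ carries its own distribution, $\phi$ only \emph{approximately} Gaussianizes a Poisson-binomial, and one must control the resulting Berry--Esseen-type error down to $o(n^{-2})$ while the statistic aggregates $\Theta(\log^2 n)$ such terms with mild cross-profile dependence through shared neighbours. A secondary difficulty is that the regime is essentially critical — $d\cdot\delta=\Theta(1)$, so a low-degree vertex typically carries a constant number of private edges that cannot be averaged away — and one must check that these $O(1)$ perturbations never flip a comparison; this is precisely what pins down the constants $c_1,c_2,c_3$ and forces the paired $\log^2 n$ and $\log^{-2}n$ scalings.
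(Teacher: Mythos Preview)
Your proposal rests on a misreading of \Cref{alg:graphmatching}. The algorithm is \emph{not} a two-stage partial-recovery-then-boost procedure; it is a single-shot matching: compute the distance table $D(i,k)$ from \eqref{eq:defofdistance} and output $\widehat\pi(i)=\argmin_k D(i,k)$. The third input $L=c_3\log n$ is the \emph{reciprocal bin width} in the balls-into-bins signature (see \eqref{eq:defofItm}--\eqref{eq:defofIt}), not an iteration count. There is no refinement loop in the algorithm, so your entire second stage is analyzing something that does not exist, and your first stage --- which by your own account only guarantees correctness on all but $n^{1-\Omega(1)}$ vertices --- would then be insufficient on its own.

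The paper's actual proof is a direct separation argument. The variance-stabilizing transform is $\sqrt{\cdot}$ (not $\arcsin\sqrt{\cdot}$); the signature of $i$ in $G$ is the function $t\mapsto\sum_j A_{ij}(t)$ counting, for each $t\in[0,1)$, the neighbours $j$ of $i$ with $\sqrt{X_j}$ in the periodic bin $I_t$; and $D(i,k)$ is the $L^1$-distance between signatures. The proof then shows, via two conditioning rounds plus McDiarmid, that for every fixed pair $i\neq k$ one has $D(i,k)\ge 200\sqrt{d_i}/(c_3\sqrt{\log n})$ with probability $1-O(n^{-3})$ (\Cref{lem:fakepair}), while $D(i,i)\le c_5\sqrt{d_i\delta\log n}+c_6\sqrt{d_i}\,\delta^{1/4}+100\sqrt{d_i}/(c_3\sqrt{\log n})$ with probability $1-O(n^{-3})$ (\Cref{lem:truepair}). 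Choosing $c_2$ so that the true-pair bound sits strictly below the fake-pair bound, a union bound over the $O(n^2)$ pairs gives exact recovery directly. The continuum-of-bins device (integrating over $t\in[0,1)$ and applying Fubini, \eqref{eq:T1fubini}) is the specific trick that replaces the delicate discrete-bin estimates you would otherwise need; this, together with the careful symmetric conditioning in \Cref{def:tp1rd,def:tp2rd} that preserves the $G\leftrightarrow H$ symmetry, is what handles the inhomogeneity without any seed-and-expand machinery.
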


In words, \Cref{thm:graphmatching} says that with high probability, our algorithm $\mathcal{A}$ is able to exactly recover the latent vertex correspondence if the minimal correlation is at least $1-\Omega(\log^{-2}n)$ and the minimal average degree is at least $\Omega(\log^{2}n)$. We shall emphasize that our algorithm is based on degree profiles and is rather simple, in the sense that its running time is only cubic. The performance of the algorithm will also be demonstrated in the numerical experiments; see \Cref{sec:numerical}.

\begin{remark}\label{rem-not-access-parameters}
As an appealing feature, our algorithm requires no prior knowledge on the parameters $p_{ij}$ and $\delta_{ij}$. In fact, \Cref{alg:graphmatching} itself does not even rely on the parameters $c_{1},c_{2},c_{3}$ or $\alpha$. But to be clear, the constants $c_{1},c_{2},c_{3}$, which impose probabilistic assumptions on the inputs to \Cref{alg:graphmatching}, are indeed dependent on $\alpha$.
\end{remark}

\subsection{Backgrounds and related works}

The random graph matching problem is motivated by various applied fields such as social network analysis \cite{NS08,NS09}, computer vision \cite{CSS07,BBM05}, computational biology \cite{SXB08,VCP15} and natural language processing \cite{HNM05}, and as a result it is a topic of much interest also from the theoretical point of view. The correlated Erd\H{o}s-R\'{e}nyi graph model, being canonical and simple, has been extensively studied with emphasis placed on the two important and entangling issues---the information threshold (i.e., the statistical threshold) and the computational transition. On the one hand, it is fair to say that thanks to  the collective efforts as in \cite{CK16, CK17, HM23, WXY20+,WXY21+, GML21, DD22+, DD22+b},  we now have a fairly complete understanding on the information thresholds for the problem of correlation detection and vertex matching. On the other hand, while our understanding on the computational aspect remains incomplete, the community has obtained progressively improved efficient algorithms for graph matching; see \cite{PG11, YG13, LFP14, KHG15, FQRM+16, SGE17, BCL19, DMWX21, FMWX22a, FMWX22b, BSH19, CKMP19, DCKG19, MX20, GM20, MRT23, MWXY21+, GMS22+, MWXY22+, DL22+, DL23+}.
The state of the art on algorithms is as follows: in the sparse regime, we have efficient matching algorithms as long as the correlation is above the square root of the Otter's constant (the Otter's constant is around ${0.338}$) \cite{GMS22+, MWXY22+}; in the dense regime, we have efficient matching algorithms as long as the correlation exceeds an arbitrarily small constant \cite{DL23+}. Roughly speaking, the separation of sparse and dense regime above is whether the average degree grows polynomially or sub-polynomially.  In addition, there is a belief that these may be the best possible although currently it seems infeasible to conclusively prove this belief. More realistically, one may try to prove a complexity lower bound under some framework that rules out a (presumably large) class of algorithms. Along this line, efforts have been made on the (closely related) random optimization problem of maximizing the overlap between two independent Erd\H{o}s-R\'{e}nyi graphs; see  \cite{DDG22+, DGH23+}. 

Similar to the random graph matching problem, for many network recovery problems usually the first and the most extensively studied models are of Erd\H{o}s-R\'{e}nyi type; prominent examples include the stochastic block model and the planted clique problem for community recovery. That being said, researchers do realize and keep in mind that almost all the realistic networks, either from a scientific domain or from a real-world application, are drastically different from the Erd\H{o}s-R\'{e}nyi model. Nevertheless, it is a natural hope that a sound theory developed on the Erd\H{o}s-R\'{e}nyi model will eventually be further developed to incorporate more realistic networks. To this end, there are roughly speaking two directions for pursuit: (1) propose and study random graph models that better capture some key features for (at least some) realistic networks; (2) develop \emph{robust} algorithms whose success relies on a minimal set of model assumptions. Some efforts have been made along the first direction: a model for correlated randomly growing graphs was studied in \cite{RS20+}, graph matching for correlated stochastic block model was studied in \cite{RS21},  graph matching for correlated random geometric graphs was studied in \cite{WWXY22+}, and (perhaps most relevant to this work) a correlated \emph{inhomogeneous} Erd\H{o}s-R\'{e}nyi model was studied in \cite{RS23}. Our current work is more along the second direction, and in particular it is distinct from \cite{RS23} not only because we focus on the algorithmic aspect instead of the information threshold as in \cite{RS23}, but also our model assumption seems even more relaxed in the sense that we only specify some bounds on the edge probabilities and correlations instead of working with a set of specific parameters as in \cite{RS23}. Indeed, as pointed out in Remark~\ref{rem-not-access-parameters}, our algorithm does not even need to know these parameters for edge probabilities. We believe that this is an appealing feature since in real applications it is often the case that we do not have access to these parameters, and furthermore in the inhomogeneous case it is also very difficult to obtain good estimates on these edge probabilities (for the obvious reason that the number of parameters is proportional to the number of observed variables).

Let us also remark that the extension to inhomogeneous networks (from Erd\H{o}s-R\'{e}nyi) is usually considered of practical value since many realistic networks exhibit the feature of inhomogeneity, a well-known example being the observed power law degree distribution for social networks. For this reason, extensive works have been done for (e.g.) the stochastic block model in order to incorporate inhomogeneity and degree correction. See \cite{DHM04, COL09, ZLZ12, QR13, TC17, CLX18, GMZZ18, MMY20, NRP21, KJ23} for a partial list.
In light of this, it seems fair to say that our work makes a meaningful step toward the important and ambitious program for \emph{robust} random graph matching algorithms. We acknowledge that our algorithm may be suboptimal in the sense that we believe even under our mild assumptions it may be possible to have efficient matching algorithms when the correlation exceeds a certain (or some arbitrary) constant as in \cite{GMS22+, MWXY22+, DL23+}. That being said, we feel that this suboptimality is to some extent compensated by the fact that our algorithm is rather ``practical'': it is easy to implement and its running-time is cubic. In addition, while it may be challenging to achieve exact matching via degree-profile based algorithms when the correlation is bounded away from 1, it is possible that in some cases (e.g.~\Cref{eg:correlated cl}) inhomogeneity can be leveraged in some degree-profile-based algorithms to obtain a partial recovery and then an exact recovery may be achieved by a seeded algorithm such as in \cite{YXL21}. We leave this for future study.

\subsection{Comparison with earlier degree profile matching algorithm}\label{subsec:intro_comparison}

While our algorithm is hugely inspired by the degree-profile-based matching algorithm proposed in \cite{DMWX21}, we believe our generalization is substantial. In fact, even in the original setting of correlated Erd\H{o}s-R\'{e}nyi graphs, our work seems to make further conceptualization and simplification over \cite{DMWX21}, both in the design and in the analysis of the algorithm. In what follows, we elaborate on this comparison. 

We begin by reviewing the degree-profile algorithm proposed in \cite{DMWX21} (we refer to Sections 1.3 and 1.4 in \cite{DMWX21} for more detailed explanation and for further background). In general, designing a signature-based matching algorithm somehow boils down to choosing the signatures and a distance between two signatures from different graphs. In the degree-profile algorithm in \cite{DMWX21}, the signature for a vertex is chosen as the empirical distribution of the standardized degrees of its neighbours, and the distance is chosen as the $L^{1}$-distance, i.e.~total variation distance, between the appropriately discretized versions of the empirical measures. The discretization leads to a ``balls-into-bins"-type procedure, which measures for each empirical distribution the ``number of balls" (total mass of the distribution) in the ``bins" (the appropriately chosen disjoint intervals), as illustrated below.

\begin{center}
\begin{tikzpicture}
\node[coordinate](0){};
\node[plan](1)[above of=0, node distance = 1cm]{degrees of neighbors of $i$ in $G$};
\node[plan](3)[right of=1, node distance=8cm]{signature of vertex $i$ in $G$};
\node[plan](2)[below of=1, node distance=2cm]{degrees of neighbors of $k$ in $H$};
\node[plan](4)[right of=2, node distance=8cm]{signature of vertex $k$ in $H$};
\node[plan](5)[right of=0, node distance = 12cm]{$D(i,k)$};
\path [line] (1) -- node [text width=5cm,midway,above,align=center] {balls-into-bins} (3);
\path [line] (2) -- node [text width=5cm,midway,above,align=center] {balls-into-bins} (4);
\draw (3)--(9,0);
\draw (4)--(9,0);
\path [line] (9,0) -- node [text width=5cm,midway,above,align=center] {$L^{1}$-distance} (5);
\end{tikzpicture}
\end{center}

The intuition behind the algorithm is simple. If $i\in G$ and $k\in H$ satisfy $k=\pi(i)$, then they share a large number of common vertices whose degrees are correlated random variables, yielding a small $L^{1}$-distance. In contrast, if $k\neq \pi(i)$, then the empirical distributions associated with them are mainly generated from independent random variables, and therefore the $L^{1}$-distance is typically large. We remark here that a cleaner and simpler model with the same spirit is the correlated Gaussian matrix model, and refer to Section 2 of \cite{DMWX21} for analysis and intuitions.

We also remark that on the one hand, the discretization facilitates the analysis (e.g.~in establishing the concentration), while on the other hand, the ``size of bins", i.e.~radii of intervals, need to be chosen carefully so that compared with the standard deviations of the standardized degrees, the bins are neither too crude (which makes fake pairs typically stay in the same bins as true pairs) nor too fine (which makes true pairs typically stay in different bins like fake pairs). In \cite{DMWX21}, the ``size of bins" is of order $\log^{-1}n$, which is the square-root of the noise strength.

We now provide a detailed explanation of our algorithm. Similarly, we also encourage the reader to first read \Cref{sec:gaussian} on correlated Gaussian matrices, which illustrates some key intuitions in a simpler context while hiding various complications under the rug. Briefly speaking, in the Gaussian scenario, we generalize the setting in \cite[Section 2]{DMWX21}, where all Gaussians have mean 0 and variance 1, to the case where the means of Gaussians can be arbitrary. However, the variances are still required to be 1, and it is interesting whether one can further generalize to the inhomogeneous variance situation.

Following the approach of \cite{DMWX21}, in our algorithm the signature will be obtained through a balls-into-bins-type procedure, and we consider the $L^{1}$-distance between signatures (see \Cref{subsec:distance_func} for details). This is primarily in order to facilitate the theoretical analysis of concentration in \Cref{subsec:fakeconcentrate,subsec:trueconcentration}. In particular, as in \cite{DMWX21}, the discretized balls-into-bins-type procedure facilitates the application of McDiarmid's inequality (\Cref{thm:Mcdiarmid}) by simplifying the proof of the bounded-difference property.

\underline{Choice of balls.} The ``balls'' used in \cite{DMWX21} are standardized ``outdegrees" of neighbors. The authors made this choice to reduce dependency and thus to facilitate the analysis. It is reasonable to speculate that using the usual vertex degrees of neighbors would be sufficient for signatures, as empirically verified by the numerical experiments of \cite{DMWX21} in the Erd\H{o}s-R\'{e}nyi case. In contrast to the theoretic result in \cite{DMWX21} but in alignment with its numerical result, our algorithm directly utilizes the degrees of neighbors for the ``balls'', with the exception that we employ a different choice of bins (as described later in this subsection).

Furthermore, by eliminating the standardization step, our algorithm no longer requires prior knowledge on the edge probabilities. However, in the absence of standardization, particularly in the inhomogeneous setting, the degrees of the vertices can vary drastically. This creates difficulties for our choice of bins, which we recall should be neither too crude nor too fine, compared with the standard deviations, for \emph{all} the degrees. Nevertheless, we do know that each vertex's degree is a nonnegative random variable with a standard deviation proportional to the square root of its mean, due to the assumed upper bound $p_{ij}\leq 1-\alpha$. This leads to the following observation on a ``partial'' standardization, which is in essence the classical variance-stabilizing transformation for a sum of independent Bernoulli random variables: taking the square root of a vertex's degree results in a variable with still an arbitrary mean but a standard deviation of constant order. Out of this consideration, we choose the ``balls'' for the signature of $i$ to be the square root of the degrees of neighbors of $i$.

\underline{Choice of bins.} In the paper \cite{DMWX21}, the authors defined the bins $I_{1},\dots,I_{L}$ as the partition of $[-1/2,1/2]$ into equal-length intervals, where $L$ is a sufficiently large multiple of $\log n$. This choice aligns with their use of standardized variables with mean 0 and variance 1 as ``balls.'' In our scenario, the ``balls'' are also of constant-order variances but may have arbitrary means. Conceivably, the natural choice for bins would be a partition of the entire real line into intervals of length $1/L$. In \Cref{sec:gaussian}, we will demonstrate that this selection of bins works well for matching Gaussian matrices, which is a simplified model for graph matching, thanks to the simple and explicit probability density function (pdf). However, such pdf is not available here, and due to certain technical limitations of our analysis method, we will make an artificial choice (see \Cref{subsec:distance_func}) for the bins in our main theorem on graph matching. Specifically, we define the bins as unions of intervals and consider a continuum family of bins. We remark here that taking the union of intervals has the theoretical advantage that, each ball (no matter what the value of its mean is) has at least probability of order $1/L$ to lie in each bin (see \Cref{cor:fakeproblowerb}), avoiding some subtle and complicated estimates. Further justifications on the choice of bins are given below. Note that while we take unions solely for theoretical purposes, having a continuum family of bins does seem to boost the practical performance of the algorithm on specific inhomogeneous models, as suggested by our numerical experiments in \Cref{sec:numerical} (see \Cref{fig:model2,fig:Slashdot Network,fig:oregon} for some particularly evident demonstrations).

\underline{The Fubini-Tonelli theorem.} An important part of the analysis is to upper-bound the probability of a correlated pair of ``balls'' landing in different bins. When a discrete set of bins is employed, this probability estimation poses a technical challenge. In the Gaussian model, as demonstrated in \Cref{lem:CGaussian}, the task is manageable. The complexity increases significantly in the Erd\H{o}s-R\'{e}nyi model, as indicated by \cite[Lemma 9]{DMWX21}. In our setting, which involves inhomogeneity and additional dependence arising from our choice of balls, this task would arguably be even more intricate. It is exactly for the purpose of addressing this challenge that we have chosen a \emph{continuum} family of bins; this then allows us to leverage the Fubini-Tonelli theorem (specifically, in \eqref{eq:T1fubini}), which considerably simplifies this part of analysis (even in the original setting of \cite{DMWX21}).

\underline{Dealing with dependence.} In line with the approach taken in \cite{DMWX21}, we use $L^{2}$-distances to upper-bound the $L^{1}$-distances of correlated pairs of signatures. This requires the bins to be indexed by a set with bounded measure (see \eqref{eq:D'CauchySchwarz}) and forces us to take unions of intervals (or equivalently, cyclic intervals on the circle $\mathbb{R}/\mathbb{Z}$), which are indexed by [0,1), as bins. We will show in \Cref{subsec:upperbonT2}, which we view as one of the main technical contributions of this paper, that the dependence arising from our choice of balls (and our new choice of conditioning, as explained below) can be effectively addressed by transitioning to $L^{2}$-distances.

\underline{Conditioning.} In comparison to \cite{DMWX21}, our analysis places a greater emphasis on the analytical tool of conditioning, bringing it to the forefront. As explained in \Cref{remark:tp1rd,rmk:tp2rd}, the conditioning employed in our analysis is carefully designed to address the additional intricacies inherent in our setting. In particular, we want to condition on less information to preserve the symmetry between $G$ and $H$. A consequence of this approach is that a greater amount of dependence between the ``balls'' will remain after the conditioning process, and this requires a somewhat more complicated analysis for concentration, as a price to pay for easier control on the expectation (from the symmetry).

\subsection{Notations}

In this paper we will use the following convention of terminology for a few times. For a $\sigma$-algebra $\mathcal{F}$ generated by a collection $\boldsymbol{X}$ of random variables and an event $\Gamma\in \mathcal{F}$, we write $\omega \sim (\mathcal {F}, \Gamma)$ if $\omega$ is a realization of $\boldsymbol{X}$ such that $\Gamma$ occurs. 

For a positive integer $n$, we write $[n]=\{1,\dots,n\}$. For any two positive sequences $\{a_n\}$ and $\{b_n\}$, we write equivalently $a_n=O(b_n)$, $b_n = \Omega(a_n)$, $a_n \lesssim b_n$ and $b_n \gtrsim a_n$ if there exists a positive absolute constant $c$ such that $a_n/b_n \leq c$ holds for all sufficiently large $n$. For two real numbers $a$ and $b$, we write $a \vee b:=\max\{a,b\}$.

\subsection{Acknowledgement}
We warmly thank the anonymous referees and the AE for their helpful and detailed comments.
\section{Algorithm for Graph Matching}
In this section, we describe the algorithm for \Cref{thm:graphmatching}. The algorithm takes as input two graphs $\left(G_{ij}\right)_{i,j\in [n]}$ and $\left(H_{ij}\right)_{i,j\in[n]}$, along with a positive integer $L\geq 5$. We first define a distance function $D:[n]\times [n]\rightarrow \mathbb{R}$ based on the inputs.

\subsection{The Distance Function}\label{subsec:distance_func}
For each $t\in [0,1)$ and $m\in \mathbb{Z}$, define the interval 
\begin{equation}\label{eq:defofItm}
I_{t,m}=\left[t+m-\frac{1}{L},t+m+\frac{1}{L}\right],
\end{equation}
and let
\begin{equation}\label{eq:defofIt}
I_{t}=\bigcup_{m\in\mathbb{Z}}I_{t,m}.
\end{equation}
\begin{remark}\label{rmk:cyclicinterval}
We may identify the range $[0,1)$ for $t$ with the circle $\mathbb{R}/\mathbb{Z}$. Fix $t\in [0,1)$. Under the quotient map $\mathbb{R}\rightarrow\mathbb{R}/\mathbb{Z}$, the images of all the intervals $I_{t,m}$ coincide and is a ``cyclic interval'' of length $2/L$. Furthermore, $I_{t}$ is the inverse image of this ``cyclic interval'' under the quotient map. The cyclic-interval viewpoint is of importance in the analysis later (most prominently, in \Cref{lem:rationaleg} on the $L^{2}$-bound, which needs an index set of finite measure, as discussed in \Cref{subsec:intro_comparison}).
\end{remark}
For each $j\in [n]$ we define two random variables
$$X_{j}:=\sum_{\ell=1}^{n}G_{j\ell}\quad \text{and} \quad Y_{j}:=\sum_{\ell=1}^{n}H_{j\ell}.$$
The variables $X_{j}$ and $Y_{j}$ are the degrees of $G$ and $H$ respectively. For each $i,j\in[n]$ and $t\in [0,1)$ consider the pair of Bernoulli random variables
\begin{equation}\label{eq:defofA}
A_{ij}(t)=\mathbbm{1}\left\{\sqrt{X_{j}}\in I_{t}\right\}\cdot\mathbbm{1}\{G_{ij}=1\},
\end{equation}
\begin{equation}\label{eq:defofB}
B_{ij}(t)=\mathbbm{1}\left\{\sqrt{Y_{j}}\in I_{t}\right\}\cdot\mathbbm{1}\{H_{ij}=1\}.
\end{equation}
Now we can define the distance function
\begin{equation}\label{eq:defofdistance}
D(i,k)=\int_{0}^{1}\left|\sum_{j=1}^{n}A_{ij}(t)-\sum_{j=1}^{n}B_{kj}(t)\right|\d t. 
\end{equation}

Fix an $i \in [n]$. The sets $(I_{t})_{0\leq t<1}$ can be seen as a continuum family of ``bins'', and the numbers $\sqrt{X_{j}}$ (for $j \in [n]$) can be seen as ``balls''. A ball $\sqrt{X_{j}}$ is cast into (an uncountable number of) bins only if $j$ is a neighbor of $i$ in $G$. We then investigate $\sum_{j=1}^{n}A_{ij}(t)$ for $t \in [0,1)$, that is, the number of balls in each bin $I_{t}$. These numbers form the ``signature'' of the vertex $i$ in $G$. Signatures of vertices in $H$ are similarly obtained, and the quantity $D(i,k)$ measures the $L^{1}$-distance between the signature of $i$ in $G$ and the signature of $k$ in $H$. We refer to \Cref{subsec:intro_comparison} for a detailed explanation on the ideas behind this balls-into-bins-type procedure and our choice of the balls and the bins. In particular, taking the square root serves as a variance-stabilizing transformation, which is important under our vastly inhomogeneous setting.

By definitions in \eqref{eq:defofItm}, \eqref{eq:defofIt} and \eqref{eq:defofA}, it is easy to observe that the (random) function $A_{ij}(\cdot)$ is either a zero function or an indicator function of a ``cyclic interval'' of length $2/L$ centered at $\sqrt{X_{j}}$. We state this fact as a formal proposition for convenience of future reference.

\begin{proposition}\label{prop:structureofA}
For any $j \in [n]$, the function $A_{ij}(\cdot)$ is the indicator function of
$$
\begin{cases}
\emptyset &\text{if }G_{ij}=0,\\
\Big[\{\sqrt{X_{j}}-\frac{1}{L}\},\;\{\sqrt{X_{j}}+\frac{1}{L}\}\Big] &\text{if }G_{ij}=1\text{ and }\frac{1}{L}\leq \{\sqrt{X_{j}}\}<1-\frac{1}{L},\\
\Big[0,\;\{\sqrt{X_{j}}+\frac{1}{L}\}\Big]\cup\Big[\{\sqrt{X_{j}}-\frac{1}{L}\},\;1\Big) &\text{otherwise},
\end{cases}$$
where $\{x\}:=x-\lfloor x\rfloor$ stands for the fractional part of the real number $x$. In particular, the support of $A_{ij}(\cdot)$ has Lebesgue measure either $0$ or $2/L$.
\end{proposition}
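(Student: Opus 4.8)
The plan is to reduce the statement to a short one-parameter family of elementary interval computations. First dispatch the trivial case $G_{ij}=0$: by \eqref{eq:defofA} we then have $A_{ij}(t)=0$ for every $t\in[0,1)$, so $A_{ij}(\cdot)$ is the indicator of $\emptyset$. Assume henceforth $G_{ij}=1$, so that $A_{ij}(t)=\mathbbm{1}\{\sqrt{X_{j}}\in I_{t}\}$, and it only remains to identify the set
$$
S_{j}:=\Bigl\{\,t\in[0,1):\ \exists\,m\in\mathbb{Z},\ t+m\in\bigl[\sqrt{X_{j}}-\tfrac1L,\ \sqrt{X_{j}}+\tfrac1L\bigr]\,\Bigr\},
$$
which is exactly $\{t:\sqrt{X_j}\in I_t\}$ after unfolding \eqref{eq:defofItm}--\eqref{eq:defofIt}. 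Write $\theta:=\{\sqrt{X_{j}}\}\in[0,1)$ and $m_{0}:=\lfloor\sqrt{X_{j}}\rfloor$, so $\sqrt{X_{j}}=m_{0}+\theta$; then for $t\in[0,1)$ and $m\in\mathbb Z$ the membership $t+m\in[\sqrt{X_j}-1/L,\sqrt{X_j}+1/L]$ reads $t\in[\theta+(m_{0}-m)-1/L,\ \theta+(m_{0}-m)+1/L]$, a closed interval of length $2/L$ centered at $\theta+(m_0-m)$.

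The key structural point, and the only place the hypothesis is used, is that $L\ge 5$ forces $2/L<1$: hence this centered interval, lying within a window of width $<1$ about the integer-translate $m_0-m$ of $\theta$, can meet $[0,1)$ only when $m_{0}-m\in\{-1,0,1\}$, and moreover two of the resulting pieces $I_{t,m}\cap[0,1)$ are separated by distance $1-2/L>0$, so the surviving pieces are pairwise disjoint. It then remains to split into the three regimes for $\theta$ and read off $S_j$, matching it with the stated formula by computing fractional parts. If $1/L\le\theta<1-1/L$, only $m=m_0$ contributes and $[\theta-1/L,\theta+1/L]\subseteq[0,1)$, so $S_j=[\theta-1/L,\theta+1/L]=[\{\sqrt{X_j}-1/L\},\{\sqrt{X_j}+1/L\}]$. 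If $0\le\theta<1/L$, the shift $m=m_0$ contributes $[0,\theta+1/L]$ and $m=m_0-1$ contributes $[\theta+1-1/L,1)$, and since $\{\sqrt{X_j}+1/L\}=\theta+1/L$ and $\{\sqrt{X_j}-1/L\}=\theta+1-1/L$ this is the claimed $[0,\{\sqrt{X_j}+1/L\}]\cup[\{\sqrt{X_j}-1/L\},1)$. If $1-1/L\le\theta<1$, the shift $m=m_0$ contributes $[\theta-1/L,1)$ and $m=m_0+1$ contributes $[0,\theta+1/L-1]$, and now $\{\sqrt{X_j}+1/L\}=\theta+1/L-1$, $\{\sqrt{X_j}-1/L\}=\theta-1/L$, again matching. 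The Lebesgue-measure claim follows at once: the support is null when $G_{ij}=0$, and otherwise has measure $2/L$ in all three regimes (in the two wrap-around regimes it is the sum of the two disjoint pieces' lengths).

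The main obstacle is, frankly, minor: there is no substantive difficulty, only careful bookkeeping of the half-open convention for $[0,1)$ versus the closed intervals $I_{t,m}$, so that the boundary values $\theta=1/L$ and $\theta=1-1/L$ are placed in the intended case and the pieces are described with the correct open/closed endpoints, together with verifying precisely which integer shifts $m$ can contribute. The role of $L\ge 5$ (in fact $L\ge 3$ would suffice here) is exactly to guarantee $2/L<1$, which is what makes the ``at most two shifts contribute and the contributions are disjoint'' picture valid; without it $I_t$ could wrap fully around $\mathbb R/\mathbb Z$ and the clean three-case description would break down.
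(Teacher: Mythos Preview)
Your proof is correct and complete. The paper does not actually give a proof of this proposition; it simply asserts that the statement ``is easy to observe'' from the definitions \eqref{eq:defofItm}, \eqref{eq:defofIt} and \eqref{eq:defofA}, so your careful case analysis by the value of $\theta=\{\sqrt{X_j}\}$ is precisely the elementary verification the paper leaves to the reader.
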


\Cref{prop:structureofA} and its obvious analogue for $B_{ij}(\cdot)$ will be used many times in the analysis later (most prominently, in \Cref{lem:T1bound}).
\subsection{The Algorithm}
We are now ready to describe the main algorithm:

\begin{algorithm}[H]
\DontPrintSemicolon
\SetKwInOut{Input}{Input}\SetKwInOut{Output}{Output}
\caption{Random Graph Matching}\label{alg:graphmatching}
\Input{Graphs $(G_{ij})_{i,j\in [n]}$ and $(H_{ij})_{i,j\in [n]}$, and an integer $L$}
\Output{A permutation $\widehat{\pi}\in \mathfrak{S}_{n}$, or ``error''}
For $i,k\in[n]$ compute $D(i,k)$ according to \eqref{eq:defofdistance}\;
\For{each $i\in [n]$}{
    Let $\widehat{\pi}(i)=\argmin_{k\in [n]}D(i,k)$ (output \textbf{Error} if there are multiple minimums) 
}
\eIf{$\widehat{\pi}$ is a permutation on $[n]$}{
    Output $\widehat{\pi}$
}{
    Output \textbf{Error}
}
\end{algorithm}

\subsection{Time Complexity}

In this subsection, we show that the running time of \Cref{alg:graphmatching} is $O(n^{3})$. In \Cref{alg:graphmatching}, the total time needed to execute line 2 through line 7 is no more than $O(n^{2})$. We then focus on the time complexity of carrying out line 1. 

To compute the distance table $[D(i,k)]_{i,k\in[n]}$, we first compute the degrees of each vertices. Whether the graphs $G$ and $H$ are represented using adjacency lists or adjacency matrices, computing the degrees $(X_{i})_{i\in [n]}$ and $(Y_{i})_{i\in [n]}$ takes no more than $O(n^{2})$ time. 

We then compute the landmarks $\{\sqrt{X_{j}}-\frac{1}{L}\}$, $\{\sqrt{X_{j}}+\frac{1}{L}\}$, $\{\sqrt{Y_{j}}-\frac{1}{L}\}$ and $\{\sqrt{Y_{j}}+\frac{1}{L}\}$, for $j\in [n]$. We sort the $4n$ landmarks as $a_{1}\leq a_{2}\leq \cdots\leq a_{4n}$. Computing and sorting the landmarks take $O(n\log n)$ time in total.

Let $a_{0}=0$ and $a_{4n+1}=1$. By \Cref{prop:structureofA}, for any $i\in[n]$, the function $\sum_{j=1}^{n}A_{ij}(\cdot)$ is constant on each interval $(a_{\ell},a_{\ell+1})$. Let $u_{i\ell}$ be the value of $\sum_{j=1}^{n}A_{ij}(\cdot)$ on $(a_{\ell},a_{\ell+1})$, for $\ell\in\{0,1,\dots,4n\}$. Then the signature of vertex $i$ in $G$ is characterized by the values $(u_{i\ell})_{0\leq \ell\leq 4n}$. For each vertex $i\in[n]$ and index $\ell\in\{0,1,\dots,4n-1\}$, it takes $O(1)$ time to compute $u_{i, \ell+1}$ given $u_{i\ell}$. So it takes $O(n)$ time to (progressively) compute the values $(u_{i\ell})_{0\leq \ell\leq 4n}$ . In total, we need $O(n^{2})$ time to compute $u_{i\ell}$ for all $i \in [n]$ and $0 \leq \ell \leq 4n$. Analogously define the vectors $(v_{k\ell})_{0\leq \ell\leq 4n}$ for signatures in $H$. By symmetry, computing $v_{k\ell}$ for all $k \in [n]$ and $0 \leq \ell \leq 4n$ also needs $O(n^{2})$ time.

Now, for each pair $i,k\in [n]$, we have
$$D(i,k)=\sum_{\ell=0}^{4n}(a_{\ell+1}-a_{\ell})\left|v_{k\ell}-u_{i\ell}\right|,$$
which requires at most $O(n)$ time to compute. In total, we need $O(n^{3})$ time to compute the full table $[D(i,k)]_{i,k\in[n]}$.
\subsection{Pre-Analysis of the Algorithm}\label{subsec:preanalysis}

It is easy to observe that \Cref{alg:graphmatching} is permutation-oblivious in the following sense.

\begin{proposition}\label{prop:algoblivious}
For any pair of graphs $(G,H)$ and any integer $L$, we always have
$$\mathcal{A}\left(\left(G_{ij}\right)_{i,j\in[n]},\;\left(H_{\pi^{-1}(i),\pi^{-1}(j)}\right)_{i,j\in[n]},\; L\right)=\pi\circ \mathcal{A}\left(\left(G_{ij}\right)_{i,j\in[n]},\;\left(H_{ij}\right)_{i,j\in[n]},\; L\right),$$
where $\mathcal{A}$ stands for \Cref{alg:graphmatching}.
\end{proposition}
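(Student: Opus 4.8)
The statement asserts that running \Cref{alg:graphmatching} on the relabelled input $\bigl((G_{ij}),(H_{\pi^{-1}(i),\pi^{-1}(j)})\bigr)$ produces exactly $\pi$ composed with the output on the original input $\bigl((G_{ij}),(H_{ij})\bigr)$. The plan is to track how every quantity defined in \Cref{subsec:distance_func} transforms when $H$ is replaced by its vertex-relabelling $H'_{ij} := H_{\pi^{-1}(i),\pi^{-1}(j)}$, and then to observe that the $\argmin$ step and the permutation check in the algorithm are equivariant under this relabelling. Throughout, let me write the superscript $'$ for all quantities associated to the relabelled input.

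First I would compute the effect on degrees. Since $Y'_j = \sum_{\ell=1}^n H'_{j\ell} = \sum_{\ell=1}^n H_{\pi^{-1}(j),\pi^{-1}(\ell)} = \sum_{\ell'=1}^n H_{\pi^{-1}(j),\ell'} = Y_{\pi^{-1}(j)}$, the degree sequence of $H$ is permuted by $\pi$. Consequently, for the Bernoulli variables in \eqref{eq:defofB} we get $B'_{kj}(t) = \mathbbm{1}\{\sqrt{Y'_j}\in I_t\}\cdot\mathbbm{1}\{H'_{kj}=1\} = \mathbbm{1}\{\sqrt{Y_{\pi^{-1}(j)}}\in I_t\}\cdot\mathbbm{1}\{H_{\pi^{-1}(k),\pi^{-1}(j)}=1\} = B_{\pi^{-1}(k),\pi^{-1}(j)}(t)$. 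The $G$-side variables $A_{ij}(t)$ are unchanged. Next I would substitute these into \eqref{eq:defofdistance}: reindexing the inner sum over $j$ by $j' = \pi^{-1}(j)$ (a bijection of $[n]$, so the sum is unaffected) gives, for every $i,k\in[n]$,
$$
D'(i,k)=\int_0^1\Bigl|\sum_{j=1}^n A_{ij}(t)-\sum_{j=1}^n B_{\pi^{-1}(k),\pi^{-1}(j)}(t)\Bigr|\d t=\int_0^1\Bigl|\sum_{j=1}^n A_{ij}(t)-\sum_{j'=1}^n B_{\pi^{-1}(k),j'}(t)\Bigr|\d t=D(i,\pi^{-1}(k)).
$$
Thus the relabelled distance table is just the original table with its columns permuted by $\pi$.

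Given $D'(i,k)=D(i,\pi^{-1}(k))$, the final step is a routine check that the rest of the algorithm commutes with this column permutation. Writing $\widehat\pi$ for the map produced by the original run and $\widehat\pi'$ for the relabelled run: for each $i$, the set $\argmin_{k\in[n]}D'(i,k)=\{\pi(k'):k'\in\argmin_{k'\in[n]}D(i,k')\}$, since $k\mapsto\pi^{-1}(k)$ is a bijection that preserves which entries attain the minimum. Hence the "multiple minimums" error is triggered in the relabelled run for index $i$ if and only if it is triggered in the original run, and when it is not, $\widehat\pi'(i)=\pi(\widehat\pi(i))$, i.e.\ $\widehat\pi'=\pi\circ\widehat\pi$ as functions $[n]\to[n]$. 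Finally, $\pi\circ\widehat\pi$ is a permutation if and only if $\widehat\pi$ is (pre-composition/post-composition with a bijection preserves injectivity and surjectivity), so the "$\widehat\pi$ is a permutation" check has the same outcome in both runs; in the error case both runs output \textbf{Error}, which also satisfies the claimed identity under the natural convention $\pi\circ\textbf{Error}=\textbf{Error}$. This establishes \Cref{prop:algoblivious}.

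I do not anticipate any genuine obstacle here — the entire content is bookkeeping with the substitution $j\mapsto\pi^{-1}(j)$ and the observation that $\argmin$ is equivariant under relabelling the ground set. The one point deserving a line of care is the error branches: one must confirm that the "multiple minimizers" error and the "not a permutation" error are each preserved under the column permutation, so that the claimed equality $\mathcal A(\cdots)=\pi\circ\mathcal A(\cdots)$ holds verbatim (including the case where both sides are \textbf{Error}) and not merely on the event that the algorithm succeeds.
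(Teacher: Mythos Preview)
Your proof is correct and is exactly the natural verification the paper has in mind; the paper itself does not give a proof, merely stating that the proposition is ``easy to observe.'' Your bookkeeping (tracking $Y'_j=Y_{\pi^{-1}(j)}$, $B'_{kj}(t)=B_{\pi^{-1}(k),\pi^{-1}(j)}(t)$, hence $D'(i,k)=D(i,\pi^{-1}(k))$, and then checking equivariance of the $\argmin$ and error branches) is the expected argument and is carried out accurately.
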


Throughout the analysis (\Cref{subsec:preanalysis}, \Cref{sec:fakepair} and \Cref{sec:truepair}), we assume the true permutation $\pi$ is the identity $\mathrm{id}$, which is without loss of generality due to \Cref{prop:algoblivious}. In other words, we assume that the input $(G,H)$ follows an $(\alpha,d,\delta)$-edge-correlated distribution defined in \Cref{def:graphensemble}, without having their vertices permuted.

We then state the following two main lemmas.
\begin{lemma}[Fake pair]\label{lem:fakepair}
For any constant $\alpha\in(0,1)$, there exist constants $c_{1},c_{3}$ such that setting $L=c_{3}\log n$, if a pair of random graphs $(G,H)$ follows an $(\alpha,c_{1}\log^{2}n,\delta)$-edge-correlated distribution, then for any pair of distinct indices $i,k \in [n]$, 
$$\PP\left[\frac{D(i,k)}{\sqrt{d_{i}}}\leq \frac{200}{c_{3}\sqrt{\log n}}\right]\leq O(n^{-3}),$$
where we recall that $d_{i}=\sum_{j=1}^{n}p_{ij}$ is the degree parameter for the vertex $i$, as defined in \Cref{def:graphensemble}.
\end{lemma}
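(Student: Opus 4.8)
\textbf{Proof proposal for \Cref{lem:fakepair}.}

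The plan is to lower-bound $D(i,k)$ for a fixed fake pair $(i,k)$ with $i \neq k$. The key structural observation is that, writing $S_i(t) = \sum_{j=1}^n A_{ij}(t)$ and $T_k(t) = \sum_{j=1}^n B_{kj}(t)$, the distance is $D(i,k) = \int_0^1 |S_i(t) - T_k(t)|\,\d t$. Since $i \neq k$, the two signatures are ``mostly'' built from independent randomness: the bulk of the terms $A_{ij}(t)$ only share information with the corresponding $B_{kj'}(t)$ through the degrees $X_j, Y_j$, and the Bernoulli factors $\mathbbm 1\{G_{ij}=1\}$ and $\mathbbm 1\{H_{k\ell}=1\}$ are independent unless $\{i,j\} = \{k,\ell\}$ (which forces $j = k$ or the pair $\{i,k\}$ itself). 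So after conditioning on a suitable $\sigma$-algebra $\mathcal F$ (I would condition on all degrees $(X_j)_j, (Y_j)_j$ and on the edge variables incident to $i$ and $k$, i.e.\ fixing the supports $I_t \ni \sqrt{X_j}$ and the neighborhoods of $i$ in $G$ and of $k$ in $H$), the quantity $S_i(t)$ becomes a deterministic function of $t$ determined by the neighbors of $i$ and their $G$-degrees, and likewise $T_k(t)$ is determined by neighbors of $k$ and their $H$-degrees. Then $\E[D(i,k)\mid\mathcal F]$ should be bounded below by a constant multiple of $\sqrt{d_i}/L$: roughly, $\int_0^1 S_i(t)\,\d t = \frac{2}{L}\sum_{j} G_{ij} = \frac{2}{L} X_i$ and similarly $\int_0^1 T_k(t)\,\d t = \frac 2L Y_k$, and the ``typical height'' of $S_i$ is $\Theta(X_i/L)$ spread over measure $\Theta(1)$ worth of $t$'s where balls land; since $S_i$ and $T_k$ are built independently, a cancellation-type argument (or a direct second-moment comparison, anti-concentration of the difference) shows $\E |S_i(t) - T_k(t)|$ is at least a constant fraction of $\sqrt{\E S_i(t) + \E T_k(t)}$ pointwise, which after integrating gives order $\sqrt{d_i}/\sqrt{L}$. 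With $L = c_3 \log n$ this matches the claimed scale $\sqrt{d_i}/(c_3\sqrt{\log n})$ up to the constant $200$.

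The second ingredient is concentration: I would show $D(i,k)$ is concentrated around its conditional mean with Gaussian-type tails, so that the probability it falls below half (say) of a suitable lower bound on the mean is $O(n^{-3})$. This is exactly where the balls-into-bins discretization pays off: changing a single edge variable $G_{i'j'}$ (with $i' \notin \{i,k\}$) changes at most one degree $X_{j'}$, hence shifts at most the single function $A_{ij'}(\cdot)$ — whose support has measure $2/L$ — and also perturbs which bin $\sqrt{X_{j'}}$ sits in; the resulting change in $D(i,k)$ is $O(1/L)$ per coordinate, uniformly. With $O(n^2)$ relevant coordinates each contributing $O(1/L)$, McDiarmid's inequality (\Cref{thm:Mcdiarmid}) gives a sub-Gaussian deviation with variance proxy $O(n^2/L^2) = O(n^2/\log^2 n)$. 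The event $\{D(i,k) \le c\sqrt{d_i}/\sqrt{\log n}\}$ with $d_i \ge c_1 \log^2 n$ is then a deviation of order $\sqrt{c_1}\log n \cdot (\log n)^{-1/2} \cdot \log n = $ roughly $\sqrt{c_1}\,\log^{3/2} n$ below the mean… wait — I need to be careful: the mean is order $\sqrt{d_i/\log n} \gtrsim \sqrt{c_1}\,(\log n)^{1/2}$, while the std is order $n/\log n$, so McDiarmid as stated is too weak unless the bounded differences can be localized. The fix — and this is the subtle point — is that perturbing $G_{i'j'}$ only matters if $j'$ is a neighbor of $i$ in $G$ or $k$ in $H$, or if $j'$ is the specific vertex whose $G$- or $H$-degree gets moved across a bin boundary; one must argue that only $O(d_i + d_k + $ (polylog)$)$ coordinates actually have nonzero influence with high probability, or apply a conditional/truncated McDiarmid. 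A cleaner route is to first condition on the degree sequence and the neighbor sets of $i,k$ (so that the $A_{ij}$ and $B_{kj}$ supports are frozen), reducing $D(i,k)$ to a function of far fewer remaining edge variables, each still with $O(1/L)$ influence, but now only $O(d_i + d_k)$-many of them matter for the fluctuation around the conditional mean.

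Carrying this out in order: (1) fix $i \neq k$, condition on $\mathcal F = \sigma((X_j)_j, (Y_j)_j, \text{edges at } i \text{ and } k)$ and show on a high-probability event in $\mathcal F$ that $\E[D(i,k)\mid\mathcal F] \ge C\sqrt{d_i}/\sqrt{\log n}$ for a constant $C$ comfortably larger than $200/c_3$ (choose $c_3$ large); here I'd use \Cref{cor:fakeproblowerb} to ensure each ball lands in bin $I_t$ with probability $\gtrsim 1/L$, so the signature mass is genuinely spread out, and a Paley–Zygmund / anti-concentration estimate on $S_i(t) - T_k(t)$ for the pointwise lower bound on $\E|S_i(t)-T_k(t)|$. (2) Establish the bounded-difference property: after the conditioning (or directly, with the localization argument), each surviving edge coordinate changes $D(i,k)$ by $O(1/L)$, and only $O(d_i + d_k + \mathrm{polylog}\, n)$ coordinates are influential with probability $1 - O(n^{-3})$. (3) Apply McDiarmid to get $\PP[D(i,k) < \tfrac12 \E[D(i,k)\mid\mathcal F] \mid \mathcal F] \le \exp(-\Omega(d_i/\log^2 n \cdot \text{stuff})) \le \exp(-\Omega(\log n)) = O(n^{-3})$ on the good $\mathcal F$-event — this is where $d \ge c_1\log^2 n$ with $c_1$ large is used, so that $d_i/(\text{number of influential coords} \cdot (1/L)^2)$ beats $3\log n$. (4) Remove the conditioning: add back $\PP[\mathcal F \text{ bad}] = O(n^{-3})$ and conclude.

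The main obstacle I anticipate is step (2) combined with the arithmetic in step (3): naively, $D(i,k)$ depends on all $\Theta(n^2)$ edge variables of $G$ and $H$ (every edge affects some degree, hence some ball's bin membership), and a blunt McDiarmid bound with variance proxy $\Theta(n^2/\log^2 n)$ is far too weak to reach $n^{-3}$ against a mean of size only $\Theta(\sqrt{\log n})$. The real work is to show that, conditionally on the degree sequence and on the neighborhoods of $i$ and $k$, the fluctuation of $D(i,k)$ is controlled by only $O(d_i + d_k)$ effective degrees of freedom — intuitively, only the $H$-degrees of $i$'s $G$-neighbors and the $G$-degrees of $k$'s $H$-neighbors can move $|S_i - T_k|$, and each moves it by $O(1/L^2)$ in integrated $L^1$ — so the correct variance proxy is $O((d_i+d_k)/L^2)$ and the deviation of order $\sqrt{d_i/\log n}$ corresponds to $\Omega(\log n)$ standard deviations once $c_1$ is large. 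Getting this localization right, and handling the low-probability bad events where a degree is anomalously large or sits too close to a bin boundary (e.g.\ via \Cref{prop:structureofA} to control how many $t$'s are affected by a single boundary crossing), is the technical crux; the pointwise anti-concentration estimate in step (1) is comparatively standard given \Cref{cor:fakeproblowerb}.
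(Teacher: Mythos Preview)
Your overall architecture --- condition, lower-bound the conditional expectation, then concentrate via McDiarmid --- matches the paper. But your choice of conditioning has a genuine gap, and this is exactly the ``technical crux'' you flag without resolving.

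You propose conditioning on $\mathcal F = \sigma\big((X_j)_j,(Y_j)_j,\text{edges at }i,k\big)$. Under this $\mathcal F$, the signatures $S_i(\cdot)$ and $T_k(\cdot)$ are fully determined, so $D(i,k)$ is $\mathcal F$-measurable; there is nothing left to concentrate, and your step (3) is vacuous. What you then need is a \emph{high-probability pointwise} lower bound on $D(i,k)$, which is not an expectation argument at all and which your anti-concentration sketch does not deliver. Your alternative ``localization'' route --- condition only on the neighborhoods and then argue only $O(d_i+d_k)$ edge variables matter --- also does not work as stated, because the degrees $X_j$ for $j\in N_G(i)$ are \emph{not} conditionally independent: they share the intra-neighborhood edges $G_{j\ell}$ with $j,\ell\in N_G(i)$, so McDiarmid applied to the edge variables still sees $\Theta(d_i^2)$ coordinates (or, viewed as a function of the $(X_j,Y_j)$'s, those vectors are dependent).

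The paper's fix is a specific two-round conditioning. First condition on $\mathcal F^{\mathsf f}=\sigma(G_{ij},H_{kj}:j\in[n])$ so that $J^{\mathsf f}:=\{j:G_{ij}\vee H_{kj}=1\}$ is frozen. Then condition further on $\mathcal G^{\mathsf f}$, which additionally reveals \emph{all edges with both endpoints inside} $J^{\mathsf f}\cup\{i,k\}$. After this, for each $j\in J^{\mathsf f}$ the residual randomness in $(X_j,Y_j)$ comes only from edges $\{j,\ell\}$ with $\ell\notin J^{\mathsf f}\cup\{i,k\}$; these edge sets are disjoint across $j$, so the vectors $(X_j,Y_j)$ for $j\in J^{\mathsf f}\cup\{i,k\}$ are conditionally independent. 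Now McDiarmid applies cleanly with $|J^{\mathsf f}|+2\le 3d_i$ independent coordinates, each of influence $\le 8/L$ by \Cref{prop:structureofA}, giving the deviation bound $100\sqrt{d_i\log n}/L$ (\Cref{lem:fakeconcentration}). For the conditional expectation, the paper uses the set $J_1^{\mathsf f}=\{j:G_{ij}=1,H_{kj}=0\}$, on which $B_{kj}\equiv 0$, and shows via Berry--Esseen (\Cref{lem:fakeproblowerb}, \Cref{cor:fakeproblowerb}) that each $A_{ij}(t)$ is a Bernoulli with parameter bounded away from $0$ and $1$ by $\gamma_j\gtrsim 1/L$; the anti-concentration \Cref{cor:control_f} then gives $\E[D(i,k)\mid\mathcal G^{\mathsf f}]\ge 300\sqrt{d_i}/(c_3\sqrt{\log n})$. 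Your invocation of \Cref{cor:fakeproblowerb} is on target, but it is only meaningful once the $A_{ij}(t)$'s have been rendered conditionally independent --- which is precisely what the paper's conditioning on $\mathcal G^{\mathsf f}$ achieves and your proposal does not.
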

\begin{lemma}[True pair]\label{lem:truepair}
For any constant $\alpha\in(0,1)$ and $c_{3}$, there exist constants $c_{1}',c_{5},c_{6}$ such that setting $L=c_{3}\log n$, if a pair of random graphs $(G,H)$ follows an  $(\alpha,c_{1}'\log^{2}n,\delta)$-edge-correlated distribution, then for any index $i\in[n]$, 
$$\PP\left[\frac{D(i,i)}{\sqrt{d_{i}}}\geq c_{5}\sqrt{\delta\log n}+c_{6}\delta^{1/4}+\frac{100}{c_{3}\sqrt{\log n}}\right]\leq O(n^{-3}).$$
\end{lemma}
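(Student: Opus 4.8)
The plan is to bound $D(i,i)/\sqrt{d_i}$ by first controlling its conditional expectation given an appropriate $\sigma$-algebra, and then establishing concentration around this expectation via McDiarmid's inequality. Recall that
$$D(i,i)=\int_0^1\Big|\sum_{j=1}^n A_{ij}(t)-\sum_{j=1}^n B_{ij}(t)\Big|\d t,$$
and write $W_{ij}(t):=A_{ij}(t)-B_{ij}(t)$. The key observation is that $A_{ij}(t)$ and $B_{ij}(t)$ differ only when either $G_{ij}\neq H_{ij}$ (an event of probability $2p_{ij}\delta_{ij}$), or $G_{ij}=H_{ij}=1$ but $\sqrt{X_j}$ and $\sqrt{Y_j}$ fall on opposite sides of a bin boundary (which happens roughly when $|\sqrt{X_j}-\sqrt{Y_j}|$ is non-negligible relative to $1/L$, and $X_j, Y_j$ themselves differ because of the $O(\delta d_j)$ discrepant edges at vertex $j$). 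The variance-stabilizing transformation is crucial here: since $X_j$ has mean $\asymp d_j$ and the discrepancy $|X_j-Y_j|$ is typically $\asymp\sqrt{\delta d_j}$, we get $|\sqrt{X_j}-\sqrt{Y_j}|\asymp\sqrt{\delta}$ (constant-order times $\sqrt\delta$), independent of the scale $d_j$.

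First I would set up the conditioning: following the discussion in the excerpt on \textbf{Conditioning} (\Cref{rmk:tp2rd}), condition on a $\sigma$-algebra that records the degrees $X_j, Y_j$ (equivalently $\sqrt{X_j}, \sqrt{Y_j}$, hence the bins each ball lands in) but deliberately \emph{not} on the indicators $G_{ij}, H_{ij}$ for the neighbors of $i$, so as to preserve the $G$–$H$ symmetry and keep the expectation computation tractable. Under this conditioning, $\E[\,|W_{ij}(t)|\,]$ is controlled by: (a) $\PP[G_{ij}\neq H_{ij}]\le 2p_{ij}\delta_{ij}\le 2p_{ij}\delta$; and (b) for the bin-boundary effect, $p_{ij}$ times $\mathbbm 1\{\sqrt{X_j},\sqrt{Y_j}\text{ in different bins }I_t\}$, which when integrated over $t\in[0,1)$ gives $L\cdot|\sqrt{X_j}-\sqrt{Y_j}|$ up to constants (each unit discrepancy in the $\sqrt{\cdot}$ value sweeps through $\asymp L$ bin boundaries). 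Summing over $j$ and integrating, the conditional expectation is at most (up to constants)
$$\sum_{j=1}^n p_{ij}\Big(\delta + L\,\big|\sqrt{X_j}-\sqrt{Y_j}\big|\Big).$$
Then I would take expectation / apply a high-probability bound on $\sum_j p_{ij}|\sqrt{X_j}-\sqrt{Y_j}|$. Here $\E|\sqrt{X_j}-\sqrt{Y_j}| \lesssim \E|X_j-Y_j|/\sqrt{d_j} \lesssim \sqrt{\mathrm{Var}(X_j-Y_j)}/\sqrt{d_j}$; since $X_j-Y_j$ is a sum of independent mean-zero terms with total variance $\asymp \sum_\ell p_{j\ell}\delta_{j\ell}\le \delta d_j$, we get $\E|\sqrt{X_j}-\sqrt{Y_j}|\lesssim\sqrt\delta$. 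Hence $\sum_j p_{ij}|\sqrt{X_j}-\sqrt{Y_j}|\lesssim \sqrt\delta\, d_i$ in expectation, and with $L=c_3\log n$ the expectation bound becomes $\E[D(i,i)]\lesssim d_i(\delta + \sqrt\delta\log n)$. After dividing by $\sqrt{d_i}$ and recalling $d_i\gtrsim\log^2 n$ so that $\sqrt{d_i}\lesssim d_i/\log n$ is comfortable, this matches the $c_5\sqrt{\delta\log n}\cdot\sqrt{d_i}$-type bound in the statement (the $c_6\delta^{1/4}$ and $100/(c_3\sqrt{\log n})$ terms will absorb the lower-order fluctuations and the rounding/boundary slack).

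For concentration I would apply McDiarmid's inequality (\Cref{thm:Mcdiarmid}). The randomness, conditionally on the chosen $\sigma$-algebra (degrees fixed), lies in the coordinates $(G_{ij}, H_{ij})_{j}$, i.e.\ $O(n)$ independent coordinates. Flipping one coordinate $(G_{ij},H_{ij})$ changes $A_{ij}(\cdot)$ and $B_{ij}(\cdot)$ on a set of measure at most $2/L$ each (by \Cref{prop:structureofA}), hence changes $D(i,i)$ by at most $4/L$; so the bounded-difference constants are $c_j\le 4/L$ and $\sum_j c_j^2 \le 16 n/L^2$. McDiarmid then gives a deviation of order $\sqrt{(n/L^2)\log n}\asymp \sqrt n/\sqrt{\log n}$ — which looks too large compared to $\sqrt{d_i}\sqrt{\delta\log n}$ when $\delta$ is small. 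This is where the main obstacle lies, and I expect the fix to mirror the excerpt's emphasis on transitioning to $L^2$-distances (\textbf{Dealing with dependence}, \Cref{subsec:upperbonT2}): one does not concentrate $D(i,i)$ directly at scale $\sqrt n$, but rather notes that only the $\asymp d_i$ coordinates with $p_{ij}$ not tiny — and more importantly only the "active" balls — contribute, effectively replacing $n$ by something like $d_i$ in the sum $\sum c_j^2$, OR one bounds $D(i,i)$ by an $L^2$-type quantity $\big(\int_0^1(\sum_j W_{ij}(t))^2\d t\big)^{1/2}$ via Cauchy–Schwarz (using that the effective index set has measure $1$), and the $L^2$ quantity concentrates at the smaller scale because its expectation is itself $\lesssim\sqrt{d_i\delta}$-ish. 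So the concrete plan is: (i) Cauchy–Schwarz to pass to $L^2$; (ii) bound $\E\int_0^1(\sum_j W_{ij}(t))^2\d t$ by expanding the square — the diagonal terms give $\sum_j \E\int|W_{ij}|\lesssim d_i(\delta+\sqrt\delta\log n)/L$-ish, and the cross terms $\E[W_{ij}W_{ij'}]$ require handling the residual dependence through the common degree conditioning, which is precisely the technical heart; (iii) apply McDiarmid to this $L^2$ functional (whose bounded differences are now $O(1/L\cdot\text{something small})$ because each coordinate touches measure $2/L$ and the function values are $O(\text{few})$), obtaining deviations of the right (small) order; (iv) handle the conditioning events of small probability — the event that some degree is far from its mean, or that $\sqrt{d_i}$ is not comparable to $d_i/\log n$ — by a union bound, each contributing $O(n^{-3})$ via Bernstein/Chernoff since $d_i\gtrsim\log^2 n$. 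Assembling (i)–(iv) and choosing $c_1'$ large enough (depending on $\alpha, c_3$) to beat all the $n^{-3}$ error terms yields the claim.

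\smallskip
I expect step (ii)'s cross-term analysis together with the choice of conditioning in (i) to be the main obstacle: one must show that after conditioning only on the degrees (not on the neighbor-indicators of $i$), the pairs $W_{ij}(t)$ and $W_{ij'}(t)$ are sufficiently decorrelated — the dependence enters because $X_j$ and $X_{j'}$ may share the edge $G_{jj'}$, and because conditioning on $X_j$ couples all of $j$'s incident edges — and the excerpt flags exactly this ("a greater amount of dependence between the balls will remain after the conditioning process") as the price paid for the symmetric expectation bound, to be resolved by the $L^2$ transition.
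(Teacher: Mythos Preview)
Your proposal captures several correct high-level intuitions (variance stabilization, McDiarmid for concentration, the need to transition to $L^{2}$), but there are two genuine structural gaps that would prevent the argument from closing.

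\textbf{The conditioning is wrong.} You propose to condition on the degrees $X_{j},Y_{j}$ and leave $(G_{ij},H_{ij})_{j}$ as the remaining randomness. This cannot support McDiarmid: once all degrees are fixed, the indicators $G_{ij}$ are no longer independent (they are constrained by $\sum_{j}G_{ij}=X_{i}$ and each $G_{ij}$ also participates in $X_{j}$). The paper's conditioning is quite different and is the crux of the whole argument. The first round reveals only $G_{ij}\vee H_{ij}$, which fixes the neighbor set $J^{\true}$ (so McDiarmid later runs over $\asymp d_{i}$ bundles, not $n$) while leaving it symmetric which of $G_{ij},H_{ij}$ equals $1$ when they differ. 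The second round reveals, for edges inside $J^{\true}$, only $G_{j\ell}\vee H_{j\ell}$ and $\mathbbm{1}\{G_{j\ell}\neq H_{j\ell}\}$ --- again preserving the $G\leftrightarrow H$ symmetry. After this, the remaining randomness splits into genuinely independent bundles: for each $j\in J^{\true}$ the edges from $j$ to outside $J^{\true}$ (bounded difference $8/L$), and for each $\{j,\ell\}\subset J^{\true}$ with $G_{j\ell}\neq H_{j\ell}$ that single edge (bounded difference $\lesssim (d_{j}^{-1}+d_{\ell}^{-1})^{1/2}$, obtained only \emph{after} truncating the degrees). McDiarmid on this structure gives the deviation $\frac{50\sqrt{d_{i}\log n}}{L}+50\sqrt{\alpha^{-1}d_{i}\delta\log n}$, matching the statement. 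Your degree-conditioning scheme cannot produce this decomposition.

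\textbf{The $L^{1}$ expectation bound is too weak, and $L^{2}$ is needed for the expectation, not the concentration.} Your triangle-inequality estimate $\E[D(i,i)]\le\sum_{j}\E\int_{0}^{1}|W_{ij}(t)|\,dt$ gives at best $\asymp d_{i}\sqrt{\delta}$ (your factor of $L$ in ``$L\cdot|\sqrt{X_{j}}-\sqrt{Y_{j}}|$'' is a separate error; the symmetric difference of two cyclic intervals of length $2/L$ has measure $\asymp|\sqrt{X_{j}}-\sqrt{Y_{j}}|$, not $L$ times that). Dividing by $\sqrt{d_{i}}$ leaves $\sqrt{d_{i}}\sqrt{\delta}$, which exceeds the target $c_{6}\delta^{1/4}+c_{5}\sqrt{\delta\log n}$ by a factor $\gtrsim\sqrt{\log n}$ in the regime of interest. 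The paper instead bounds $\E[D'(i,i)^{2}\mid\mathcal{G}^{\true}]$ via Cauchy--Schwarz on the $t$-integral and expands the square into diagonal terms $T_{1}(j)$ and cross terms $T_{2}(j,\ell)$. The diagonal sum is $\lesssim d_{i}\sqrt{\delta}$, but now this enters under a square root, producing the $\delta^{1/4}$ in the statement. The cross terms are where the real work lies: when $G_{j\ell}=H_{j\ell}$ one gets $T_{2}(j,\ell)=0$ by conditional independence, and when $G_{j\ell}\neq H_{j\ell}$ a delicate analysis (the ``alias world'' of \S4.6) shows $T_{2}(j,\ell)\lesssim (\log n)(d_{j}^{-1}+d_{\ell}^{-1})$; summing and using the event $\{S^{\true}\le 10\delta d_{i}\}$ gives $\sum T_{2}\lesssim d_{i}\delta\log n$. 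It is precisely for this $T_{2}$ analysis that the symmetry-preserving conditioning is indispensable: the identity $\PP[G_{ij}=G_{i\ell}=1\mid\mathcal{G}^{\true}]=\PP[G_{ij}=H_{i\ell}=1\mid\mathcal{G}^{\true}]$ (equation~\eqref{eq:neighborhoodsymmetry}) is what makes the two probabilities in $T_{2}$ nearly cancel. Your conditioning on degrees would destroy this identity.

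In short: the $L^{2}$ transition is for the expectation and relies on the symmetry built into $\mathcal{G}^{\true}$; the concentration is ordinary McDiarmid on $D'(i,i)$ but requires the paper's specific bundle decomposition and the truncation. Your proposal has neither piece in working form.
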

\Cref{thm:graphmatching} then follows as a consequence of the previous two lemmas.
\begin{proof}[Proof of \Cref{thm:graphmatching}]
We first apply \Cref{lem:fakepair} and obtain constants $c_{1}$ and $c_{3}$. Then apply \Cref{lem:truepair} to the fixed constant $c_{3}$ and obtain constants $c_{1}',c_{5}$ and $c_{6}$. Finally, pick positive constant $c_{2}$ sufficiently small such that
$$c_{5}\sqrt{c_{2}}+c_{6}c_{2}^{1/4}+\frac{100}{c_{3}}<\frac{200}{c_{3}}.$$
It now follows from \Cref{lem:fakepair,lem:truepair} that if a pair of random graphs $(G,H)$ follows an $\left(\alpha,\allowbreak \max\{c_{1},c_{1}'\}\log^{2}n, c_{2}\log^{-2}n\right)$-edge-correlated distribution, then for any pair of distinct indices $i,k \in [n]$,
$$\PP\left[D(i,i)\geq D(i,k)\right]\leq O(n^{-3}).$$
As explained at the beginning of \Cref{subsec:preanalysis}, we assume that the underlying permutation $\pi=\mathrm{id}$. Note that we have
$$\PP\left[\mathcal{A}(G,H,c_{3}\log n)\neq \mathrm{id}\right]\leq \PP\left[\exists i,k\in [n]\text{ s.t. }i\neq k\text{ and }D(i,i)\geq D(i,k)\right]\leq O(n^{-1}),$$
by a union bound over all pairs $i,k$. This finishes the proof of \Cref{thm:graphmatching}.
\end{proof}

\section{Fake Pairs}\label{sec:fakepair}

This section is devoted to proving \Cref{lem:fakepair}. Our strategy for proving \Cref{lem:fakepair} is to do two rounds of conditioning on the original randomness for $(G,H)$ before carrying out the main part of the analysis. We will define the conditionings in \Cref{subsec:fakefirstcond,subsec:fakeseccond} and then analyze the remaining randomness in \Cref{subsec:fakeconcentrate,subsec:fakeexpectation}. 

In the following, let $(G,H)$ be a pair of random graphs that follows an $(\alpha,d,\delta)$-edge-correlated distribution, and we fix vertices $i\neq k$. Without loss of generality, we assume $d_{i}\geq d_{k}$. We remark that, in the remaining of this work, the superscript $\mathsf{f}$ refers to ``fake'', and is not a parameter.
\subsection{First Round of Conditioning}\label{subsec:fakefirstcond}

\begin{definition}[Variables]
For each $j\in [n]$, define the random variable
$$R_{j}^{\fake}=\sum_{\ell\not\in\{i,k\}}p_{j\ell}(1-G_{i\ell}\vee H_{k\ell}).$$
Also define the random variables
$$R_{0}^{\fake}=\sum_{j\neq k}G_{ij}(1-H_{kj})\text{ and }R^{\fake}=2+\sum_{j\not\in\{i,k\}}G_{ij}\vee H_{kj}.$$
\end{definition}
\begin{definition}[Events]
Let $\Gamma^{\fake}$ be the event
$$\Gamma^{\fake}:=\bigcap_{j=1}^{n}\left\{R_{j}^{\fake}\geq \frac{\alpha^{2}}{2}d_{j}\right\}\cap\left\{R_{0}^{\fake}\geq \frac{\alpha}{2}d_{i}\right\}\cap\left\{R^{\fake}\leq 3d_{i}\right\}.$$
\end{definition}
\begin{proposition}\label{prop:fakeGamma}
If $d\gtrsim \log^{2}n$, then $\PP(\Gamma^{\fake})\geq 1-\exp(-\Omega(\log^{2}n))$.
\end{proposition}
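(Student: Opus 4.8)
The event $\Gamma^{\fake}$ is the intersection of the $n$ events $\{R_j^{\fake}\ge\frac{\alpha^2}{2}d_j\}$ for $j\in[n]$, the event $\{R_0^{\fake}\ge\frac{\alpha}{2}d_i\}$, and the event $\{R^{\fake}\le 3d_i\}$, so the plan is to bound the failure probability of each of these $n+2$ events by $\exp(-\Omega(\log^2 n))$ and then conclude by a union bound, using that $n=\exp(o(\log^2 n))$. The structural observation that makes this work is that each of $R_j^{\fake}$, $R_0^{\fake}$, $R^{\fake}$ is, up to an additive constant, a sum of \emph{independent} bounded random variables, and this independence rests on the standing assumption $i\ne k$: for an index $\ell\notin\{i,k\}$ the pairs $\{i,\ell\}$ and $\{k,\ell\}$ are distinct (and neither is a self-loop), so $G_{i\ell}$ and $H_{k\ell}$ are independent, and moreover as $\ell$ ranges over $[n]\setminus\{i,k\}$ the relevant edge variables depend on pairwise disjoint sets of coordinates of $(G,H)$. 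The first step is simply to record these independence facts.

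The second step is to estimate the means, using only the identity $d_j=\sum_\ell p_{j\ell}$ and the bound $p_{i\ell}\le 1-\alpha$, i.e.\ $1-p_{i\ell}\ge\alpha$. One finds $\E R_j^{\fake}=\sum_{\ell\notin\{i,k\}}p_{j\ell}(1-p_{i\ell})(1-p_{k\ell})\ge\alpha^2(d_j-2)$, which is at least $\frac34\alpha^2 d_j$ once $d$ exceeds an absolute constant; $\E R_0^{\fake}=\sum_{j\notin\{i,k\}}p_{ij}(1-p_{kj})\ge\alpha(d_i-1)\ge\frac34\alpha d_i$ (the $j=i$ term of $R_0^{\fake}$ vanishes since $G_{ii}=0$); and $\E(R^{\fake}-2)=\sum_{j\notin\{i,k\}}\bigl(1-(1-p_{ij})(1-p_{kj})\bigr)\le\sum_{j\notin\{i,k\}}(p_{ij}+p_{kj})\le d_i+d_k\le 2d_i$, where the final step uses the reduction $d_i\ge d_k$. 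In each case there is a constant-factor gap between the mean and the threshold defining the relevant event.

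The third step is concentration, and it is the only point that needs any care. Since the number of summands is of order $n$ but the deviation we can afford is only of order $d\asymp\log^2 n$, a bounded-differences (McDiarmid) estimate would be far too weak; one should instead use a variance-sensitive inequality of Bernstein or Chernoff type. The variance proxies here are all $O(d_i)$ or $O(d_j)$: for the Bernoulli sums $R_0^{\fake}$ and $R^{\fake}-2$ one has $\sum\Var\le\sum\E=O(d_i)$, and for $R_j^{\fake}$ one has $\sum\Var\le\sum_\ell p_{j\ell}^2\le d_j$. Bernstein's inequality then bounds the failure probability of each of the $n+2$ events by $\exp(-c_\alpha d)$ for a constant $c_\alpha>0$ depending only on $\alpha$, which is $\exp(-\Omega(\log^2 n))$ under $d\gtrsim\log^2 n$; a union bound over the $n+2$ events yields $\PP(\Gamma^{\fake})\ge 1-(n+2)\exp(-\Omega(\log^2 n))=1-\exp(-\Omega(\log^2 n))$. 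In short, the only ``obstacle'' is bookkeeping — correctly reading off the independence structure from $i\ne k$ and choosing a variance-aware tail bound; no genuinely new idea is required.
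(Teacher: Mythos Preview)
Your proposal is correct and follows essentially the same route as the paper: verify independence of the summands, compute the means, apply a concentration inequality to each of the $n+2$ constituent events, and finish with a union bound. The only cosmetic difference is in the choice of tail bound: the paper uses Hoeffding for $R_j^{\fake}$ (with ranges $p_{j\ell}$, so $\sum_\ell p_{j\ell}^2\le d_j$, which is why Hoeffding/McDiarmid is \emph{not} too weak there) and Chernoff for $R_0^{\fake}$ and $R^{\fake}$, whereas you use Bernstein uniformly; both yield $\exp(-c_\alpha d)$ and the arguments are interchangeable.
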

\begin{proof}
Note that the $2(n-2)$ random variables $G_{i\ell}, H_{k\ell}$ for $\ell\not\in \{i,k\}$ are mutually independent. So 
$$\E[R_{j}^{\fake}]\geq \sum_{\ell\not\in\{i,k\}}p_{j\ell}\cdot \alpha^{2}\geq \alpha^{2}(d_{j}-2).$$
By Hoeffding's inequality (\Cref{thm:Hoeffding}) and $p_{j\ell} \leq 1$ for all $j,\ell$, 
$$\PP\left[R_{j}^{\fake}\leq \frac{\alpha^{2}}{2}d_{j}\right]\leq 2\exp\left(-\frac{2\left(\alpha^{2}(d_{j}-2)-\frac{1}{2}\alpha^{2}d_{j}\right)^{2}}{\sum_{\ell\not\in\{i,k\}}p_{j\ell}^{2}}\right)\leq 2\exp\left(-\frac{\alpha^{4}}{2}(d_{j}-8)\right).$$
For $R_{0}^{\fake}$ and $R^{\fake}$, we have by independence
$$\E[R_{0}^{\fake}]=\sum_{j\neq k}p_{ij}(1-p_{kj})\geq \alpha(d_{i}-1)\text{ and }\E[R^{\fake}]\leq d_{i}+d_{k}+2\leq 2d_{i}+2.$$
Then we have by Chernoff bounds (\Cref{cor:Chernoff}) that 
$$\PP\left[R_{0}^{\fake}\leq \frac{\alpha}{2}d_{i}\right]\leq \exp\left(-\frac{\alpha(d_{i}/2-2)}{4}\right)= \exp\left(-\frac{\alpha}{8}(d_{i}-4)\right),$$
and that 
$$\PP[R^{\fake}\geq 3d_{i}]\leq  \exp\left(-\frac{1}{6}(d_{i}-5)\right).$$
It now follows from a union bound that $\PP(\Gamma^{\fake})\geq 1-\exp(-\Omega(\log^{2}n))$.
\end{proof}
\begin{definition}[Conditioning]\label{def:fp1rd}
Let $\mathcal{F}^{\fake}$ be the $\sigma$-algebra generated by the variables $G_{ij}$ and $H_{kj}$ for all $j\in[n]$.
\end{definition} 
Clearly, $\Gamma^{\fake}$ is $\mathcal{F}^{\fake}$-measurable. Throughout the remaining part of \Cref{sec:fakepair}, we will always condition on $\mathcal{F}^{\fake}$ and assume that $\Gamma^{\fake}$ holds.

\subsection{Second Round of Conditioning}\label{subsec:fakeseccond}

\begin{definition}[Sets]
Define the sets
$$J^{\fake}=\{j\in[n]\setminus\{i,k\}:G_{ij}\vee H_{kj}=1\}$$ and $$J_{1}^{\fake}=\{j\in [n]\setminus\{i,k\}:G_{ij}=1,H_{kj}=0\}\subset J^{\fake}.$$
\end{definition}
Note that under the conditioning of $\mathcal{F}^{\fake}$, the sets $J^{\fake}$ and $J_{1}^{\fake}$ are deterministic.
\begin{definition}[Variables]
For $j\in J_{1}^{\fake}$, define the random variable 
$$S_{j}^{\fake}=\sum_{\ell \in J^{\fake}\cup\{i,k\}}G_{j\ell}.$$ 
\end{definition}
\begin{definition}[Events]
Let $\Pi^{\fake}$ be the event
$$\Pi^{\fake}:=\bigcap_{j\in J_{1}^{\fake}}\{S_{j}^{\fake}\leq 2d_{j}\}.$$
\end{definition}
\begin{proposition}\label{prop:fakeTheta}
If $d\gtrsim\log^{2}n$, then $\PP(\Pi^{\fake}|\mathcal{F}^{\fake})\geq 1-\exp(-\Omega(\log^{2}n))$.
\end{proposition}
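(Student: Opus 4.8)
The plan is to reuse the conditional independence structure already exploited in the proof of \Cref{prop:fakeGamma}. Fix any realization of $\mathcal{F}^{\fake}$; then $J^{\fake}$ and $J_{1}^{\fake}$ are deterministic, and it suffices to bound $\PP[(\Pi^{\fake})^{c}\mid\mathcal{F}^{\fake}]$ uniformly over this realization. Fix $j\in J_{1}^{\fake}$. Since $J^{\fake}\subseteq[n]\setminus\{i,k\}$, the sets $J^{\fake},\{i\},\{k\}$ are pairwise disjoint, so
$$
S_{j}^{\fake}=G_{ji}+G_{jk}+\sum_{\ell\in J^{\fake}}G_{j\ell}.
$$
By the definition of $J_{1}^{\fake}$ we have $G_{ij}=1$, hence $G_{ji}=1$, and $G_{jk}\le 1$ trivially; therefore $S_{j}^{\fake}\le 2+\sum_{\ell\in J^{\fake}}G_{j\ell}$, so that
$$
\{S_{j}^{\fake}>2d_{j}\}\ \subseteq\ \Big\{\textstyle\sum_{\ell\in J^{\fake}}G_{j\ell}>2d_{j}-2\Big\}.
$$

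Next I would note that for $j\in J_{1}^{\fake}$ and $\ell\in J^{\fake}$, neither endpoint of the edge $\{j,\ell\}$ lies in $\{i,k\}$, whereas $\mathcal{F}^{\fake}$ depends only on edge-pairs incident to $i$ or to $k$; hence the pairs $(G_{j\ell},H_{j\ell})$ with $\ell\in J^{\fake}$ are jointly independent of $\mathcal{F}^{\fake}$. Consequently, conditionally on $\mathcal{F}^{\fake}$, the variables $(G_{j\ell})_{\ell\in J^{\fake}}$ remain mutually independent with $G_{j\ell}\sim\mathrm{Bernoulli}(p_{j\ell})$, so $\sum_{\ell\in J^{\fake}}G_{j\ell}$ is a sum of independent Bernoulli random variables with conditional mean $\mu_{j}:=\sum_{\ell\in J^{\fake}}p_{j\ell}\le\sum_{\ell=1}^{n}p_{j\ell}=d_{j}$. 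Because $d_{j}\ge d\gtrsim\log^{2}n$ and the threshold $2d_{j}-2$ exceeds $\mu_{j}$ by at least $d_{j}-2$, which is itself $\Omega(d_{j})$, a Chernoff bound (\Cref{cor:Chernoff}) gives
$$
\PP\Big[\textstyle\sum_{\ell\in J^{\fake}}G_{j\ell}>2d_{j}-2\ \Big|\ \mathcal{F}^{\fake}\Big]\ \le\ \exp\big(-\Omega(d_{j})\big)\ =\ \exp\big(-\Omega(\log^{2}n)\big).
$$

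It then remains to take a union bound over $j\in J_{1}^{\fake}$: since $|J_{1}^{\fake}|\le n$ and $n=\exp(o(\log^{2}n))$, the factor $n$ is absorbed and we conclude $\PP[(\Pi^{\fake})^{c}\mid\mathcal{F}^{\fake}]\le n\exp(-\Omega(\log^{2}n))=\exp(-\Omega(\log^{2}n))$, as desired. I do not expect a genuine obstacle here; the only points needing care are the bookkeeping of which edge-variables are $\mathcal{F}^{\fake}$-measurable and which stay independent of $\mathcal{F}^{\fake}$ (so that the conditional law of $\sum_{\ell\in J^{\fake}}G_{j\ell}$ coincides with the unconditional one), and the verification that the Chernoff estimate is uniform over all admissible values $\mu_{j}\in[0,d_{j}]$ — which holds precisely because the deviation $2d_{j}-2-\mu_{j}$ is always $\Omega(d_{j})$.
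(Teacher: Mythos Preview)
Your proposal is correct and follows essentially the same approach as the paper: bound the conditional mean of $S_{j}^{\fake}$ by roughly $d_{j}$, apply a Chernoff bound to a sum of conditionally independent Bernoulli variables, and union bound over $j\in J_{1}^{\fake}$. The only cosmetic difference is that you first split off $G_{ji}+G_{jk}\le 2$ deterministically and then apply Chernoff to $\sum_{\ell\in J^{\fake}}G_{j\ell}$, whereas the paper applies Chernoff directly to $S_{j}^{\fake}$ after noting $\E[S_{j}^{\fake}\mid\mathcal{F}^{\fake}]\le d_{j}+2$.
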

\begin{proof}
By definition of $S_{j}^{\fake}$, we have 
$$\E[S_{j}^{\fake}|\mathcal{F}^{\fake}]\leq 2+\sum_{\ell\in J^{\fake}}p_{j\ell}\leq d_{j}+2.$$
Since $G_{j\ell}$ (for $\ell\in J^{\fake}\cup\{i,k\}$) are still independent variables after conditioning on $\mathcal{F}^{\fake}$, by Chernoff bounds (\Cref{cor:Chernoff}) we have
$$\PP[S_{j}^{\fake}\geq 2d_{j}|\mathcal{F}^{\fake}]\leq \exp\left(-\frac{d_{j}-6}{3}\right).$$
It follows from a union bound that $\PP(\Pi^{\fake}|\mathcal{F}^{\fake})\geq 1-\exp(-\Omega(\log^{2}n))$.
\end{proof}
\begin{definition}\label{def:fp2rd}
Let $\mathcal{G}^{\fake}$ be the $\sigma$-algebra generated by $\mathcal{F}^{\fake}$ and the variables $G_{j\ell},H_{j\ell}$ for $j,\ell\in J^{\fake}\cup\{i,k\}$.
\end{definition}
Clearly, $\Pi^{\fake}$ is $\mathcal{G}^{\fake}$-measurable. Throughout the rest of \Cref{sec:fakepair}, we will always condition on $\mathcal{G}^{\fake}$ and assume that $\Gamma^{\fake}\cap\Pi^{\fake}$ holds.

\subsection{Concentration of Distance}\label{subsec:fakeconcentrate}

One important way in which the conditional distribution of $D(i,k)$ given $\mathcal{G}^{\fake}$ behaves nicer than the original distribution of $D(i,k)$ is that its concentration is more directly analyzable after the conditioning. In particular, the randomness in $D(i,k)$ now naturally decomposes into independent parts, lending itself to the powerful tool of McDiarmid's inequality.  

\begin{lemma}[Concentration]\label{lem:fakeconcentration}
On $\Gamma^{\fake}\cap\Pi^{\fake}$, we have
$$\PP\left[D(i,k)\leq \E[D(i,k)|\mathcal{G}^{\fake}]-\frac{100\sqrt{d_{i}\log n}}{L}\middle| \mathcal{G}^{\fake}\right]\leq O(n^{-3}).$$
\end{lemma}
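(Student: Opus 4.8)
The plan is to apply McDiarmid's bounded-difference inequality (\Cref{thm:Mcdiarmid}) to the conditional law of $D(i,k)$ given $\mathcal{G}^{\fake}$, on the event $\Gamma^{\fake}\cap\Pi^{\fake}$. First I would identify the source of the remaining randomness after conditioning on $\mathcal{G}^{\fake}$. The quantity $D(i,k)=\int_0^1|\sum_j A_{ij}(t)-\sum_j B_{kj}(t)|\,\d t$ depends on $(G_{ij},H_{kj})_j$ (already $\mathcal{F}^{\fake}$-measurable, hence fixed), on the degrees $X_j=\sum_\ell G_{j\ell}$ and $Y_j=\sum_\ell H_{j\ell}$, and on the edge indicators $G_{ij},H_{kj}$. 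Given $\mathcal{G}^{\fake}$, the edges inside $J^{\fake}\cup\{i,k\}$ are frozen, so the only free randomness is the pairs $(G_{j\ell},H_{j\ell})$ with at least one endpoint outside $J^{\fake}\cup\{i,k\}$; these are mutually independent across the relevant index pairs. I would organize them into independent blocks — one natural choice is, for each vertex $\ell\notin J^{\fake}\cup\{i,k\}$, the block $W_\ell=(G_{j\ell},H_{j\ell})_{j}$ of all edges incident to $\ell$ — so that $D(i,k)$ is a function of finitely many independent inputs.

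Next I would establish the bounded-difference constants. The key observation is that only balls $\sqrt{X_j}$ with $j\in J^{\fake}_1$ (i.e.\ $G_{ij}=1,H_{kj}=0$, up to the symmetric contributions and the $\{i,k\}$-terms) actually enter $\sum_j A_{ij}(t)-\sum_j B_{kj}(t)$ with a nonzero coefficient, since $A_{ij}(t)=0$ unless $G_{ij}=1$. By \Cref{prop:structureofA}, changing a single ball $\sqrt{X_j}$ moves an indicator of a cyclic interval of measure $2/L$, so it changes $\int_0^1|\cdots|\,\d t$ by at most $2\cdot(2/L)=4/L$ per affected ball. Now flipping one coordinate block $W_\ell$ changes the degree $X_j$ (for $j$ in the relevant set) by at most $1$ for each $j$ adjacent to $\ell$, i.e.\ it perturbs at most a controlled number of balls. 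The crucial point where the conditioning event $\Gamma^{\fake}\cap\Pi^{\fake}$ is used: the number of balls that can be affected, summed appropriately, is governed by $R^{\fake}\le 3d_i$ (total size of $J^{\fake}$) and by $S_j^{\fake}\le 2d_j$ (degrees of the relevant $j$'s stay bounded), so that the sum of squared bounded-difference constants is $O(d_i/L^2)$. Plugging into McDiarmid gives a deviation bound of the form $\exp(-\Omega(L^2 u^2/d_i))$, and choosing the deviation $u=100\sqrt{d_i\log n}/L$ yields the claimed $O(n^{-3})$ (one should double-check $D(i,k)\le D(i,k)$ absolute-value-bounded by $O(d_i/L)$ times the right thing, but only the difference constants matter for McDiarmid).

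More precisely, I would argue that a change in one block $W_\ell$ affects $\sqrt{X_j}$ only for those $j$ with $\ell$ a neighbor, and it only matters when $j\in J^{\fake}_1$ (for the $A$ side) or the analogous set on the $B$ side; but conditioned on $\Gamma^{\fake}$ the set $J^{\fake}_1$ has $O(d_i)$ elements and conditioned on $\Pi^{\fake}$ each $S^{\fake}_j$ is $O(d_j)$, which lets me count that block $\ell$ touches $O(1)$ relevant balls on average, with the sum over $\ell$ of (number of relevant balls touched)$^2$ being $O(d_i)$. Each touched ball contributes at most $4/L$ to the change in $D(i,k)$, so the block-$\ell$ difference constant is $c_\ell = O(\#\{\text{relevant balls touched by }\ell\}/L)$ and $\sum_\ell c_\ell^2 = O(d_i/L^2)$. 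Then McDiarmid gives
$$\PP\Big[D(i,k)\le \E[D(i,k)\mid\mathcal G^{\fake}]-u\,\Big|\,\mathcal G^{\fake}\Big]\le \exp\!\Big(-\frac{2u^2}{\sum_\ell c_\ell^2}\Big)=\exp\!\Big(-\Omega\big(\tfrac{L^2 u^2}{d_i}\big)\Big),$$
and with $u=100\sqrt{d_i\log n}/L$ the exponent is $-\Omega(\log n)$, in fact $\le -3\log n$ for the implied constants, giving $O(n^{-3})$.

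The main obstacle I anticipate is the careful bookkeeping of which independent blocks affect which balls and, especially, bounding $\sum_\ell c_\ell^2$ cleanly — one must be careful that a single vertex $\ell$ can be a neighbor of many $j\in J^{\fake}_1$, so the per-block difference could a priori be large; this is exactly why the conditioning on $\Pi^{\fake}$ (controlling $S^{\fake}_j\le 2d_j$) and on $\Gamma^{\fake}$ (controlling $|J^{\fake}|$ via $R^{\fake}\le 3d_i$ and $|J^{\fake}_1|$ via $R^{\fake}_0$) is indispensable. A secondary subtlety is handling the contribution of the "cyclic wrap-around" case in \Cref{prop:structureofA} and the $\{i,k\}$-terms, but these only affect constants. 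One also needs to verify that $A_{ij}(t)$ for $j\notin J^{\fake}$ (where $G_{ij}=0$) contributes nothing and that $A_{ij}(t),B_{kj}(t)$ for $j\in\{i,k\}$ are handled by the "$+2$" slack already built into $R^{\fake}$; these are bookkeeping points rather than real difficulties.
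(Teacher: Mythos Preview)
Your overall plan to apply McDiarmid's inequality is the right one, but the decomposition you choose is the wrong one, and this is a genuine gap rather than a bookkeeping issue. You group the remaining randomness into blocks $W_\ell$ indexed by \emph{outside} vertices $\ell\notin J^{\fake}\cup\{i,k\}$. For McDiarmid, the difference constant $c_\ell$ must be a \emph{uniform} bound over all possible values of the block; ``on average'' estimates are not enough. But $W_\ell$ contains $G_{j\ell}$ for every $j\in J^{\fake}\cup\{i,k\}$, so an arbitrary change of $W_\ell$ can flip every one of these $G_{j\ell}$'s and hence perturb every ball $\sqrt{X_j}$ with $j\in J^{\fake}\cup\{i,k\}$. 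That forces $c_\ell$ of order $|J^{\fake}\cup\{i,k\}|/L\le 3d_i/L$, and with $\Theta(n)$ blocks you get $\sum_\ell c_\ell^2=O(nd_i^2/L^2)$, which is hopeless. Your attempted remedy --- ``block $\ell$ touches $O(1)$ relevant balls on average'' --- does not fit the hypotheses of McDiarmid, and the events $\Pi^{\fake}$ and $R_0^{\fake}\ge \alpha d_i/2$ you invoke do not help here: $S_j^{\fake}$ counts edges with \emph{both} endpoints inside $J^{\fake}\cup\{i,k\}$, which are already frozen in $\mathcal{G}^{\fake}$, and says nothing about how many outside vertices $\ell$ are adjacent to $j$.

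The fix (and the paper's argument) is to decompose by the \emph{inside} vertex $j\in J^{\fake}\cup\{i,k\}$ instead. Conditionally on $\mathcal{G}^{\fake}$, the random part of $(X_j,Y_j)$ is $\sum_{\ell\notin J^{\fake}\cup\{i,k\}}(G_{j\ell},H_{j\ell})$; these sums involve disjoint edge sets as $j$ varies over $J^{\fake}\cup\{i,k\}$, hence are mutually independent. Viewing $D(i,k)$ as a function of the independent inputs $(X_j,Y_j)_{j\in J^{\fake}\cup\{i,k\}}$, a change in a single $(X_{j_0},Y_{j_0})$ alters only the two indicator functions $A_{ij_0}(\cdot)$ and $B_{kj_0}(\cdot)$, each of support at most $2/L$ by \Cref{prop:structureofA}, so the difference constant is $8/L$ \emph{regardless of how far the ball moves}. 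With $|J^{\fake}\cup\{i,k\}|\le R^{\fake}\le 3d_i$ blocks (this is the only place the conditioning event is used; $\Pi^{\fake}$ is not needed here), $\sum_j c_j^2\le 3d_i(8/L)^2$, and McDiarmid gives the claimed bound directly.
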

\begin{proof}
After both rounds of conditioning, the distance $D(i,k)$ is now a function of the vectors $(X_{j},Y_{j})$, for $j\in J^{\fake}\cup\{i,k\}$. In addition, these vectors are now independent of each other, as $(X_{j},Y_{j})$ now only depends on $(G_{j\ell},H_{j\ell})$ for $\ell\not\in J^{\fake}\cup\{i,k\}$. 

Now, fix some $j_{0}\in J^{\fake}\cup\{i,k\}$ and consider the marginal difference the vector $(X_{j_{0}},Y_{j_{0}})$ can cause on the value of $D(i,k)$ when the values of all other vectors are fixed. As defined in \eqref{eq:defofdistance}, the value of $D(i,k)$ is determined by the $2n$ functions $A_{ir}(\cdot)$ and $B_{kr}(\cdot)$, for $r\in [n]$. Keeping $(X_{j},Y_{j})$ fixed for all $j\in J^{\fake}\cup\{i,k\}\setminus\{j_{0}\}$, a change in the vector $(X_{j_{0}},Y_{j_{0}})$ will only change two of the $2n$ functions: $A_{ij_{0}}(t)$ and $B_{kj_{0}}(t)$. From \Cref{prop:structureofA} we know that $A_{ij_{0}}(\cdot)$ and $B_{kj_{0}}(\cdot)$ are always indicator functions with supports having Lebesgue measure no more than $2/L$. So using the superscript $\mathsf{new}$ to denote the values after the change in the vector $(X_{j_{0}},Y_{j_{0}})$, we have
\begin{align*}
\left|D(i,k)^{\mathsf{new}}-D(i,k)\right|&\leq \sum_{r=1}^{n}\int_{0}^{1}\Big(\left|A_{ir}(t)^{\mathsf{new}}-A_{ir}(t)\right|+\left|B_{kr}(t)^{\mathsf{new}}-B_{kr}(t)\right|\Big)\d t\\
&\leq\int_{0}^{1}\left|A_{ij_{0}}(t)^{\mathsf{new}}\right|\d t+\int_{0}^{1}\left|A_{ij_{0}}(t)\right|\d t+\int_{0}^{1}\left|B_{kj_{0}}(t)^{\mathsf{new}}\right|\d t+\int_{0}^{1}\left|B_{kj_{0}}(t)\right|\d t\\
&\leq \frac{2}{L}+\frac{2}{L}+\frac{2}{L}+\frac{2}{L}=\frac{8}{L}.
\end{align*}

At any $\omega\sim(\mathcal{G}^{\fake},\Gamma^{\fake}\cap\Pi^{\fake})$, we can now conclude by McDiarmid's inequality (\Cref{thm:Mcdiarmid}) that with conditional probability (on $\mathcal{G}^{\fake}$) at least $1-O\left(n^{-3}\right)$,
$$D(i,k)\geq \E[D(i,k)|\mathcal{G}^{\fake}]-\sqrt{\frac{1}{2}\cdot3\log n\cdot |J^{\fake}\cup\{i,k\}|\cdot\left(\frac{8}{L}\right)^{2}}\geq \E[D(i,k)|\mathcal{G}^{\fake}]-\frac{100\sqrt{d_{i}\log n}}{L}.$$
In the last inequality we used $|J^{\fake}\cup\{i,k\}|\leq R^{\fake}\leq 3d_{i}$, as guaranteed by $\Gamma^{\fake}$.
\end{proof}

\subsection{Expectation of Distance}\label{subsec:fakeexpectation}

Having \Cref{lem:fakeconcentration} at our disposal, the next logical step towards a proof of \Cref{lem:fakepair} is to give a lower bound on the conditional expectation $\E[D(i,k)| \mathcal{G}^{\fake}]$ evaluated on $\Gamma^{\fake}\cap\Pi^{\fake}$. This is what we will achieve in this subsection.

Recall that conditional on $\mathcal{G}^{\fake}$, the set $J_{1}^{\fake}$ is deterministic. Fix $j \in J_{1}^{\fake}$. On the one hand, we get from the  definition of $J_{1}^{\fake}$ that conditional on $\mathcal{G}^{\fake}$, almost surely $Y_{j}=0$. On the other hand, $X_{j}$ can now be written as
$$X_{j}=S_{j}^{\fake}+\sum_{\ell\not\in J^{\fake}\cup\{i,k\}}G_{j\ell}.$$
Conditional on $\mathcal{G}^{\fake}$, the first part $S_{j}^{\fake}$ is a deterministic quantity, while the second part is a sum of independent Bernoulli variables. We have
$$\E[X_{j}|\mathcal{G}^{\fake}]=S_{j}^{\fake}+R_{j}^{\fake}$$
and
$$\mathrm{Var}[X_{j}|\mathcal{G}^{\fake}]=\sum_{\ell\not\in J^{\fake}\cup\{i,k\}}p_{j\ell}(1-p_{j\ell})\in[\alpha R_{j}^{\fake},R_{j}^{\fake}].$$
In addition, recall that on $\Gamma^{\fake}\cap\Pi^{\fake}$ we have $S_{j}^{\fake}\leq 2d_{j}$ and $\alpha^{2}d_{j}/2\leq R_{j}^{\fake}\leq d_{j}$.
\begin{lemma}\label{lem:fakeproblowerb}
Fix some $\omega\sim (\mathcal{G}^{\fake},\Gamma^{\fake}\cap\Pi^{\fake})$ of the underlying probability space.
Fix a vertex $j\in J_{1}^{\fake}$. Let $a,b$ be positive real numbers such that $b-a=2/L$ and 
$$\sqrt{S_{j}^{\fake}+R_{j}^{\fake}}-2\leq a<b\leq \sqrt{S_{j}^{\fake}+R_{j}^{\fake}}.$$
Let 
\begin{equation}\label{eq:gamma_j}
\gamma_{j}:=\frac{\sqrt{2/\pi}\cdot \alpha e^{-100/\alpha^{2}}}{L}-\frac{2\sqrt{2}\alpha^{-3/2}}{\sqrt{d_{j}}}.
\end{equation}
Then we have at $\omega$
$$\PP\left[a\leq \sqrt{X_{j}}\leq b\middle|\mathcal{G}^{\fake}\right]\geq \gamma_{j}.$$
\end{lemma}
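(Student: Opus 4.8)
\textbf{Proof proposal for \Cref{lem:fakeproblowerb}.}

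The plan is to estimate the conditional probability $\PP[a \le \sqrt{X_j} \le b \mid \mathcal{G}^{\fake}]$ by rewriting the event in terms of $X_j$ itself and then applying a quantitative central limit / Berry--Esseen estimate to the sum of independent Bernoulli variables that makes up $X_j$ after conditioning. Concretely, I would first observe that, conditional on $\mathcal{G}^{\fake}$, we may write $X_j = S_j^{\fake} + W_j$ where $W_j = \sum_{\ell \notin J^{\fake} \cup \{i,k\}} G_{j\ell}$ is a sum of independent Bernoulli$(p_{j\ell})$ variables with (as recorded just before the lemma) $\E[W_j \mid \mathcal{G}^{\fake}] = R_j^{\fake}$ and $\Var[W_j \mid \mathcal{G}^{\fake}] = \sigma_j^2 \in [\alpha R_j^{\fake}, R_j^{\fake}]$. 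The event $\{a \le \sqrt{X_j} \le b\}$ is $\{a^2 - S_j^{\fake} \le W_j \le b^2 - S_j^{\fake}\}$, an interval for $W_j$ of length $b^2 - a^2 = (b-a)(b+a) = \tfrac{2}{L}(a+b)$. Since $b \le \sqrt{S_j^{\fake} + R_j^{\fake}}$ and $a \ge \sqrt{S_j^{\fake}+R_j^{\fake}} - 2$, this length is at least $\tfrac{2}{L}(2\sqrt{S_j^{\fake}+R_j^{\fake}} - 2)$, which is of order $\sqrt{d_j}/L$ because $S_j^{\fake} + R_j^{\fake} \ge R_j^{\fake} \ge \tfrac{\alpha^2}{2} d_j$; moreover this target interval sits essentially at the mean of $W_j$ (its endpoints $a^2 - S_j^{\fake}$, $b^2 - S_j^{\fake}$ straddle or are within $O(\sqrt{d_j})$ of $R_j^{\fake}$), so the standardized location is $O(1)$.

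Next I would invoke a Berry--Esseen-type bound for lattice (integer-valued) sums of independent Bernoullis: for an interval $[c,d] \subseteq \mathbb{Z}$-endpoints, $\PP[c \le W_j \le d] \ge \Phi\big(\tfrac{d - R_j^{\fake}}{\sigma_j}\big) - \Phi\big(\tfrac{c - R_j^{\fake}}{\sigma_j}\big) - \tfrac{C}{\sigma_j}$ for an absolute constant $C$ (Esseen's inequality; the $1/\sigma_j$ error is where the $2\sqrt 2 \alpha^{-3/2}/\sqrt{d_j}$ correction term in \eqref{eq:gamma_j} comes from, using $\sigma_j^2 \ge \alpha R_j^{\fake} \ge \tfrac{\alpha^3}{2}d_j$). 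The Gaussian main term is then bounded below by (interval length)$/\sigma_j$ times the minimum of the standard normal density over the relevant range. The interval length in $W_j$-units is $\ge \tfrac{2}{L}(2\sqrt{S_j^{\fake}+R_j^{\fake}} - 2) \approx \tfrac{4}{L}\sqrt{S_j^{\fake}+R_j^{\fake}}$, and dividing by $\sigma_j \le \sqrt{R_j^{\fake}} \le \sqrt{S_j^{\fake} + R_j^{\fake}}$ gives a ratio $\gtrsim \tfrac{4}{L}$, up to the $-2$ lower-order loss. The normal density evaluated at the standardized endpoints: since the endpoints are within distance $\lesssim \sqrt{d_j}$ of $R_j^{\fake}$ and $\sigma_j \ge \sqrt{\alpha^3 d_j / 2}$, the standardized argument has absolute value $\lesssim \sqrt{2/\alpha^3}$, so the density is $\ge \tfrac{1}{\sqrt{2\pi}} e^{-C'/\alpha^3}$ — this is the source of the $\sqrt{2/\pi}\cdot \alpha e^{-100/\alpha^2}$ factor (with the constants in \eqref{eq:gamma_j} chosen generously so the inequality holds; I would just track that the exponent and prefactor can be absorbed into the stated form). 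Combining, $\PP[a \le \sqrt{X_j} \le b \mid \mathcal{G}^{\fake}] \ge \gamma_j$.

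The main obstacle is the quantitative anticoncentration / local CLT step: one needs a \emph{lower} bound on the probability that $W_j$ lands in a short interval of length $\Theta(\sqrt{d_j}/L)$ — note $L \approx c_3 \log n$, so the interval has length $\Theta(\sqrt{d_j}/\log n)$, which since $d_j \gtrsim \log^2 n$ is at least of constant order but could be much larger — positioned near the mean, and this must be uniform over the conditioning $\omega \sim (\mathcal{G}^{\fake}, \Gamma^{\fake} \cap \Pi^{\fake})$. A clean route is Esseen's interval form of Berry--Esseen together with the uniform control on $\sigma_j^2$ and on the endpoint locations supplied by $\Gamma^{\fake} \cap \Pi^{\fake}$ (namely $S_j^{\fake} \le 2 d_j$, $\tfrac{\alpha^2}{2} d_j \le R_j^{\fake} \le d_j$); one must be slightly careful that $a^2 - S_j^{\fake}$ and $b^2 - S_j^{\fake}$ need not be integers, but rounding them inward to integers costs at most $1$ in each direction, which only weakens the interval length by an additive $2$ — precisely the $-2$ appearing in the hypothesis $\sqrt{S_j^{\fake}+R_j^{\fake}} - 2 \le a$ and in the length estimate. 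Everything else is bookkeeping with the explicit constants.
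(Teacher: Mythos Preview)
Your proposal is correct and follows essentially the same route as the paper: write $X_j = S_j^{\fake} + W_j$ with $W_j$ a sum of independent Bernoullis, convert $\{a\le\sqrt{X_j}\le b\}$ to an interval event for $X_j$ (equivalently $W_j$), apply Berry--Esseen to replace $X_j$ by a Gaussian with the same mean and variance, and lower-bound the Gaussian probability by (minimum density on the interval) $\times$ (interval length), using the bounds $S_j^{\fake}\le 2d_j$, $\tfrac{\alpha^2}{2}d_j\le R_j^{\fake}\le d_j$, and $\sigma_j^2\in[\alpha R_j^{\fake},R_j^{\fake}]$ supplied by $\Gamma^{\fake}\cap\Pi^{\fake}$.

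One small correction: your remark about rounding $a^2-S_j^{\fake}$ and $b^2-S_j^{\fake}$ inward to integers is unnecessary---the Berry--Esseen bound (as stated in the paper's \Cref{cor:B-E}) applies to arbitrary real endpoints, so no rounding is needed. Relatedly, the ``$-2$'' in the hypothesis $\sqrt{S_j^{\fake}+R_j^{\fake}}-2\le a$ is not there to absorb a rounding loss; its role is to ensure the interval $[a^2,b^2]$ stays within $O(\sqrt{d_j})$ of the mean $S_j^{\fake}+R_j^{\fake}$, so that the standardized endpoints are $O(\alpha^{-3/2})$ and the Gaussian density is bounded below. This is a minor misattribution and does not affect the validity of your argument.
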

\begin{proof}
Let $X'$ be a normal random variable with mean $S_{j}^{\fake}+R_{j}^{\fake}=\E[X_{j}|\mathcal{G}^{\fake}]$ and variance $\mathrm{Var}[X_{j}|\mathcal{G}^{\fake}]$. Then
\begin{align*}
\left|a^{2}-(S_{j}^{\fake}+R_{j}^{\fake})\right|&= \left|a-\sqrt{S_{j}^{\fake}+R_{j}^{\fake}}\right|\cdot \left|a+\sqrt{S_{j}^{\fake}+R_{j}^{\fake}}\right|\\
&\leq 2\cdot 2\sqrt{S_{j}^{\fake}+R_{j}^{\fake}}\\
&\leq 4\sqrt{3d_{j}}.
\end{align*}
Similarly $|b^{2}-(S_{j}^{\fake}+R_{j}^{\fake})|\leq 4\sqrt{3d_{j}}$. So the density function of $X'$ on the interval $[a^{2},b^{2}]$ is no less than
$$\frac{1}{\sqrt{2\pi\mathrm{Var}[X']}}\exp\left(-\frac{(4\sqrt{3d_{j}})^{2}}{2\mathrm{Var}[X']}\right)\geq\frac{1}{\sqrt{2\pi d_{j}}}e^{-100/\alpha^{3}}.$$
Since $b^{2}-a^{2}\geq (b-a)\left(2\sqrt{S_{j}^{\fake}+R_{j}^{\fake}}-4\right)\geq 2\alpha\sqrt{d_{j}}/L$, it follows that
$$\PP[a^{2}\leq X'\leq b^{2}]\geq \sqrt{2/\pi}\cdot \alpha e^{-100/\alpha^{2}}\cdot \frac{1}{L}.$$
By Berry-Esseen theorem (\Cref{cor:B-E}), we have at $\omega$
\[\PP[a^{2}\leq X_{j}\leq b^{2}|\mathcal{G}^{\fake}]\geq \PP[a^{2}\leq X'\leq b^{2}]-\frac{2}{\sqrt{\mathrm{Var}[X_{j}|\mathcal{G}^{\fake}]}}\geq \frac{\sqrt{2/\pi}\cdot \alpha e^{-100/\alpha^{2}}}{L}-\frac{2\sqrt{2}\alpha^{-3/2}}{\sqrt{d_{j}}}=\gamma_{j}.\qedhere\]
\end{proof}
\begin{corollary}\label{cor:fakeproblowerb}
Assume that $L\geq 5$. Fix a realization $\omega\sim (\mathcal{G}^{\fake},\Gamma^{\fake}\cap\Pi^{\fake})$, and fix a vertex $j\in J_{1}^{\fake}$. Let $\gamma_{j}$ be as in \eqref{eq:gamma_j}. Then for each $t\in [0,1)$, we have at $\omega$
\begin{equation}\label{eq:prob_Bernoulli}
\gamma_{j} \leq \PP[A_{ij}(t)=1|\mathcal{G}^{\fake}]\leq 1-\gamma_{j}.
\end{equation}
\end{corollary}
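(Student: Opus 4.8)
The plan is to reduce \eqref{eq:prob_Bernoulli} to two applications of \Cref{lem:fakeproblowerb}. First, since $j\in J_{1}^{\fake}$ means $G_{ij}=1$, the second indicator in \eqref{eq:defofA} equals $1$, so $A_{ij}(t)=\mathbbm{1}\{\sqrt{X_{j}}\in I_{t}\}$ and hence $\PP[A_{ij}(t)=1\mid\mathcal{G}^{\fake}]=\PP[\sqrt{X_{j}}\in I_{t}\mid\mathcal{G}^{\fake}]$; thus \eqref{eq:prob_Bernoulli} is equivalent to the pair of lower bounds $\PP[\sqrt{X_{j}}\in I_{t}\mid\mathcal{G}^{\fake}]\ge\gamma_{j}$ and $\PP[\sqrt{X_{j}}\notin I_{t}\mid\mathcal{G}^{\fake}]\ge\gamma_{j}$. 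One may also assume $\gamma_{j}>0$, since otherwise \eqref{eq:prob_Bernoulli} is trivial; and $\gamma_{j}>0$ together with $R_{j}^{\fake}\ge\tfrac{\alpha^{2}}{2}d_{j}$ (which holds on $\Gamma^{\fake}$) forces $\mu:=\sqrt{S_{j}^{\fake}+R_{j}^{\fake}}$ to exceed an $\alpha$-dependent constant, in particular $\mu>2$, a fact used below to guarantee positivity of interval endpoints. With these reductions it suffices to produce, for each fixed $t$, a sub-interval of $I_{t}$ and a sub-interval of $\mathbb{R}\setminus I_{t}$, each of length exactly $2/L$ and each contained in $[\mu-2,\mu]$: feeding such intervals into \Cref{lem:fakeproblowerb} yields both lower bounds.

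For the sub-interval of $I_{t}$: the intervals $I_{t,m}=[t+m-\tfrac1L,t+m+\tfrac1L]$ have midpoints on the spacing-$1$ lattice $\{t+m:m\in\mathbb{Z}\}$, so the length-$1$ closed window $[\mu-\tfrac32,\mu-\tfrac12]$ contains some midpoint $c=t+m_{0}$. I would take $[a,b]:=I_{t,m_{0}}\subseteq I_{t}$; since $L\ge5$ gives $\tfrac1L\le\tfrac12$, we get $I_{t,m_{0}}\subseteq[\mu-2,\mu]$, and largeness of $\mu$ makes $a>0$. Then \Cref{lem:fakeproblowerb} applies and gives $\PP[\sqrt{X_{j}}\in I_{t}\mid\mathcal{G}^{\fake}]\ge\PP[a\le\sqrt{X_{j}}\le b\mid\mathcal{G}^{\fake}]\ge\gamma_{j}$.

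For the sub-interval of $\mathbb{R}\setminus I_{t}$: here $\mathbb{R}\setminus I_{t}=\bigcup_{m}(t+m+\tfrac1L,\,t+m+1-\tfrac1L)$, a union of open intervals of length $1-\tfrac2L$ whose centers $\{t+m+\tfrac12:m\in\mathbb{Z}\}$ again form a spacing-$1$ lattice, so $[\mu-\tfrac32,\mu-\tfrac12]$ contains one such center $c'$. Take $[a',b']:=[c'-\tfrac1L,c'+\tfrac1L]$. The inclusion $[a',b']\subseteq(t+m+\tfrac1L,t+m+1-\tfrac1L)$ is equivalent to $\tfrac1L<\tfrac12-\tfrac1L$, i.e.\ $L>4$, which is exactly where the hypothesis $L\ge5$ is genuinely used; the inclusion $[a',b']\subseteq[\mu-2,\mu]$ and the positivity of $a'$ are as in the previous paragraph. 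A second application of \Cref{lem:fakeproblowerb} then gives $\PP[\sqrt{X_{j}}\notin I_{t}\mid\mathcal{G}^{\fake}]\ge\gamma_{j}$, completing the argument.

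I expect the only real obstacle to be the interval bookkeeping just described: all the probabilistic substance—the Gaussian comparison, the density bound, the Berry--Esseen correction—is already packaged in \Cref{lem:fakeproblowerb}, so the remaining work is the purely geometric task of simultaneously fitting a window of width $2/L$ inside $I_{t}$ (respectively inside its complement) and inside the width-$2$ band $[\mu-2,\mu]$ that \Cref{lem:fakeproblowerb} permits, together with checking the harmless edge cases where the relevant lattice point lies on the boundary of $[\mu-\tfrac32,\mu-\tfrac12]$ (harmless since that window is closed). The constraint $L\ge5$ is precisely the slack needed for the complementary window.
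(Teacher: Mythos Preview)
Your proposal is correct and follows essentially the same approach as the paper: both reduce to $\PP[\sqrt{X_j}\in I_t\mid\mathcal{G}^{\fake}]$ via $G_{ij}=1$, both locate an integer $m_0$ with $t+m_0\in[\mu-\tfrac32,\mu-\tfrac12]$ and apply \Cref{lem:fakeproblowerb} to $I_{t,m_0}$ for the lower bound. For the upper bound the paper phrases things slightly differently---it applies the already-established lower bound to the shifted parameter $t'$ with $|t'-t|=\tfrac12$ and uses $I_t\cap I_{t'}=\emptyset$ (which needs $1/L<1/4$, i.e.\ $L\ge5$)---but your interval $[a',b']$ centered at $c'=t+m+\tfrac12$ is precisely $I_{t',m}$, so the two arguments are the same computation organized differently.
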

\begin{proof}
For each $t\in [0,1)$, there is some $m\in\mathbb{Z}$ such that $$\sqrt{S_{j}^{\fake}+R_{j}^{\fake}}-\frac{3}{2}\leq t+m\leq \sqrt{S_{j}^{\fake}+R_{j}^{\fake}}-\frac{1}{2}.$$
So applying \Cref{lem:fakeproblowerb} with $a=t+m-1/L$ and $b=t+m+1/L$, we have
\begin{equation}\label{eq:lower_bound_prob_Itm}
\PP\left[\sqrt{X_{j}}\in I_{t,m}\middle|\mathcal{G}^{\fake}\right]\geq \gamma_{j}.
\end{equation}

Now, since $j\in J_{1}^{\fake}$, it is guaranteed that $G_{ij}=1$, and hence $A_{ij}(t)=\mathbbm{1}\left\{\sqrt{X_{j}}\in I_{t}\right\}$. So 
$$\PP[A_{ij}(t)=1 |\mathcal{G}^{\fake}]=\PP\left[\sqrt{X_{j}}\in I_{t}\middle|\mathcal{G}^{\fake}\right]\geq \PP\left[\sqrt{X_{j}}\in I_{t,m}\middle|\mathcal{G}^{\fake}\right],$$
and the lower bound in \eqref{eq:prob_Bernoulli} follows by combining the above with \eqref{eq:lower_bound_prob_Itm}.

For the upper bound, let $t'\in [0,1)$ be such that $|t'-t|=\frac{1}{2}$. Since $\frac{1}{L}<\frac{1}{4}$, we have $I_{t}\cap I_{t'}=\emptyset$. So the events $\left\{\sqrt{X_{j}}\in I_{t}\right\}$ and $\left\{\sqrt{X_{j}}\in I_{t'}\right\}$ are mutually exclusive, and thus
$$\PP[A_{ij}(t)=1 |\mathcal{G}^{\fake}]= \PP\left[\sqrt{X_{j}}\in I_{t}\middle|\mathcal{G}^{\fake}\right]\leq 1-\PP\left[\sqrt{X_{j}}\in I_{t'}\middle|\mathcal{G}^{\fake}\right]= 1-\PP[A_{ij}(t')=1 |\mathcal{G}^{\fake}].$$
The upper bound in \eqref{eq:prob_Bernoulli} then follows by applying the lower bound to $t'$. 
\end{proof}
\begin{lemma}[Expectation]\label{lem:fakeexpectation}
For any $\alpha\in(0,1)$, there exist constants $c_{1},c_{3}$ such that setting $L=c_{3}\log n$ and $d=c_{1}\log^{2}n$, on $\Gamma^{\fake}\cap \Pi^{\fake}$ we have
$$\E[D(i,k)|\mathcal{G}^{\fake}]\geq \frac{300\sqrt{d_{i}}}{c_{3}\sqrt{\log n}}.$$
\end{lemma}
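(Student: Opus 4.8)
The plan is to lower-bound $\E[D(i,k)\mid\mathcal{G}^{\fake}]$ by restricting attention to a single well-chosen bin $I_t$ and showing that the random variable $\sum_{j=1}^n A_{ij}(t)-\sum_{j=1}^n B_{kj}(t)$ is typically bounded away from zero, by roughly $\Omega(\sqrt{d_i}/L)$, so that its conditional expected absolute value is of the claimed order. The key point is that among the indices $j\in J_1^{\fake}$ (where $G_{ij}=1$ but $H_{kj}=0$), the contributions to $\sum_j A_{ij}(t)$ are, conditionally on $\mathcal{G}^{\fake}$, independent Bernoulli random variables with success probabilities pinned between $\gamma_j$ and $1-\gamma_j$ by \Cref{cor:fakeproblowerb}, while the corresponding $B_{kj}(t)$ terms are identically zero (since $Y_j=0$ almost surely for $j\in J_1^{\fake}$). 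Thus the difference $\sum_j A_{ij}(t)-\sum_j B_{kj}(t)$, restricted to those $j$, is a sum of $|J_1^{\fake}|$ independent Bernoullis that is genuinely spread out, plus a $\mathcal{G}^{\fake}$-measurable remainder $C(t)$ coming from the other indices $j\notin J_1^{\fake}$.

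The concrete steps I would carry out are as follows. First, note $\E[D(i,k)\mid\mathcal{G}^{\fake}]=\int_0^1\E\big[|W(t)+C(t)|\,\big|\,\mathcal{G}^{\fake}\big]\d t$, where $W(t)=\sum_{j\in J_1^{\fake}}A_{ij}(t)$ is the ``live'' part and $C(t)$ is $\mathcal{G}^{\fake}$-measurable. Second, I would invoke the standard anti-concentration / spread fact that a sum of $N$ independent Bernoullis with parameters in $[\gamma,1-\gamma]$ has $\E|W-\E[W]-c|\gtrsim \sqrt{\gamma N}$ uniformly over shifts $c$ (this is a one-sided Paley–Zygmund- or Berry–Esseen-type estimate, applied after noting $\Var(W(t))\geq \gamma_j$ summed over $j$, i.e.\ $\Var(W(t)\mid\mathcal{G}^{\fake})\gtrsim \gamma\,|J_1^{\fake}|$); hence $\E[|W(t)+C(t)|\mid\mathcal{G}^{\fake}]\gtrsim \sqrt{\gamma\,|J_1^{\fake}|}$ for every $t$. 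Third, I would lower-bound $|J_1^{\fake}|$: on $\Gamma^{\fake}$ we have $|J_1^{\fake}|=R_0^{\fake}\geq \tfrac{\alpha}{2}d_i$, so $\sqrt{\gamma\,|J_1^{\fake}|}\gtrsim \sqrt{\gamma_{\min}\cdot \alpha d_i}$. Fourth, I would choose the constants: take $c_1$ large (so that $d=c_1\log^2 n$ makes the $\sqrt{d_j}^{-1}$ correction term in $\gamma_j$ at most half of the leading $\Theta(1/L)$ term, using $d_j\geq d\gtrsim\log^2 n$ and $L=c_3\log n$), which forces $\gamma_j\geq \tfrac{1}{2}\sqrt{2/\pi}\,\alpha e^{-100/\alpha^2}/L=:\gamma_{\min}\asymp 1/(c_3\log n)$. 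Plugging in, $\E[D(i,k)\mid\mathcal{G}^{\fake}]\gtrsim \sqrt{\tfrac{\alpha d_i}{c_3\log n}}\cdot(\text{const depending on }\alpha)$, and since $\int_0^1\d t=1$ this already has the shape $\mathrm{const}(\alpha)\cdot\sqrt{d_i}/\sqrt{c_3\log n}$; finally I would take $c_3$ small enough that $\mathrm{const}(\alpha)\sqrt{c_3}\geq 300/\sqrt{c_3}$, i.e.\ $c_3$ absorbs the constant to beat the target $300/(c_3\sqrt{\log n})$.

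The main obstacle is the anti-concentration estimate for $W(t)$: one must show that a conditionally-independent sum of Bernoullis with small-but-uniform parameters $\gamma_j\asymp 1/\log n$ has $L^1$-fluctuation of order $\sqrt{\sum_j\gamma_j}$ \emph{uniformly over the additive shift $C(t)$}, and the shift can be large and is only $\mathcal{G}^{\fake}$-measurable (not independent in a helpful way). The clean route is: $\E|W(t)+C(t)| \geq \E\big|W(t)-\widetilde W(t)\big|/2$ for an independent copy $\widetilde W(t)$ of $W(t)$ (a symmetrization bound that kills the shift), and then $W(t)-\widetilde W(t)$ is a symmetric sum of independent $\{-1,0,1\}$-variables, whose $L^1$-norm is $\gtrsim\sqrt{\Var}=\sqrt{2\sum_j\gamma_j(1-p_j)}\gtrsim\sqrt{\gamma_{\min}|J_1^{\fake}|}$ by e.g.\ the Khintchine-type lower bound or Berry–Esseen. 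A secondary bookkeeping obstacle is verifying that the hypotheses of \Cref{lem:fakeproblowerb} (the constraint $\sqrt{S_j^{\fake}+R_j^{\fake}}-2\le a<b\le\sqrt{S_j^{\fake}+R_j^{\fake}}$, arranged via the integer $m$ in \Cref{cor:fakeproblowerb}) are met for every $t\in[0,1)$ simultaneously, but this is exactly what \Cref{cor:fakeproblowerb} packages, so it is routine.
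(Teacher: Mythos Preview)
Your approach is essentially the paper's: isolate $W(t)=\sum_{j\in J_1^{\fake}}A_{ij}(t)$, use $B_{kj}(t)=0$ on $J_1^{\fake}$, and apply an anti-concentration bound of order $\sqrt{\sum_j\gamma_j}$ uniform in the additive shift. The paper packages that last step as \Cref{cor:control_f}, whose proof (via \Cref{lem:Bern_to_sym} and \Cref{lem:control_f}) is exactly the symmetrization-to-$\{-1,0,1\}$-variables argument you sketch.

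Two corrections. First, $C(t)=\sum_{j\notin J_1^{\fake}}(A_{ij}(t)-B_{kj}(t))$ is \emph{not} $\mathcal{G}^{\fake}$-measurable: for $j\in J^{\fake}\setminus J_1^{\fake}$ the degrees $X_j,Y_j$ still depend on $(G_{j\ell},H_{j\ell})$ with $\ell\notin J^{\fake}\cup\{i,k\}$, and those edges are not in $\mathcal{G}^{\fake}$. You have the situation exactly reversed when you write that the shift ``is only $\mathcal{G}^{\fake}$-measurable (not independent in a helpful way)'': in fact $C(t)$ is random given $\mathcal{G}^{\fake}$ but is conditionally \emph{independent} of $W(t)$, because the vectors $(A_{ij},B_{kj})$ for distinct $j$ are conditionally independent. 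That independence is precisely what your symmetrization inequality $\E|W+C|\geq\tfrac12\E|W-\widetilde W|$ needs, so the argument survives once you state the correct hypothesis. (Relatedly, $B_{kj}(t)=0$ for $j\in J_1^{\fake}$ because $H_{kj}=0$, not because $Y_j=0$; the degree $Y_j$ is generally nonzero.) Second, your final tuning has the direction backwards: the inequality $\mathrm{const}(\alpha)\sqrt{c_3}\geq 300/\sqrt{c_3}$ means $c_3\geq 300/\mathrm{const}(\alpha)$, so you must take $c_3$ \emph{large}, not small, and then $c_1$ large enough that the Berry--Esseen correction term in $\gamma_j$ is dominated---this is exactly how the paper orders the choice of constants.
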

\begin{proof}
For $j \in J_{1}^{\fake}$, let $\gamma_{j}$ be as in \eqref{eq:gamma_j}. Since $L=c_{3}\log n$ and $d_{j}\geq d=c_{1}\log^{2}n$ for each $j$, by taking a sufficiently large $c_{3}$ and then a sufficiently large $c_{1}$, we can ensure that 
$$\min_{j \in J_{1}^{\fake}}\gamma_{j} \geq \frac{10^{7}}{\alpha c_{3}^{2}\log n}.$$
In addition, since on $\Gamma^{\fake}$ we have $|J_{1}^{\fake}|=R_{0}^{\fake}\geq \alpha d_{i}/2\geq \frac{1}{2}\alpha c_{1}\log ^{2}n$, we can also ensure that $\min_{j\in J_{1}^{\fake}}\gamma_{j}\geq |J_{1}^{\fake}|^{-1}$.
Note that conditional on $\mathcal{G}^{\fake}$, the random vectors $(A_{ij},B_{kj})$ (for $j\in [n]$) are mutually independent.  
Combining \Cref{cor:fakeproblowerb} with \Cref{cor:control_f}, on $\Gamma^{\fake}\cap\Pi^{\fake}$ we have 
$$\E\left[\left|\sum_{j\in J_{1}^{\fake}}A_{ij}(t)+\sum_{j\not\in J_{1}^{\fake}}(A_{ij}(t)-B_{kj}(t))\right|\mid\mathcal{G}^{\fake}\right]\geq \frac{1}{4}\sqrt{\sum_{j \in J_{1}^{\fake}}\gamma_{j}}\geq \frac{1}{4}\sqrt{\frac{\alpha d_{i}}{2}\cdot\frac{10^{7}}{\alpha c_{3}^{2}\log n}}\geq \frac{300\sqrt{d_{i}}}{c_{3}\sqrt{\log n}}.$$
for each $t\in[0,1)$. By definition of 
$J_{1}^{\fake}$, for each $j\in J_{1}^{\fake}$ we must have $B_{kj}(t)=0$ for all $t$. So on $\Gamma^{\fake}\cap\Pi^{\fake}$ we have
\[\E[D(i,k)|\mathcal{G}^{\fake}] =\int_{0}^{1}\E\left[\left|\sum_{j\in J_{1}^{\fake}}A_{ij}(t)+\sum_{j\not\in J_{1}^{\fake}}(A_{ij}(t)-B_{kj}(t))\right|\mid \mathcal{G}^{\fake}\right]\d t\geq \frac{300\sqrt{d_{i}}}{c_{3}\sqrt{\log n}}.\qedhere\]
\end{proof}

\subsection{Proof of \Cref{lem:fakepair}}\label{subsec:fakeproof}

We now combine the concentration results of \Cref{subsec:fakeconcentrate} and the expectation results of \Cref{subsec:fakeexpectation} to prove the deviation probability stated in \Cref{lem:fakepair}.

\begin{proof}[Proof of \Cref{lem:fakepair}]
First pick $c_{1}$ and $c_{3}$ according to \Cref{lem:fakeexpectation}. Then \Cref{lem:fakeconcentration,lem:fakeexpectation} together imply that on $\Gamma^{\fake}\cap\Pi^{\fake}$,
$$\PP\left[\frac{D(i,k)}{\sqrt{d_{i}}}\leq \frac{200}{c_{3}\sqrt{\log n}}\middle|\mathcal{G}^{\fake}\right]\leq O(n^{-3}).$$
Finally, \Cref{prop:fakeGamma,prop:fakeTheta} imply that $\PP(\Gamma^{\fake}\cap\Pi^{\fake})\geq 1-\exp(-\Omega(\log^{2}n))$.
Therefore
\[\PP\left[\frac{D(i,k)}{\sqrt{d_{i}}}\leq \frac{200}{c_{3}\sqrt{\log n}}\right]\leq O(n^{-3})+\exp(-\Omega(\log^{2}n))=O(n^{-3}).\qedhere\]
\end{proof}
\section{True Pairs}\label{sec:truepair}

This section is devoted to proving \Cref{lem:truepair}. In order to analyze true pairs, we also need to do two rounds of conditioning, which will be given in \Cref{subsec:truefirstcond,subsec:trueseccond}. We will then analyze the remaining randomness in \Cref{subsec:truncation,subsec:trueconcentration,subsec:trueexpectation,subsec:upperbonT2}. 

In the following, let $(G,H)$ be a pair of random graphs that follows an $(\alpha,d,\delta)$-edge-correlated distribution, where $\alpha$ is a fixed constant. We fix a vertex $i\in [n]$. We remark that, in the remaining of this work, the superscript $\mathsf{t}$ refers to ``true'', and is not a parameter.

\subsection{First Round of Conditioning}\label{subsec:truefirstcond}
\begin{definition}[Variables]
For $j\in [n]$, define the random variable
$$R_{j}^{\true}=\sum_{\ell\neq i}p_{j\ell}(1-G_{i\ell}\vee H_{i\ell}).$$
Also define the random variable
$$R_{0}^{\true}=\sum_{j\neq i}G_{ij}\vee H_{ij}.$$
\end{definition}
\begin{definition}[Events]
Let $\Gamma^{\true}$ be the event
$$\Gamma^{\true}:=\bigcap_{j=1}^{n}\left\{R_{j}^{\true}\geq \frac{\alpha}{2}d_{j}\right\}\cap \left\{\frac{1}{2}d_{i}\leq R_{0}^{\true}\leq 2d_{i}\right\}.$$
\end{definition}
\begin{proposition}\label{prop:trueGamma}
If $d \gtrsim \log^{2}n$ and $\delta \lesssim \log^{-2}n$, then $\PP(\Gamma^{\true})\geq 1-\exp(-\Omega(\log^{2}n))$.
\end{proposition}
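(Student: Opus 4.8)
The plan is to bound each of the three constituent events in $\Gamma^{\true}$ separately and then apply a union bound, exactly in the spirit of the proof of \Cref{prop:fakeGamma}. The key observation is that the $2(n-1)$ random variables $G_{i\ell}, H_{i\ell}$ for $\ell\neq i$ are mutually independent, so both $R_j^{\true}$ and $R_0^{\true}$ are sums of independent bounded (indeed, $[0,1]$-valued) random variables, amenable to Hoeffding's and Chernoff's inequalities.

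For the events $\{R_j^{\true}\geq \frac{\alpha}{2}d_j\}$, I would first lower-bound the conditional-free expectation: using $\PP[G_{i\ell}\vee H_{i\ell}=1]=\PP[G_{i\ell}=1]+\PP[H_{i\ell}=1,G_{i\ell}=0]=p_{i\ell}+p_{i\ell}\delta_{i\ell}\leq p_{i\ell}(1+\delta)$, we get $\E[1-G_{i\ell}\vee H_{i\ell}]\geq 1-p_{i\ell}(1+\delta)\geq \alpha - p_{i\ell}\delta \geq \alpha - \delta$, so that $\E[R_j^{\true}]\geq (\alpha-\delta)(d_j - p_{ji})\geq (\alpha-\delta)(d_j-1)$; since $\delta\lesssim \log^{-2}n$ is much smaller than $\alpha/2$ for large $n$, this is at least, say, $\frac{3\alpha}{4}d_j$ minus a constant. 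Then Hoeffding's inequality (\Cref{thm:Hoeffding}) with the bound $\sum_{\ell\neq i}p_{j\ell}^2\leq \sum_{\ell}p_{j\ell}=d_j$ gives $\PP[R_j^{\true}\leq \frac{\alpha}{2}d_j]\leq 2\exp(-\Omega(d_j))=2\exp(-\Omega(\log^2 n))$, since $d_j\geq d\gtrsim \log^2 n$.

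For the event $\{\frac{1}{2}d_i\leq R_0^{\true}\leq 2d_i\}$, note $R_0^{\true}=\sum_{j\neq i}G_{ij}\vee H_{ij}$ is a sum of independent Bernoulli variables with $\E[R_0^{\true}]=\sum_{j\neq i}(p_{ij}+p_{ij}\delta_{ij})$, which lies between $d_i - p_{ii}=d_i$ wait—more carefully, $\E[R_0^{\true}]\in[(1)(d_i-0),\,(1+\delta)d_i]$, so $\E[R_0^{\true}]\in [d_i - \max_j p_{ij}, (1+\delta)d_i]\subseteq[d_i - 1, (1+\delta)d_i]$; again since $\delta=o(1)$ this is essentially $d_i(1+o(1))$. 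A two-sided Chernoff bound (\Cref{cor:Chernoff}) then yields $\PP[R_0^{\true}\leq \frac{1}{2}d_i]\leq \exp(-\Omega(d_i))$ and $\PP[R_0^{\true}\geq 2d_i]\leq \exp(-\Omega(d_i))$, each $\exp(-\Omega(\log^2 n))$. Taking a union bound over the $n$ events $\{R_j^{\true}\geq\frac{\alpha}{2}d_j\}$ and the two tail events for $R_0^{\true}$, and absorbing the factor $n$ into $\exp(-\Omega(\log^2 n))$, completes the proof.

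The proof is essentially routine and I do not anticipate a genuine obstacle; the only point requiring a little care is handling the $\delta$-dependence in the expectation bounds—one must verify that the $(1+\delta)$ and $(\alpha-\delta)$ factors do not spoil the constants $\frac{\alpha}{2}$, $\frac12$, $2$ appearing in $\Gamma^{\true}$. This is exactly why the hypothesis $\delta\lesssim \log^{-2}n$ (equivalently, $\delta = o(1)$) is invoked: for $n$ large enough, $\delta$ is smaller than any fixed slack we need, so the Chernoff/Hoeffding exponents remain $\Omega(d_j)=\Omega(\log^2 n)$. Worth double-checking as well is that the constants in $\Gamma^{\true}$ (here $\frac{\alpha}{2}$ for $R_j^{\true}$, versus $\frac{\alpha^2}{2}$ in the fake case) are consistent with the slightly different definition of $R_j^{\true}$—here we subtract $G_{i\ell}\vee H_{i\ell}$ rather than $G_{i\ell}\vee H_{k\ell}$ with $i\neq k$, which is why a single power of $\alpha$ suffices.
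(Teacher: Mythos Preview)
Your proposal is correct and follows essentially the same route as the paper: bound $\E[R_j^{\true}]\geq (\alpha-\delta)(d_j-1)$ and apply Hoeffding's inequality with denominator $\sum_{\ell\neq i}p_{j\ell}^{2}\leq d_j$, bound $d_i\leq \E[R_0^{\true}]\leq (1+\delta)d_i$ and apply the two-sided Chernoff bound, then take a union bound over the $n+2$ events. One small slip worth correcting in the write-up: the $2(n-1)$ variables $G_{i\ell},H_{i\ell}$ are \emph{not} mutually independent (each pair $(G_{i\ell},H_{i\ell})$ is correlated); what you actually need, and what does hold, is that the $n-1$ variables $G_{i\ell}\vee H_{i\ell}$ for $\ell\neq i$ are mutually independent, since the pairs are independent across $\ell$.
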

\begin{proof}
By definition,
$$\E[R_{j}^{\true}]\geq \sum_{\ell\neq i}p_{j\ell}\cdot (\alpha-(1-\alpha)\delta)\geq (d_{j}-1)(\alpha-\delta).$$ 
By Hoeffding's inequality (\Cref{thm:Hoeffding}), using $\delta<\alpha /4$ and $p_{j\ell} \leq 1$ for all $j, \ell$, we have

$$\PP\left[R_{j}^{\true}\leq \frac{\alpha}{2}d_{j}\right]\leq 2\exp\left(-\frac{2\left((\alpha-\delta)(d_{j}-1)-\frac{1}{2}\alpha d_{j}\right)^{2}}{\sum_{\ell\neq i}p_{j\ell}^{2}}\right)\leq 2\exp\left(-\frac{\alpha^{2}}{8}(d_{j}-6)\right).$$
In addition, we have $d_{i}\leq \E[R_{0}^{\true}]\leq d_{i}(1+\delta)$ and thus by Chernoff bounds (\Cref{cor:Chernoff})
$$\PP[R_{0}^{\true}\geq 2d_{i}]\leq \exp\left(-\frac{1}{3}(1-3\delta)d_{i}\right),$$
$$\PP\left[R_{0}^{\true}\leq \frac{1}{2}d_{i}\right]\leq \exp\left(-\frac{1}{8}d_{i}\right).$$
It now follows from a union bound that $\PP(\Gamma^{\true})\geq 1-\exp(-\Omega(\log^{2}n))$.
\end{proof}
\begin{definition}[Conditioning]\label{def:tp1rd}
Let $\mathcal{F}^{\true}$ be the $\sigma$-algebra generated by the variables $G_{ij}\vee H_{ij}$ for all $j\in[n]$. 
\end{definition}
Clearly, $\Gamma^{\true}$ is $\mathcal{F}^{\true}$-measurable.

Let $J^{\true}$ be the union of neighbors of $i$ in $G$ and $H$ (see also \eqref{eq:truedefofJ} below). Note that for $j\in J^{\true}$, we can decompose the random vectors $(X_{j},Y_{j})$ into $(\sum_{\ell\notin J^{\true}}G_{j\ell},\sum_{\ell\notin J^{\true}}H_{j\ell})+(\sum_{\ell\in J^{\true}}G_{j\ell},\sum_{\ell\in J^{\true}}H_{j\ell})$, with the first part being mutually independent as $j$ varies over $J^{\true}$ while the second part not, and with the first part being independent from the second part. For $j\in J^{\true}$, the random variable $R_{j}^{\true}$ stands for the conditional expectation of the first part, and the random variable $R_{0}^{\true}$ stands for the size of $J^{\true}$. In brief, the event $\Gamma^{\true}$ says that the independent parts contain sufficient information (where the assumption that $p_{ij}$ is uniformly bounded by $1-\alpha$ is important), and at the same time the dependence can be controlled. This will facilitate our analysis after removing some dependence by conditioning.

\begin{remark}\label{remark:tp1rd}
Note that we do not condition on the complete information about the neighborhoods of $i$ in the two graphs, and this is different from the conditioning used in \Cref{subsec:fakefirstcond} or in the proof of \cite[Lemma 1]{DMWX21}. The reason for this lies in the symmetric difference of neighborhoods, i.e. $\{j\in[n]:G_{ij}\neq H_{ij}\}$. In the true pair analysis of \cite{DMWX21}, the contribution to the distance $D(i,i)$ from those vertices $j$ with $G_{ij}=1, H_{ij}=0$ and the contribution from those vertices $j$ with $G_{ij}=0, H_{ij}=1$ are shown to be \textit{separately} controllable. However, in our context, the contribution from each group of vertices can be much larger than acceptable. So instead of analyzing the two groups separately, we ``mix'' them together in a single set $J^{\true}$ (see \eqref{eq:truedefofJ}) and seek to show that the contributions from the two groups effectively \textit{cancel} with each other. By not revealing which vertices fall into which groups in $\mathcal{F}^{\true}$, we hope to leave room for a \textit{probabilistic} analysis of the cancellation. See \eqref{eq:neighborhoodsymmetry} for a particularly important place where this advantage is harnessed.
\end{remark}

\subsection{Second Round of Conditioning}\label{subsec:trueseccond}

\begin{definition}[Sets]
Define the set
\begin{equation}\label{eq:truedefofJ}
J^{\true}=\{j\in[n]\setminus\{i\}:G_{ij}\vee H_{ij}=1\}.
\end{equation}
\end{definition}
Note that conditional on $\mathcal{F}^{\true}$, it is a deterministic set. 

Interestingly, the following random variable and its associated event will appear several times in our subsequent analysis (e.g.~\Cref{lem:trueconcentration,lem:T1bound,lem:trueexpectation}). For some intuition behind the definition, each term $\frac{1}{d_{j}}+\frac{1}{d_{k}}$ corresponds to the difference an edge can make to the square roots of the degrees of its endpoints. See for example Claim 2 in the proof of \Cref{lem:trueconcentration}.

\begin{definition}[Variables]
Define the random variable
$$S^{\true}=\sum_{\{j,\ell\}\subset J^{\true}}\left(\frac{1}{d_{j}}+\frac{1}{d_{\ell}}\right)\mathbbm{1}\{G_{j\ell} \neq H_{j\ell}\}.$$
\end{definition}
\begin{definition}[Events]
Let $\Pi^{\true}$ be the event
$$\Pi^{\true}:=\{S^{\true}\leq 10\delta d_{i}\}.$$
\end{definition}
\begin{proposition}\label{prop:trueTheta}
If $d \gtrsim\log^{2}n $ and $\delta \gtrsim \log^{-2}n$, then on $\Gamma^{\true}$ we have $\PP(\Pi^{\true}|\mathcal{F}^{\true})\geq 1-\exp(-\Omega(\log^{2}n))$.
\end{proposition}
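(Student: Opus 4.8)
\textbf{Proof proposal for \Cref{prop:trueTheta}.}
The plan is to bound $S^{\true}$ via a conditional expectation computation followed by a concentration inequality, working entirely on the event $\Gamma^{\true}$ (so that $|J^{\true}| = R_{0}^{\true} \le 2d_{i}$ and $d_{j} \ge d$ for all $j$). First I would compute $\E[S^{\true}\mid \mathcal{F}^{\true}]$. Conditional on $\mathcal{F}^{\true}$, the set $J^{\true}$ is deterministic, and for a pair $\{j,\ell\}\subset J^{\true}$ the indicator $\mathbbm{1}\{G_{j\ell}\neq H_{j\ell}\}$ has conditional probability $2p_{j\ell}\delta_{j\ell} \le 2p_{j\ell}\delta$, because $\{G_{j\ell}\ne H_{j\ell}\} = \{G_{j\ell}=1,H_{j\ell}=0\}\cup\{G_{j\ell}=0,H_{j\ell}=1\}$ and these probabilities are $p_{j\ell}\delta_{j\ell}$ each by \Cref{def:graphensemble}; note that if both $j,\ell\in J^{\true}$ then the pair $(G_{j\ell},H_{j\ell})$ is still independent of $\mathcal{F}^{\true}$ (which is generated by the variables $G_{im}\vee H_{im}$), so no subtlety arises here. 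Hence
\[
\E[S^{\true}\mid\mathcal{F}^{\true}] = \sum_{\{j,\ell\}\subset J^{\true}}\left(\frac{1}{d_{j}}+\frac{1}{d_{\ell}}\right)2p_{j\ell}\delta_{j\ell} \le 2\delta\sum_{j\in J^{\true}}\frac{1}{d_{j}}\sum_{\ell\in[n]}p_{j\ell} = 2\delta\sum_{j\in J^{\true}}\frac{d_{j}}{d_{j}} = 2\delta\,|J^{\true}| \le 4\delta d_{i}
\]
on $\Gamma^{\true}$, where in the middle step I symmetrized the sum over unordered pairs into an ordered sum and used $\sum_{\ell}p_{j\ell}=d_{j}$.

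Next I would establish concentration of $S^{\true}$ around this conditional mean. After conditioning on $\mathcal{F}^{\true}$, the variables $\mathbbm{1}\{G_{j\ell}\ne H_{j\ell}\}$ for $\{j,\ell\}\subset J^{\true}$ are mutually independent Bernoulli variables, and $S^{\true}$ is a nonnegative-weighted sum of them with weights $w_{j\ell} := \tfrac{1}{d_{j}}+\tfrac{1}{d_{\ell}} \le 2/d$. Since the weights are bounded by $2/d$ and the conditional mean is at most $4\delta d_{i}$, I would apply a Chernoff/Bernstein bound for weighted sums of independent Bernoullis (or rescale by $d/2$ so the summands lie in $[0,1]$ and invoke \Cref{cor:Chernoff}): the probability that $S^{\true}$ exceeds, say, $2\E[S^{\true}\mid\mathcal{F}^{\true}]$ plus a slack, and in particular exceeds $10\delta d_{i} \ge 2\cdot 4\delta d_{i} + 2\delta d_{i}$, is at most $\exp(-\Omega(\delta d_{i}/(2/d))) = \exp(-\Omega(\delta d \, d_{i}/2))$... but this is the wrong scaling, so more carefully: the relevant exponent from a Bernstein-type bound with deviation $t = \delta d_{i}$, variance proxy $\sigma^{2} \lesssim (2/d)\cdot 4\delta d_{i}$, and max weight $b = 2/d$ is $\Omega\!\big(\min\{t^{2}/\sigma^{2},\, t/b\}\big) = \Omega\!\big(\min\{\delta d_{i} \cdot d,\ \delta d_{i} d\}\big) = \Omega(\delta d \, d_{i})$. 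Since $d \gtrsim \log^{2}n$, $\delta \gtrsim \log^{-2}n$, and $d_{i}\ge d \gtrsim \log^{2}n$, this exponent is $\Omega(\delta d \cdot d) = \Omega(\log^{2}n)$ (indeed much larger), giving $\PP(S^{\true} > 10\delta d_{i}\mid\mathcal{F}^{\true}) \le \exp(-\Omega(\log^{2}n))$ on $\Gamma^{\true}$, which is the claim.

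The routine parts are the expectation computation and the choice of which concentration inequality to cite; the only place requiring a little care is the scaling bookkeeping in the tail bound — one must use that the weights $w_{j\ell}$ are \emph{uniformly small} (of order $1/d$), not just bounded, so that the effective number of ``independent trials'' contributing to the concentration is large even though each contributes a tiny amount; this is exactly why the hypothesis $d\gtrsim\log^{2}n$ enters and why the assumption $\delta\gtrsim\log^{-2}n$ (so that the target deviation $\delta d_{i}$ is not too small relative to $d_{i}$) is invoked. I anticipate the main obstacle, such as it is, being simply to present this scaling cleanly: rescaling $S^{\true}$ by $d/2$ turns it into a sum of independent $[0,1]$-valued variables with mean $\le 2\delta d_{i} d$, and then \Cref{cor:Chernoff} applied with the multiplicative form directly yields $\exp(-\Omega(\delta d_{i} d)) = \exp(-\Omega(\log^{4}n))$, comfortably within the claimed $\exp(-\Omega(\log^{2}n))$.
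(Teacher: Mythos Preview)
Your proposal is correct and follows essentially the same approach as the paper: compute $\E[S^{\true}\mid\mathcal{F}^{\true}]\le 4\delta d_{i}$ on $\Gamma^{\true}$, then rescale by $d/2$ so the summands lie in $[0,1]$ and apply the multiplicative Chernoff bound (\Cref{cor:Chernoff}) to get $\PP[S^{\true}\ge 10\delta d_{i}\mid\mathcal{F}^{\true}]\le \exp(-\Omega(\delta d\, d_{i}))$. One small slip: your final exponent $\delta d\, d_{i}$ is only guaranteed to be $\Omega(\log^{2}n)$ (taking $\delta$ at its lower bound $\log^{-2}n$), not $\Omega(\log^{4}n)$, but that is exactly what the proposition claims.
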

\begin{proof}
We have on $\Gamma^{\true}$ 
$$\E[S^{\true}|\mathcal{F}^{\true}]=\sum_{\{j,\ell\}\subset J^{\true}}2\delta_{j\ell} p_{j\ell}\left(\frac{1}{d_{j}}+\frac{1}{d_{\ell}}\right) \leq 2\delta|J^{\true}|\leq 4\delta d_{i}.$$ 
Note that the conditional distribution of $S^{\true}$ given $\mathcal{F}^{\true}$ is a sum of independent variables, all of which have ranges at most $2/d$, so by Chernoff bounds as in \Cref{cor:Chernoff} (note that here we need to scale the range by $2/d$),
\[\PP[S^{\true}\geq 10\delta d_{i}|\mathcal{F}^{\true}]\leq \exp\left(-2\cdot\frac{d}{2}\delta d_{i}\right).\qedhere\]
\end{proof}
\begin{definition}[Conditioning]\label{def:tp2rd}
Let $\mathcal{G}^{\true}$ be the $\sigma$-algebra generated by $\mathcal{F}^{\true}$ and the variables $\mathbbm{1}\{G_{j\ell} \neq H_{j\ell}\}$ and $ G_{j\ell}\vee H_{j\ell}$ for $j,\ell\in J^{\true}$.
\end{definition}
Clearly $\Pi^{\true}$ is $\mathcal{G}^{\true}$-measurable. For the remaining of \Cref{sec:truepair}, we always condition with respect to $\mathcal{G}^{\true}$ and assume that $\Gamma^{\true}\cap\Pi^{\true}$ holds.

\begin{remark}\label{rmk:tp2rd}
The choice of information contained in $\mathcal{G}^{\true}$ is the result of a subtle balancing act. In case $G_{j\ell}=H_{j\ell}$, a revelation of the values of $\mathbbm{1}\{G_{j\ell} \neq H_{j\ell}\}$ and $ G_{j\ell}\vee H_{j\ell}$ also reveals the complete information regarding $G_{j\ell}$ and $H_{j\ell}$ (this fact will be used in \Cref{prop:nilT2}). On the other hand, if $G_{j\ell}\neq H_{j\ell}$, then after conditioning on $\mathcal{G}^{\true}$ we know that $G_{j\ell}$ and $H_{j\ell}$ have different values, but nothing more; it remains uniformly random as to which variable takes value 1. For the purpose of bounding the expectation of the distance $D(i,k)$ (throughout \Cref{subsec:trueexpectation,subsec:upperbonT2}), it is crucial that we do not lose the symmetry of the random vector $(G_{j\ell},H_{j\ell})$.
\end{remark}

\subsection{Interlude: Truncation}\label{subsec:truncation}

In this subsection, we introduce a truncation operation on the variables $X_{j}$ and $Y_{j}$, for $j\in [n]$. While the purpose is again to exclude exceptional events, in order not to ruin the independence among the vectors $(G_{j\ell},H_{j\ell})$, we use a truncation rather than a conditioning here.

Take a fixed $j\in [n]$, and define the truncated variable $X_{j}'$ to be equal to
\begin{equation}\label{eq:defofXprime}
\begin{cases}
\E[X_{j}| \mathcal{G}^{\true}]-\log n\sqrt{\E[X_{j}| \mathcal{G}^{\true}]}& \text{if }X_j \leq \E[X_{j}| \mathcal{G}^{\true}]-\log n\sqrt{\E[X_{j}| \mathcal{G}^{\true}]},\\
X_j & \text{if }\E[X_{j}| \mathcal{G}^{\true}]-\log n\sqrt{\E[X_{j}| \mathcal{G}^{\true}]}<X_{j}<\E[X_{j}| \mathcal{G}^{\true}]+\log n\sqrt{\E[X_{j}| \mathcal{G}^{\true}]},\\
\E[X_{j}| \mathcal{G}^{\true}]+\log n\sqrt{\E[X_{j}| \mathcal{G}^{\true}]} & \text{if }\E[X_{j}| \mathcal{G}^{\true}]+\log n\sqrt{\E[X_{j}| \mathcal{G}^{\true}]}\leq X_j.
\end{cases}
\end{equation}
Similarly define variables $Y_{j}'$ for $j\in [n]$. Define accordingly

\begin{equation}\label{eq:defofAprime}
A_{ij}'(t)=\mathbbm{1}\left\{\sqrt{X_{j}'}\in I_{t}\right\}\cdot\mathbbm{1}\{G_{ij}=1\},
\end{equation}
\begin{equation}\label{eq:defofBprime}
B_{ij}'(t)=\mathbbm{1}\left\{\sqrt{Y_{j}'}\in I_{t}\right\}\cdot\mathbbm{1}\{H_{ij}=1\},
\end{equation}
and finally define
\begin{equation}\label{eq:defofDprime}
D'(i,i)=\int_{0}^{1}\left|\sum_{j=1}^{n}A_{ij}'(t)-\sum_{j=1}^{n}B_{ij}'(t)\right|\d t.
\end{equation}
\begin{lemma}\label{lem:DequalsDprime}
If $d \gtrsim \log^{2}n$, then on $\Gamma^{\true}\cap\Pi^{\true}$ we have
$$\PP[D'(i,i)\neq D(i,i)|\mathcal{G}^{\true}]=\exp(-\Omega(\log ^{2}n)).$$
\end{lemma}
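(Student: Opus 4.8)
The plan is to show that with overwhelming conditional probability, no truncation is active, i.e.\ $X_j'=X_j$ and $Y_j'=Y_j$ for every $j\in[n]$; on this event $A_{ij}'(t)=A_{ij}(t)$ and $B_{ij}'(t)=B_{ij}(t)$ for all $t$, hence $D'(i,i)=D(i,i)$. So it suffices to bound, for each fixed $j$, the conditional probability that $|X_j-\E[X_j\mid\mathcal G^{\true}]|\geq \log n\sqrt{\E[X_j\mid\mathcal G^{\true}]}$, and similarly for $Y_j$, and then union bound over the $2n$ indices.

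First I would decompose $X_j=\sum_{\ell\in J^{\true}}G_{j\ell}+\sum_{\ell\notin J^{\true}}G_{j\ell}$. Conditional on $\mathcal G^{\true}$, the first sum is \emph{almost} determined: for $\ell,\ell'\in J^{\true}$ the variables $G_{j\ell}$ are not all revealed, but crucially $j$ itself is fixed and only edges \emph{inside} $J^{\true}$ are conditioned; for $\ell\in J^{\true}$ but $j\notin J^{\true}$ the variable $G_{j\ell}$ is still a fresh Bernoulli, while the number of such terms is $|J^{\true}|\leq 2d_i\lesssim d_i$. More carefully, conditional on $\mathcal G^{\true}$ the random variable $X_j$ is a sum of at most $n$ independent Bernoulli variables (those $G_{j\ell}$ with $\{j,\ell\}\not\subset J^{\true}$, which are untouched by the conditioning) plus a deterministic contribution from edges inside $J^{\true}$ when $j\in J^{\true}$; its conditional mean satisfies $\E[X_j\mid\mathcal G^{\true}]\geq R_j^{\true}\geq \tfrac{\alpha}{2}d_j\geq \tfrac{\alpha}{2}d\gtrsim \log^2 n$ on $\Gamma^{\true}$. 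Then a Chernoff/Bernstein bound for sums of independent Bernoullis gives
\[
\PP\!\left[|X_j-\E[X_j\mid\mathcal G^{\true}]|\geq \log n\sqrt{\E[X_j\mid\mathcal G^{\true}]}\;\middle|\;\mathcal G^{\true}\right]\leq 2\exp\!\left(-\Omega(\log^2 n)\right),
\]
since the deviation $\log n\sqrt{\E[X_j\mid\mathcal G^{\true}]}$ is of order $\sqrt{\log^2 n}$ times the standard-deviation scale, which is well within the sub-Gaussian range of the Chernoff bound (we are at most $\log n$ standard deviations out, and the mean is $\gtrsim\log^2 n$, so the two-sided Chernoff tail $\exp(-t^2/(2\mu+\ldots))$ with $t=\log n\sqrt\mu$ gives $\exp(-\Omega(\log^2 n))$). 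The same argument applies verbatim to $Y_j$.

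The one point requiring slight care is that $\E[X_j\mid\mathcal G^{\true}]$ is itself $\mathcal G^{\true}$-measurable and random, so the Chernoff bound must be applied conditionally, treating the conditional mean as a fixed number for each realization $\omega\sim(\mathcal G^{\true},\Gamma^{\true}\cap\Pi^{\true})$; this is fine because on $\Gamma^{\true}$ we have the uniform lower bound $\E[X_j\mid\mathcal G^{\true}]\geq\tfrac{\alpha}{2}d_j$, which is all the tail bound needs. A union bound over $j\in[n]$ for $X_j$ and over $j\in[n]$ for $Y_j$ then yields
\[
\PP[\exists j:\ X_j'\neq X_j\text{ or }Y_j'\neq Y_j\mid\mathcal G^{\true}]\leq 4n\exp(-\Omega(\log^2 n))=\exp(-\Omega(\log^2 n)),
\]
and on the complementary event $D'(i,i)=D(i,i)$, which is exactly the claim. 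I do not anticipate a genuine obstacle here; the only mild subtlety is making sure the ``independent part'' of $X_j$ really is untouched by $\mathcal G^{\true}$ (which conditions only on $G_{j\ell}\vee H_{j\ell}$ for $j\in J^{\true}$, on $\mathbbm 1\{G_{j\ell}\neq H_{j\ell}\}$ and $G_{j\ell}\vee H_{j\ell}$ for $j,\ell\in J^{\true}$) so that Chernoff applies cleanly to $X_j$ conditionally.
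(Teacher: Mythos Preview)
Your proposal is correct and follows essentially the same route as the paper: lower-bound $\E[X_j\mid\mathcal G^{\true}]\geq R_j^{\true}\geq\tfrac{\alpha}{2}d_j\gtrsim\log^2 n$ on $\Gamma^{\true}$, apply a Chernoff bound to the conditionally independent Bernoulli sum $X_j$ (and symmetrically $Y_j$), then union-bound over $j\in[n]$. The paper's write-up is terser and does not dwell on the decomposition of $X_j$ into inside/outside $J^{\true}$ parts, but the argument is the same.
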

\begin{proof}
We first fix some $j\in [n]$. Since $\Gamma^{\true}$ holds, we have $\E[X_{j}|\mathcal{G}^{\true}]\geq R_{j}^{\true}\geq  \alpha d_{i}/2 \gtrsim \log^{2} n$. Denoting the value of $\E[X_{j}|\mathcal{G}^{\true}]$ by $w$, by Chernoff bounds (\Cref{cor:Chernoff}) we have

$$\PP\left[X_{j}\leq w-\log n \sqrt{w}\middle|\mathcal{G}^{\true}\right]\leq \exp\left(-\frac{1}{2}\log n\sqrt{w} \cdot \frac{\log n\sqrt{w}}{w}\right)=\exp\left(-\frac{1}{2}\log^{2}n\right)$$
and
$$\PP\left[X_{j}\geq w+\log n \sqrt{w}\middle|\mathcal{G}^{\true}\right]\leq \exp\left(-\frac{1}{3}\log n\sqrt{w}\cdot\min\left\{\frac{\log n\sqrt{w}}{w},1\right\}\right)=\exp\left(-\Omega(\log ^{2}n)\right).$$
So 
$$\PP[X_{j}\neq X_{j}'|\mathcal{G}^{\true}]\leq \PP\left[X_{j}\leq w-\log n \sqrt{w}\middle|\mathcal{G}^{\true}\right]+\PP\left[X_{j}\geq w+\log n \sqrt{w}\middle|\mathcal{G}^{\true}\right]= \exp(-\Omega(\log ^{2}n)).$$
Since we have 
$$\{D'(i,i)\neq D(i,i)\} \subset \bigcup_{j\in[n]}\{X_{j}\neq X_{j}'\},$$
the conclusion follows from a union bound on $j\in[n]$.
\end{proof}

In light of \Cref{lem:DequalsDprime}, as far as \Cref{lem:truepair} is concerned, it is equivalent to analyze the random variable $D'(i,i)$ instead of $D(i,i)$. This is what we do in the remaining of \Cref{sec:truepair}.

\subsection{Concentration of Distance}\label{subsec:trueconcentration}
The analysis here is slightly more complicated than in \Cref{subsec:fakeconcentrate}. Here, even after the two rounds of conditioning, the random vectors $(X_{j},Y_{j})$ (for $j\in J^{\true}$) are not mutually independent. This is because randomness might still exist in the edges $(G_{j\ell},H_{j\ell})$ (for $\{j,\ell\}\subset J^{\true}$) due to our choice of $\mathcal{G}^{\true}$ (see \Cref{rmk:tp2rd} for more explanation), and such randomness leads to correlation between the vectors $(X_{j},Y_{j})$ and $(X_{\ell},Y_{\ell})$. So this time, instead of regarding $D'(i,i)$ as a function of the vectors $(X_{j},Y_{j})$, we regard $D'(i,i)$ as a function directly of the independent Bernoulli vectors $(G_{j\ell},H_{j\ell})$ (though some of them will be bundled together and analyzed as a whole). Let us begin with the following lemma.

\begin{lemma}\label{lem:diffofroot}
Assume that $(G,H)$ follows an  $(\alpha,4\alpha^{-1}\log^{2}n,\delta)$-edge-correlated distribution. Fix $\omega\sim (\mathcal{G}^{\true},\Gamma^{\true}\cap\Pi^{\true})$ and a pair $\{j,\ell\}\subset J^{\true}$. If we fix all other entries in $G,H$ and flip the value of $G_{j\ell}$, then the value of $\sqrt{X_{j}'}$ (as defined in \eqref{eq:defofXprime}) is changed by at most $\sqrt{\frac{2}{\alpha d_{j}}}$. 
\end{lemma}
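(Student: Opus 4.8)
The plan is to directly estimate how much flipping a single edge $G_{j\ell}$ can move $\sqrt{X_j'}$, using the elementary inequality $|\sqrt{a}-\sqrt{b}| = \frac{|a-b|}{\sqrt{a}+\sqrt{b}}$ together with the lower bound on $X_j'$ that the truncation in \eqref{eq:defofXprime} provides. First I would observe that flipping $G_{j\ell}$ changes the untruncated degree $X_j = \sum_\ell G_{j\ell}$ by exactly $1$ (note $\ell \in J^{\true}$, so $\ell \neq j$, and the term $G_{j\ell}$ really appears in the sum defining $X_j$). Since $X_j'$ is obtained from $X_j$ by the $1$-Lipschitz truncation map $x \mapsto \mathrm{med}\{w - \log n\sqrt w,\, x,\, w + \log n\sqrt w\}$ where $w = \E[X_j \mid \mathcal{G}^{\true}]$, the truncated value $X_j'$ also changes by at most $1$ when $G_{j\ell}$ is flipped.

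Next I would lower-bound $X_j'$. By construction, $X_j' \geq w - \log n\sqrt w$, and on $\Gamma^{\true}$ we have $w = \E[X_j\mid \mathcal{G}^{\true}] \geq R_j^{\true} \geq \frac{\alpha}{2} d_j$. Under the hypothesis $d = 4\alpha^{-1}\log^2 n$, we get $d_j \geq d$, hence $w \geq 2\log^2 n$, so $\log n \sqrt w \leq \frac{1}{\sqrt 2}\, w \leq \frac{3}{4} w$ for large $n$; in particular $X_j' \geq \frac14 w \geq \frac{\alpha}{8} d_j$. (One can tune the constant in $d$ to make the fraction cleaner, but any constant fraction suffices.) Therefore both the pre-flip and post-flip values of $X_j'$ are at least of order $\alpha d_j$, and writing $X_j'^{\mathsf{new}}$ for the post-flip value,
$$\left|\sqrt{X_j'^{\mathsf{new}}} - \sqrt{X_j'}\right| = \frac{\left|X_j'^{\mathsf{new}} - X_j'\right|}{\sqrt{X_j'^{\mathsf{new}}} + \sqrt{X_j'}} \leq \frac{1}{2\sqrt{X_j'}} \leq \frac{1}{2}\cdot\frac{1}{\sqrt{\alpha d_j / 8}} = \sqrt{\frac{2}{\alpha d_j}},$$
which is exactly the claimed bound.

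The proof is essentially a one-line computation once the two ingredients — the $1$-Lipschitz stability of the truncation and the $\Omega(\alpha d_j)$ lower bound on $X_j'$ — are in place, so there is no serious obstacle here. The only mild subtlety is to make sure the numerical constants are consistent: one wants $\log n \sqrt w$ to be a small enough fraction of $w$ that the resulting lower bound on $X_j'$ yields a leading constant of at most $2$ inside the square root (so that $\frac{1}{2\sqrt{X_j'}} \leq \sqrt{2/(\alpha d_j)}$). This is where the specific choice $d = 4\alpha^{-1}\log^2 n$ in the hypothesis is used, and it is straightforward to verify for $n$ large; for small $n$ the statement is vacuous or can be absorbed into the asymptotic conventions. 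I would also note in passing that the analogous statement holds verbatim for $\sqrt{Y_\ell'}$ and for flipping $H_{j\ell}$, by symmetry, which is presumably how the lemma will be invoked in the concentration argument of \Cref{subsec:trueconcentration}.
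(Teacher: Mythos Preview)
Your proposal is correct and follows essentially the same approach as the paper: both argue that flipping $G_{j\ell}$ changes $X_j'$ by at most $1$, both lower-bound $X_j'\ge w-\log n\sqrt w$ with $w=\E[X_j\mid\mathcal G^{\true}]\ge \tfrac{\alpha}{2}d_j\ge 2\log^2 n$, and both finish via $|\sqrt a-\sqrt b|\le 1/(\sqrt a+\sqrt b)$. The only cosmetic difference is that the paper writes the final chain as $\frac{1}{2\sqrt{w-\log n\sqrt w}}\le \frac{1}{\sqrt w}\le\sqrt{2/(\alpha d_j)}$, whereas you pass through the intermediate bound $X_j'\ge \tfrac14 w\ge \tfrac{\alpha}{8}d_j$; these are equivalent arithmetic routes to the same constant.
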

\begin{proof}
Let $w$ be the value of $\E[X_{j}|\mathcal{G}^{\true}]$ at the realization $\omega$. 
Let $a$ be the value of $X_{j}'$ before flipping and $b$ be the value after flipping. Then $|a-b|\leq 1$ and $a,b\geq w-\log n\sqrt{w}$. Since $\Gamma^{\true}$ holds, we have $w \geq R_{j}^{\true} \geq \alpha d_{j}/2\geq 2\log^{2} n$, using the assumption $d_{j}\geq 4\alpha^{-1}\log^{2}n$. So
\[\left|\sqrt{a}-\sqrt{b}\right|\leq \frac{1}{\sqrt{a}+\sqrt{b}}\leq \frac{1}{2\sqrt{w-\log n \sqrt{w}}}\leq \frac{1}{\sqrt{w}}\leq \sqrt{\frac{2}{\alpha d_{j}}}.\qedhere\]
\end{proof}

\begin{lemma}[Concentration]\label{lem:trueconcentration}
Assume that $(G,H)$ follows an  $(\alpha,4\alpha^{-1}\log^{2}n,\delta)$-edge-correlated distribution. On $\Gamma^{\true}\cap\Pi^{\true}$, we have
$$\PP\left[D'(i,i)\geq \E[D'(i,i)|\mathcal{G}^{\true}]+\frac{50\sqrt{d_{i}\log n}}{L}+50\sqrt{\alpha^{-1}d_{i}\delta\log n}\middle|\mathcal{G}^{\true}\right]\leq O(n^{-3}).$$
\end{lemma}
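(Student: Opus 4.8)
The plan is to apply McDiarmid's inequality (bounded-difference inequality) to the conditional distribution of $D'(i,i)$ given $\mathcal{G}^{\true}$, treating $D'(i,i)$ as a function of the collection of independent Bernoulli vectors $(G_{j\ell},H_{j\ell})$. Conditional on $\mathcal{G}^{\true}$, two kinds of independent ``coordinates'' remain: (i) for each $j\in[n]$, the vector of edges $(G_{j\ell},H_{j\ell})_{\ell\notin J^{\true}}$ incident to $j$ but going outside $J^{\true}$ — these are mutually independent across $j$ and control $(\sum_{\ell\notin J^{\true}}G_{j\ell},\sum_{\ell\notin J^{\true}}H_{j\ell})$; and (ii) for each pair $\{j,\ell\}\subset J^{\true}$ with $G_{j\ell}\neq H_{j\ell}$, the single remaining bit saying which of $G_{j\ell},H_{j\ell}$ equals $1$. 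I would bundle each ``outside-$J^{\true}$ block'' as one McDiarmid coordinate (a change there only alters $X_j'$ and $Y_j'$, hence only $A_{ij}'(\cdot)$ and $B_{ij}'(\cdot)$, so by \Cref{prop:structureofA} it changes $D'(i,i)$ by at most $8/L$, exactly as in \Cref{lem:fakeconcentration}), and treat each symmetric-difference pair $\{j,\ell\}\subset J^{\true}$ as a separate coordinate.

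The crux is the bounded difference for the second type of coordinate. Flipping the orientation of one pair $\{j,\ell\}$ flips both $G_{j\ell}$ and $H_{j\ell}$ simultaneously. This affects $X_j',X_\ell'$ (through $G$) and $Y_j',Y_\ell'$ (through $H$); in particular it moves $\sqrt{X_j'}$ by at most $\sqrt{2/(\alpha d_j)}$ and $\sqrt{X_\ell'}$ by at most $\sqrt{2/(\alpha d_\ell)}$ by \Cref{lem:diffofroot} (and symmetrically for $Y$). Here I would use the other structural feature emphasized in the paper: $\sum_j A_{ij}'(t)$, as a function of $t$, is (a sum of) indicators of cyclic intervals of length $2/L$, so shifting the center of one such interval by $\eta$ changes $\int_0^1|\cdot|\,dt$ by at most $2\eta$ (the symmetric difference of the two cyclic intervals has measure at most $2\eta$). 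Summing the four affected degree-roots, flipping the pair $\{j,\ell\}$ changes $D'(i,i)$ by at most $2(\sqrt{2/(\alpha d_j)}+\sqrt{2/(\alpha d_\ell)})\lesssim \sqrt{\alpha^{-1}}(d_j^{-1/2}+d_\ell^{-1/2})$, which is the reason $\frac{1}{d_j}+\frac{1}{d_\ell}$ appears in the definition of $S^{\true}$. (When $G_{j\ell}=H_{j\ell}$ there is no randomness left in that pair, so it is not a coordinate at all.)

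With the per-coordinate bounds in hand, McDiarmid's inequality gives a Gaussian tail with variance proxy $\tfrac12\bigl(\text{(\# type-(i) coords})\cdot(8/L)^2 + \sum_{\{j,\ell\}\subset J^{\true}}\text{(type-(ii) bound)}^2\bigr)$. The number of type-(i) coordinates that actually matter is $|J^{\true}\cup\{i\}|\le 2d_i+1$ (only $j\in J^{\true}\cup\{i\}$ contribute to $D'(i,i)$), contributing $O(d_i/L^2)$; this yields the $\frac{50\sqrt{d_i\log n}}{L}$ term after multiplying by $\log n$ and taking square roots. The type-(ii) contribution is $\lesssim \alpha^{-1}\sum_{\{j,\ell\}\subset J^{\true}}\mathbbm{1}\{G_{j\ell}\neq H_{j\ell}\}(\frac{1}{d_j}+\frac{1}{d_\ell})^2 \le \alpha^{-1}\cdot\frac{2}{d}\,S^{\true}$ (using $\frac{1}{d_j}+\frac{1}{d_\ell}\le \frac{2}{d}$ on one factor), and on $\Pi^{\true}$ we have $S^{\true}\le 10\delta d_i$, while $\frac{2}{d}\lesssim \log^{-2}n$; hence this term is $\lesssim \alpha^{-1}\delta d_i\log^{-2}n$, giving after $\times\log n$, $\sqrt{\cdot}$ the term $50\sqrt{\alpha^{-1}d_i\delta\log n}$ (up to adjusting constants — one should be slightly careful and instead keep $S^{\true}\le 10\delta d_i$ together with a cruder bound $\frac1{d_j}+\frac1{d_\ell}\le \frac2d$ to land exactly on $\alpha^{-1}d_i\delta$; the $d$-dependence cancels). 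Choosing the deviation level to be the sum of these two terms makes the McDiarmid exponent $\Omega(\log n)$, so the failure probability is $O(n^{-3})$. The main obstacle is purely the careful bookkeeping of the type-(ii) bounded-difference constant — correctly tracking that flipping one pair moves \emph{four} square-roots of degrees and converting that into a change in the $L^1$-integral via the cyclic-interval structure; once that is set up, the rest is the standard McDiarmid computation together with the already-established event bounds $\Gamma^{\true},\Pi^{\true}$.
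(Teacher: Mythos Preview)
Your approach matches the paper's proof exactly: bundle the outside-$J^{\true}$ edges per $j\in J^{\true}$ (Lipschitz constant $8/L$), treat each symmetric-difference pair $\{j,\ell\}\subset J^{\true}$ as a separate McDiarmid coordinate, and invoke the cyclic-interval structure together with \Cref{lem:diffofroot} to bound the latter's Lipschitz constant. There is one bookkeeping slip in your type-(ii) computation: squaring $\sqrt{\alpha^{-1}}\bigl(d_j^{-1/2}+d_\ell^{-1/2}\bigr)$ yields $\alpha^{-1}\bigl(d_j^{-1/2}+d_\ell^{-1/2}\bigr)^2 \leq 2\alpha^{-1}\bigl(\tfrac{1}{d_j}+\tfrac{1}{d_\ell}\bigr)$, \emph{not} $\alpha^{-1}\bigl(\tfrac{1}{d_j}+\tfrac{1}{d_\ell}\bigr)^{2}$ as you wrote. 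With the correct square, the sum over type-(ii) coordinates is directly $\lesssim \alpha^{-1}S^{\true}\leq 10\alpha^{-1}\delta d_i$, and McDiarmid at deviation level $\sqrt{\log n\cdot \alpha^{-1}\delta d_i}$ gives exactly the stated $50\sqrt{\alpha^{-1}d_i\delta\log n}$ term---no extra $2/d$ factor or ``$d$-dependence cancels'' maneuver is needed.
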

\begin{proof}
In this proof, all randomness is considered under the conditioning of $\mathcal{G}^{\true}$. 

Note that the Bernoulli vectors $(G_{j\ell},H_{j\ell})$, for $\{j,\ell\}\subset [n]$, are still independent of each other after conditioning on $\mathcal{G}^{\true}$. We group the vectors into certain bundles as follows. 

\begin{itemize}
\setlength\itemsep{0pt}
\item For $\{j,\ell\}$ such that $|\{j,\ell\}\cap J^{\true}|=0$, the vector $(G_{j\ell},H_{j\ell})$ has no influence on $D'(i,i)$.

\item For $\{j,\ell\}$ such that $|\{j,\ell\}\cap J^{\true}|=1$, we group the vectors $(G_{j\ell},H_{j\ell})$ according to the intersection $\{j,\ell\}\cap J^{\true}$. Formally, for $j\in J^{\true}$,
$$F_{j}:=\left((G_{j\ell},H_{j\ell})\right)_{\ell\not\in J^{\true}}.$$
Note that $(G_{ij},H_{ij})$ is also included in the bundle $F_{j}$.

\item For $\{j,\ell\}$ such that $|\{j,\ell\}\cap J^{\true}|=2$, we assign $(G_{j\ell},H_{j\ell})$ individually to a bundle $F_{j\ell}$. Note that by the second round of conditioning, $(G_{j\ell},H_{j\ell})$ is already fixed unless $\mathbbm{1}\{G_{j\ell} \neq H_{j\ell}\}$ is known to be 1.  
\end{itemize}
Now, we consider $D'(i,i)$ as a function of these bundles, and upper-bound the marginal difference each bundle can cause on the value of $D'(i,i)$ when the values of all other bundles are fixed, for the purpose of applying McDiarmid's inequality. In the remaining part of the proof, we use superscript $\mathsf{new}$ to denote the functions and values after a change in the bundle $F_{j}$ or $F_{j\ell}$.

\begin{claim}
For $j\in J^{\true}$, the marginal difference caused by $F_{j}$ (viewed as a whole) is no more than $8/L$.
\end{claim}
\begin{proof}[Proof of Claim 1]\renewcommand{\qedsymbol}{$\blacksquare$}
As defined in \eqref{eq:defofDprime}, the value of $D'(i,i)$ is determined by the $2n$ functions $A_{ir}'(\cdot)$ and $B_{ir}'(\cdot)$, for $r\in [n]$. Keeping all other bundles fixed, a change in the bundle $F_{j}$ will only change two of the $2n$ functions: $A_{ij}'(t)$ and $B_{ij}'(t)$. If $G_{ij}=0$, then $A_{ij}'(\cdot)$ and $A_{ij}'(\cdot)^{\mathsf{new}}$ are both zero functions (using the superscript $\mathsf{new}$ to denote the values after a change in the bundle $F_{j}$). Otherwise, in the same way as \Cref{prop:structureofA}, $A_{ij}'(\cdot)$ and $A_{ij}'(\cdot)^{\mathsf{new}}$ are both indicator functions of a cyclic interval of $2/L$. Since similar results also hold for $B_{ij}'(\cdot)$ and $B_{ij}'(\cdot)^{\mathsf{new}}$, we have
\begin{align*}
\left|D'(i,i)^{\mathsf{new}}-D'(i,i)\right|&\leq \sum_{r=1}^{n}\int_{0}^{1}\Big(\left|A_{ir}'(t)^{\mathsf{new}}-A_{ir}'(t)\right|+\left|B_{ir}'(t)^{\mathsf{new}}-B_{ir}'(t)\right|\Big)\d t\\
&\leq\int_{0}^{1}\left|A_{ij}'(t)^{\mathsf{new}}\right|\d t+\int_{0}^{1}\left|A_{ij}'(t)\right|\d t+\int_{0}^{1}\left|B_{ij}'(t)^{\mathsf{new}}\right|\d t+\int_{0}^{1}\left|B_{ij}'(t)\right|\d t\\
&\leq \frac{2}{L}+\frac{2}{L}+\frac{2}{L}+\frac{2}{L}=\frac{8}{L}.\qedhere
\end{align*}
\end{proof}
\begin{claim}
For $\{j,\ell\}\subset J^{\true}$ such that $\mathbbm{1}\{G_{j\ell} \neq H_{j\ell}\}$ is known to be 1, the marginal difference caused by $F_{j\ell}$ is no more than
$$\sqrt{\frac{64}{\alpha}\left(\frac{1}{d_{j}}+\frac{1}{d_{\ell}}\right)}.$$
\end{claim}
\begin{proof}[Proof of Claim 2]\renewcommand{\qedsymbol}{$\blacksquare$}
As defined in \eqref{eq:defofDprime}, the value of $D'(i,i)$ is determined by the $2n$ functions $A_{ir}'(\cdot)$ and $B_{ir}'(\cdot)$, for $r\in [n]$. Keeping all other bundles fixed, a change in the bundle $F_{j\ell}$ will only change four of the $2n$ functions: $A_{ij}'(t),A_{i\ell}'(t),B_{ij}'(t)$ and $B_{i\ell}'(t)$. If $G_{ij}=0$, then $A_{ij}'(\cdot)$ and $A_{ij}'(\cdot)^{\mathsf{new}}$ are both zero functions (using the superscript $\mathsf{new}$ to denote the values after a change in the bundle $F_{j\ell}$). Otherwise, in the same way as \Cref{prop:structureofA}, $A_{ij}'(\cdot)$ and $A_{ij}'(\cdot)^{\mathsf{new}}$ are both indicator functions of a cyclic interval of $2/L$,  centered at $\sqrt{X_{j}'}$ and $\sqrt{X_{j}'^{\,\mathsf{new}}}$ respectively. From \Cref{lem:diffofroot} we know $$\left|\sqrt{X_{j}'^{\,\mathsf{new}}}-\sqrt{X_{j}'}\right|\leq \sqrt{\frac{2}{\alpha d_{j}}}.$$ 
So the symmetric difference of the two cyclic intervals, i.e.
$$\left\{t\in [0,1):A_{ij}'(t)\neq A_{ij}'(t)^{\mathsf{new}}
\right\}$$ has Lebesgue measure no more than $2\sqrt{\frac{2}{\alpha d_{j}}}$. Since similar results also hold for other three pairs of functions, we have
\begin{align*}
\left|D'(i,i)^{\mathsf{new}}-D'(i,i)\right|&\leq  \sum_{r=1}^{n}\int_{0}^{1}\Big(\left|A_{ir}'(t)^{\mathsf{new}}-A_{ir}'(t)\right|+\left|B_{ir}'(t)^{\mathsf{new}}-B_{ir}'(t)\right|\Big)\d t\\
&= \int_{0}^{1}\left|A_{ij}'(t)^{\mathsf{new}}-A_{ij}(t)\right|\d t+\int_{0}^{1}\left|B_{ij}'(t)^{\mathsf{new}}-B_{ij}(t)\right|\d t+ \\
&\quad\int_{0}^{1}\left|A_{i\ell}'(t)^{\mathsf{new}}-A_{i\ell}(t)\right|\d t + \int_{0}^{1}\left|B_{i\ell}'(t)^{\mathsf{new}}-B_{i\ell}(t)\right|\d t\\
&\leq  2\sqrt{\frac{2}{\alpha d_{j}}}+2\sqrt{\frac{2}{\alpha d_{j}}}+2\sqrt{\frac{2}{\alpha d_{\ell}}}+2\sqrt{\frac{2}{\alpha d_{\ell}}}\\
&\leq \sqrt{\frac{64}{\alpha}\left(\frac{1}{d_{j}}+\frac{1}{d_{\ell}}\right)},
\end{align*}
where we used Cauchy-Schwarz in the last inequality.
\end{proof}

At any $\omega\sim(\mathcal{G}^{\true},\Gamma^{\true}\cap\Pi^{\true})$, combining the above two claims, we can conclude by McDiarmid's inequality (\Cref{thm:Mcdiarmid}) that with conditional probability (on $\mathcal{G}^{\true}$) at least $1-O\left(n^{-3}\right)$,
\begin{align*}
D'(i,i)-\E[D'(i,i)|\mathcal{G}^{\true}]&\leq 
\sqrt{\frac{1}{2}\cdot 3\log n\cdot \left(\sum_{j\in J^{\true}}\left(\frac{8}{L}\right)^{2}+\sum_{\{j,\ell\}\subset J^{\true}}\mathbbm{1}\{G_{j\ell}\neq H_{j\ell}\}\frac{64}{\alpha}\left(\frac{1}{d_{j}}+\frac{1}{d_{\ell}}\right)\right)}\\
&=\sqrt{\frac{1}{2}\cdot 3\log n\cdot\left(|J^{\true}|\cdot\left(\frac{8}{L}\right)^{2}+\frac{64}{\alpha}S^{\true}\right)}\\
&\leq \frac{50\sqrt{d_{i}\log n}}{L}+50\sqrt{\alpha^{-1}d_{i}\delta\log n}.
\end{align*}
In the last inequality we used $|J^{\true}|=R_{0}^{\true}\leq 2d_{i}$ and $S^{\true}\leq 10\delta d_{i}$, as guaranteed by $\Gamma^{\true}\cap\Pi^{\true}$.
\end{proof}

\subsection{Expectation of Distance}\label{subsec:trueexpectation}
Now we finally arrive at the crux of the proof: we need to give an upper bound on $\E[D'(i,i)|\mathcal{G}^{\true}]$. Let us first define two families of quantities.
\begin{definition}
Define for $j\in J^{\true}$
$$T_{1}(j):=\int_{0}^{1}\PP\left[A_{ij}'(t)\neq B_{ij}'(t)\middle|\mathcal{G}^{\true}\right]\d t$$
and for $\{j,\ell\}\subset J^{\true}$
\begin{equation}\label{eq:defofT2}
T_{2}(j,\ell)=\int_{0}^{1}\Big(\PP\left[A_{ij}'(t)=A_{i\ell}'(t)=1\middle|\mathcal{G}^{\true}\right]-\PP\left[A_{ij}'(t)=B_{i\ell}'(t)=1\middle|\mathcal{G}^{\true}\right]\Big)\d t.
\end{equation}
\end{definition}

The following lemma then reduces the task of upper-bounding $\E\left[D'(i,i)\middle|\mathcal{G}^{\true}\right]$ to dealing with $T_{1}$ and $T_{2}$ separately. 

\begin{lemma}\label{lem:turntrueexptoT}
$\E\left[D'(i,i)^{2}\middle|\mathcal{G}^{\true}\right]\leq \sum_{j\in J^{\true}}T_{1}(j)+4\sum_{\{j,\ell\}\subset J^{\true}}T_{2}(j,\ell)$.
\end{lemma}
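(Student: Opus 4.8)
The plan is to expand the absolute value in $D'(i,i)$ and bound it by a second moment argument, exploiting the fact that all the random variables $A_{ij}'(t), B_{ij}'(t)$ vanish for $j \notin J^{\true}$. First I would observe that since $G_{ij} \vee H_{ij} = 0$ for $j \notin J^{\true}$, we have $A_{ij}'(t) = B_{ij}'(t) = 0$ for all such $j$ and all $t$, so
\begin{equation*}
D'(i,i) = \int_0^1 \left| \sum_{j \in J^{\true}} \big( A_{ij}'(t) - B_{ij}'(t) \big) \right| \d t.
\end{equation*}
Then by Jensen's inequality (or Cauchy--Schwarz in the $t$ variable, using that $[0,1)$ has measure $1$), $D'(i,i)^2 \leq \int_0^1 \big( \sum_{j \in J^{\true}} (A_{ij}'(t) - B_{ij}'(t)) \big)^2 \d t$, and after taking conditional expectation and swapping expectation with the integral (Fubini--Tonelli, everything nonnegative), it remains to bound $\int_0^1 \E\big[ \big( \sum_{j \in J^{\true}} (A_{ij}'(t) - B_{ij}'(t)) \big)^2 \,\big|\, \mathcal{G}^{\true} \big] \d t$.

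Next I would expand the square into diagonal and off-diagonal terms. The diagonal term is $\sum_{j \in J^{\true}} \E[(A_{ij}'(t) - B_{ij}'(t))^2 \mid \mathcal{G}^{\true}]$; since $A_{ij}'(t), B_{ij}'(t) \in \{0,1\}$, the difference squared is the indicator $\mathbbm{1}\{A_{ij}'(t) \neq B_{ij}'(t)\}$, whose conditional expectation is exactly $\PP[A_{ij}'(t) \neq B_{ij}'(t) \mid \mathcal{G}^{\true}]$; integrating over $t$ gives $\sum_{j} T_1(j)$. For the off-diagonal term, for a fixed unordered pair $\{j,\ell\}$ the cross term $2\E[(A_{ij}'(t)-B_{ij}'(t))(A_{i\ell}'(t)-B_{i\ell}'(t)) \mid \mathcal{G}^{\true}]$ expands into four products of Bernoulli variables; using that $A_{ij}'(t)A_{i\ell}'(t) = \mathbbm{1}\{A_{ij}'(t)=A_{i\ell}'(t)=1\}$ etc., and crucially the exchange symmetry between $G$ and $H$ preserved by the conditioning $\mathcal{G}^{\true}$ (see Remark~\ref{rmk:tp2rd}), the terms $\PP[A_{ij}'(t)=A_{i\ell}'(t)=1]$ and $\PP[B_{ij}'(t)=B_{i\ell}'(t)=1]$ are equal, and likewise $\PP[A_{ij}'(t)=B_{i\ell}'(t)=1]=\PP[B_{ij}'(t)=A_{i\ell}'(t)=1]$. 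Hence the cross term collapses to $4\big(\PP[A_{ij}'(t)=A_{i\ell}'(t)=1 \mid \mathcal{G}^{\true}] - \PP[A_{ij}'(t)=B_{i\ell}'(t)=1 \mid \mathcal{G}^{\true}]\big)$, which upon integration is $4 T_2(j,\ell)$. Summing the diagonal and off-diagonal contributions over $J^{\true}$ yields the claimed bound.

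The step I expect to require the most care is justifying the symmetry identities that let the four-term cross expansion collapse to the two-term form in the definition \eqref{eq:defofT2} of $T_2(j,\ell)$. This rests on the fact that conditionally on $\mathcal{G}^{\true}$ the joint law of the relevant edge variables is invariant under swapping $G_{j\ell} \leftrightarrow H_{j\ell}$ (for $\{j,\ell\} \subset J^{\true}$, the conditioning only reveals $\mathbbm{1}\{G_{j\ell} \neq H_{j\ell}\}$ and $G_{j\ell} \vee H_{j\ell}$, so it does not break this symmetry) together with the symmetric difference structure of the neighborhoods of $i$ not being pinned down by $\mathcal{F}^{\true}$ (cf.\ Remark~\ref{remark:tp1rd}). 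One must check that the truncation defining $X_j'$, $Y_j'$ respects this symmetry — it does, because the truncation thresholds depend only on $\E[X_j \mid \mathcal{G}^{\true}]$ and $\E[Y_j \mid \mathcal{G}^{\true}]$, which are equal conditionally on $\mathcal{G}^{\true}$ by the same symmetry — and that the map $(\text{edges}) \mapsto (A_{ij}'(t), B_{ij}'(t))_{j}$ intertwines the $G \leftrightarrow H$ swap with the $A' \leftrightarrow B'$ swap, so that the distributional identities above indeed hold. Everything else (Jensen, Fubini, expanding the square, bounding $(A-B)^2$ by the indicator) is routine.
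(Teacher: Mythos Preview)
Your proposal is correct and follows essentially the same approach as the paper's proof: Cauchy--Schwarz (Jensen) on the $t$-integral, expansion of the square into diagonal and off-diagonal terms, identification of the diagonal term with $T_1$ via $(A-B)^2=\mathbbm{1}\{A\neq B\}$, and collapse of the four off-diagonal probabilities to two using the $G\leftrightarrow H$ symmetry preserved by $\mathcal{G}^{\true}$. Your added remarks on why the truncation thresholds respect this symmetry are a welcome clarification that the paper leaves implicit.
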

\begin{proof}
We have by Cauchy-Schwarz inequality
\begin{equation}\label{eq:D'CauchySchwarz}
\E\left[D'(i,i)^{2}\middle|\mathcal{G}^{\true}\right]\leq \int_{0}^{1}\E\left[\left|\sum_{j=1}^{n}(A_{ij}'(t)-B_{ij}'(t))\right|^{2} \mid \mathcal{G}^{\true} \right]\d t.
\end{equation}
For each $t\in[0,1)$, the integrand on the right hand side of \eqref{eq:D'CauchySchwarz} expands to
\begin{align}
&\quad\E\left[\left|\sum_{j\in J^{\true}}(A_{ij}'(t)-B_{ij}'(t))\right|^{2}\mid \mathcal{G}^{\true}\right]\nonumber\\
&=\sum_{j\in J^{\true}}\E\left[(A_{ij}'(t)- B_{ij}'(t))^{2}\middle|\mathcal{G}^{\true} \right]+2\sum_{\{j,\ell\}\subset J^{\true}}\E\left[(A_{ij}'(t)-B_{ij}'(t))(A_{i\ell}'(t)-B_{i\ell}'(t))\middle|\mathcal{G}^{\true}\right].\label{eq:D'expansion}
\end{align}
For each $\{j,\ell\}\subset J^{\true}$, the second term on the right hand side of \eqref{eq:D'expansion} expands to
\begin{align}
&\quad\E\left[(A_{ij}'(t)-B_{ij}'(t))(A_{i\ell}'(t)-B_{i\ell}'(t))\middle|\mathcal{G}^{\true}\right]\nonumber\\
&=\left(\PP[A_{ij}'(t)=A_{i\ell}'(t)=1|\mathcal{G}^{\true}]+\PP[B_{ij}'(t)=B_{i\ell}'(t)=1|\mathcal{G}^{\true}]\right)-\nonumber\\&\quad \left(\PP[A_{ij}'(t)=B_{i\ell}'(t)=1|\mathcal{G}^{\true}]+\PP[B_{ij}'(t)=A_{i\ell}'(t)=1|\mathcal{G}^{\true}]\right) \nonumber\\
&=2\PP[A_{ij}'(t)=A_{i\ell}'(t)=1 |\mathcal{G}^{\true}]-2\PP[A_{ij}'(t)=B_{i\ell}'(t)=1 |\mathcal{G}^{\true}].\label{eq:D'expansion2}
\end{align}
In the last equality, we used the fact that $(A_{ij}'(t),A_{i\ell}'(t))$ has the same conditional distribution as $(B_{ij}'(t),B_{i\ell}'(t))$, and $(A_{ij}'(t),B_{i\ell}'(t))$ has the same conditional distribution as $(B_{ij}'(t),A_{i\ell}'(t))$.
The conclusion then follows by combining \eqref{eq:D'CauchySchwarz}, \eqref{eq:D'expansion} and \eqref{eq:D'expansion2}.
\end{proof}
We now give an upper bound on $T_{1}$ and will defer the analysis of $T_{2}$ to the next subsection. Note that bounding $T_{1}$ somewhat corresponds to the true pair analysis in \cite{DMWX21} and the analysis of $T_{2}$ is an entirely new ingredient of the proof. It is interesting that due to our continuous version of the balls-into-bins distance, the analysis of $T_{1}$ bypasses the rather complicated Lemma 9 of \cite{DMWX21}.
\begin{lemma}\label{lem:T1bound}
Assume that $(G,H)$ follows an  $(\alpha,4\alpha^{-1}\log^{2}n,\delta)$-edge-correlated distribution. On $\Gamma^{\true}\cap\Pi^{\true}$, we have
$$\sum_{j\in J^{\true}}T_{1}(j)\leq 50d_{i}\left(\frac{\delta}{L}+\sqrt{\alpha^{-1}\delta}\right).$$
\end{lemma}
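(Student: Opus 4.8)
The plan is to bound each $T_1(j)$ individually and then sum over $j\in J^{\true}$. Fix a realization $\omega\sim(\mathcal G^{\true},\Gamma^{\true}\cap\Pi^{\true})$ and a vertex $j\in J^{\true}$, and use Fubini--Tonelli to write $T_1(j)=\E\!\left[\int_0^1\mathbbm{1}\{A_{ij}'(t)\neq B_{ij}'(t)\}\d t\,\middle|\,\mathcal G^{\true}\right]$. Since $i\notin J^{\true}$, the edge $(G_{ij},H_{ij})$ enters $\mathcal G^{\true}$ only through the value $G_{ij}\vee H_{ij}=1$, so conditional on $\mathcal G^{\true}$ this edge is distributed as $(G_{ij},H_{ij})$ conditioned on $G_{ij}\vee H_{ij}=1$; in particular $\PP[G_{ij}\neq H_{ij}\mid\mathcal G^{\true}]=2\delta_{ij}/(1+\delta_{ij})\le 2\delta$. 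I split the inner integral accordingly. When $G_{ij}\neq H_{ij}$, exactly one of $A_{ij}'(\cdot),B_{ij}'(\cdot)$ is the zero function and the other is the indicator of a cyclic interval of measure $2/L$ (the obvious truncated analogue of \Cref{prop:structureofA}, valid since $X_j',Y_j'>0$), so the inner integral equals $2/L$. When $G_{ij}=H_{ij}=1$, both functions are indicators of cyclic intervals of length $2/L$ centered at $\{\sqrt{X_j'}\}$ and $\{\sqrt{Y_j'}\}$, and the measure of the set where they disagree is the measure of the symmetric difference of these two arcs, which is at most twice the cyclic distance of their centers, hence at most $2|\sqrt{X_j'}-\sqrt{Y_j'}|$. (This is precisely where the continuum family of bins obviates a \cite{DMWX21}-type discrete computation.) Combining, $T_1(j)\le \frac{4\delta}{L}+2\,\E\!\left[|\sqrt{X_j'}-\sqrt{Y_j'}|\,\middle|\,\mathcal G^{\true}\right]$.

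It remains to bound $\E[|\sqrt{X_j'}-\sqrt{Y_j'}|\mid\mathcal G^{\true}]$. The key observation is that $\E[X_j\mid\mathcal G^{\true}]=\E[Y_j\mid\mathcal G^{\true}]$: the conditional laws of the corresponding summands of $X_j$ and $Y_j$ agree except on the ``undecided'' edges, namely $\{i,j\}$ and the pairs $\{j,\ell\}$ with $\ell\in J^{\true}$ and $\mathbbm{1}\{G_{j\ell}\neq H_{j\ell}\}=1$, and by the symmetry built into $\mathcal G^{\true}$ (see \Cref{rmk:tp2rd}) the conditional law of $G_{j\ell}$ equals that of $H_{j\ell}$ on every such edge. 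Consequently the two truncations in \eqref{eq:defofXprime} clamp to the \emph{same} interval, and since clamping to a fixed interval is $1$-Lipschitz we get $|X_j'-Y_j'|\le|X_j-Y_j|$ pointwise on the conditioned event. On $\Gamma^{\true}$ we have $\E[X_j\mid\mathcal G^{\true}]\ge R_j^{\true}\ge\frac{\alpha}{2}d_j\gtrsim\log^2 n$, which (exactly as in \Cref{lem:diffofroot}) forces $\sqrt{X_j'}+\sqrt{Y_j'}\ge\sqrt{\E[X_j\mid\mathcal G^{\true}]}\ge\sqrt{\alpha d_j/2}$, whence $|\sqrt{X_j'}-\sqrt{Y_j'}|\lesssim|X_j-Y_j|/\sqrt{\alpha d_j}$. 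Finally, conditionally on $\mathcal G^{\true}$ the vectors $(G_{j\ell},H_{j\ell})$ are independent as $\ell$ varies, so $X_j-Y_j$ is a sum of independent centered terms: the undecided $J^{\true}$-edges contribute conditional variance $n_j:=|\{\ell\in J^{\true}:G_{j\ell}\neq H_{j\ell}\}|$ (which is $\mathcal G^{\true}$-measurable), the edge $\{i,j\}$ contributes $\le 2\delta$, and the edges $\{j,\ell\}$ with $\ell\notin J^{\true}\cup\{i\}$ contribute $\sum_\ell 2p_{j\ell}\delta_{j\ell}\le 2\delta d_j$; by Jensen $\E[|X_j-Y_j|\mid\mathcal G^{\true}]\le\sqrt{n_j+3\delta d_j}$. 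Altogether
\[
T_1(j)\;\lesssim\;\frac{\delta}{L}+\sqrt{\frac{n_j}{\alpha d_j}}+\sqrt{\frac{\delta}{\alpha}}.
\]

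Summing over $j\in J^{\true}$, the first and third terms are handled by $|J^{\true}|=R_0^{\true}\le 2d_i$ (valid on $\Gamma^{\true}$). For the middle term I would apply Cauchy--Schwarz, $\sum_{j\in J^{\true}}\sqrt{n_j/(\alpha d_j)}\le\sqrt{|J^{\true}|}\cdot\sqrt{\alpha^{-1}\sum_{j\in J^{\true}}n_j/d_j}$, together with the identity
\[
\sum_{j\in J^{\true}}\frac{n_j}{d_j}=\sum_{\{j,\ell\}\subset J^{\true}}\Big(\frac{1}{d_j}+\frac{1}{d_\ell}\Big)\mathbbm{1}\{G_{j\ell}\neq H_{j\ell}\}=S^{\true},
\]
and the bound $S^{\true}\le 10\delta d_i$ guaranteed by $\Pi^{\true}$; this gives $\sum_{j\in J^{\true}}\sqrt{n_j/(\alpha d_j)}\le\sqrt{2d_i}\cdot\sqrt{10\alpha^{-1}\delta d_i}=\sqrt{20\alpha^{-1}\delta}\,d_i$. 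Tracking the absolute constants through these steps then yields $\sum_{j\in J^{\true}}T_1(j)\le 50 d_i\big(\delta/L+\sqrt{\alpha^{-1}\delta}\big)$, with room to spare.

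I expect the main obstacle to be the careful accounting in the $G_{ij}=H_{ij}=1$ case rather than any single deep estimate: one must verify that conditioning only on $\mathcal G^{\true}$ (rather than on the full neighborhood of $i$) still preserves enough symmetry that $\E[X_j\mid\mathcal G^{\true}]=\E[Y_j\mid\mathcal G^{\true}]$ holds — this is what makes the two truncation windows coincide and hence $|X_j'-Y_j'|\le|X_j-Y_j|$ — while at the same time $\Gamma^{\true}$ supplies the lower bound on $X_j',Y_j'$ needed to pass from a difference of square roots to a difference of degrees, and $\Pi^{\true}$ supplies the control $\sum_j n_j/d_j=S^{\true}\le 10\delta d_i$ that keeps the cancellation-driven bulk term of order $d_i\sqrt\delta$ rather than of order $d_i$. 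The remaining ingredients (the cyclic symmetric-difference inequality, Fubini--Tonelli, Jensen, Cauchy--Schwarz) are routine.
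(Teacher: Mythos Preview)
Your proposal is correct and follows essentially the same approach as the paper: Fubini, the cyclic-interval symmetric-difference bound $2|\sqrt{X_j'}-\sqrt{Y_j'}|$ on the event $G_{ij}=H_{ij}=1$, passage to $|X_j-Y_j|$ via the lower bound on $X_j',Y_j'$ coming from $\Gamma^{\true}$, and then Cauchy--Schwarz over $j\in J^{\true}$ together with the identity $\sum_j n_j/d_j=S^{\true}\le 10\delta d_i$ from $\Pi^{\true}$. The only cosmetic differences are that you apply Jensen to the full variance of $X_j-Y_j$ before splitting it into the $J^{\true}$ and non-$J^{\true}$ pieces (the paper splits first and applies Cauchy--Schwarz to each piece), and you drop the conditioning on $G_{ij}=H_{ij}$ in the square-root term by absorbing the probability factor---both are harmless rearrangements yielding the same final bound.
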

\begin{proof}
Let us first bound $T_1(j)$ for a given $j\in J^{\true}$. Let $\mu(\cdot)$ denote the Lebesgue measure on $\mathbb{R}$. By Fubini's theorem we have 
\begin{equation}\label{eq:T1fubini}
\int_{0}^{1}\PP\left[A_{ij}'(t)\neq B_{ij}'(t)\middle|\mathcal{G}^{\true}\right]\d t =\E\Big[\mu\left(\{t\in [0,1):A_{ij}'(t)\neq B_{ij}'(t)\}\right)\Big|\mathcal{G}^{\true}\Big].
\end{equation}
We again use analogues of \Cref{prop:structureofA} for $A_{ij}'(\cdot)$ and $B_{ij}'(\cdot)$. The functions $A_{ij}'(\cdot)$ and $B_{ij}'(\cdot)$ are both indicator functions with supports having Lebesgue measures no more than $2/L$, so the symmetric difference between the supports of $A_{ij}'(\cdot)$ and $B_{ij}'(\cdot)$, that is, 
$$\left\{t\in [0,1): A_{ij}'(t)\neq B_{ij}'(t)\right\},$$
has Lebesgue measure no more than $4/L$.

In the case $G_{ij}=H_{ij}$, there is another way to upper-bound the measure of this symmetric difference. In fact, since $j\in J^{\true}$, the event $\{G_{ij}=H_{ij}\}$ implies $G_{ij}=H_{ij}=1$. In this case, $A_{ij}'(t)$ (resp. $B_{ij}'(\cdot)$) is an indicator function of a cyclic interval of length $2/L$ with center $\sqrt{X_{j}'}$ (resp. $\sqrt{Y_{j}'}$). So the symmetric difference of the two cyclic intervals has Lebesgue measure no more than  2$\left|\sqrt{X_{j}'}-\sqrt{Y_{j}'}\right|$. 
Therefore the right hand side of $\eqref{eq:T1fubini}$ is upper-bounded by 
$$\PP[G_{ij}\neq H_{ij}|\mathcal{G}^{\true}]\cdot \frac{4}{L}+\PP[G_{ij}=H_{ij}|\mathcal{G}^{\true}]\cdot\E\left[2\left|\sqrt{X_{j}'}-\sqrt{Y_{j}'}\right|\mid G_{ij}=H_{ij},\mathcal{G}^{\true}\right].$$
Since $\PP[G_{ij} \neq H_{ij} | \mathcal{G}^{\true}]=2\delta_{ij}/(1+\delta_{ij}) \leq 2\delta,$
it is further upper-bounded by 
$$\frac{8\delta}{L}+\E\left[2\left|\sqrt{X_{j}'}-\sqrt{Y_{j}'}\right|\mid G_{ij}=H_{ij},\mathcal{G}^{\true}\right].$$
Recall that by $\Gamma^{\true}$ and the assumption $d_{j}\geq 4\alpha^{-1}\log^{2}n$, we have $\E[X_j|\mathcal{G}^{\true}] \geq \alpha d_j/2 \geq 2\log^{2}n$. So $$X_{j}' \geq \E[X_j|\mathcal{G}^{\true}]-\log n\sqrt{\E[X_j|\mathcal{G}^{\true}]} \geq \left(1-\frac{1}{\sqrt{2}}\right)\E[X_j|\mathcal{G}^{\true}] \geq \frac{\alpha d_j}{8}.$$ 
Similarly, $Y_{j}' \geq \alpha d_j/8$. It follows that  
\begin{align}
T_1(j)&\leq \frac{8\delta}{L}+\E\left[2\left|\sqrt{X_{j}'}-\sqrt{Y_{j}'}\right|\mid G_{ij}=H_{ij},\mathcal{G}^{\true}\right]\nonumber\\
&\leq \frac{8\delta}{L}+2\sqrt{\frac{2}{\alpha d_{j}}}\E\left[|X_{j}-Y_{j}|\middle|G_{ij}=H_{ij},\mathcal{G}^{\true}\right] \nonumber\\
&\leq \frac{8\delta}{L}+2\sqrt{\frac{2}{\alpha d_{j}}}\E\left[\left|\sum_{\ell\in J^{\true}}(G_{j\ell}-H_{j\ell})\right| \mid \mathcal{G}^{\true}\right]+2\sqrt{\frac{2}{\alpha d_{j}}}\E\left[\left|\sum_{\ell\not\in J^{\true}\cup\{i\}}(G_{j\ell}-H_{j\ell}) \right|\mid \mathcal{G}^{\true}\right].\label{eq:T1RHS}
\end{align}

For the second term on the right hand side of \eqref{eq:T1RHS}, using Cauchy-Schwarz and the bound $|J^{\true}| \leq 2 d_{i}$ due to $\Gamma^{\true}$, then using the conditional independence between vectors $(G_{j\ell},H_{j\ell})$ (for $\ell \in J^{\true}$) and that the conditional expectation of $G_{j\ell}$ and $H_{j\ell}$ are the same, and finally using the bound $S^{\true}\leq 10\delta d_{i}$ due to $\Pi^{\true}$, we get the following string of three inequalities accordingly:
\begin{align}
\sum_{j\in J^{\true}}\sqrt{\frac{1}{d_{j}}}\E\left[\left|\sum_{\ell\in J^{\true}}(G_{j\ell}-H_{j\ell})\right| \mid \mathcal{G}^{\true} \right]
&\leq \sqrt{2d_{i}}\left(\sum_{j\in J^{\true}}\frac{1}{d_{j}}\E\left[\left|\sum_{\ell\in J^{\true}}(G_{j\ell}-H_{j\ell})\right|^{2} \mid \mathcal{G}^{\true}\right]\right)^{1/2}\nonumber\\
&\leq \sqrt{2d_{i}}\left(\sum_{\{j,\ell\}\subset J^{\true}}\left(\frac{1}{d_{j}}+\frac{1}{d_{\ell}}\right)\mathbbm{1}\{G_{j\ell} \neq H_{j\ell}\}\right)^{1/2}\nonumber\\
&\leq d_{i}\sqrt{20\delta}.\label{eq:2ndterm}
\end{align}

For the third term on the right hand side of \eqref{eq:T1RHS}, again using Cauchy-Schwarz and the bound $|J^{\true}| \leq 2 d_{i}$, then using the conditional independence between vectors $(G_{j\ell},H_{j\ell})$ (for $\ell\not\in J^{\true}\cup \{i\}$) and that the conditional expectation of $G_{j\ell}$ and $H_{j\ell}$ are the same, and finally using the bound $|J^{\true}| \leq 2 d_{i}$ again, we have the following series of transitions:
\begin{align}\label{eq:3rdterm}
\sum_{j\in J^{\true}}\sqrt{\frac{1}{d_{j}}}\E\left[\left|\sum_{\ell\not\in J^{\true}\cup\{i\}}(G_{j\ell}-H_{j\ell})\right|\mid \mathcal{G}^{\true}\right]&\leq  \sqrt{2d_{i}} \left(\sum_{j\in J^{\true}}\frac{1}{d_{j}}\E\left[\left|\sum_{\ell\not\in J^{\true}\cup\{i\}}(G_{j\ell}-H_{j\ell})\right|^{2}\mid \mathcal{G}^{\true}\right]\right)^{1/2} \nonumber \\
&=\sqrt{2d_{i}}\left(\sum_{j\in J^{\true}}\frac{1}{d_{j}}\sum_{\ell\not\in J^{\true}\cup \{i\}}2\delta_{j\ell} p_{j\ell}\right)^{1/2} \leq  2d_{i}\sqrt{2\delta}.
\end{align}
Summing over $j\in J^{\true}$ on both sides of \eqref{eq:T1RHS}, then applying \eqref{eq:2ndterm}, \eqref{eq:3rdterm} and the bound $|J^{\true}| \leq 2 d_{i}$, we have
\begin{align*}
\sum_{j\in J^{\true}}T_{1}(j)&\leq |J^{\true}|\cdot \frac{8\delta}{L}+d_{i}\sqrt{160\alpha^{-1}\delta}+8d_{i}\sqrt{\alpha^{-1}\delta}\\
&\leq 50d_{i}\left(\frac{\delta}{L}+\sqrt{\alpha^{-1}\delta}\right).\qedhere
\end{align*}
\end{proof}

\subsection{Upper bound on $T_{2}$}\label{subsec:upperbonT2}

In this subsection, we provide an upper bound for $T_{2}$. As can be seen in the definition \eqref{eq:defofT2}, we need to analyze the joint distributions of $(A_{ij}'(t),A_{i\ell}'(t))$ and of $(A_{ij}'(t),B_{i\ell}'(t))$, conditional on $\mathcal{G}^{\true}$. A particularly simple case is captured by the following proposition.
\begin{proposition}\label{prop:nilT2}
For $j,\ell\in J^{\true}$, if $G_{j\ell}=H_{j\ell}$ then $T_{2}(j,\ell)=0$.
\end{proposition}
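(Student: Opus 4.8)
The plan is to use the defining property of $\mathcal{G}^{\true}$ recorded in \Cref{rmk:tp2rd}: on the event $\{G_{j\ell}=H_{j\ell}\}$, conditioning on $\mathcal{G}^{\true}$ reveals the full values of both $G_{j\ell}$ and $H_{j\ell}$. Thus the starting observation is that, since we are conditioning on $\mathcal{G}^{\true}$ and the hypothesis $G_{j\ell}=H_{j\ell}$ is part of the information contained in $\mathcal{G}^{\true}$ (the indicator $\mathbbm 1\{G_{j\ell}\neq H_{j\ell}\}$ is $\mathcal{G}^{\true}$-measurable and equals $0$ here), we may treat the common value of $G_{j\ell}=H_{j\ell}$ as a fixed constant. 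This removes the only source of coupling between the ``$j$-side'' and the ``$\ell$-side'' of the degree computation.

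The key step is then to argue conditional independence. I would write $X_\ell' = f_\ell\big((G_{\ell m},H_{\ell m})_{m}\big)$ and $Y_\ell' = g_\ell\big((G_{\ell m},H_{\ell m})_{m}\big)$ as (measurable) functions of the edge-vectors incident to $\ell$, and similarly $X_j', Y_j'$ as functions of the edge-vectors incident to $j$; recall that $A_{ij}'(t)$ depends only on $X_j'$ and on $G_{ij}$ (which is itself part of the edge-vector at $j$), and likewise for $A_{i\ell}'(t), B_{i\ell}'(t)$. The only edge appearing in both families is $\{j,\ell\}$ itself. Conditionally on $\mathcal{G}^{\true}$, all the Bernoulli vectors $(G_{ab},H_{ab})$ over pairs $\{a,b\}$ not both in $J^{\true}$ remain independent (this was already used in \Cref{lem:trueconcentration}), and by hypothesis the vector $(G_{j\ell},H_{j\ell})$ is degenerate (a constant). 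Hence the $\sigma$-algebra generated by the edge-vectors at $j$ and the one generated by the edge-vectors at $\ell$ are, conditionally on $\mathcal{G}^{\true}$, independent; in particular $(A_{ij}'(t))$ is conditionally independent of the pair $(A_{i\ell}'(t), B_{i\ell}'(t))$.

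Granting that, the conclusion is immediate: for every $t\in[0,1)$,
$$\PP[A_{ij}'(t)=A_{i\ell}'(t)=1\mid\mathcal{G}^{\true}] = \PP[A_{ij}'(t)=1\mid\mathcal{G}^{\true}]\cdot\PP[A_{i\ell}'(t)=1\mid\mathcal{G}^{\true}],$$
and identically
$$\PP[A_{ij}'(t)=B_{i\ell}'(t)=1\mid\mathcal{G}^{\true}] = \PP[A_{ij}'(t)=1\mid\mathcal{G}^{\true}]\cdot\PP[B_{i\ell}'(t)=1\mid\mathcal{G}^{\true}].$$
Finally, the symmetry already exploited in the proof of \Cref{lem:turntrueexptoT} — namely that $A_{i\ell}'(t)$ and $B_{i\ell}'(t)$ have the same conditional law given $\mathcal{G}^{\true}$ (this uses precisely that $\mathcal{G}^{\true}$ does not break the $G\leftrightarrow H$ symmetry of the surviving randomness at $\ell$, cf.\ \Cref{rmk:tp2rd}) — gives $\PP[A_{i\ell}'(t)=1\mid\mathcal{G}^{\true}] = \PP[B_{i\ell}'(t)=1\mid\mathcal{G}^{\true}]$. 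Subtracting the two displays and integrating over $t\in[0,1)$ yields $T_{2}(j,\ell)=0$.

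The only step requiring genuine care is the conditional independence claim: one must check that $\mathcal{G}^{\true}$ does not secretly introduce dependence between the two sides — e.g.\ through the definitions of $X_j', Y_j'$, whose truncation thresholds $\E[X_j\mid\mathcal{G}^{\true}]$ are $\mathcal{G}^{\true}$-measurable constants and hence harmless — and that the conditioning variables $G_{j\ell}\vee H_{j\ell}$ and $\mathbbm 1\{G_{j\ell}\neq H_{j\ell}\}$ for pairs inside $J^{\true}$ are, under $\{G_{j\ell}=H_{j\ell}\}$, exactly equivalent to revealing the constant value, so that no further entanglement is created. This is a bookkeeping point about the filtration in \Cref{def:tp2rd} rather than a computation, and I expect it to be the main (modest) obstacle.
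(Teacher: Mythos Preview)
Your proposal is correct and follows essentially the same argument as the paper's proof: both use \Cref{rmk:tp2rd} to observe that $(G_{j\ell},H_{j\ell})$ is deterministic under $\mathcal{G}^{\true}$ on $\{G_{j\ell}=H_{j\ell}\}$, deduce that $A_{ij}'(\cdot)$ is conditionally independent of $(A_{i\ell}'(\cdot),B_{i\ell}'(\cdot))$ since the only shared edge $\{j,\ell\}$ has been frozen, and finish via the $G\leftrightarrow H$ symmetry giving $A_{i\ell}'(t)\stackrel{d}{=}B_{i\ell}'(t)$ under $\mathcal{G}^{\true}$. Your extra remark that the truncation thresholds $\E[X_j\mid\mathcal{G}^{\true}]$ are $\mathcal{G}^{\true}$-measurable and hence harmless is a useful clarification the paper leaves implicit.
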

\begin{proof}
If $G_{j\ell}=H_{j\ell}$, the values of $G_{j\ell}$ and $H_{j\ell}$ are known after conditioning on $\mathcal{G}^{\true}$ (see \Cref{rmk:tp2rd}). So after the conditioning, the random function $A_{ij}'(\cdot)$ depends only on the randomness in $\left\{G_{jr}\right\}_{r\neq \ell}$ and $A_{i\ell}'(\cdot)$ only depends on the randomness in $\left\{G_{\ell r}\right\}_{r\neq j}$. This means that the random functions $A_{ij}'(\cdot)$ and $A_{i\ell}'(\cdot)$ are (conditionally) independent. Similarly, $B_{i\ell}'(\cdot)$ only depends on the randomness in $\left\{H_{\ell r}\right\}_{r\neq j}$, and $A_{ij}'(\cdot)$ is also independent with $B_{i\ell}'(\cdot)$. Since $A_{i\ell}'(\cdot)$ is identically distributed with $B_{i\ell}'(\cdot)$ (even after conditioning on $\mathcal{G}^{\true}$), we conclude that for any $t\in [0,1)$, the (conditional) joint distribution of $(A_{ij}'(t),A_{i\ell}'(t))$ is the same as that of $(A_{ij}'(t),B_{i\ell}'(t))$. Consequently by \eqref{eq:defofT2} we must have $T_{2}(j,\ell)=0$. 
\end{proof}

For the remaining of \Cref{subsec:upperbonT2}, we fix $\{j,\ell\} \subset J^{\true}$ and a realization $\omega\sim(\mathcal{G}^{\true},\Gamma^{\true}\cap\Pi^{\true})$ such that $G_{j\ell}\neq H_{j\ell}$ at $\omega$. Our goal is to give an upper bound on the value of $T_{2}(j,\ell)$ at $\omega$.

\subsubsection{Preparations: the Alias World}\label{subsubsec:aliasworld}
Previously, we have based our analysis exclusively on the randomness existing in the random graphs $G$ and $H$. However, as the analysis goes deeper, the random variables arising from $(G,H)$ get increasingly intertwined with each other. For the purpose of simplifying notations and reducing distractions, we create some external random variables that simulate parts of the randomness in $(G,H)$. 

In this subsection, we will make a series of definitions that form what we call ``the alias world''. The random variables here are defined on a different probability space from that of $(G,H)$, but their distributions offer a reflection of the latter. All definitions here will be put into use in \Cref{subsubsec:modelreal}.

\begin{definition}[Alias variables]\label{def:aliasvar}
Let $Z_{1}$ be a random variable that admits the same distribution as the conditional distribution of $1+\sum_{r\not\in\{i,j,\ell\}}G_{jr}$, i.e. 
$$\forall x\in \mathbb{Z},\quad \PP[Z_{1}=x]=\PP\left[1+\sum\nolimits_{r\not\in\{i,j,\ell\}}G_{jr}=x\middle|\mathcal{G}^{\true}\right](\omega).$$
Let $Z_{2}$ be a random variable admitting the same distribution as the conditional distribution of $1+\sum_{r\not\in\{i,j,\ell\}}G_{\ell r}$, i.e.
$$\forall x\in \mathbb{Z},\quad \PP[Z_{2}=x]=\PP\left[1+\sum\nolimits_{r\not\in\{i,j,\ell\}}G_{\ell r}=x\middle|\mathcal{G}^{\true}\right](\omega).$$
Finally, let $Z_{0}$ be a Bernoulli random variable with $\PP[Z_{0}=0]=\PP[Z_{0}=1]=\frac{1}{2}$. Furthermore, we assume $Z_{0},Z_{1},Z_{2}$ are mutually independent.
\end{definition}

Parallel to \Cref{subsec:truncation}, we accordingly introduce a truncation function for our alias variables.

\begin{definition}
Let $w_{1}$ be the value $\E[X_{j}|\mathcal{G}^{\true}](\omega)$, and let $w_{2}$ be the value $\E[X_{\ell}|\mathcal{G}^{\true}](\omega)$. Define two functions $\mathsf{Trc}_{s}:\mathbb{R}\rightarrow\mathbb{R}$, for $s\in\{1,2\}$, by 
$$\mathsf{Trc}_{s}(x)=\begin{cases}
w_{s}-\log n\sqrt{w_{s}} &\text{if }x\leq w_{s}-\log n\sqrt{w_{s}},\\
x &\text{if }w_{s}-\log n\sqrt{w_{s}}< x<w_{s}+\log n\sqrt{w_{s}},\\
w_{s}+\log n\sqrt{w_{s}} &\text{if }w_{s}+\log n\sqrt{w_{s}}\leq x.
\end{cases}$$
\end{definition}

\begin{definition}
Define a function $g:\mathbb{R}\rightarrow [0,2/L]$ by
$$g(x):=\max\left\{0,\frac{2}{L}-\min_{m\in\mathbb{Z}}\left|x-m\right|\right\}.$$
\end{definition}

The function $g$ will be our principal tool for analyzing $T_{2}(j,\ell)$. The reason behind its definition is the following crucial lemma.
\begin{lemma}\label{lem:rationaleg}
Recall the definition of $I_{t}$ in \eqref{eq:defofItm} and \eqref{eq:defofIt}, where we assumed $L\geq 5$. Recall that $\mu(\cdot)$ denotes the Lebesgue measure on $\mathbb{R}$. For any two real numbers $a,b$, we have the identity
$$\mu\left(\{t\in[0,1):a,b\in I_{t}\}\right)=g(a-b).$$
\end{lemma}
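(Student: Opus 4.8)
The plan is to reduce the claim to an elementary computation on the circle $\mathbb{R}/\mathbb{Z}$, using the cyclic-interval viewpoint from \Cref{rmk:cyclicinterval}. First I would observe that the condition $a,b\in I_{t}$ depends only on the fractional parts $\{a\}$ and $\{b\}$: indeed $x\in I_{t}=\bigcup_{m}[t+m-\tfrac1L,t+m+\tfrac1L]$ if and only if the point $\{x\}\in\mathbb{R}/\mathbb{Z}$ lies in the cyclic interval $J_{t}$ of length $2/L$ centered at $t$. So the set $\{t\in[0,1):a,b\in I_{t}\}$ equals $\{t\in[0,1):\{a\}\in J_t \text{ and }\{b\}\in J_t\}$, and its Lebesgue measure is the length of the set of centers $t$ for which the length-$2/L$ cyclic interval around $t$ covers both of the two fixed points $\{a\},\{b\}$ on the circle.

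Next I would carry out that center-set computation. Fix two points $u=\{a\}$ and $v=\{b\}$ on the circle of circumference $1$, and let $\rho\in[0,1/2]$ denote their cyclic distance, i.e.\ $\rho=\min_{m\in\mathbb{Z}}|a-b-m|$ (note $\min_m|a-b-m|$ is insensitive to replacing $a,b$ by their fractional parts). A cyclic interval of radius $1/L$ centered at $t$ contains both $u$ and $v$ iff $t$ is within cyclic distance $1/L$ of each of them, i.e.\ iff $t$ lies in the intersection of the two radius-$1/L$ cyclic intervals centered at $u$ and at $v$. Since $1/L\le 1/5 < 1/4$, each of these intervals has length $2/L<1/2$, so their intersection is either empty or a single cyclic interval: it is empty when $\rho>2/L$, and when $\rho\le 2/L$ it is a cyclic interval of length $2/L-\rho$. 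Hence the measure in question is $\max\{0,\,2/L-\rho\}=\max\{0,\,2/L-\min_{m\in\mathbb{Z}}|a-b-m|\}=g(a-b)$, which is exactly the claimed identity.

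The only genuinely careful point — and the one I expect to be the main (minor) obstacle — is the case analysis establishing that the intersection of two radius-$1/L$ cyclic intervals is a cyclic interval of length $2/L-\rho$ (rather than, say, two arcs, or wrapping all the way around). This is where the hypothesis $L\ge 5$ is used: it guarantees $2/L<1/2$, so two such arcs cannot wrap around the circle to meet ``on the far side,'' and their union, if they overlap, is still a proper arc; consequently the intersection is connected. I would handle this by unfolding to $\mathbb{R}$: after a cyclic rotation we may assume $u=0$, $v\in[-1/2,1/2]$ with $|v|=\rho$; the intersection of $[-1/L,1/L]$ with $[v-1/L,v+1/L]$ on the line is $[\,\rho-1/L,\,1/L\,]$ when $\rho\le 2/L$ (of length $2/L-\rho$) and empty otherwise, and since everything is contained in an arc of length $2/L+\rho<2/L+2/L\le 4/5<1$ the projection back to the circle is injective, so no further identifications occur. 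This completes the proof.
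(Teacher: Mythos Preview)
Your proposal is correct and follows essentially the same approach as the paper: both identify $\{t:a\in I_t\}$ with the cyclic interval of radius $1/L$ centered at (the fractional part of) $a$, reduce the question to the length of the intersection of two such arcs, and compute that length as $\max\{0,2/L-\mathsf{cycd}(a,b)\}$ using $1/L<1/4$. One tiny imprecision: your final unfolding bound $2/L+\rho<2/L+2/L$ only treats the case $\rho\le 2/L$; for $\rho>2/L$ use instead $2/L+\rho\le 2/L+1/2<1$ (or the paper's implicit argument that two arcs of length $<1/2$ cannot meet on the far side) to conclude the circle intersection is empty.
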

\begin{proof}
For any real numbers $x,y$, define their ``cyclic distance'' to be
$$\mathsf{cycd}(x,y):=\min_{m\in\mathbb{Z}}|x-y-m|.$$
Recall that $\{t\in[0,1):a\in I_{t}\}$ is the cyclic interval of length $2/L$ centered at $a$, i.e.
$$\left\{t\in [0,1):\mathsf{cycd}(t,a)\leq \frac{1}{L}\right\}.$$
So the set $\{t\in [0,1):a,b\in I_{t}\}$ is the intersection of two cyclic intervals, and hence
\begin{align*}
\mu\left(\{t:a,b\in I_{t}\}\right)&=\mu\left(\left\{t:\mathsf{cycd}(t,a)\leq \frac{1}{L}\right\}\cap\left\{t:\mathsf{cycd}(t,b)\leq \frac{1}{L}\right\}\right)\\
&=\begin{cases}
0, &\text{if }\mathsf{cycd}(a,b)\geq \frac{2}{L}\\
\frac{2}{L}-\mathsf{cycd}(a,b),&\text{if }\mathsf{cycd}(a,b)< \frac{2}{L}
\end{cases}\\
&=\max\left\{0,\frac{2}{L}-\mathsf{cycd}(a,b)\right\}=g(a-b).
\end{align*}
Note that for the second equality above to hold, we need the condition that $\frac{1}{L}<\frac{1}{4}$.
\end{proof}
We then define an event on which the variables $Z_{1}$ and $Z_{2}$ take typical values. 
\begin{definition}[Events]\label{def:trueLambda}
Let $\Lambda$ be the event
\begin{align*}
\left\{\E[Z_{1}]+1-\log n\sqrt{\E[Z_{1}]-1}\leq Z_{1}\leq \E[Z_{1}]-2+\log n\sqrt{\E[Z_{1}]-1}\right\}&\cap\\
\left\{\E[Z_{2}]+1-\log n\sqrt{\E[Z_{2}]-1}\leq Z_{2}\leq \E[Z_{2}]-2+\log n\sqrt{\E[Z_{2}]-1}\right\}&
\end{align*}
\end{definition}
\begin{proposition}\label{prop:trueLambda}
If $\E[Z_{1}],\E[Z_{2}] \gtrsim \log^{2}n$, then $\PP(\Lambda)\geq 1-\exp(-\Omega(\log ^{2}n))$. 
\end{proposition}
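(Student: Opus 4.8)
The plan is to recognize $\Lambda$ as the intersection of two upper-tail and two lower-tail deviation events for $Z_1$ and $Z_2$, and then bound each by a Chernoff-type estimate using the fact that (by \Cref{def:aliasvar}) each $Z_s$ is $1$ plus a sum of independent Bernoulli random variables, hence $Z_s - 1$ is a sum of independent Bernoullis with mean $\E[Z_s] - 1$. Concretely, I would write the first clause of $\Lambda$ as
$$\left\{|Z_1 - \E[Z_1]| \le \log n \sqrt{\E[Z_1]-1} - c\right\}$$
for the appropriate small constant shift $c \in \{1,2\}$ coming from the asymmetric endpoints; since $\E[Z_1] \gtrsim \log^2 n$, the shift by a constant is negligible and can be absorbed, e.g. the interval contains $\{|Z_1 - \E[Z_1]| \le \frac{1}{2}\log n \sqrt{\E[Z_1]-1}\}$ for $n$ large.

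Next I would apply the Chernoff bound (\Cref{cor:Chernoff}, in the same form already used in the proof of \Cref{lem:DequalsDprime}) to the centered sum $Z_1 - 1$. Writing $m_1 := \E[Z_1] - 1$, the deviation $\log n \sqrt{m_1}$ (up to the harmless constant shift) satisfies: for the lower tail,
$$\PP\!\left[Z_1 - 1 \le m_1 - \tfrac12\log n\sqrt{m_1}\right] \le \exp\!\left(-\tfrac{1}{2}\cdot \tfrac14\log^2 n \cdot m_1 \cdot \tfrac{1}{m_1}\right) = \exp\!\left(-\tfrac18 \log^2 n\right),$$
and for the upper tail, since $\tfrac12\log n\sqrt{m_1}/m_1 \le 1$ when $m_1 \gtrsim \log^2 n$, the standard bound gives $\exp(-\Omega(\log^2 n))$ likewise. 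The same two estimates apply verbatim to $Z_2$ with $m_2 := \E[Z_2]-1 \gtrsim \log^2 n$. A union bound over the four tail events then yields $\PP(\Lambda^c) \le 4\exp(-\Omega(\log^2 n)) = \exp(-\Omega(\log^2 n))$, which is the claim. This is essentially a repetition of the computations in \Cref{prop:fakeGamma}, \Cref{prop:trueGamma} and \Cref{lem:DequalsDprime}, so the proof should be short.

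The only mild subtlety — and the place I would be most careful — is matching the asymmetric endpoints in \Cref{def:trueLambda} (which read $\E[Z_s]+1-\log n\sqrt{\E[Z_s]-1}$ on the left and $\E[Z_s]-2+\log n\sqrt{\E[Z_s]-1}$ on the right) to a clean symmetric deviation statement. The resolution is simply that the constants $+1$ and $-2$ are $O(1)$ shifts that are dominated by $\log n\sqrt{\E[Z_s]-1} \gtrsim \log^2 n$, so one does not actually need sharp constants: it suffices to exhibit a symmetric sub-interval of the form $\{|Z_s - \E[Z_s]| \le \tfrac12 \log n \sqrt{\E[Z_s]-1}\}$ contained in each clause for $n$ large, and bound the complement of that. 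I do not anticipate a genuine obstacle here; the argument is routine given the tools already introduced.
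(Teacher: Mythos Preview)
Your proposal is correct and takes essentially the same approach as the paper, which simply states that the result follows directly from Chernoff bounds in the same manner as \Cref{lem:DequalsDprime}. You have correctly fleshed out the details the paper leaves implicit, including the observation that $Z_s-1$ is a sum of independent Bernoullis and the handling of the asymmetric $O(1)$ shifts in the endpoints of $\Lambda$.
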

\begin{proof}
The statement follows directly from Chernoff bounds (in a similar way to the proof of \Cref{lem:DequalsDprime}).
\end{proof}

\begin{lemma}\label{lem:ridofTrc}
On the event $\Lambda$, we have $\mathsf{Trc}_{s}(Z_{s})=Z_{s}$ and $\mathsf{Trc}_{s}(Z_{s}+1)=Z_{s}+1$, for any $s\in\{1,2\}$.
\end{lemma}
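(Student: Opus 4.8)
The plan is to verify that, on $\Lambda$, both $Z_{s}$ and $Z_{s}+1$ fall inside the closed interval $\left[w_{s}-\log n\sqrt{w_{s}},\,w_{s}+\log n\sqrt{w_{s}}\right]$, on which $\mathsf{Trc}_{s}$ acts as the identity by its three-part definition; the lemma then follows at once. By symmetry (interchanging the roles of $j$ and $\ell$) it suffices to treat $s=1$, and I will abbreviate $\mu_{1}:=\E[Z_{1}]$. Since $Z_{1}\geq 1$ almost surely we have $\mu_{1}-1\geq 0$, so every square root appearing below is well defined.

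The first and only substantive step is to observe that $\mu_{1}$ and $w_{1}$ differ by a bounded amount. I would decompose $X_{j}=G_{ij}+G_{j\ell}+\sum_{r\notin\{i,j,\ell\}}G_{jr}$, take conditional expectations given $\mathcal{G}^{\true}$ at $\omega$, and compare with \Cref{def:aliasvar}, which gives $\mu_{1}=1+\sum_{r\notin\{i,j,\ell\}}\E[G_{jr}\mid\mathcal{G}^{\true}](\omega)$. This yields the identity
$$\mu_{1}-1 \;=\; w_{1}-c_{1},\qquad\text{where}\quad c_{1}:=\E[G_{ij}\mid\mathcal{G}^{\true}](\omega)+\E[G_{j\ell}\mid\mathcal{G}^{\true}](\omega)\in[0,2].$$
Consequently $\mu_{1}+1=w_{1}+2-c_{1}\geq w_{1}$, and since $c_{1}\in[0,2]$ one also gets $\mu_{1}-1=w_{1}-c_{1}\leq w_{1}$ and $\sqrt{\mu_{1}-1}=\sqrt{w_{1}-c_{1}}\leq\sqrt{w_{1}}$.

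From here the argument is a short chain of inequalities. On $\Lambda$ the defining bounds read $\mu_{1}+1-\log n\sqrt{\mu_{1}-1}\leq Z_{1}\leq \mu_{1}-2+\log n\sqrt{\mu_{1}-1}$, so adding $1$ shows that both $Z_{1}$ and $Z_{1}+1$ lie in $\left[\mu_{1}+1-\log n\sqrt{\mu_{1}-1},\;\mu_{1}-1+\log n\sqrt{\mu_{1}-1}\right]$; feeding in $\mu_{1}+1\geq w_{1}$, $\mu_{1}-1\leq w_{1}$ and $\sqrt{\mu_{1}-1}\leq\sqrt{w_{1}}$ shows this interval is contained in $\left[w_{1}-\log n\sqrt{w_{1}},\,w_{1}+\log n\sqrt{w_{1}}\right]$, which gives $\mathsf{Trc}_{1}(Z_{1})=Z_{1}$ and $\mathsf{Trc}_{1}(Z_{1}+1)=Z_{1}+1$. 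The case $s=2$ is handled verbatim with $X_{\ell}=G_{i\ell}+G_{j\ell}+\sum_{r\notin\{i,j,\ell\}}G_{\ell r}$ and $c_{2}:=\E[G_{i\ell}\mid\mathcal{G}^{\true}](\omega)+\E[G_{j\ell}\mid\mathcal{G}^{\true}](\omega)$. I do not expect any genuine obstacle: the ``$+1$''/``$-2$'' slack built into the definition of $\Lambda$ is exactly what absorbs both the shift from $Z_{s}$ to $Z_{s}+1$ and the $O(1)$ discrepancy $c_{s}$ between $\mu_{s}$ and $w_{s}$, and the only point requiring a moment's care is the monotonicity step $\sqrt{\mu_{s}-1}\leq\sqrt{w_{s}}$, which rests on $c_{s}\geq 0$.
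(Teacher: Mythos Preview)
Your proof is correct and follows essentially the same approach as the paper: both establish the bound $\E[Z_{1}]-1\leq w_{1}\leq \E[Z_{1}]+1$ by comparing $X_{j}$ with $1+\sum_{r\notin\{i,j,\ell\}}G_{jr}$, and then use this together with the defining inequalities of $\Lambda$ to place $Z_{s}$ and $Z_{s}+1$ inside the identity region of $\mathsf{Trc}_{s}$. Your version is simply more explicit, writing out the $c_{1}\in[0,2]$ discrepancy and the monotonicity step $\sqrt{\mu_{1}-1}\leq\sqrt{w_{1}}$ that the paper leaves to the reader under ``easily implies.''
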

\begin{proof}
Observe that we always have
$$\sum\nolimits_{r\notin\{i,j,\ell\}}G_{jr}\leq X_{j}\leq 2+\sum\nolimits_{r\notin\{i,j,\ell\}}G_{jr}.$$
So by \Cref{def:aliasvar}, we know that
\begin{equation}\label{eq:ZandX}
\E[Z_{1}]-1\leq \E[X_{j}|\mathcal{G}^{\true}](\omega)\leq \E[Z_{1}]+1,
\end{equation}
and similarly 
$$\E[Z_{2}]-1\leq \E[X_{\ell}|\mathcal{G}^{\true}](\omega)\leq \E[Z_{2}]+1,$$
Again letting $w_{1}$ denote the value $\E[X_{j}|\mathcal{G}^{\true}](\omega)$ and $w_{2}$ the value $\E[X_{\ell}|\mathcal{G}^{\true}](\omega)$, the condition of $\Lambda$ easily implies 
$$w_{s}-\log n\sqrt{w_{s}}\leq Z_{s}< Z_{s}+1\leq w_{s}+\log n\sqrt{w_{s}},$$
and hence $\mathsf{Trc}_{s}(Z_{s})=Z_{s}$ and $\mathsf{Trc}_{s}(Z_{s}+1)=Z_{s}+1$, for $s\in \{1,2\}$.
\end{proof}

Finally, we will define a quartet of variables which will be the focus of attention in \Cref{subsubsec:aliasanal}.

\begin{definition}[Variables]\label{def:defofK}
Define four random variables
\begin{align*}
K_{00}&=\mathbbm{1}\{\Lambda\} g\left(\sqrt{Z_{1}}-\sqrt{Z_{2}}\right),\\
K_{01}&=\mathbbm{1}\{\Lambda\}g\left(\sqrt{Z_{1}}-\sqrt{Z_{2}+1}\right),\\
K_{10}&= \mathbbm{1}\{\Lambda\}g\left(\sqrt{Z_{1}+1}-\sqrt{Z_{2}}\right),\\
K_{11}&= \mathbbm{1}\{\Lambda\}g\left(\sqrt{Z_{1}+1}-\sqrt{Z_{2}+1}\right).
\end{align*}
\end{definition}
\subsubsection{Modeling the Real World}\label{subsubsec:modelreal}
We now return from the alias world to the original random graph pair $(G,H)$ and show how our alias constructions are able to simulate the actual randomness in $\mathcal{G}^{\true}$. The following two lemmas draw an intimate connection between the alias variables and the quantities of interest in \eqref{eq:defofT2}.

\begin{lemma}\label{lem:modelingAA}
Assume that $L\geq 5$. We have the identity
\begin{align*}
&\quad \int_{0}^{1}\PP[A_{ij}'(t)=A_{i\ell}'(t)=1|\mathcal{G}^{\true}](\omega)\d t\\
&=\PP[G_{ij}=G_{i\ell}=1|\mathcal{G}^{\true}](\omega)\cdot\E\left[g\left(\sqrt{\mathsf{Trc}_{1}(Z_{1}+Z_{0})}-\sqrt{\mathsf{Trc}_{2}(Z_{2}+Z_{0})}\right)\right].
\end{align*}
\end{lemma}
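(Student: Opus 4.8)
The plan is to strip the indicators $\mathbbm{1}\{G_{ij}=1\}$ and $\mathbbm{1}\{G_{i\ell}=1\}$ off the definition \eqref{eq:defofAprime}, use \Cref{lem:rationaleg} to convert the $t$-integral into a single conditional expectation of a function of the truncated degrees $(X_j',X_\ell')$, and then identify the conditional joint law of $(X_j',X_\ell')$ with the alias-world law of $\big(\mathsf{Trc}_1(Z_0+Z_1),\mathsf{Trc}_2(Z_0+Z_2)\big)$. Since the statement has no $\mathbbm{1}\{\Lambda\}$, the event $\Lambda$ plays no role here.

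First I would observe that $\{A_{ij}'(t)=A_{i\ell}'(t)=1\}=\{G_{ij}=G_{i\ell}=1\}\cap\{\sqrt{X_j'}\in I_t\}\cap\{\sqrt{X_\ell'}\in I_t\}$, so Fubini--Tonelli (all integrands nonnegative, applied under the conditional law given $\mathcal G^{\true}$) gives
\[
\int_{0}^{1}\PP[A_{ij}'(t)=A_{i\ell}'(t)=1\mid\mathcal G^{\true}](\omega)\,\d t=\E\Big[\mathbbm{1}\{G_{ij}=G_{i\ell}=1\}\,\mu\big(\{t\in[0,1):\sqrt{X_j'},\sqrt{X_\ell'}\in I_t\}\big)\,\Big|\,\mathcal G^{\true}\Big](\omega).
\]
By \Cref{lem:rationaleg} (this is where the hypothesis $L\geq 5$ enters) the inner Lebesgue measure equals $g(\sqrt{X_j'}-\sqrt{X_\ell'})$, and the elementary identity $\E[\mathbbm 1\{\Gamma\}Z\mid\mathcal G^{\true}]=\PP[\Gamma\mid\mathcal G^{\true}]\,\E[Z\mid\Gamma,\mathcal G^{\true}]$ with $\Gamma=\{G_{ij}=G_{i\ell}=1\}$ reduces the right side to
\[
\PP[G_{ij}=G_{i\ell}=1\mid\mathcal G^{\true}](\omega)\cdot\E\big[g(\sqrt{X_j'}-\sqrt{X_\ell'})\bigm| G_{ij}=G_{i\ell}=1,\ \mathcal G^{\true}\big](\omega),
\]
so it remains to evaluate the last conditional expectation.

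The core step is identifying the conditional joint law of $(X_j',X_\ell')$ given $\{G_{ij}=G_{i\ell}=1\}\cap\mathcal G^{\true}$ at $\omega$. Write $X_j=G_{ij}+G_{j\ell}+\sum_{r\notin\{i,j,\ell\}}G_{jr}$ and $X_\ell=G_{i\ell}+G_{j\ell}+\sum_{r\notin\{i,j,\ell\}}G_{\ell r}$, where $i,j,\ell$ are distinct since $i\notin J^{\true}\ni j,\ell$; on $\{G_{ij}=G_{i\ell}=1\}$ these become $X_j=1+G_{j\ell}+\sum_{r\notin\{i,j,\ell\}}G_{jr}$ and $X_\ell=1+G_{j\ell}+\sum_{r\notin\{i,j,\ell\}}G_{\ell r}$. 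Because $\mathcal G^{\true}$ is generated by functions of individual edge-vectors $(G_{rs},H_{rs})$ and those are mutually independent, they remain conditionally mutually independent given $\mathcal G^{\true}$; hence the blocks $(G_{ij},H_{ij})$, $(G_{i\ell},H_{i\ell})$ and $\big(G_{j\ell},(G_{jr})_{r\notin\{i,j,\ell\}},(G_{\ell r})_{r\notin\{i,j,\ell\}}\big)$ sit on pairwise-disjoint edge sets and are conditionally independent, so the further conditioning on $\{G_{ij}=G_{i\ell}=1\}$ does not touch the third block, and within it $G_{j\ell}$, $\sum_{r\notin\{i,j,\ell\}}G_{jr}$, $\sum_{r\notin\{i,j,\ell\}}G_{\ell r}$ are conditionally independent (again disjoint edge sets, also disjoint from $\{j,\ell\}$). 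Since $G_{j\ell}\neq H_{j\ell}$ at $\omega$ is part of the information in $\mathcal G^{\true}$, the edge-symmetry $\PP[G_{j\ell}=1,H_{j\ell}=0]=\PP[G_{j\ell}=0,H_{j\ell}=1]$ from \Cref{def:graphensemble} forces $G_{j\ell}$ to be conditionally Bernoulli$(1/2)$. Comparing with \Cref{def:aliasvar}, the triple $\big(G_{j\ell},\,1+\sum_{r\notin\{i,j,\ell\}}G_{jr},\,1+\sum_{r\notin\{i,j,\ell\}}G_{\ell r}\big)$ is conditionally distributed as $(Z_0,Z_1,Z_2)$, hence $(X_j,X_\ell)$ is conditionally distributed as $(Z_0+Z_1,\,Z_0+Z_2)$.

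Finally, the truncation in \eqref{eq:defofXprime} centers $X_j'$ at $\E[X_j\mid\mathcal G^{\true}](\omega)=w_1$ — the expectation taken \emph{before} conditioning on $\{G_{ij}=1\}$, exactly matching the definition of $\mathsf{Trc}_1$ — so $X_j'=\mathsf{Trc}_1(X_j)$ and $X_\ell'=\mathsf{Trc}_2(X_\ell)$ pointwise. Combined with the previous paragraph, $(X_j',X_\ell')$ given $\{G_{ij}=G_{i\ell}=1\}\cap\mathcal G^{\true}$ at $\omega$ is distributed as $\big(\mathsf{Trc}_1(Z_0+Z_1),\mathsf{Trc}_2(Z_0+Z_2)\big)$, whence $\E\big[g(\sqrt{X_j'}-\sqrt{X_\ell'})\bigm| G_{ij}=G_{i\ell}=1,\mathcal G^{\true}\big](\omega)=\E\big[g\big(\sqrt{\mathsf{Trc}_1(Z_1+Z_0)}-\sqrt{\mathsf{Trc}_2(Z_2+Z_0)}\big)\big]$; substituting into the reduction above yields the claimed identity. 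I expect the conditional-independence bookkeeping in the third paragraph to be the only delicate point: one must be careful that the sums $\sum_{r\notin\{i,j,\ell\}}G_{jr}$ and $\sum_{r\notin\{i,j,\ell\}}G_{\ell r}$ do contain partially-revealed summands (for $r\in J^{\true}$ the $\sigma$-algebra $\mathcal G^{\true}$ knows $G_{jr}\vee H_{jr}$ and $\mathbbm{1}\{G_{jr}\neq H_{jr}\}$) yet remain conditionally independent of each other and of $G_{j\ell},G_{ij},G_{i\ell}$, which is precisely what \Cref{def:aliasvar} was designed to encapsulate; everything else is Fubini plus \Cref{lem:rationaleg}.
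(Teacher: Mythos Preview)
Your proposal is correct and follows essentially the same approach as the paper's own proof: factor out the event $\{G_{ij}=G_{i\ell}=1\}$, apply Fubini together with \Cref{lem:rationaleg} to convert the $t$-integral into $g(\sqrt{X_j'}-\sqrt{X_\ell'})$, and then identify the conditional law of $(X_j,X_\ell)$ with that of $(Z_1+Z_0,Z_2+Z_0)$ via the decomposition $X_j=1+G_{j\ell}+\sum_{r\notin\{i,j,\ell\}}G_{jr}$ and the Bernoulli$(1/2)$ conditional law of $G_{j\ell}$. The only cosmetic difference is the order of operations (you apply Fubini before factoring out $\PP[G_{ij}=G_{i\ell}=1\mid\mathcal G^{\true}]$, the paper does the reverse), and you are a bit more explicit about the pointwise identity $X_j'=\mathsf{Trc}_1(X_j)$; both routes are equivalent.
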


\begin{proof}
By the definition in \eqref{eq:defofAprime}, for any $t\in[0,1)$, we have the inclusion of events $\{A_{ij}'(t)=A_{i\ell}'(t)=1\}\subset \{G_{ij}=G_{i\ell}=1\}$. So we may write
\begin{align}
&\quad\int_{0}^{1}\PP[A_{ij}'(t)=A_{i\ell}'(t)=1|\mathcal{G}^{\true}]\d t\nonumber\\
&=\PP[G_{ij}=G_{i\ell}=1|\mathcal{G}^{\true}]\cdot\int_{0}^{1}\PP\left[\sqrt{X_{j}'},\sqrt{X_{\ell}'}\in I_{t}\middle|G_{ij}=G_{i\ell}=1, \mathcal{G}^{\true}\right]\d t.\label{eq:oneventGG1}
\end{align}
On the event $\{G_{ij}=G_{i\ell}=1\}$, we can decompose
$$(X_{j},X_{\ell})=\left(\left(1+\sum\nolimits_{r\notin\{i,j,\ell\}}G_{jr}\right)+G_{j\ell},\;\left(1+\sum\nolimits_{r\notin\{i,j,\ell\}}G_{\ell r}\right)+G_{j\ell}\right).$$
Note that conditional on $\mathcal{G}^{\true}$, the three variables $\left(1+\sum\nolimits_{r\notin\{i,j,\ell\}}G_{jr}\right)$, $\left(1+\sum\nolimits_{r\notin\{i,j,\ell\}}G_{\ell r}\right)$ and $G_{j\ell}$ and the event $\{G_{ij}=G_{i\ell}=1\}$ are mutually independent. Furthermore, since $G_{j\ell}\neq H_{j\ell}$ at $\omega$, we have (see \Cref{rmk:tp2rd})
\[\PP[G_{j\ell}=1| \mathcal{G}^{\true}](\omega)=\PP[G_{j\ell}=0|\mathcal{G}^{\true}](\omega)=\frac{1}{2},\]
meaning that the conditional distribution of $G_{j\ell}$ is the same as the distribution of $Z_{0}$. 
Therefore, by \Cref{def:aliasvar}, the joint distribution of $(X_{j},X_{\ell})$ 
conditional on both $\mathcal{G}^{\true}$ and the event $\{G_{ij}=G_{i\ell}=1\}$ is the same as the distribution of $(Z_{1}+Z_{0},Z_{2}+Z_{0})$. So
\begin{align}
&\quad\int_{0}^{1}\PP\left[\sqrt{X_{j}'},\sqrt{X_{\ell}'}\in I_{t}\middle|G_{ij}=G_{i\ell}=1, \mathcal{G}^{\true}\right](\omega)\d t\nonumber\\
&=\int_{0}^{1}\PP\left[\sqrt{\mathsf{Trc}_{1}(Z_{1}+Z_{0})},\sqrt{\mathsf{Trc}_{2}(Z_{2}+Z_{0})}\in I_{t}\right]\d t\nonumber\\
&=\E\left[\mu\left(t\in[0,1):\sqrt{\mathsf{Trc}_{1}(Z_{1}+Z_{0})},\sqrt{\mathsf{Trc}_{2}(Z_{2}+Z_{0})}\in I_{t}\right)\right]&(\text{by Fubini's theorem})\nonumber\\
&=\E\left[g\left(\sqrt{\mathsf{Trc}_{1}(Z_{1}+Z_{0})}-\sqrt{\mathsf{Trc}_{2}(Z_{2}+Z_{0})}\right)\right]&(\text{by \Cref{lem:rationaleg}}).
\label{eq:ridofGG1}
\end{align}
The conclusion follows by combining \eqref{eq:oneventGG1} and \eqref{eq:ridofGG1}.
\end{proof}

\begin{lemma}\label{lem:modelingAB}
Assume that $L\geq 5$. We have the identity
\begin{align*}
&\quad \int_{0}^{1}\PP[A_{ij}'(t)=B_{i\ell}'(t)=1|\mathcal{G}^{\true}](\omega)\d t\\
&=\PP[G_{ij}=H_{i\ell}=1|\mathcal{G}^{\true}](\omega)\cdot\E\left[g\left(\sqrt{\mathsf{Trc}_{1}(Z_{1}+Z_{0})}-\sqrt{\mathsf{Trc}_{2}(Z_{2}+1-Z_{0})}\right)\right].
\end{align*}
\end{lemma}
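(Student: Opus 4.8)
The plan is to mirror the proof of \Cref{lem:modelingAA} almost verbatim, the only new feature being that the discordant edge $\{j,\ell\}$ now enters the two degrees asymmetrically. First I would record the event inclusion $\{A_{ij}'(t)=B_{i\ell}'(t)=1\}\subseteq\{G_{ij}=H_{i\ell}=1\}$, immediate from \eqref{eq:defofAprime} and \eqref{eq:defofBprime}, so that exactly as in \eqref{eq:oneventGG1},
\begin{align*}
&\quad\int_{0}^{1}\PP[A_{ij}'(t)=B_{i\ell}'(t)=1|\mathcal{G}^{\true}]\d t\\
&=\PP[G_{ij}=H_{i\ell}=1|\mathcal{G}^{\true}]\cdot\int_{0}^{1}\PP\left[\sqrt{X_{j}'},\sqrt{Y_{\ell}'}\in I_{t}\,\middle|\,G_{ij}=H_{i\ell}=1,\mathcal{G}^{\true}\right]\d t.
\end{align*}
On the event $\{G_{ij}=H_{i\ell}=1\}$ I would decompose $X_{j}=\bigl(1+\sum_{r\notin\{i,j,\ell\}}G_{jr}\bigr)+G_{j\ell}$ and $Y_{\ell}=\bigl(1+\sum_{r\notin\{i,j,\ell\}}H_{\ell r}\bigr)+H_{j\ell}$; note that here the edge $\{j,\ell\}$ contributes to $X_{j}$ through $G_{j\ell}$ but to $Y_{\ell}$ through $H_{j\ell}$.

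Next I would carry out the distributional identification. The quantities $1+\sum_{r\notin\{i,j,\ell\}}G_{jr}$, $1+\sum_{r\notin\{i,j,\ell\}}H_{\ell r}$, $G_{j\ell}$, and the event $\{G_{ij}=H_{i\ell}=1\}$ are determined by four pairwise disjoint groups among the independent edge vectors $(G_{uv},H_{uv})$, hence remain mutually independent conditional on $\mathcal{G}^{\true}$. Since $\mathcal{G}^{\true}$ is invariant under the measure-preserving swap $G\leftrightarrow H$ (all of its generators $G_{ir}\vee H_{ir}$, $\mathbbm{1}\{G_{j'\ell'}\neq H_{j'\ell'}\}$, $G_{j'\ell'}\vee H_{j'\ell'}$ are symmetric in $G,H$), the conditional law of $1+\sum_{r\notin\{i,j,\ell\}}H_{\ell r}$ given $\mathcal{G}^{\true}$ coincides with that of $1+\sum_{r\notin\{i,j,\ell\}}G_{\ell r}$, i.e.\ with the law of $Z_{2}$ in \Cref{def:aliasvar}; the same symmetry yields $\E[Y_{\ell}|\mathcal{G}^{\true}]=\E[X_{\ell}|\mathcal{G}^{\true}]$ almost surely, so that at $\omega$ the truncation defining $Y_{\ell}'$ is exactly $\mathsf{Trc}_{2}$ applied to $Y_{\ell}$. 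Finally, because $G_{j\ell}\neq H_{j\ell}$ at $\omega$, \Cref{rmk:tp2rd} gives $\PP[G_{j\ell}=1|\mathcal{G}^{\true}](\omega)=\tfrac{1}{2}$ and $H_{j\ell}=1-G_{j\ell}$, so $G_{j\ell}$ has the law of $Z_{0}$ and $H_{j\ell}$ the law of $1-Z_{0}$. Putting these together, the joint law of $(X_{j},Y_{\ell})$ conditional on $\mathcal{G}^{\true}$ and $\{G_{ij}=H_{i\ell}=1\}$ equals the law of $(Z_{1}+Z_{0},\,Z_{2}+1-Z_{0})$.

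Applying $\mathsf{Trc}_{1},\mathsf{Trc}_{2}$ to pass from $(X_{j},Y_{\ell})$ to $(X_{j}',Y_{\ell}')$, then Fubini's theorem and \Cref{lem:rationaleg}, I obtain
\begin{align*}
\int_{0}^{1}\PP\left[\sqrt{X_{j}'},\sqrt{Y_{\ell}'}\in I_{t}\,\middle|\,G_{ij}=H_{i\ell}=1,\mathcal{G}^{\true}\right]\d t=\E\left[g\left(\sqrt{\mathsf{Trc}_{1}(Z_{1}+Z_{0})}-\sqrt{\mathsf{Trc}_{2}(Z_{2}+1-Z_{0})}\right)\right],
\end{align*}
and combining with the first display finishes the proof. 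The only substantive difference from \Cref{lem:modelingAA} is the bookkeeping of the discordant edge, which feeds $+Z_{0}$ into the first coordinate but $+(1-Z_{0})$ into the second --- precisely the anticorrelation that will later produce the cancellation in $T_{2}(j,\ell)$. Accordingly, the step to be most careful with is the invocation of the $G\leftrightarrow H$ symmetry of $\mathcal{G}^{\true}$ (to align $Y_{\ell}$ with $Z_{2}$ and with $\mathsf{Trc}_{2}$) together with the verification that the four relevant groups of edge vectors are genuinely disjoint, so that the required conditional independence holds.
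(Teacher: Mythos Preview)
Your proposal is correct and follows essentially the same route as the paper's proof: factor through the event $\{G_{ij}=H_{i\ell}=1\}$, decompose $(X_j,Y_\ell)$ on that event, use conditional independence together with the $G\leftrightarrow H$ symmetry of $\mathcal{G}^{\true}$ to identify the conditional law with that of $(Z_1+Z_0,\,Z_2+1-Z_0)$, and finish with Fubini and \Cref{lem:rationaleg}. If anything, you are slightly more explicit than the paper in justifying that $\E[Y_\ell\mid\mathcal{G}^{\true}]=\E[X_\ell\mid\mathcal{G}^{\true}]$ so that the truncation for $Y_\ell'$ is indeed $\mathsf{Trc}_2$.
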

\begin{proof}
By definitions in \eqref{eq:defofAprime} and \eqref{eq:defofBprime}, for any $t\in[0,1)$, we have the inclusion of events $\{A_{ij}'(t)=B_{i\ell}'(t)=1\}\subset \{G_{ij}=H_{i\ell}=1\}$. So we may write
\begin{align}
&\quad\int_{0}^{1}\PP[A_{ij}'(t)=B_{i\ell}'(t)=1|\mathcal{G}^{\true}]\d t\nonumber\\
&=\PP[G_{ij}=H_{i\ell}=1|\mathcal{G}^{\true}]\cdot\int_{0}^{1}\PP\left[\sqrt{X_{j}'},\sqrt{Y_{\ell}'}\in I_{t}\middle|G_{ij}=H_{i\ell}=1, \mathcal{G}^{\true}\right]\d t.\label{eq:oneventGH1}
\end{align}
On the event $\{G_{ij}=H_{i\ell}=1\}$, we can decompose
$$(X_{j},Y_{\ell})=\left(\left(1+\sum\nolimits_{r\notin\{i,j,\ell\}}G_{jr}\right)+G_{j\ell},\;\left(1+\sum\nolimits_{r\notin\{i,j,\ell\}}H_{\ell r}\right)+H_{j\ell}\right).$$
Note that conditional on $\mathcal{G}^{\true}$, the random variables $\left(1+\sum\nolimits_{r\notin\{i,j,\ell\}}G_{jr}\right)$ and $\left(1+\sum\nolimits_{r\notin\{i,j,\ell\}}H_{\ell r}\right)$, the random vector $(G_{j\ell},H_{j\ell})$ and the event $\{G_{ij}=H_{i\ell}=1\}$ are mutually independent. Moreover, by the symmetry between $G$ and $H$ in the definition of $\mathcal{G}^{\true}$ (see \Cref{rmk:tp2rd}), the conditional distribution of $\left(1+\sum\nolimits_{r\notin\{i,j,\ell\}}H_{\ell r}\right)$ is equal to the distribution of $Z_{2}$ as well. In addition, since $G_{j\ell}\neq H_{j\ell}$ at $\omega$, we have (see \Cref{rmk:tp2rd})
\[\PP[G_{j\ell}=1,H_{j\ell}=0| \mathcal{G}^{\true}](\omega)=\PP[G_{j\ell}=0,H_{j\ell}=1|\mathcal{G}^{\true}](\omega)=\frac{1}{2},\]
meaning that the conditional distribution of $(G_{j\ell},H_{j\ell})$ is the same as the distribution of $(Z_{0},1-Z_{0})$.  
Therefore, by \Cref{def:aliasvar}, the joint distribution of $(X_{j},Y_{\ell})$ 
conditional on both $\mathcal{G}^{\true}$ and the event $\{G_{ij}=G_{i\ell}=1\}$ is the same as the distribution of $(Z_{1}+Z_{0},Z_{2}+1-Z_{0})$. So 
\begin{align}
&\quad\int_{0}^{1}\PP\left[\sqrt{X_{j}'},\sqrt{Y_{\ell}'}\in I_{t}\middle|G_{ij}=H_{i\ell}=1, \mathcal{G}^{\true}\right](\omega)\d t\nonumber\\
&=\int_{0}^{1}\PP\left[\sqrt{\mathsf{Trc}_{1}(Z_{1}+Z_{0})},\sqrt{\mathsf{Trc}_{2}(Z_{2}+1-Z_{0})}\in I_{t}\right]\d t\nonumber\\
&=\E\left[\mu\left(t\in[0,1):\sqrt{\mathsf{Trc}_{1}(Z_{1}+Z_{0})},\sqrt{\mathsf{Trc}_{2}(Z_{2}+1-Z_{0})}\in I_{t}\right)\right]&(\text{by Fubini's theorem})\nonumber\\
&=\E\left[g\left(\sqrt{\mathsf{Trc}_{1}(Z_{1}+Z_{0})}-\sqrt{\mathsf{Trc}_{2}(Z_{2}+1-Z_{0})}\right)\right]&(\text{by \Cref{lem:rationaleg}}).
\label{eq:ridofGH1}
\end{align}
The conclusion follows by combining \eqref{eq:oneventGH1} and \eqref{eq:ridofGH1}.
\end{proof}

Combining the previous two lemmas, we are able to convert $T_{2}(j,\ell)$ to some quantity completely living in the alias world.

\begin{lemma}\label{lem:turnT2intoK}
Assume that $L\geq 5$. The value of $T_{2}(j,\ell)$ at $\omega$ is at most
$$\frac{1}{2}\E\left|K_{00}+K_{11}-K_{10}-K_{01}\right|+\frac{4}{L}\cdot\PP(\Lambda^{c}).$$
\end{lemma}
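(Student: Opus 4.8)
The plan is to combine Lemmas~\ref{lem:modelingAA} and~\ref{lem:modelingAB} to rewrite $T_{2}(j,\ell)$ at $\omega$, and then to strip away the truncation functions and the event $\Lambda$ so as to land exactly on the variables $K_{00},K_{01},K_{10},K_{11}$ of \Cref{def:defofK}. The first — and only mildly delicate — step is to check that the two probability prefactors produced by Lemmas~\ref{lem:modelingAA} and~\ref{lem:modelingAB} are equal. Since $i\notin J^{\true}$, the only information that $\mathcal{G}^{\true}$ retains about a star-edge vector $(G_{ir},H_{ir})$ is the value of the symmetric function $G_{ir}\vee H_{ir}$; together with the mutual independence of distinct edge vectors this shows that, conditionally on $\mathcal{G}^{\true}$, the vectors $(G_{ij},H_{ij})$ and $(G_{i\ell},H_{i\ell})$ are independent and each is exchangeable in its two coordinates. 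Swapping the roles of $G_{i\ell}$ and $H_{i\ell}$ (cf.\ \Cref{rmk:tp2rd}) then yields a common value $P:=\PP[G_{ij}=G_{i\ell}=1\mid\mathcal{G}^{\true}](\omega)=\PP[G_{ij}=H_{i\ell}=1\mid\mathcal{G}^{\true}](\omega)\in[0,1]$. Consequently $T_{2}(j,\ell)(\omega)=P\cdot(\mathcal{E}_{AA}-\mathcal{E}_{AB})$, where $\mathcal{E}_{AA}=\E[g(\sqrt{\mathsf{Trc}_{1}(Z_{1}+Z_{0})}-\sqrt{\mathsf{Trc}_{2}(Z_{2}+Z_{0})})]$ and $\mathcal{E}_{AB}=\E[g(\sqrt{\mathsf{Trc}_{1}(Z_{1}+Z_{0})}-\sqrt{\mathsf{Trc}_{2}(Z_{2}+1-Z_{0})})]$; in particular $T_{2}(j,\ell)(\omega)\le|\mathcal{E}_{AA}-\mathcal{E}_{AB}|$.

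Next I would condition on the independent fair coin $Z_{0}\in\{0,1\}$ to split each of $\mathcal{E}_{AA}$ and $\mathcal{E}_{AB}$ into two equal halves. Writing $\mathcal{E}_{ab}=\E[g(\sqrt{\mathsf{Trc}_{1}(Z_{1}+a)}-\sqrt{\mathsf{Trc}_{2}(Z_{2}+b)})]$ for $a,b\in\{0,1\}$, the value $Z_{0}=0$ contributes the $(0,0)$ term to $\mathcal{E}_{AA}$ and the $(0,1)$ term to $\mathcal{E}_{AB}$, while $Z_{0}=1$ contributes the $(1,1)$ term to $\mathcal{E}_{AA}$ and the $(1,0)$ term to $\mathcal{E}_{AB}$, so that
$$\mathcal{E}_{AA}-\mathcal{E}_{AB}=\tfrac12\big(\mathcal{E}_{00}+\mathcal{E}_{11}-\mathcal{E}_{10}-\mathcal{E}_{01}\big).$$

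Then I would compare each $\mathcal{E}_{ab}$ with $\E[K_{ab}]$. On the event $\Lambda$, \Cref{lem:ridofTrc} removes both truncations, so $g(\sqrt{\mathsf{Trc}_{1}(Z_{1}+a)}-\sqrt{\mathsf{Trc}_{2}(Z_{2}+b)})=g(\sqrt{Z_{1}+a}-\sqrt{Z_{2}+b})$, which is exactly the summand defining $K_{ab}$; on $\Lambda^{c}$ that summand plays no role in $K_{ab}$ but is still bounded by $2/L$ because $g$ takes values in $[0,2/L]$. Hence $0\le\mathcal{E}_{ab}-\E[K_{ab}]\le\tfrac{2}{L}\PP(\Lambda^{c})$ for each of the four pairs $(a,b)$. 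Assembling, the four error terms contribute in absolute value at most $\tfrac12\cdot\tfrac{4}{L}\PP(\Lambda^{c})$, so by the triangle inequality
$$T_{2}(j,\ell)(\omega)\le|\mathcal{E}_{AA}-\mathcal{E}_{AB}|\le\tfrac12\big|\E[K_{00}+K_{11}-K_{10}-K_{01}]\big|+\tfrac{2}{L}\PP(\Lambda^{c})\le\tfrac12\,\E\big|K_{00}+K_{11}-K_{10}-K_{01}\big|+\tfrac{4}{L}\PP(\Lambda^{c}),$$
which is the claimed estimate (in fact with $2/L$ in place of $4/L$). The main obstacle is the opening observation that the two prefactors agree: it hinges on the fact that $\mathcal{G}^{\true}$ was chosen so as not to destroy the $G\leftrightarrow H$ symmetry on edges incident to $i$; everything after that is bookkeeping with the fair coin $Z_{0}$ and the crude bound $g\le 2/L$.
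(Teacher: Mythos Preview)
Your proof is correct and follows essentially the same route as the paper: invoke \Cref{lem:modelingAA,lem:modelingAB}, use the $G\leftrightarrow H$ symmetry preserved by $\mathcal{G}^{\true}$ to equate the two prefactors (the paper records this as \eqref{eq:neighborhoodsymmetry}), split on the fair coin $Z_{0}$, and then use \Cref{lem:ridofTrc} together with the crude bound $g\le 2/L$ on $\Lambda^{c}$. Your bookkeeping even yields the slightly sharper $\tfrac{2}{L}\PP(\Lambda^{c})$ error, since you compare each $\mathcal{E}_{ab}$ to $\E[K_{ab}]$ before taking absolute values, whereas the paper moves the absolute value inside the expectation first and picks up $\tfrac{4}{L}\PP(\Lambda^{c})$.
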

\begin{proof}
By \Cref{lem:modelingAA,lem:modelingAB} and the fact that 
\begin{equation}\label{eq:neighborhoodsymmetry}
\PP[G_{ij}=G_{i\ell}=1|\mathcal{G}^{\true}]=\PP[G_{ij}=H_{i\ell}=1|\mathcal{G}^{\true}]\leq 1,
\end{equation}
 we know that the value of $T_{2}(j,\ell)$ at $\omega$ is
\begin{align}
&\quad\int_{0}^{1}\PP[A_{ij}'(t)=A_{i\ell}'(t)=1|\mathcal{G}^{\true}](\omega)\d t-\int_{0}^{1}\PP[A_{ij}'(t)=B_{i\ell}'(t)=1|\mathcal{G}^{\true}](\omega)\d t\nonumber\\
&\leq \left|\E\left[g\left(\sqrt{\mathsf{Trc}_{1}(Z_{1}+Z_{0})}-\sqrt{\mathsf{Trc}_{2}(Z_{2}+Z_{0})}\right)\right]-\E\left[g\left(\sqrt{\mathsf{Trc}_{1}(Z_{1}+Z_{0})}-\sqrt{\mathsf{Trc}_{2}(Z_{2}+1-Z_{0})}\right)\right]\right|\nonumber\\
&\leq \frac{1}{2}\E\left|g\left(\sqrt{\mathsf{Trc}_{1}(Z_{1})}-\sqrt{\mathsf{Trc}_{2}(Z_{2})}\right)+g\left(\sqrt{\mathsf{Trc}_{1}(Z_{1}+1)}-\sqrt{\mathsf{Trc}_{2}(Z_{2}+1)}\right)-\right.\nonumber\\
&\qquad\qquad\left.g\left(\sqrt{\mathsf{Trc}_{1}(Z_{1}+1)}-\sqrt{\mathsf{Trc}_{2}(Z_{2})}\right)-g\left(\sqrt{\mathsf{Trc}_{1}(Z_{1})}-\sqrt{\mathsf{Trc}_{2}(Z_{2}+1)}\right)\right|.\label{eq:gofTrc}
\end{align}
From \Cref{lem:ridofTrc} and \Cref{def:defofK} we know that for any $\beta_{1},\beta_{2}\in\{0,1\}$, on $\Lambda$ we always have
$$g\left(\sqrt{\mathsf{Trc}_{1}(Z_{1}+\beta_{1})}-\sqrt{\mathsf{Trc}_{2}(Z_{2}+\beta_{2})}\right)=K_{\beta_{1}\beta_{2}}.$$
In addition, by definition of the function $g$, on $\Lambda^{c}$ we always have 
$$\left|g\left(\sqrt{\mathsf{Trc}_{1}(Z_{1}+\beta_{1})}-\sqrt{\mathsf{Trc}_{2}(Z_{2}+\beta_{2})}\right)\right|\leq \frac{2}{L}.$$
Plugging the preceding two displays into \eqref{eq:gofTrc}, it follows that the value of $T_{2}(j,\ell)$ at $\omega$ is at most
\[\frac{1}{2}\E|K_{00}+K_{11}-K_{10}-K_{01}|+\frac{4}{L}\cdot\PP(\Lambda^{c}).\qedhere\]
\end{proof}

\subsubsection{Analyzing the Alias}\label{subsubsec:aliasanal}

We now return to the alias world to analyze the quantity $K_{00}+K_{11}-K_{10}-K_{01}$.

\begin{definition}[Events]
Let $Q=\mathbb{Z}\cup(\mathbb{Z}+2/L)\cup(\mathbb{Z}-2/L)$ be a discrete set of real numbers. Let $\Lambda_{1}$ be the event that the interval 
$$\left[\sqrt{Z_{1}}-\sqrt{Z_{2}},\sqrt{Z_{1}+1}-\sqrt{Z_{2}}\right)$$
has a nonempty intersection with $Q$. Let $\Lambda_{2}$ be the event that the interval
$$\left[\sqrt{Z_{1}}-\sqrt{Z_{2}+1},\sqrt{Z_{1}}-\sqrt{Z_{2}}\right)$$
has a nonempty intersection with $Q$.
\end{definition}
\begin{lemma}\label{lem:0controlK}
On $\Lambda_{1}^{c}\cap \Lambda_{2}^{c}$, we have $K_{00}+K_{11}-K_{10}-K_{01}=0$.
\end{lemma}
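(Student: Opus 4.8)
The plan is to exploit the fact that the function $g$ is, on each interval between consecutive points of the set $Q=\mathbb Z\cup(\mathbb Z+2/L)\cup(\mathbb Z-2/L)$, an \emph{affine} function of its argument. Indeed, recall that $g(x)=\max\{0,\,2/L-\mathsf{cycd}(x,0)\}$ where $\mathsf{cycd}(x,0)=\min_{m\in\mathbb Z}|x-m|$. The ``kinks'' of $g$ are precisely at points where $x\in\mathbb Z$ (the inner vertex of the tent) or where $\mathsf{cycd}(x,0)=2/L$, i.e. $x\in(\mathbb Z\pm 2/L)$ (where $g$ hits $0$). So on any open interval disjoint from $Q$, $g$ restricted to that interval is of the form $x\mapsto \pm x + \text{const}$ or $x\mapsto 0$; in all cases it is affine with slope in $\{-1,0,+1\}$.

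The key step is then a discrete ``mixed second difference'' argument. Write $a=\sqrt{Z_1}-\sqrt{Z_2}$, $a'=\sqrt{Z_1+1}-\sqrt{Z_2}$, $b=\sqrt{Z_1}-\sqrt{Z_2+1}$, so that (on $\Lambda$, hence wherever the $K$'s are nonzero) we have $K_{00}=g(a)$, $K_{10}=g(a')$, $K_{01}=g(b)$, and $K_{11}=g(a'+ (b-a)) = g(\sqrt{Z_1+1}-\sqrt{Z_2+1})$. Set $h_1 = a'-a=\sqrt{Z_1+1}-\sqrt{Z_1}>0$ and $h_2 = a-b = \sqrt{Z_2+1}-\sqrt{Z_2}>0$; note $h_1,h_2<1/2\cdot(2/L)$ trivially small (in fact $\le 1$, which is all we need below). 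The quantity of interest is the mixed difference
$$
K_{00}+K_{11}-K_{10}-K_{01} \;=\; \big(g(a)-g(a+h_1)\big)-\big(g(a-h_2)-g(a-h_2+h_1)\big).
$$
On $\Lambda_1^c$ the interval $[a,a']=[a,a+h_1)$ meets no point of $Q$, and on $\Lambda_2^c$ the interval $[b,a)=[a-h_2,a)$ meets no point of $Q$; I will argue these two facts force the whole ``rectangle'' of four arguments to lie in a single affine piece of $g$. Concretely: since $h_1,h_2\le 1$ and the points of $Q$ are at mutual distance at least $2/L<1/2$ within each residue class but $\mathbb Z$, $\mathbb Z+2/L$, $\mathbb Z-2/L$ interleave, one needs a short case check that the four arguments $a-h_2,\ a,\ a+h_1,\ a-h_2+h_1$ all lie in the closure of one component of $\mathbb R\setminus Q$. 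Granting that, $g$ is affine there, say $g(x)=\lambda x+\mu$ on that piece, and then
$$
K_{00}+K_{11}-K_{10}-K_{01}=\big(\lambda a+\mu\big)+\big(\lambda(a-h_2+h_1)+\mu\big)-\big(\lambda(a+h_1)+\mu\big)-\big(\lambda(a-h_2)+\mu\big)=0,
$$
since the mixed second difference of an affine function vanishes identically.

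The main obstacle I anticipate is precisely the geometric case analysis verifying that $\Lambda_1^c\cap\Lambda_2^c$ really does trap all four arguments in one affine piece — one has to be careful because $g$ has \emph{two} kinds of kink points ($\mathbb Z$ and $\mathbb Z\pm 2/L$) and they are not equally spaced, so ``the two edges of the rectangle avoid $Q$'' must be leveraged together rather than separately; in particular one should note $\sqrt{Z_1+1}-\sqrt{Z_1}\le 1$ and $\sqrt{Z_2+1}-\sqrt{Z_2}\le 1$ and that any two consecutive elements of $Q$ are at distance either $2/L$ or $1-4/L$, both of which comfortably exceed $h_1,h_2$ is \emph{false} for the gap $2/L$, so in fact the correct statement is subtler: if the horizontal edge $[a,a+h_1)$ and the vertical edge $[a-h_2,a)$ each avoid $Q$, then because $Q$ is $2/L$-separated in the relevant sense and $h_1+h_2$ could straddle a gap, one must observe that the fourth corner $a-h_2+h_1$ is obtained from $a-h_2\in$ (same component as $a$) by translating by $h_1$, and $[a-h_2,a-h_2+h_1)$ is a translate of $[a,a+h_1)$ by $-h_2$; invoking $\Lambda_2^c$ (so $a-h_2$ and $a$ are in the same component) plus $\Lambda_1^c$ applied at both $a$ and $a-h_2$ (the latter requires noting the same $Q$-avoidance survives the translation, which is where a little care with the exact placement of $Q$-points is needed) closes the loop. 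I would carry this out by first reducing modulo $1$, then enumerating the at most three sub-intervals of $[0,1)\setminus Q$ between consecutive landmarks $0,\ 2/L,\ 1-2/L$ (using $1/L<1/4$), and checking in each configuration that the four points are co-affine; this is elementary but fiddly, and is the step where the hypotheses $\Lambda_1^c,\Lambda_2^c$ are actually consumed.
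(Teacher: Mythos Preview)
Your approach is correct and is the same as the paper's, but you are manufacturing a difficulty that is not there. The ``fiddly case check'' you anticipate is immediate once you notice that the two half-open intervals in the definitions of $\Lambda_1$ and $\Lambda_2$ share the common endpoint $a=\sqrt{Z_1}-\sqrt{Z_2}$: their union is
\[
\left[\sqrt{Z_1}-\sqrt{Z_2+1},\ \sqrt{Z_1}-\sqrt{Z_2}\right)\ \cup\ \left[\sqrt{Z_1}-\sqrt{Z_2},\ \sqrt{Z_1+1}-\sqrt{Z_2}\right)
\ =\ \left[\sqrt{Z_1}-\sqrt{Z_2+1},\ \sqrt{Z_1+1}-\sqrt{Z_2}\right).
\]
On $\Lambda_1^c\cap\Lambda_2^c$ this entire interval is disjoint from $Q$, hence $g$ is affine on it. All four arguments $a,\ a+h_1,\ a-h_2,\ a+h_1-h_2$ lie in this interval (the smallest is $a-h_2$, the largest is $a+h_1$, and $a+h_1-h_2$ lies between them), so the mixed second difference vanishes. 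There is no need to translate intervals, invoke $\Lambda_1^c$ ``at $a-h_2$'', enumerate components of $[0,1)\setminus Q$, or worry about the spacing of $Q$; the paper dispatches the lemma in one sentence for exactly this reason. (On $\Lambda^c$ all four $K$'s are zero by definition, so that case is trivial, as you noted.)
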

\begin{proof}
On $\Lambda_{1}^{c}\cap \Lambda_{2}^{c}$, the function $g$ is linear on the interval $\left[\sqrt{Z_{1}}-\sqrt{Z_{2}+1},\sqrt{Z_{1}+1}-\sqrt{Z_{2}}\right)$. The conclusion then follows directly from \Cref{def:defofK}.
\end{proof}
\begin{lemma}\label{lem:precontrolK}
Assume $\E[Z_{1}]\geq 2\log^{2}n-1$. Then on $\Lambda$ we have
$$\sqrt{Z_{1}+1}-\sqrt{Z_{1}}\leq \frac{1}{\sqrt{\E[Z_{1}]+1}}.$$
\end{lemma}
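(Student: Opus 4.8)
The plan is to prove \Cref{lem:precontrolK} by a direct computation combining the elementary algebraic identity $\sqrt{b}-\sqrt{a}=(b-a)/(\sqrt{a}+\sqrt{b})$ with the lower bound on $Z_1$ guaranteed by the event $\Lambda$. First I would write
$$\sqrt{Z_1+1}-\sqrt{Z_1}=\frac{1}{\sqrt{Z_1}+\sqrt{Z_1+1}}\leq\frac{1}{2\sqrt{Z_1}},$$
so it suffices to lower bound $Z_1$ on $\Lambda$. By \Cref{def:trueLambda}, on $\Lambda$ we have $Z_1\geq \E[Z_1]+1-\log n\sqrt{\E[Z_1]-1}$. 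Writing $w:=\E[Z_1]-1\geq 2\log^2 n-1$, this says $Z_1\geq w+2-\log n\sqrt{w}\geq w-\log n\sqrt{w}$, and since $w\geq 2\log^2 n-1$ we get $\log n\sqrt{w}\leq w/\sqrt{2}\cdot\sqrt{1+1/(2\log^2n-1)}$; more simply, from $w\geq 2\log^2 n$ (the harmless rounding, or just $w\geq 2\log^2n-1$ handled with a slightly more careful constant) one has $\log n\sqrt{w}\leq w/\sqrt{2}$, hence $Z_1\geq (1-1/\sqrt2)w$.

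The remaining step is purely numerical: I need $\frac{1}{2\sqrt{Z_1}}\leq\frac{1}{\sqrt{\E[Z_1]+1}}$, i.e. $4Z_1\geq \E[Z_1]+1=w+2$. Since $Z_1\geq(1-1/\sqrt2)w$ and $(1-1/\sqrt2)>1/4$ with room to spare (indeed $4(1-1/\sqrt2)\approx 1.17>1$), we get $4Z_1\geq 4(1-1/\sqrt2)w\geq w+2$ provided $w$ is at least a small absolute constant, which is guaranteed by $w\geq 2\log^2n-1$ for all $n\geq 2$. This chains together to give $\sqrt{Z_1+1}-\sqrt{Z_1}\leq 1/(2\sqrt{Z_1})\leq 1/\sqrt{\E[Z_1]+1}$ on $\Lambda$, as desired. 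One should double-check the borderline arithmetic with the exact bound $\E[Z_1]\geq 2\log^2n-1$ rather than $\E[Z_1]\geq 2\log^2n$, but the slack in the constant $1-1/\sqrt2$ versus the required $1/4$ absorbs any such discrepancy for large $n$ (and the lemma is only invoked in that regime).

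I do not anticipate a genuine obstacle here; this is a short estimate whose only subtlety is bookkeeping the $\pm 1$ and $\pm\log n\sqrt{\cdot}$ terms correctly when passing from $\E[Z_1]$ to $Z_1$ on $\Lambda$. The one point to be careful about is that the definition of $\Lambda$ is stated in terms of $\E[Z_1]-1$ under the square root (not $\E[Z_1]$), so I must use $w=\E[Z_1]-1$ consistently; once that substitution is made the inequality $Z_1\geq(1-1/\sqrt2)(\E[Z_1]-1)$ follows immediately from $\E[Z_1]-1\geq 2\log^2n-1$ and the monotonicity $\log n\sqrt{w}\leq w/\sqrt2$ whenever $w\geq 2\log^2n-1$.
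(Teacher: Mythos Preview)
Your proposal is correct and follows essentially the same approach as the paper: both use $\sqrt{Z_1+1}-\sqrt{Z_1}\leq 1/(2\sqrt{Z_1})$, invoke the lower bound on $Z_1$ from $\Lambda$, and then verify numerically that $4Z_1\geq \E[Z_1]+1$ using $\E[Z_1]\geq 2\log^2 n-1$. One small bookkeeping slip: with $w:=\E[Z_1]-1$ the hypothesis gives $w\geq 2\log^2 n-2$, not $2\log^2 n-1$, but as you note the slack in $4(1-1/\sqrt2)>1$ absorbs this.
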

\begin{proof}
We have 
$$\left(\sqrt{Z_{1}+1}-\sqrt{Z_{1}}\right)^{2}\leq \frac{1}{4Z_{1}}\leq \frac{1}{4\left(\E[Z_{1}]+1-\log n\sqrt{\E[Z_{1}]-1}\right)}\leq \frac{1}{\E[Z_{1}]+1},$$
where in the second-to-last inequality above we used the condition of $\Lambda$ (\Cref{def:trueLambda}) and in the last inequality we used the assumption $\E[Z_{1}]\geq 2\log^{2}n-1$.
\end{proof}
\begin{lemma}\label{lem:probcontrolK}
Assume $\E[Z_{1}]\geq 2\log^{2}n-1$. Then we have 
\begin{equation}\label{eq:lambda1bd}
\PP(\Lambda_{1}\cap \Lambda)\leq \frac{50\log n}{\sqrt{\Var[Z_{1}]}}.
\end{equation}
\end{lemma}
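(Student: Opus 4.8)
The plan is to condition on $Z_{2}$ and thereby reduce the estimate to an anti-concentration bound for the integer-valued variable $Z_{1}$.

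First I would rewrite $\Lambda_{1}$ in a shift-invariant form: translating the interval in its definition by $\sqrt{Z_{2}}$, the event $\Lambda_{1}$ is precisely that $[\sqrt{Z_{1}},\sqrt{Z_{1}+1})$ meets the set $Q+\sqrt{Z_{2}}$. Since the event $\Lambda$ of \Cref{def:trueLambda} is of the form $\{Z_{1}\in[m_{\min},m_{\max}]\}\cap\{Z_{2}\in[\,\cdot\,,\,\cdot\,]\}$ and $Z_{1},Z_{2}$ are independent by \Cref{def:aliasvar}, it suffices to fix a value $z_{2}$ in the typical range for $Z_{2}$, put $c=\sqrt{z_{2}}$, and bound $\PP\big[Z_{1}\in\mathcal{M}(c)\big]$ uniformly in $c$, where $\mathcal{M}(c)$ denotes the set of integers $m\in[m_{\min},m_{\max}]$ with $[\sqrt{m},\sqrt{m+1})\cap(Q+c)\neq\emptyset$. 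Averaging the resulting uniform bound over $Z_{2}$ (the integrand being $0$ whenever $Z_{2}$ is atypical) then yields \eqref{eq:lambda1bd}.

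Next I would bound $|\mathcal{M}(c)|$. The crucial observation is that, because $Z_{1}$ is integer-valued, the intervals $[\sqrt{m},\sqrt{m+1})$ with $m_{\min}\le m\le m_{\max}$ are pairwise disjoint and their union is the single interval $B=[\sqrt{m_{\min}},\sqrt{m_{\max}+1})$. Using \Cref{lem:precontrolK} to bound each increment $\sqrt{m+1}-\sqrt{m}\le(\E[Z_{1}]+1)^{-1/2}$ and the count $m_{\max}+1-m_{\min}\le 2\log n\sqrt{\E[Z_{1}]-1}$ coming from \Cref{def:trueLambda}, the length of $B$ is at most $2\log n$. As $Q$, hence $Q+c$, is a union of three translates of $\mathbb{Z}$, it contains at most $3(2\log n+1)\le 7\log n$ points of $B$ for $n$ large; assigning to each $m\in\mathcal{M}(c)$ a point of $[\sqrt{m},\sqrt{m+1})\cap(Q+c)\subseteq B$ gives an injection $\mathcal{M}(c)\hookrightarrow B\cap(Q+c)$, so $|\mathcal{M}(c)|\le 7\log n$, with no dependence on $c$.

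Finally I would supply the anti-concentration input. One may assume $\sqrt{\Var[Z_{1}]}>50\log n$, since otherwise \eqref{eq:lambda1bd} holds trivially. Conditionally on $\mathcal{G}^{\true}$, $Z_{1}-1$ is distributed as a sum of independent $\{0,1\}$-valued random variables, so the Berry–Esseen theorem (\Cref{cor:B-E}), together with the bound $\sum_{i}\E|X_{i}-\E X_{i}|^{3}\le\sum_{i}\Var X_{i}=\Var[Z_{1}]$, gives a uniform point-mass estimate $\PP[Z_{1}=m]\le\big(\tfrac{1}{\sqrt{2\pi}}+2C_{\mathrm{BE}}\big)/\sqrt{\Var[Z_{1}]}$ with $C_{\mathrm{BE}}$ an absolute constant. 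Combining with the previous step, $\PP[Z_{1}\in\mathcal{M}(c)]\le|\mathcal{M}(c)|\cdot\max_{m}\PP[Z_{1}=m]\le 50\log n/\sqrt{\Var[Z_{1}]}$ for $n$ large, and the first step converts this into \eqref{eq:lambda1bd}. The main obstacle is the middle step: turning the geometric condition into a count of order $\log n$ for the admissible integers $m$, uniformly over the unknown shift $c$; the point that makes it work is the integrality of $Z_{1}$, which forces the relevant intervals to be disjoint and reduces the count to the length of $B$ as controlled by \Cref{lem:precontrolK}. The Berry–Esseen step is routine.
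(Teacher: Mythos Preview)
Your proposal is correct and follows essentially the same approach as the paper: condition on $Z_{2}$, count the admissible integer values of $Z_{1}$ as $O(\log n)$ via the length of the window $B$ and the $3$-sparsity of $Q$, and multiply by the Berry--Esseen point-mass bound. The only cosmetic differences are that the paper bounds the length of $B$ by a direct square-root computation rather than summing increments via \Cref{lem:precontrolK}, and invokes \Cref{cor:B-E} directly to get $\PP[Z_{1}=x]\le 2/\sqrt{\Var[Z_{1}]}$ rather than writing out the $\tfrac{1}{\sqrt{2\pi}}+2C_{\mathrm{BE}}$ constant.
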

\begin{proof}
Recall that $Z_{1}$ and $Z_{2}$ are independent. Let us pick an arbitrary integer $y$ and condition on the event $\{Z_{2}=y\}$. 
On the event $\{Z_{2}=y\}$, $\Lambda_{1}\cap \Lambda$ holds only if the intersection
$$\left[\sqrt{Z_{1}},\sqrt{Z_{1}+1}\right)\cap(Q+\sqrt{y})\cap\left[\sqrt{\E[Z_{1}]+1-\log n\sqrt{\E[Z_{1}]-1}},\sqrt{\E[Z_{1}]-1+\log n\sqrt{\E[Z_{1}]-1}}\right]$$
is nonempty. Using the assumption $\E[Z_{1}]\geq 2\log^{2}n-1$, we get from a straightforward computation that
$$\sqrt{\E[Z_{1}]-1+\log n\sqrt{\E[Z_{1}]-1}}-\sqrt{\E[Z_{1}]+1-\log n\sqrt{\E[Z_{1}]-1}}\leq 2\log n.$$
Since each interval of the form $[x,x+1)$ contains at most 3 points of the set $(Q+\sqrt{y})$, it follows that the set
$$(Q+\sqrt{y})\cap\left[\sqrt{\E[Z_{1}]+1-\log n\sqrt{\E[Z_{1}]-1}},\sqrt{\E[Z_{1}]-1+\log n\sqrt{\E[Z_{1}]-1}}\right]$$
has cardinality at most $6\log n+3$. So there are at most $6\log n+3$ possible values of $Z_{1}$ that make $\Lambda_{1}\cap\Lambda$ hold. 
Recall from \Cref{def:aliasvar} that $Z_{1}$ is identically distributed with a sum of independent Bernoulli random variables. By Berry-Esseen theorem (\Cref{cor:B-E}), for any fixed integer $x$,   
$$\PP[Z_{1}=x]\leq \frac{2}{\sqrt{\Var[Z_{1}]}}.$$
Adding up at most $6\log{n}+3$ such inequalities, we have 
$$\PP\left[\Lambda_{1}\cap \Lambda\middle| Z_{2}=y\right]\leq \frac{2}{\sqrt{\Var[Z_{1}]}}(6\log n+3)\leq \frac{50\log n}{\sqrt{\Var[Z_{1}]}}.$$
The conclusion \eqref{eq:lambda1bd} then follows since $y$ is arbitrary.
\end{proof}
Finally, we arrive at an upper bound on $\E|K_{00}+K_{11}-K_{10}-K_{01}|$.
\begin{lemma}\label{lem:controlK}
Assume $\E[Z_{1}],\E[Z_{2}]\geq 2\log^{2}n-1$. Then we have
$$\E\left|K_{00}+K_{11}-K_{10}-K_{01}\right|\leq \frac{100\log n}{\sqrt{\E[Z_{1}]+1}}\left(\frac{1}{\sqrt{\Var[Z_{1}]}}+\frac{1}{\sqrt{\Var[Z_{2}]}}\right).$$
\end{lemma}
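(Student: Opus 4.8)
The plan is to assemble the three preparatory lemmas \Cref{lem:0controlK}, \Cref{lem:precontrolK} and \Cref{lem:probcontrolK} via the Lipschitz regularity of $g$. First I would record the elementary fact that $g$ is $1$-Lipschitz: the map $x\mapsto\min_{m\in\mathbb{Z}}|x-m|$ is the distance function to $\mathbb{Z}$ and hence $1$-Lipschitz, and applying $x\mapsto\max\{0,\,2/L-x\}$ preserves this. Since every $K_{\beta_1\beta_2}$ carries the common factor $\mathbbm{1}\{\Lambda\}$, grouping the target as $(K_{00}-K_{10})-(K_{01}-K_{11})$ and applying the Lipschitz bound to each of the two differences gives
\[
\bigl|K_{00}+K_{11}-K_{10}-K_{01}\bigr|\;\leq\;|K_{00}-K_{10}|+|K_{01}-K_{11}|\;\leq\;2\,\mathbbm{1}\{\Lambda\}\bigl(\sqrt{Z_{1}+1}-\sqrt{Z_{1}}\bigr),
\]
because in each difference the arguments of $g$ differ exactly by $\sqrt{Z_{1}+1}-\sqrt{Z_{1}}$.

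Next I would sharpen the indicator: by \Cref{lem:0controlK} the left-hand side vanishes on $\Lambda_{1}^{c}\cap\Lambda_{2}^{c}$, so $\mathbbm{1}\{\Lambda\}$ may be replaced by $\mathbbm{1}\{\Lambda\cap(\Lambda_{1}\cup\Lambda_{2})\}$; and on $\Lambda$, \Cref{lem:precontrolK} (available since $\E[Z_{1}]\geq 2\log^{2}n-1$) bounds $\sqrt{Z_{1}+1}-\sqrt{Z_{1}}$ by $1/\sqrt{\E[Z_{1}]+1}$. Taking expectations and a union bound over $\Lambda_{1}\cup\Lambda_{2}$ then yields
\[
\E\bigl|K_{00}+K_{11}-K_{10}-K_{01}\bigr|\;\leq\;\frac{2}{\sqrt{\E[Z_{1}]+1}}\Bigl(\PP(\Lambda_{1}\cap\Lambda)+\PP(\Lambda_{2}\cap\Lambda)\Bigr).
\]

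It then only remains to estimate the two probabilities. The first is exactly \Cref{lem:probcontrolK}, which gives $\PP(\Lambda_{1}\cap\Lambda)\leq 50\log n/\sqrt{\Var[Z_{1}]}$. For the second I would note that $\Lambda_{2}$ is the image of $\Lambda_{1}$ under interchanging the roles of $Z_{1}$ and $Z_{2}$ (using $Q=-Q$): running the proof of \Cref{lem:probcontrolK} with conditioning on $Z_{1}=x$ instead of $Z_{2}=y$ — the square-rooted $\Lambda$-window for $Z_{2}$ still has length at most $2\log n$ since $\E[Z_{2}]\geq 2\log^{2}n-1$, each unit interval meets $Q$ in at most three points, and Berry--Esseen gives $\PP[Z_{2}=x]\leq 2/\sqrt{\Var[Z_{2}]}$ — produces $\PP(\Lambda_{2}\cap\Lambda)\leq 50\log n/\sqrt{\Var[Z_{2}]}$. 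Substituting both bounds into the last display gives the claimed inequality with constant $2\cdot 50=100$. There is no real obstacle here: the lemma is a bookkeeping step that fuses the three preceding lemmas, and the only points needing a line of care are the $1$-Lipschitz property of $g$ and the symmetric restatement of \Cref{lem:probcontrolK} for $\Lambda_{2}$, both of which are routine.
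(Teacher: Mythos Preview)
Your proposal is correct and follows essentially the same route as the paper: bound the quartet pointwise by $2\,\mathbbm{1}\{\Lambda\}(\sqrt{Z_{1}+1}-\sqrt{Z_{1}})$ via the $1$-Lipschitz property of $g$, use \Cref{lem:0controlK} to restrict to $(\Lambda_{1}\cup\Lambda_{2})\cap\Lambda$, apply \Cref{lem:precontrolK} for the deterministic bound $2/\sqrt{\E[Z_{1}]+1}$, and conclude with \Cref{lem:probcontrolK} and its symmetric analogue for $\Lambda_{2}$. The only cosmetic slip is the label in your Berry--Esseen line (you mean $\PP[Z_{2}=y]$ for the integer values $y$ in the relevant window, not $\PP[Z_{2}=x]$), which does not affect the argument.
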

\begin{proof}
Using the fact that $g$ is $1$-Lipschitz and then  \Cref{lem:precontrolK}, we have
$$|K_{00}+K_{11}-K_{10}-K_{01}|\leq|K_{00}-K_{10}|+|K_{11}-K_{01}| \leq 2\left|\sqrt{Z_{1}+1}-\sqrt{Z_{1}}\right|\leq \frac{2}{\sqrt{\E[Z_{1}]+1}}.$$
Since $\E[Z_{1}]\geq 2\log^{2}n-1$, we know from \Cref{lem:probcontrolK} that $\PP(\Lambda_{1}\cap\Lambda)\leq 50\log n\left(\Var[Z_{1}]\right)^{-1/2}$. By symmetry, since $\E[Z_{2}]\geq 2\log^{2}n-1$ we also have $\PP(\Lambda_{2}\cap\Lambda)\leq 50\log n\left(\Var[Z_{2}]\right)^{-1/2}$.
It then follows from \Cref{lem:0controlK} that
\begin{align*}
&\quad\E\left|K_{00}+K_{11}-K_{10}-K_{01}\right|\\
&\leq 
\E\Big[\mathbbm{1}\{\Lambda_{1}\cap\Lambda\}\left|K_{00}+K_{11}-K_{10}-K_{01}\right|\Big]+\E\Big[\mathbbm{1}\{\Lambda_{2}\cap\Lambda\}\left|K_{00}+K_{11}-K_{10}-K_{01}\right|\Big]\\
&\leq \frac{50\log n}{\sqrt{\Var[Z_{1}]}}\cdot\frac{2}{\sqrt{\E[Z_{1}]+1}}+\frac{50\log n}{\sqrt{\Var[Z_{2}]}}\cdot\frac{2}{\sqrt{\E[Z_{1}]+1}}.\qedhere
\end{align*}
\end{proof}

\subsubsection{Finishing up}

Armed with all ingredients we need, we can finally provide an upper bound on $T_{2}(j,\ell)$.

\begin{lemma}\label{lem:T2bound}
Assume that $(G,H)$ follows an  $(\alpha,4\alpha^{-1}\log^{2}n,\delta)$-edge-correlated distribution, and that $L \gtrsim \log n$. Then
the value of $T_{2}(j,\ell)$ at $\omega$ is at most 
$$200\alpha^{-3/2}\log n\left(\frac{1}{d_{j}}+\frac{1}{d_{\ell}}\right)+\exp(-\Omega(\log^{2}n)).$$
\end{lemma}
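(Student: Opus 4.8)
The plan is to chain together three facts already at hand: the reduction \Cref{lem:turnT2intoK}, which bounds the value of $T_{2}(j,\ell)$ at $\omega$ by $\tfrac12\E|K_{00}+K_{11}-K_{10}-K_{01}|+\tfrac4L\PP(\Lambda^{c})$; the alias-world estimate \Cref{lem:controlK}; and the tail bound \Cref{prop:trueLambda}. To invoke the latter two I first need quantitative lower bounds on $\E[Z_{s}]$ and $\Var[Z_{s}]$ for $s\in\{1,2\}$, evaluated at $\omega$; once these are in place, everything else is bookkeeping with constants.

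First I would bound the means. By \eqref{eq:ZandX}, $\E[Z_{1}]\geq\E[X_{j}\mid\mathcal{G}^{\true}](\omega)-1$, and since $X_{j}\geq\sum_{r\notin J^{\true}\cup\{i\}}G_{jr}$ with the right side independent of $\mathcal{G}^{\true}$, on $\Gamma^{\true}$ we get $\E[X_{j}\mid\mathcal{G}^{\true}]\geq R_{j}^{\true}\geq\tfrac{\alpha}{2}d_{j}\geq 2\log^{2}n$ from the assumption $d_{j}\geq d=4\alpha^{-1}\log^{2}n$; hence $\E[Z_{1}]\geq 2\log^{2}n-1$ and $\E[Z_{1}]+1\geq\tfrac{\alpha}{2}d_{j}$, and symmetrically for $Z_{2}$ with $d_{j}$ replaced by $d_{\ell}$. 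These are exactly the hypotheses of \Cref{prop:trueLambda} and \Cref{lem:controlK}.

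The one step needing genuine care is the variance lower bound, and this is where I expect the main (modest) obstacle. Here I would use that the law of $Z_{1}$ is the conditional law of $1+\sum_{r\notin\{i,j,\ell\}}G_{jr}$ given $\mathcal{G}^{\true}$, and split $\sum_{r\notin\{i,j,\ell\}}G_{jr}=\sum_{r\notin J^{\true}\cup\{i\}}G_{jr}+\sum_{r\in J^{\true}\setminus\{j,\ell\}}G_{jr}$. The first sum is over edges incident to neither $i$ nor any other vertex of $J^{\true}$, so by the independence of distinct edge-pairs in \Cref{def:graphensemble} its summands stay Bernoulli$(p_{jr})$ and independent of $\mathcal{G}^{\true}$; conditionally on $\mathcal{G}^{\true}$ each summand of the second sum is either a revealed constant (when $G_{jr}=H_{jr}$) or Bernoulli$(1/2)$ (when $G_{jr}\neq H_{jr}$), with all these contributions jointly independent. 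Consequently $\Var[Z_{1}]\geq\sum_{r\notin J^{\true}\cup\{i\}}p_{jr}(1-p_{jr})\geq\alpha\sum_{r\notin J^{\true}\cup\{i\}}p_{jr}=\alpha R_{j}^{\true}\geq\tfrac{\alpha^{2}}{2}d_{j}$ on $\Gamma^{\true}$, using $p_{jr}\leq 1-\alpha$; similarly $\Var[Z_{2}]\geq\tfrac{\alpha^{2}}{2}d_{\ell}$. Being precise about which edges $\{j,r\}$ retain randomness after conditioning on $\mathcal{G}^{\true}$ is the only delicate point; the rest is a direct substitution.

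Finally I would assemble the pieces. Plugging the Step-1 bounds into \Cref{lem:controlK} gives $\E|K_{00}+K_{11}-K_{10}-K_{01}|\leq 100\log n\cdot\tfrac{\sqrt2}{\sqrt{\alpha d_{j}}}\big(\tfrac{\sqrt2}{\alpha\sqrt{d_{j}}}+\tfrac{\sqrt2}{\alpha\sqrt{d_{\ell}}}\big)=\tfrac{200\log n}{\alpha^{3/2}}\big(\tfrac1{d_{j}}+\tfrac1{\sqrt{d_{j}d_{\ell}}}\big)$, which by AM--GM ($\tfrac1{\sqrt{d_{j}d_{\ell}}}\leq\tfrac12(\tfrac1{d_{j}}+\tfrac1{d_{\ell}})$) is at most $\tfrac{300\log n}{\alpha^{3/2}}\big(\tfrac1{d_{j}}+\tfrac1{d_{\ell}}\big)$. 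By \Cref{prop:trueLambda}, whose hypothesis now holds, $\PP(\Lambda^{c})\leq\exp(-\Omega(\log^{2}n))$, and since $L\gtrsim\log n$ (in particular $L\geq 5$ for large $n$) the term $\tfrac4L\PP(\Lambda^{c})$ is $\exp(-\Omega(\log^{2}n))$. Substituting both bounds into \Cref{lem:turnT2intoK} yields that $T_{2}(j,\ell)$ at $\omega$ is at most $\tfrac12\cdot\tfrac{300\log n}{\alpha^{3/2}}\big(\tfrac1{d_{j}}+\tfrac1{d_{\ell}}\big)+\exp(-\Omega(\log^{2}n))\leq 200\alpha^{-3/2}\log n\big(\tfrac1{d_{j}}+\tfrac1{d_{\ell}}\big)+\exp(-\Omega(\log^{2}n))$, as claimed; the comfortable gap between $150$ and $200$ in the leading constant shows that no sharp estimates are needed anywhere.
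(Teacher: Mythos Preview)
Your proposal is correct and follows essentially the same approach as the paper's proof: both chain \Cref{lem:turnT2intoK}, \Cref{lem:controlK}, and \Cref{prop:trueLambda}, verify the mean hypotheses via $\E[Z_{s}]+1\geq\E[X_{j}\mid\mathcal{G}^{\true}]\geq R_{j}^{\true}\geq\tfrac{\alpha}{2}d_{j}$ (and similarly for $\ell$), and lower-bound $\Var[Z_{s}]$ by $\alpha R_{j}^{\true}\geq\tfrac{\alpha^{2}}{2}d_{j}$ using only the ``outside'' edges $r\notin J^{\true}\cup\{i\}$, before applying AM--GM to collapse the cross term $1/\sqrt{d_{j}d_{\ell}}$. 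The only cosmetic difference is that the paper absorbs the factor $\tfrac12$ from \Cref{lem:turnT2intoK} into the constant earlier (writing $50\log n$ in its display \eqref{eq:prelimT2bound}), whereas you carry it to the end; both arrive comfortably below the stated constant $200$.
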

\begin{proof}
Since $\omega\in\Gamma^{\true}\cap\Pi^{\true}$, we know that the value of $\E[X_{j}|\mathcal{G^{\true}}]$ at $\omega$ is at least
$R_{j}^{\true}\geq \frac{\alpha}{2}d_{j}\geq 2\log ^{2}n$.
So (recall \eqref{eq:ZandX}) we have $\E[Z_{1}]\geq \E[X_{j}|\mathcal{G^{\true}}](\omega)-1\geq 2\log^{2}n-1$. By symmetry, we also have $\E[Z_{2}]\geq 2\log^{2}n-1$. We can now combine \Cref{prop:trueLambda} with \Cref{lem:turnT2intoK,lem:controlK} to conclude that the value of $T_{2}(j,\ell)$ at $\omega$ is at most
\begin{equation}\label{eq:prelimT2bound}
\frac{50\log n}{\sqrt{\E[Z_{1}]+1}}\left(\frac{1}{\sqrt{\Var[Z_{1}]}}+\frac{1}{\sqrt{\Var[Z_{2}]}}\right)+\exp(-\Omega(\log^{2}n)).
\end{equation}
By \Cref{def:aliasvar} we know that $\Var[Z_{1}]$ is at least the value of
$$\sum_{r \not\in J^{\true}\cup\{i\}}p_{jr}(1-p_{jr}) \geq \alpha\sum_{r \not\in J^{\true}\cup\{i\}}p_{jr} = \alpha R_{j}^{\true} \geq \frac{\alpha^{2}d_{j}}{2}.$$
So plugging into \eqref{eq:prelimT2bound}, the value of $T_{2}(j,\ell)$ at $\omega$ is at most
\begin{align*}
&\quad\frac{50\log n}{\sqrt{\alpha d_{j}/2}}\left(\frac{1}{\sqrt{\alpha^{2}d_{j}/2}}+\frac{1}{\sqrt{\alpha^{2}d_{\ell}/2}}\right)+\exp(-\Omega(\log^{2}n))\\
&\leq 200\alpha^{-3/2}\log n\left(\frac{1}{d_{j}}+\frac{1}{d_{\ell}}\right)+\exp(-\Omega(\log^{2}n)),
\end{align*}
where the AM-GM inequality was used on the last line.
\end{proof}
\subsection{Proof of \Cref{lem:truepair}}
Now we tidy up the expectation analysis in \Cref{subsec:trueexpectation,subsec:upperbonT2} and conclude with a proof of \Cref{lem:truepair}.
\begin{lemma}[Expectation]\label{lem:trueexpectation}
Assume that $(G,H)$ follows an  $(\alpha,4\alpha^{-1}\log^{2}n,\delta)$-edge-correlated distribution, and that $L \gtrsim \log n$. On $\Gamma^{\true}\cap\Pi^{\true}$, we have
$$\E\left[D'(i,i)^{2}\middle|\mathcal{G}^{\true}\right]\leq 10^{4}d_{i}\left(\frac{\delta}{L}+\sqrt{\alpha^{-1}\delta}+\alpha^{-3/2}\delta\log n\right)+\exp(-\Omega(\log^{2}n)).$$
\end{lemma}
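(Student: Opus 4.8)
The plan is to assemble the pieces already in place: the reduction \Cref{lem:turntrueexptoT} expresses $\E[D'(i,i)^2\mid\mathcal{G}^{\true}]$ in terms of $\sum_{j\in J^{\true}}T_1(j)$ and $\sum_{\{j,\ell\}\subset J^{\true}}T_2(j,\ell)$, and it remains only to bound these two sums on $\Gamma^{\true}\cap\Pi^{\true}$ and track the constants. The $T_1$ sum is handled immediately by \Cref{lem:T1bound}, which under the present hypotheses (the assumed edge-correlated distribution and $L\gtrsim\log n$) gives $\sum_{j\in J^{\true}}T_1(j)\leq 50 d_i(\delta/L+\sqrt{\alpha^{-1}\delta})$.

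For the $T_2$ sum, I would first split the pairs $\{j,\ell\}\subset J^{\true}$ according to whether $G_{j\ell}=H_{j\ell}$ or $G_{j\ell}\neq H_{j\ell}$; this dichotomy is $\mathcal{G}^{\true}$-measurable since $\mathbbm{1}\{G_{j\ell}\neq H_{j\ell}\}$ is part of the information generating $\mathcal{G}^{\true}$. By \Cref{prop:nilT2} every pair with $G_{j\ell}=H_{j\ell}$ contributes $T_2(j,\ell)=0$. For each of the remaining pairs, \Cref{lem:T2bound} bounds the value of $T_2(j,\ell)$ at $\omega$ by $200\alpha^{-3/2}\log n\,(1/d_j+1/d_\ell)+\exp(-\Omega(\log^2 n))$. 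Summing the main terms over all pairs with $G_{j\ell}\neq H_{j\ell}$ reproduces exactly $200\alpha^{-3/2}\log n\cdot S^{\true}$ by the definition of $S^{\true}$, and on $\Pi^{\true}$ we have $S^{\true}\leq 10\delta d_i$, so this part is at most $2000\alpha^{-3/2}\delta d_i\log n$. The only mildly delicate point is the accumulation of the $\exp(-\Omega(\log^2 n))$ errors: on $\Gamma^{\true}$ we have $|J^{\true}|=R_0^{\true}\leq 2d_i\leq 2n$, so there are at most $\binom{2n}{2}=O(n^2)$ pairs, and $n^2\exp(-\Omega(\log^2 n))=\exp(-\Omega(\log^2 n))$, so the combined error is still a single $\exp(-\Omega(\log^2 n))$ term.

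Putting these into \Cref{lem:turntrueexptoT} yields, on $\Gamma^{\true}\cap\Pi^{\true}$,
$$\E[D'(i,i)^2\mid\mathcal{G}^{\true}]\leq 50 d_i\Big(\tfrac{\delta}{L}+\sqrt{\alpha^{-1}\delta}\Big)+4\cdot 2000\,\alpha^{-3/2}\delta d_i\log n+\exp(-\Omega(\log^2 n)),$$
and since $50\leq 10^4$ and $8000\leq 10^4$ this is bounded by $10^4 d_i\big(\delta/L+\sqrt{\alpha^{-1}\delta}+\alpha^{-3/2}\delta\log n\big)+\exp(-\Omega(\log^2 n))$, as claimed. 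I do not expect a genuine obstacle in this step: all the analytic work was carried out in \Cref{lem:T1bound,lem:T2bound}, and here the task is purely one of bookkeeping — recognizing the pair-sum of $(1/d_j+1/d_\ell)\mathbbm{1}\{G_{j\ell}\neq H_{j\ell}\}$ as $S^{\true}$, invoking $\Pi^{\true}$, and performing the error union bound — with tracking the numerical constants against the target $10^4$ being the only thing that needs attention.
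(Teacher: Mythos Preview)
Your proposal is correct and follows essentially the same route as the paper: both combine \Cref{lem:turntrueexptoT} with the $T_1$ bound from \Cref{lem:T1bound}, then handle the $T_2$ sum by invoking \Cref{prop:nilT2} on the $G_{j\ell}=H_{j\ell}$ pairs and \Cref{lem:T2bound} on the rest, recognizing the resulting sum as $200\alpha^{-3/2}\log n\cdot S^{\true}$ and applying $\Pi^{\true}$. Your explicit justification that the at most $O(n^2)$ error terms of size $\exp(-\Omega(\log^2 n))$ still aggregate to $\exp(-\Omega(\log^2 n))$ is a detail the paper leaves implicit, but otherwise the arguments are identical.
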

\begin{proof}
Combining \Cref{prop:nilT2} with \Cref{lem:T2bound}, on $\Gamma^{\true} \cap \Pi^{\true}$ we have
\begin{align*}
\sum_{\{j,\ell\}\subset J^{\true}}T_{2}(j,\ell)&\leq 200\alpha^{-3/2}\log n\sum_{\{j,\ell\}\subset J^{\true}}\left(\frac{1}{d_{j}}+\frac{1}{d_{\ell}}\right)\mathbbm{1}\{G_{j\ell} \neq H_{j\ell}\}+\sum_{\{j,\ell\}\subset J^{\true}}\exp(-\Omega(\log^{2}n))\\
&\leq 2000\alpha^{-3/2}\delta d_{i}\log n+\exp(-\Omega(\log^{2}n)),
\end{align*}
where in the last inequality we used the condition $S^{\true}\leq 10\delta d_{i}$, as guaranteed by $\Pi^{\true}$.
The result follows by combining the above with \Cref{lem:turntrueexptoT,lem:T1bound}.
\end{proof}
\begin{proof}[Proof of \Cref{lem:truepair}]
We first pick $c_{1}'=4/\alpha$. By \Cref{lem:trueexpectation} and Cauchy-Schwarz, we have on $\Gamma^{\true}\cap \Pi^{\true}$
$$\E\left[\frac{D'(i,i)}{\sqrt{d_{i}}}\middle|\mathcal{G}^{\true}\right]\leq 100\sqrt{\frac{\delta}{c_{3}\log n}}+100\alpha^{-1/4}\delta^{1/4}+100 \alpha^{-3/4}\sqrt{\delta\log n}+\exp(-\Omega(\log^{2}n)).$$
Combining the above with \Cref{lem:trueconcentration}, on $\Gamma^{\true}\cap\Pi^{\true}$
$$\PP\left[\frac{D'(i,i)}{\sqrt{d_{i}}}\geq c_{5}\sqrt{\delta\log n}+c_{6}\delta^{1/4}+\frac{100}{c_{3}\sqrt{\log n}}\middle|\mathcal{G}^{\true}\right]\leq O(n^{-3}),$$
for some constants $c_{5},c_{6}$ depending on $\alpha$ and $c_{3}$. Since $\PP(\Gamma^{\true}\cap\Pi^{\true})\geq 1-\exp(-\Omega(\log^{2}n))$ by \Cref{prop:trueGamma,prop:trueTheta}, and since $\PP[D'(i,i)\neq D(i,i)]\leq \exp(-\Omega(\log^{2}n))$ by \Cref{lem:DequalsDprime}, we conclude that
\[\PP\left[\frac{D(i,i)}{\sqrt{d_{i}}}\geq c_{5}\sqrt{\delta\log n}+c_{6}\delta^{1/4}+\frac{100}{c_{3}\sqrt{\log n}}\right]\leq O(n^{-3})+\exp(-\Omega(\log^{2}n))= O(n^{-3}).\qedhere\]
\end{proof}

\bibliographystyle{plain}
\bibliography{reference}
\appendix
\section{Probabilistic Inequalities}
The following concentration inequalities are used many times in this paper.

\begin{theorem}[Hoeffding's inequality, \mbox{\cite[Theorem D.2]{mohri2018foundations}}]\label{thm:Hoeffding}
Let $X_1, \dots, X_n$ be independent random variables with $X_{i}$ taking values in $[a_{i},b_{i}]$ for all $i\in [n]$. Then for any $\varepsilon\geq 0$ we have
$$\PP\left[\left|\sum_{i=1}^{n}X_i-\sum_{i=1}^{n}\E[X_{i}]\right| \geq \varepsilon\right] \leq 2\exp\left(-\frac{2\varepsilon^{2}}{\sum_{i=1}^{n}(b_i-a_i)^{2}}\right).$$
\end{theorem}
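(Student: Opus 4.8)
The plan is to prove this by the classical Chernoff/Cramér route. First I would reduce to a one-sided tail bound. Applying the one-sided estimate both to $\sum_i X_i$ and to $\sum_i (-X_i)$ — noting that $-X_i$ takes values in $[-b_i,-a_i]$, an interval of the same length $b_i-a_i$ — and then taking a union bound, the two-sided probability is at most twice the one-sided probability, which accounts for the leading factor $2$. So it suffices to show
$$\PP\Big[\sum_{i=1}^n (X_i-\E X_i)\geq\varepsilon\Big]\leq\exp\Big(-\frac{2\varepsilon^2}{\sum_{i=1}^n(b_i-a_i)^2}\Big).$$
(If some $b_i=a_i$ the corresponding variable is constant and drops out; if all are constant the left side is $0$ for $\varepsilon>0$ and the bound is trivial.)

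For the one-sided bound, set $Y_i=X_i-\E X_i$, so $\E Y_i=0$ and $Y_i\in[a_i',b_i']:=[a_i-\E X_i,\,b_i-\E X_i]$ with $b_i'-a_i'=b_i-a_i$. For any $s>0$, Markov's inequality applied to $e^{s\sum_i Y_i}$ and independence give
$$\PP\Big[\sum_i Y_i\geq\varepsilon\Big]\leq e^{-s\varepsilon}\,\E\Big[e^{s\sum_i Y_i}\Big]=e^{-s\varepsilon}\prod_{i=1}^n\E\big[e^{sY_i}\big].$$
The crux is the standard \emph{Hoeffding lemma}: a mean-zero random variable $Y$ supported on an interval of length $h$ satisfies $\E[e^{sY}]\leq e^{s^2h^2/8}$. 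I would prove this by convexity of $x\mapsto e^{sx}$: for $y\in[a',b']$, $e^{sy}\leq\frac{b'-y}{b'-a'}e^{sa'}+\frac{y-a'}{b'-a'}e^{sb'}$; taking expectations and using $\E Y=0$ gives $\E[e^{sY}]\leq\frac{b'}{b'-a'}e^{sa'}-\frac{a'}{b'-a'}e^{sb'}=:e^{\psi(s)}$. Writing $\theta=-a'/h\in[0,1]$, one checks $\psi(s)=-s\theta h+\log(1-\theta+\theta e^{sh})$, so $\psi(0)=\psi'(0)=0$, and $\psi''(s)=h^2u(1-u)$ where $u=\frac{\theta e^{sh}}{1-\theta+\theta e^{sh}}\in(0,1)$, whence $\psi''(s)\leq h^2/4$. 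Taylor's theorem with remainder then yields $\psi(s)\leq s^2h^2/8$.

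Plugging the lemma into the product bound (with $h=b_i-a_i$ for the $i$-th factor) gives $\PP[\sum_iY_i\geq\varepsilon]\leq\exp\big(-s\varepsilon+\tfrac{s^2}{8}\sum_i(b_i-a_i)^2\big)$, and minimising this quadratic in $s$ by the choice $s=4\varepsilon\big/\sum_i(b_i-a_i)^2$ produces exactly the exponent $-2\varepsilon^2\big/\sum_i(b_i-a_i)^2$. Combining with the symmetrisation step of the first paragraph finishes the proof. The only genuinely nontrivial ingredient — and thus the step I expect to demand the most care — is Hoeffding's lemma, and within it the uniform second-derivative bound $\psi''(s)\leq h^2/4$ (which is the observation that a Bernoulli-type variance never exceeds $1/4$); the rest is routine bookkeeping with the exponential moment method.
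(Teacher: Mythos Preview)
Your proof is correct and is the standard argument for Hoeffding's inequality. However, the paper does not actually prove this statement: it is listed in the appendix as a known result with a citation to \cite[Theorem D.2]{mohri2018foundations}, and no proof is given. So there is nothing to compare against; you have simply supplied the classical proof where the paper chose to quote the result.
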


\begin{theorem}[Multiplicative Chernoff bounds, \mbox{\cite[Theorem D.4]{mohri2018foundations}}]\label{thm:Chernoff}
Let $X_1, \dots, X_n$ be independent random variables taking values in $[0,1]$. Let $S=\sum_{i=1}^{n}X_i$ and $\mu=\E[S]$. For any $\varepsilon\geq 0$ we have
$$\PP[S \leq (1-\varepsilon)\mu] \leq \exp\left(-\frac{1}{2}\varepsilon^{2}\mu \right)$$
$$\PP[S \geq (1+\varepsilon)\mu] \leq \exp\left(-\frac{\varepsilon^{2}}{2+\varepsilon}\mu \right).$$
\end{theorem}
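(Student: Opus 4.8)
The plan is to prove both tail bounds by the classical exponential-moment (Chernoff--Bernstein) method, optimizing the free parameter and then reducing the resulting sharp bounds to the stated clean forms via two scalar inequalities.

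\textbf{Upper tail.} First I would fix $t>0$ and apply Markov's inequality to $e^{tS}$, giving $\PP[S\geq(1+\varepsilon)\mu]\leq e^{-t(1+\varepsilon)\mu}\,\E[e^{tS}]$. By independence, $\E[e^{tS}]=\prod_{i=1}^{n}\E[e^{tX_{i}}]$. The key per-coordinate estimate uses that $x\mapsto e^{tx}$ is convex, so on $[0,1]$ it lies below the chord: $e^{tX_{i}}\leq 1+(e^{t}-1)X_{i}$ pointwise, hence $\E[e^{tX_{i}}]\leq 1+(e^{t}-1)\E[X_{i}]\leq\exp\big((e^{t}-1)\E[X_{i}]\big)$, where the last step is $1+y\leq e^{y}$. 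Multiplying over $i$ yields $\E[e^{tS}]\leq\exp\big((e^{t}-1)\mu\big)$, so $\PP[S\geq(1+\varepsilon)\mu]\leq\exp\big((e^{t}-1)\mu-t(1+\varepsilon)\mu\big)$; choosing $t=\ln(1+\varepsilon)$ to minimize the exponent gives the sharp bound $\big(e^{\varepsilon}/(1+\varepsilon)^{1+\varepsilon}\big)^{\mu}$.

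\textbf{Lower tail.} The argument is the mirror image: apply Markov to $e^{-tS}$ with $t>0$, use convexity again in the form $e^{-tX_{i}}\leq 1-(1-e^{-t})X_{i}$ to obtain $\E[e^{-tS}]\leq\exp\big((e^{-t}-1)\mu\big)$, and optimize with $t=-\ln(1-\varepsilon)$ for $\varepsilon\in(0,1)$ (the cases $\varepsilon\geq 1$ being trivial since then $S\geq 0\geq(1-\varepsilon)\mu$), obtaining $\big(e^{-\varepsilon}/(1-\varepsilon)^{1-\varepsilon}\big)^{\mu}$.

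\textbf{Reduction to the stated form.} The remaining step, which I expect to be the only genuinely delicate part, is the pair of scalar inequalities
$$\frac{e^{\varepsilon}}{(1+\varepsilon)^{1+\varepsilon}}\leq\exp\!\Big(-\frac{\varepsilon^{2}}{2+\varepsilon}\Big)\quad(\varepsilon\geq 0),\qquad\frac{e^{-\varepsilon}}{(1-\varepsilon)^{1-\varepsilon}}\leq\exp\!\Big(-\frac{\varepsilon^{2}}{2}\Big)\quad(\varepsilon\in[0,1)).$$
After taking logarithms, the first is equivalent to $(1+\varepsilon)\ln(1+\varepsilon)\geq\varepsilon+\varepsilon^{2}/(2+\varepsilon)$, which I would verify by letting $\phi$ denote the difference of the two sides, noting $\phi(0)=0$, and checking $\phi'\geq 0$ (this in turn reduces to the elementary bound $\ln(1+\varepsilon)\geq 2\varepsilon/(2+\varepsilon)$, handled by the same "value at $0$ plus derivative sign" trick). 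The second follows from the series $(1-\varepsilon)\ln(1-\varepsilon)=-\varepsilon+\sum_{k\geq 2}\varepsilon^{k}/(k(k-1))\geq -\varepsilon+\varepsilon^{2}/2$. Since $\mu\geq 0$, raising both inequalities to the power $\mu$ preserves them, which yields exactly the two bounds in the statement.
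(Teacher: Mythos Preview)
Your argument is correct and is the standard Chernoff--Bernstein derivation. Note, however, that the paper does not supply its own proof of this statement: Theorem~\ref{thm:Chernoff} is stated as a citation from \cite[Theorem D.4]{mohri2018foundations} and used as a black box, so there is nothing in the paper to compare against. Your write-up would serve perfectly well as a self-contained proof if one were desired.
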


Both \Cref{thm:Hoeffding} and \Cref{thm:Chernoff} can be derived from moment generating functions, but they are tailored to different situations. In particular, \Cref{thm:Chernoff} is finer than \Cref{thm:Hoeffding} when $\mu$ is small, while in \Cref{thm:Hoeffding} the ranges of the random variables are more flexible. 

In this paper we often use \Cref{thm:Chernoff} in the following form:

\begin{corollary}\label{cor:Chernoff}
Let $X_1, \dots, X_n$ be independent random variables taking values in $[0,1]$. Let $S=\sum_{i=1}^{n}X_i$. If $a,b$ are real numbers such that $a\leq b\leq \E[S]$, then
\begin{equation}\label{eq:Chernoffleft}
\PP[S\leq a]\leq \exp\left(-\frac{1}{2}(b-a)\cdot\frac{b-a}{b}\right).
\end{equation}
If $a,b$ are real numbers such that $\E[S]\leq b\leq a$, then
\begin{equation}\label{eq:Chernoffright}
\PP[S\geq a]\leq \exp\left(-\frac{1}{3}(a-b)\cdot\min\left\{\frac{a-b}{b},1\right\}\right).
\end{equation}
\end{corollary}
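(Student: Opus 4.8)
The plan is to derive both inequalities directly from the multiplicative Chernoff bounds in \Cref{thm:Chernoff}, by choosing the deviation parameter $\varepsilon$ so that the threshold equals $a$, and then replacing $\E[S]$ by the auxiliary number $b$ via an elementary monotonicity argument. Throughout write $\mu=\E[S]$; the degenerate cases $\mu=0$, $a\le 0$ (for the left tail), or $b=0$ can be disposed of first, since then one side of the claimed inequality is trivially $0$ or $\ge 1$, so I may assume $\mu>0$ and the relevant denominators are positive.

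For \eqref{eq:Chernoffleft} I would take $\varepsilon=(\mu-a)/\mu\in[0,1]$, so that $(1-\varepsilon)\mu=a$, and apply the first bound of \Cref{thm:Chernoff} together with the identity $\tfrac12\varepsilon^{2}\mu=\tfrac{(\mu-a)^{2}}{2\mu}$ to get $\PP[S\le a]\le\exp\!\big(-(\mu-a)^{2}/(2\mu)\big)$. Then I observe that the function $x\mapsto (x-a)^{2}/x$ has derivative $(x-a)(x+a)/x^{2}\ge 0$ on $[a,\infty)$, hence is nondecreasing there; since $a\le b\le\mu$, this yields $(\mu-a)^{2}/\mu\ge (b-a)^{2}/b$, which is \eqref{eq:Chernoffleft}. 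For \eqref{eq:Chernoffright} I would take $\varepsilon=(a-\mu)/\mu\ge 0$, so that $(1+\varepsilon)\mu=a$, and apply the second bound of \Cref{thm:Chernoff} together with the identity $\tfrac{\varepsilon^{2}}{2+\varepsilon}\mu=\tfrac{(a-\mu)^{2}}{a+\mu}$ to get $\PP[S\ge a]\le\exp\!\big(-(a-\mu)^{2}/(a+\mu)\big)$. Now $x\mapsto (a-x)^{2}/(a+x)$ has derivative $-(a-x)(3a+x)/(a+x)^{2}\le 0$ on $[0,a]$, hence is nonincreasing, and since $\mu\le b\le a$ this gives $\PP[S\ge a]\le\exp\!\big(-(a-b)^{2}/(a+b)\big)$.

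It then remains to verify the purely numerical inequality $\tfrac{(a-b)^{2}}{a+b}\ge\tfrac13(a-b)\min\{\tfrac{a-b}{b},1\}$ for $0<b\le a$. Dividing by $a-b>0$ (the case $a=b$ being trivial) reduces this to $\tfrac{a-b}{a+b}\ge\tfrac13\min\{\tfrac{a-b}{b},1\}$, which I would check by splitting into the case $a\le 2b$ (where the minimum equals $(a-b)/b$ and the claim becomes $3b\ge a+b$) and the case $a>2b$ (where the minimum equals $1$ and the claim becomes $3(a-b)\ge a+b$). I do not expect a genuine obstacle here: the only points requiring mild care are the handling of the degenerate cases and getting this last two-case comparison right; the conceptual crux is simply recognizing the monotonicity of $(x-a)^{2}/x$ and $(a-x)^{2}/(a+x)$, which makes the replacement of $\E[S]$ by $b$ automatic.
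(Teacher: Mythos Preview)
Your proof is correct and follows essentially the same approach as the paper's: both derive the bounds from \Cref{thm:Chernoff} together with a monotonicity argument (in $b$) and, for the right tail, the elementary comparison $\varepsilon^{2}/(2+\varepsilon)\ge\tfrac13\varepsilon\min\{\varepsilon,1\}$. The only cosmetic difference is the order of operations---the paper first reduces to $b=\E[S]$ by noting the target right-hand side is monotone in $b$ and then applies Chernoff, whereas you apply Chernoff first and then replace $\mu$ by $b$ via the monotonicity of the resulting exponent; these are equivalent.
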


\begin{proof}
For \eqref{eq:Chernoffleft}, note that the right hand side of \eqref{eq:Chernoffleft} is monotone decreasing in $b$. So we may assume $b=\E[S]$. Then \eqref{eq:Chernoffleft} becomes a restatement of the first inequality of \Cref{thm:Chernoff}.

For \eqref{eq:Chernoffright}, note that the right hand side of \eqref{eq:Chernoffright} is monotone increasing in $b$. So we may assume $b=\E[S]$. Since $\frac{1}{3}\varepsilon\cdot\min\{\varepsilon,1\}\leq \varepsilon^{2}/(2+\varepsilon)$, \eqref{eq:Chernoffright} becomes a direct consequence of the second inequality of \Cref{thm:Chernoff}.
\end{proof}

McDiarmid's inequality is a generalization of \Cref{thm:Hoeffding}, and is our principal tool for the concentration analysis in \Cref{subsec:fakeconcentrate,subsec:trueconcentration}. 

\begin{theorem}[McDiarmid's inequality, \mbox{\cite[Theorem D.8]{mohri2018foundations}}]\label{thm:Mcdiarmid}
Let $X_1, \dots, X_n$ be independent random variables taking value in a space $\mathcal{X}$. Let $f: \mathcal{X}^{n} \rightarrow \mathbb{R}$ be a measurable function. Assume that for any $(x_1, \dots, x_n) \in \mathcal{X}^{n}$, 
\begin{equation}\label{eq:marginaldiff}
\sup_{x_{i}^{\mathsf{new}} \in \mathcal{X}}|f(x
_1, \dots, x_{i-1}, x_i, x_{i+1}, \dots, x_n)-f(x
_1, \dots, x_{i-1}, x_{i}^{\mathsf{new}}, x_{i+1}, \dots, x_n)| \leq c_i.
\end{equation}
That is, substituting the value of the $i$-th coordinate $x_i$ changes the value of function $f$ by at most $c_i$. Let $X=(X_1, \dots, X_n)$. Then for any $\Delta\geq 0$, with probability at least $1-2e^{-\Delta}$,
$$\left|f(X)-\E[f(X)]\right| \leq \sqrt{\frac{\Delta}{2}\sum_{i=1}^{n}c_{i}^{2}}.$$
\end{theorem}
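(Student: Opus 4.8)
The plan is to prove this via the method of bounded martingale differences (the Doob martingale together with the Azuma--Hoeffding argument). First I would construct, for the fixed measurable $f$ and independent inputs $X_1,\dots,X_n$, the Doob martingale $V_k := \E[f(X)\mid X_1,\dots,X_k]$ for $k=0,1,\dots,n$, with the conventions $V_0 = \E[f(X)]$ (conditioning on the trivial $\sigma$-algebra) and $V_n = f(X)$ (as $f(X)$ is a function of all coordinates). This reduces the claim to a concentration bound for the terminal value $V_n$ of a martingale around $V_0$, controlled by the increments $D_k := V_k - V_{k-1}$.

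Next I would establish the key bounded-difference property for the increments: conditionally on $X_1,\dots,X_{k-1}$, the increment $D_k$ ranges in an interval of length at most $c_k$. Concretely, since $X_k,\dots,X_n$ are independent of $X_1,\dots,X_{k-1}$, one can write $V_k = \varphi_k(X_1,\dots,X_k)$ and $V_{k-1} = \int \varphi_k(X_1,\dots,X_{k-1},x)\,\mathrm{d}\mu_k(x)$, where $\mu_k$ is the law of $X_k$; then $\sup_x \varphi_k(\dots,x) - \inf_x \varphi_k(\dots,x) \le c_k$ follows directly from the hypothesis \eqref{eq:marginaldiff} by moving the supremum and infimum inside the integration over $X_{k+1},\dots,X_n$. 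In particular $D_k$ has conditional mean zero and conditional range at most $c_k$.

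Then I would invoke the conditional Hoeffding lemma: a random variable with conditional mean zero lying in an interval of width $w$ has conditional moment generating function bounded by $e^{\lambda^2 w^2/8}$ for every $\lambda\in\mathbb{R}$. Applying this to $D_k$ given $X_1,\dots,X_{k-1}$ and then iterating via the tower property (so that $\E[e^{\lambda(V_n-V_0)}]$ telescopes into a product of these conditional bounds), one obtains $\E[e^{\lambda(f(X)-\E f(X))}] \le \exp\bigl(\tfrac{\lambda^2}{8}\sum_{k=1}^n c_k^2\bigr)$. A Chernoff--Markov bound followed by optimizing over $\lambda$ gives the one-sided tail $\PP[f(X)-\E f(X)\ge t]\le \exp\bigl(-2t^2/\sum_k c_k^2\bigr)$; applying the same argument to $-f$ (which has identical bounded-difference constants) and a union bound yields the two-sided version. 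Finally, substituting $t = \sqrt{\tfrac{\Delta}{2}\sum_k c_k^2}$ makes $2t^2/\sum_k c_k^2 = \Delta$, so the two-sided probability is at most $2e^{-\Delta}$, which is exactly the stated conclusion.

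The main obstacle — indeed the only genuinely nontrivial point — is the rigorous justification of the bounded-difference property of the martingale increments: one must use the independence of the coordinates to reduce the oscillation of the conditional expectation $V_k$ in its last argument $X_k$ to the oscillation of $f$ in its $k$-th argument. This is precisely where independence is indispensable (the statement fails without it), and a careful treatment requires a regular conditional distribution and Fubini's theorem. The conditional Hoeffding lemma and the telescoping moment-generating-function computation are then routine.
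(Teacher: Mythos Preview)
Your proposal is correct and follows the standard Doob-martingale/Azuma--Hoeffding proof of McDiarmid's inequality. Note, however, that the paper does not prove this theorem at all: it is stated in the appendix with an external citation to \cite[Theorem~D.8]{mohri2018foundations} and used as a black box, so there is no ``paper's own proof'' to compare against.
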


The following classical result is useful in \Cref{lem:fakeproblowerb,lem:probcontrolK}.

\begin{theorem}[Berry-Esseen theorem, \cite{berry1941accuracy,esseen1942liapunov}]\label{thm:B-E}
Let $X_1, \dots, X_n$ be independent random variables with $\E[X_i]=0$, $\E[X_i^{2}]=\sigma_{i}^{2}>0$, and $\E[|X_i^{3}|]=\rho_i<\infty$. Let 
$$S_n=\frac{X_1+X_2+\cdots+X_n}{\sqrt{\sigma_1^2+\sigma_2^2+\cdots+\sigma_n^2}}$$
be the normalized $n$-th partial sum. Denote by $F_n$ the cumulative distribution function of $S_n$ and by $\Phi$ the cumulative distribution function of the standard normal distribution. Then we have
$$\sup_{x \in \mathbb{R}}|F_n(x)-\Phi(x)| \leq \left(\sum_{i=1}^{n}\sigma_i^2\right)^{-3/2} \cdot \sum_{i=1}^{n}\rho_i.$$
\end{theorem}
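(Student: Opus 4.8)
This is the classical Berry--Esseen theorem, and the plan is the standard Fourier-analytic route built around Esseen's smoothing inequality. First I would normalize: after rescaling the $X_i$, assume $\sum_{i=1}^{n}\sigma_i^2=1$, and set $\Lambda:=\sum_{i=1}^{n}\rho_i$, so the claim reads $\sup_x|F_n(x)-\Phi(x)|\le\Lambda$. Since the left-hand side never exceeds $1$, there is nothing to prove when $\Lambda\ge 1$; so assume $\Lambda<1$, and note that Lyapunov's inequality $\sigma_j^3\le\rho_j$ then forces $\sigma_j^2\le\rho_j^{2/3}\le\Lambda^{2/3}<1$ for every $j$. Let $f_j(t)=\E[e^{itX_j}]$ and $f(t)=\prod_{j=1}^{n}f_j(t)$ be the characteristic function of $S_n$; the target is $e^{-t^2/2}=\prod_j e^{-\sigma_j^2 t^2/2}$.

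The first main step is Esseen's smoothing lemma: for every $T>0$,
\[
\sup_x|F_n(x)-\Phi(x)|\;\le\;\frac{1}{\pi}\int_{-T}^{T}\left|\frac{f(t)-e^{-t^2/2}}{t}\right|\d t\;+\;\frac{C_0}{T},
\]
where $C_0$ is an absolute constant coming from $\sup_x\Phi'(x)=(2\pi)^{-1/2}$. This reduces the theorem to a pointwise estimate on $f(t)-e^{-t^2/2}$ over a window $|t|\le T$, together with a suitable choice of $T$.

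The second main step is the characteristic-function estimate. Using $\E X_j=0$ together with $|e^{iy}-1-iy+y^2/2|\le|y|^3/6$ and $1-\cos y\ge y^2/2-|y|^3/6$, one gets $|f_j(t)-(1-\tfrac{1}{2}\sigma_j^2 t^2)|\le\tfrac{1}{6}\rho_j|t|^3$, hence $|f_j(t)-e^{-\sigma_j^2 t^2/2}|\le c_1\rho_j|t|^3$ on the range where $\rho_j|t|^3\lesssim\sigma_j^2 t^2$. Plugging this into the telescoping inequality $|\prod a_j-\prod b_j|\le\sum|a_j-b_j|$ (valid since $|f_j(t)|\le1$ and $e^{-\sigma_j^2 t^2/2}\le1$), and separately checking $|f(t)|\le e^{-t^2/4}$ on the same range (from $|f_j(t)|^2\le1-\sigma_j^2 t^2/2\le e^{-\sigma_j^2 t^2/2}$, which holds once $\rho_j|t|\lesssim\sigma_j^2$ and $\sigma_j^2 t^2\lesssim1$, and then multiplying over $j$), one arrives at a bound of the form $|f(t)-e^{-t^2/2}|\le C\Lambda|t|^3 e^{-t^2/4}$ for $|t|\le T_0$, with $T_0$ of order the reciprocal of the Lyapunov ratio; this bound is automatic where $\Lambda|t|^3\ge1$, since the left side is always at most $2e^{-t^2/4}$. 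Taking $T=T_0$ in the smoothing lemma then yields an integral term $\lesssim\Lambda\int_{\mathbb R}t^2 e^{-t^2/4}\d t=O(\Lambda)$ and a tail term $C_0/T_0=O(\Lambda)$, so $\sup_x|F_n(x)-\Phi(x)|\le C\Lambda$ for an absolute constant $C$, which is the asserted bound up to the value of the constant.

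The main obstacle is twofold. First, in the genuinely non-identically-distributed case one must ensure the window $|t|\le T_0$ on which every factor stays close to its Gaussian surrogate is controlled by $\Lambda$ rather than by $\min_j\sigma_j^2/\rho_j$ (which can be much smaller); this is precisely where the reduction to $\Lambda<1$ and the pointwise bounds $\sigma_j^2\le\Lambda^{2/3}$, $\rho_j\le\Lambda$ get used, keeping all factors uniformly close to $e^{-\sigma_j^2 t^2/2}$ throughout the integration range. Second, extracting the clean constant $1$ as stated --- rather than just an absolute constant --- requires a sharpened form of the smoothing inequality, more careful second-order remainder estimates, and a more precise treatment of the low-frequency part; with a universal constant in place of $1$ none of this is delicate, and determining the optimal constant is a separate, well-studied question.
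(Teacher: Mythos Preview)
The paper does not prove this theorem at all: it is stated in the appendix as a classical result, attributed to \cite{berry1941accuracy,esseen1942liapunov}, and the only remark following it is that ``the proof of a sharper version \ldots\ can be found in \cite{shevtsova2010improvement}.'' The paper then immediately uses it (via \Cref{cor:B-E}) as a black box. So there is no in-paper proof to compare your proposal against.

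That said, your sketch is the standard and correct route: Esseen's smoothing inequality followed by a telescoping characteristic-function estimate, with the choice of cutoff $T$ proportional to $1/\Lambda$. Your handling of the non-i.i.d.\ case is also the right idea---reducing to $\Lambda<1$ so that the individual $\sigma_j^2\le\Lambda^{2/3}$ keeps every factor under control on the whole integration window. You are right to flag that this argument, as written, only yields $\sup_x|F_n(x)-\Phi(x)|\le C\Lambda$ for some absolute $C$, and that pushing the constant down to $1$ (or below, as in \cite{shevtsova2010improvement}) requires sharper smoothing lemmas and more careful remainder bookkeeping; for the paper's purposes (only \Cref{cor:B-E} is ever used, and the constant there is $2$) any absolute constant would do.
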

The proof of a sharper version of \Cref{thm:B-E} can be found in \cite{shevtsova2010improvement}. In this paper \Cref{thm:B-E} is used in the following form:
\begin{corollary}\label{cor:B-E}
Let $X$ be a sum of independent Bernoulli random variables. Let $X'$ follow a normal distribution of mean $\E[X]$ and variance $\Var[X]$. Then for any real numbers $a$ and $b$, we have
$$\left|\PP[a\leq X\leq b]-\PP[a\leq X'\leq b]\right|\leq \frac{2}{\sqrt{\Var[X]}}.$$
\end{corollary}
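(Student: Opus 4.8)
The plan is to obtain \Cref{cor:B-E} as a direct consequence of the Berry-Esseen theorem (\Cref{thm:B-E}) applied to the centered Bernoulli summands, together with one elementary moment comparison. Write $X=\sum_{i=1}^{n}X_{i}$ with the $X_{i}$ independent and $\PP[X_{i}=1]=p_{i}$. If some $p_{i}\in\{0,1\}$, the corresponding $X_{i}$ is deterministic and merely shifts both $X$ and $X'$ by the same constant, hence leaves $\PP[a\le X\le b]$ and $\PP[a\le X'\le b]$ unchanged after translating the interval $[a,b]$; so we may assume every $\sigma_{i}^{2}:=\Var[X_{i}]=p_{i}(1-p_{i})$ is strictly positive. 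We may also assume $\Var[X]=\sum_{i}\sigma_{i}^{2}>0$, since otherwise the claimed bound is vacuous, and (replacing $[a,b]$ by $\emptyset$) we may assume $a\le b$, since for $a>b$ both probabilities are $0$.

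The key estimate I would establish is that a centered Bernoulli variable has third absolute moment bounded by its variance:
\[
\rho_{i}:=\E\big[|X_{i}-p_{i}|^{3}\big]=p_{i}(1-p_{i})^{3}+(1-p_{i})p_{i}^{3}=\sigma_{i}^{2}\big((1-p_{i})^{2}+p_{i}^{2}\big)\le\sigma_{i}^{2},
\]
using $(1-p_{i})^{2}+p_{i}^{2}\le 1$ for $p_{i}\in[0,1]$. Summing gives $\sum_{i}\rho_{i}\le\sum_{i}\sigma_{i}^{2}=\Var[X]$, so \Cref{thm:B-E} applied to $X_{1}-p_{1},\dots,X_{n}-p_{n}$ yields, with $F_{n}$ the cdf of the normalized sum $S_{n}=(X-\E[X])/\sqrt{\Var[X]}$ and $\Phi$ the standard normal cdf,
\[
\sup_{x\in\mathbb{R}}|F_{n}(x)-\Phi(x)|\le\Big(\textstyle\sum_{i}\sigma_{i}^{2}\Big)^{-3/2}\sum_{i}\rho_{i}\le\frac{1}{\sqrt{\Var[X]}}.
\]

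It then remains to convert this cdf bound into the two-sided interval bound. Put $a^{*}=(a-\E[X])/\sqrt{\Var[X]}$ and $b^{*}=(b-\E[X])/\sqrt{\Var[X]}$. Since $a\le b$, we have $\PP[a\le X\le b]=\PP[X\le b]-\PP[X<a]=F_{n}(b^{*})-F_{n}(a^{*-})$, where $F_{n}(a^{*-})$ is the left limit (needed because $X$ is lattice-valued and may put mass at $a$), whereas $X'$ is continuous so $\PP[a\le X'\le b]=\Phi(b^{*})-\Phi(a^{*})$. Subtracting and applying the triangle inequality,
\[
\big|\PP[a\le X\le b]-\PP[a\le X'\le b]\big|\le|F_{n}(b^{*})-\Phi(b^{*})|+|F_{n}(a^{*-})-\Phi(a^{*})|.
\]
The first term is at most $(\Var[X])^{-1/2}$ by the displayed Berry-Esseen bound; for the second, continuity of $\Phi$ gives $|F_{n}(a^{*-})-\Phi(a^{*})|=\lim_{x\uparrow a^{*}}|F_{n}(x)-\Phi(x)|\le(\Var[X])^{-1/2}$ as well. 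Adding the two estimates gives the factor-$2$ bound. There is no genuine obstacle here: the whole argument reduces to the moment inequality $\rho_{i}\le\sigma_{i}^{2}$ and a single invocation of \Cref{thm:B-E}, and the only point deserving care is the discrete-versus-continuous bookkeeping at the endpoint $a$, handled by passing to the left limit and using continuity of $\Phi$.
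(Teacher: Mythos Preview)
Your proof is correct and follows essentially the same approach as the paper: center the Bernoulli summands, observe $\rho_i\le\sigma_i^2$, apply \Cref{thm:B-E} to get the uniform cdf bound $(\Var[X])^{-1/2}$, and then combine two endpoint estimates via the triangle inequality, handling the lower endpoint by a left limit. The only cosmetic difference is that the paper obtains $\rho_i\le\sigma_i^2$ from the cruder observation $|X_i-p_i|\le 1$, whereas you compute $\rho_i=\sigma_i^2\big((1-p_i)^2+p_i^2\big)$ explicitly.
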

\begin{proof}
Write $X=\E[X]+X_{1}+\dots+X_{n}$, where $X_{i}$ are independent ``centered'' Bernoulli variables of the form 
$$\PP[X_{i}=1-p_{i}]=p_{i}\text{ and }\PP[X_{i}=-p_{i}]=1-p_{i}.$$
Let $\sigma_{i}^{2}=\E[X_{i}^{2}]$ and $\rho_{i}=\E[|X_{i}^{3}|]$. Since $|X_{i}|\leq 1$ we clearly have $\rho_{i}\leq \sigma_{i}^{2}$. 

Let $F_{n}$ be the cumulative distribution function of the random variable $$S_{n}:=\frac{X_{1}+\dots+X_{n}}{\sqrt{\sigma_{1}^{2}+\dots+\sigma_{n}^{2}}}=\frac{X-\E[X]}{\sqrt{\Var[X]}},$$ and let $\Phi$ be the cumulative distribution of the standard normal distribution. We have
\begin{align*}
\left|\PP[X\leq b]-\PP[X'\leq b]\right|&=\left|F_{n}\left(\frac{b-\E[X]}{\Var[X]}\right)-\Phi\left(\frac{b-\E[X]}{\Var[X]}\right)\right|\\
&\leq\frac{\rho_{1}+\dots+\rho_{n}}{(\sigma_{1}^{2}+\cdots+\sigma_{n}^{2})^{3/2}}&(\text{by \Cref{thm:B-E}})\\
&\leq \frac{1}{\sqrt{\sigma_{1}^{2}+\cdots+\sigma_{n}^{2}}}=\frac{1}{\sqrt{\Var[X]}}&(\text{since }\rho_{i}\leq \sigma_{i}^{2}).
\end{align*}
Similarly
$$
\left|\PP[X<a]-\PP[X'<a]\right|=\lim_{x\rightarrow a^{-}}\left|\PP[X\leq x]-\PP[X'\leq x]\right|\leq \frac{1}{\sqrt{\Var[X]}}.
$$
Therefore
\[\left|\PP[a\leq X\leq b]-\PP[a\leq X'\leq b]\right|\leq \left|\PP[X\leq b]-\PP[X'\leq b]\right|+\left|\PP[X<a]-\PP[X'<a]\right|\leq\frac{2}{\sqrt{\Var[X]}}.\qedhere\]
\end{proof}

\section{Symmetric $\{-1,0,1\}$ Variables}

This appendix serves two purposes. Firstly, it provides the lemma required in the proof of \Cref{lem:fakeexpectation}. Secondly, it lays the groundwork for \Cref{sec:gaussian}. The focal point of this appendix is the following simple notion that somehow unifies these two objectives.

\begin{definition}
A random variable $X$ is said to be a symmetric $\{-1,0,1\}$ variable if $X$ takes value in the set $\{-1,0,1\}$ and $\PP[X=-1]=\PP[X=1]$.
\end{definition}

The lemma below demonstrates how symmetric $\{-1,0,1\}$ variables effectively symmetrize Bernoulli variables.

\begin{lemma}\label{lem:Bern_to_sym}
Let $X_{1},\dots,X_{n}$ be independent Bernoulli variables, and for $i\in[n]$ let $p_{i}=\PP[X_{i}=1]$. Suppose $Y_{1},\dots,Y_{n}$ are independent symmetric $\{-1,0,1\}$ variables with $\PP[Y_{i}=1]=\min\{p_{i},1-p_{i}\}$. Then
$$\inf_{x\in\mathbb{R}}\E\left|\sum_{i=1}^{n}X_{i}-x\right|\geq \frac{1}{2}\E\left|\sum_{i=1}^{n}Y_{i}\right|.$$
\end{lemma}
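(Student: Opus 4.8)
The plan is to reduce the infimum over $x$ to a single well-chosen value and then symmetrize each Bernoulli variable into a symmetric $\{-1,0,1\}$ variable at the cost of a factor $1/2$. First I would observe that for any real-valued integrable random variable $S$, the map $x\mapsto \E|S-x|$ is minimized at a median of $S$, but for our purposes it is cleaner to use the crude but universally valid bound $\E|S-x|\ge \tfrac12\,\E|S-S'|$, where $S'$ is an independent copy of $S$; indeed, by the triangle inequality $\E|S-S'|\le \E|S-x|+\E|S'-x|=2\,\E|S-x|$ for every $x$, so $\inf_x \E|S-x|\ge \tfrac12\E|S-S'|$. Applying this with $S=\sum_i X_i$ gives
$$\inf_{x\in\mathbb{R}}\E\left|\sum_{i=1}^n X_i - x\right|\ \ge\ \frac12\,\E\left|\sum_{i=1}^n (X_i - X_i')\right|,$$
where $X_1',\dots,X_n'$ is an independent copy of $X_1,\dots,X_n$.

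Next I would identify the distribution of the symmetrized increments $X_i-X_i'$. Each $X_i-X_i'$ takes values in $\{-1,0,1\}$, is symmetric (since $X_i$ and $X_i'$ are i.i.d.), and has $\PP[X_i-X_i'=1]=p_i(1-p_i)$. This is a symmetric $\{-1,0,1\}$ variable, but with "success" probability $p_i(1-p_i)$ rather than $\min\{p_i,1-p_i\}$. Since $p_i(1-p_i)\le \min\{p_i,1-p_i\}$ (because $p_i\le 1$ and $1-p_i\le 1$), each $X_i-X_i'$ is, in distribution, a "thinned" version of $Y_i$: one can couple $Y_i$ and $X_i-X_i'$ so that whenever $X_i-X_i'\ne 0$ we have $Y_i\ne 0$ with the same sign — equivalently, $X_i-X_i'$ has the law of $Y_i$ multiplied by an independent Bernoulli "keep" indicator. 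The key step is then a comparison lemma: if $Y_1,\dots,Y_n$ are independent symmetric $\{-1,0,1\}$ variables and $\varepsilon_1,\dots,\varepsilon_n\in\{0,1\}$ are independent Bernoulli variables (independent of the $Y_i$), then $\E\big|\sum_i \varepsilon_i Y_i\big|\ge$ ... wait — this inequality goes the wrong way in general, so I would instead argue directly that $\E|\sum_i(X_i-X_i')| \ge \E|\sum_i Y_i|$ by a conditioning/symmetrization argument: condition on the set of indices where $X_i-X_i'\ne 0$ versus the set where $Y_i\ne 0$, use that the conditional laws of the nonzero signs are i.i.d. uniform $\pm1$ in both cases, and invoke the fact that $\E|\sum_{i\in A}\sigma_i|$ is monotone nondecreasing in $|A|$ for i.i.d. Rademacher $\sigma_i$ (a consequence of Jensen / the fact that adding an independent mean-zero term cannot decrease $\E|\cdot|$).

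Putting these together yields $\inf_x\E|\sum_i X_i-x|\ge \tfrac12\E|\sum_i(X_i-X_i')|\ge \tfrac12\E|\sum_i Y_i|$, which is the claim. The main obstacle I anticipate is the monotonicity comparison in the last step — making rigorous that replacing the thinned increments $X_i-X_i'$ (with success probability $p_i(1-p_i)$) by the less-thinned $Y_i$ (with success probability $\min\{p_i,1-p_i\}$) can only increase $\E|\cdot|$. The clean way to see this is: any symmetric $\{-1,0,1\}$ variable $Y$ with $\PP[Y=1]=q$ can be written as $Y = \xi\,B$ where $\xi$ is Rademacher and $B$ is an independent Bernoulli$(2q)$ indicator (valid since $q\le 1/2$); then $\sum_i Y_i = \sum_i \xi_i B_i$, and conditionally on $(B_i)$ we get $\E[|\sum_i \xi_i B_i| \mid (B_i)]$ depends only on $\sum_i B_i$ and is nondecreasing in it, because adding one more independent Rademacher term $\xi_{k}$ to a sum $T$ satisfies $\E|T+\xi_k| = \E\,\E[\,|T+\xi_k|\mid T\,]\ge \E|T|$ by Jensen applied to the convex function $u\mapsto |u|$ and the zero-mean variable $\xi_k$. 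Since $2p_i(1-p_i)\le 2\min\{p_i,1-p_i\}$, the $B$-indicators for $X_i-X_i'$ are stochastically dominated by those for $Y_i$, and the monotonicity in $\sum_i B_i$ closes the argument.
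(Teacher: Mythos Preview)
Your symmetrization step $\inf_x \E|S-x|\ge \tfrac12\E|S-S'|$ is valid, but the chain breaks at the last inequality: you claim $\E\big|\sum_i(X_i-X_i')\big|\ge \E\big|\sum_i Y_i\big|$, and your own monotonicity argument proves the \emph{opposite}. You correctly show that $\E\big|\sum_i \xi_i B_i\big|$ is nondecreasing in the Bernoulli indicators $B_i$, and you correctly note that the indicators for $X_i-X_i'$ (parameter $2p_i(1-p_i)$) are stochastically dominated by those for $Y_i$ (parameter $2\min\{p_i,1-p_i\}$). Together these yield $\E\big|\sum_i(X_i-X_i')\big|\le \E\big|\sum_i Y_i\big|$, not $\ge$. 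In fact the desired inequality is false already for $n=1$: with $p_1=p<\tfrac12$ one has $\E|X_1-X_1'|=2p(1-p)$ while $\E|Y_1|=2p$, so $\tfrac12\E|S-S'|=p(1-p)<p=\tfrac12\E|Y_1|$. The independent-copy symmetrization is therefore too lossy to reach the target; no subsequent monotonicity can recover the factor of $(1-p_i)$ that it throws away.

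The paper avoids this loss by a sharper decomposition of each $2X_i-1$ rather than symmetrizing against an independent copy. Write $2X_i-1 = W_iU_i + (1-W_i)u_i$, where $U_i$ is Rademacher, $W_i$ is an independent Bernoulli with parameter $2\min\{p_i,1-p_i\}$, and $u_i\in\{-1,1\}$ is the deterministic sign $\mathrm{sgn}(p_i-\tfrac12)$. Then $W_iU_i$ has exactly the law of $Y_i$, and the point is that, \emph{conditional on $(W_i)$}, the sum $\sum_i W_iU_i$ is symmetric about $0$, so its median is $0$; hence for every $x$,
\[
\E\Big|\sum_i W_iU_i + \sum_i(1-W_i)u_i - x\,\Big|\ \ge\ \E\Big|\sum_i W_iU_i\Big|.
\]
Taking expectations over $(W_i)$ and then the infimum over $x$ gives $\inf_x \E\big|\sum_i(2X_i-1)-x\big|\ge \E\big|\sum_i Y_i\big|$, which after rescaling by $\tfrac12$ is exactly the lemma. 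The difference from your approach is that the ``copy'' used for centering is not an independent $X_i'$ but the deterministic value $u_i$ on the event $W_i=0$; this retains the full $\min\{p_i,1-p_i\}$ mass in the fluctuating part instead of only $p_i(1-p_i)$.
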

\begin{proof}
For each $i\in[n]$, let $W_{i}$ be a Bernoulli random variable with $\PP[W_{i}=1]=2\min\{p_{i},1-p_{i}\}$, and let $U_{i}$ be a random variable taking value in $\{-1,1\}$ with $\PP[U_{i}=1]=\frac{1}{2}$. Let $W_{1},\dots,W_{n},U_{1},\dots,U_{n}$ be mutually independent. Finally, let $u_{i}$ be a constant defined by
$$u_{i}=\begin{cases}
-1 &\text{if }p_{i}\leq \frac{1}{2},\\
1 &\text{if }p_{i}>\frac{1}{2}.
\end{cases}$$
It is then easy to observe that for each $i\in [n]$, the random variables $Y_{i}$ and $W_{i}U_{i}$ are identically distributed. In addition, $2X_{i}-1$ and $W_{i}U_{i}+(1-W_{i})u_{i}$ are identically distributed. So
\begin{equation}\label{eq:medianX}
\inf_{x\in \mathbb{R}}\E\left|\sum_{i=1}^{n}X_{i}-x\right|=\frac{1}{2}\inf_{x\in\mathbb{R}}\E\left|\sum_{i=1}^{n}(2X_{i}-1)-x\right|=
\frac{1}{2}\inf_{x\in\mathbb{R}}\E\left|\sum_{i=1}^{n}\Big(W_{i}U_{i}+(1-W_{i})u_{i}\Big)-x\right|.
\end{equation}
It is easy to simplify the right hand side of \eqref{eq:medianX} if $W_{1},\dots,W_{n}$ are fixed. In fact, since each variable $U_i$ is symmetric around 0, the conditional distribution of $\sum_{i=1}^{n}W_{i}U_{i}$ (given $W_1, \ldots, W_n$) has median 0. Recall that the $L^{1}$-norm of a translated variable $Z-z$ is always minimized when $z$ is the median of the variable $Z$. So for any $x\in\mathbb{R}$, we have
$$\E\left[\left|\sum_{i=1}^{n}W_{i}U_{i}+\left(\sum_{i=1}^{n}(1-W_{i})u_{i}-x\right)\right|\mid W_{1},\dots,W_{n}\right] \geq \E\left[\left|\sum_{i=1}^{n}W_{i}U_{i}\right|\mid W_{1},\dots,W_{n}\right].$$
Taking expectation and then infimum over $x\in\mathbb{R}$ yields
\begin{equation}\label{eq:medianY}
\inf_{x\in\mathbb{R}}\E\left|\sum_{i=1}^{n}\Big(W_{i}U_{i}+(1-W_{i})u_{i}\Big)-x\right|\geq \E\left|\sum_{i=1}^{n}W_{i}U_{i}\right|=\E\left|\sum_{i=1}^{n}Y_{i}\right|.
\end{equation}
Combining \eqref{eq:medianX} and \eqref{eq:medianY} concludes the proof.
\end{proof}
We will now delve deeper into the study of sums of independent symmetric $\{-1,0,1\}$ variables. In particular, we examine the $L^{1}$-norm of the sum as a function of the probability parameters associated with each individual symmetric $\{-1,0,1\}$ variable.
\begin{definition}\label{def:Bfunction}
Let $n$ be any positive integer. We define a function $f_{n}:[0,\frac{1}{2}]^{n}\rightarrow\mathbb{R}$ as follows. For $p_{1},\dots,p_{n}\in[0,\frac{1}{2}]$, let $X_{1},\dots,X_{n}$ be independent symmetric $\{-1,0,1\}$ variables with $\PP[X_{i}=1]=p_{i}$ for each $i\in [n]$. Define
$$f_{n}(p_{1},\dots,p_{n}):=\E\left|\sum_{i=1}^{n}X_{i}\right|.$$
\end{definition}

\begin{lemma}\label{lem:control_f}
Suppose $p_{1},\dots,p_{n} \in [0,1/2]$ and let $S=p_{1}+\dots+p_{n}$. Then
$$f_{n}(p_{1},\dots,p_{n})\geq \frac{2S}{\sqrt{6S+1}}.$$
\end{lemma}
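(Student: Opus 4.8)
The plan is to lower-bound $f_n(p_1,\dots,p_n) = \E\left|\sum_{i=1}^n X_i\right|$, where the $X_i$ are independent symmetric $\{-1,0,1\}$ variables with $\PP[X_i=1]=p_i$, by comparing the $L^1$-norm to a ratio of the $L^2$-norm and a higher moment. The standard tool here is the interpolation inequality (a consequence of H\"older): for any random variable $Z$, $\E[Z^2] = \E[|Z|^{2/3}|Z|^{4/3}] \le (\E|Z|)^{2/3}(\E[Z^4])^{1/3}$, so that
\begin{equation*}
\E|Z| \;\ge\; \frac{(\E[Z^2])^{3/2}}{(\E[Z^4])^{1/2}}.
\end{equation*}
Applying this with $Z = \sum_{i=1}^n X_i$, the task reduces to computing $\E[Z^2]$ exactly and bounding $\E[Z^4]$ from above.

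The second moment is immediate: since the $X_i$ are independent with mean $0$ and $\E[X_i^2] = 2p_i$, we get $\E[Z^2] = \sum_i 2p_i = 2S$. For the fourth moment, expand $\E[Z^4] = \sum_{i,j,k,\ell}\E[X_iX_jX_kX_\ell]$; by independence and mean-zero, the only surviving terms are those where indices are matched into pairs, giving $\E[Z^4] = \sum_i \E[X_i^4] + 3\sum_{i\ne j}\E[X_i^2]\E[X_j^2] = \sum_i 2p_i + 3\sum_{i\ne j}4p_ip_j \le 2S + 12S^2$. Hence
\begin{equation*}
f_n(p_1,\dots,p_n) \;\ge\; \frac{(2S)^{3/2}}{(2S+12S^2)^{1/2}} \;=\; \frac{(2S)^{3/2}}{(2S)^{1/2}(1+6S)^{1/2}} \;=\; \frac{2S}{\sqrt{6S+1}},
\end{equation*}
which is exactly the claimed bound. (One should handle the trivial case $S=0$ separately, where both sides are $0$.)

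I do not expect a serious obstacle here; the only points requiring a little care are: verifying the fourth-moment expansion combinatorially (the coefficient $3$ in front of the cross terms, and that $\E[X_i^4]=\E[X_i^2]=2p_i$ since $X_i\in\{-1,0,1\}$), and the algebraic simplification that collapses $(2S+12S^2)^{1/2}$ to $(2S)^{1/2}(1+6S)^{1/2}$. The inequality $\E|Z|\ge (\E[Z^2])^{3/2}/(\E[Z^4])^{1/2}$ itself is a one-line application of H\"older's inequality with exponents $3$ and $3/2$, so the whole argument is short once the moment computations are in place.
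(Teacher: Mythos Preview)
Your proposal is correct and is essentially the same as the paper's proof: both compute $\E[Z^2]=2S$ and $\E[Z^4]\le 2S+12S^2$, then invoke the H\"older inequality $(\E[Z^2])^3\le (\E|Z|)^2\,\E[Z^4]$ to conclude. The paper states the H\"older step in the equivalent form $\E[X^4]\cdot(\E|X|)^2\ge(\E[X^2])^3$, but otherwise the argument is identical.
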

\begin{proof}
Let $X_{1},\dots,X_{n}$ be independent symmetric $\{-1,0,1\}$ variables with $\PP[X_{i}=1]=p_{i}$ and let 
$X:=\sum_{i=1}^{n}X_{i}$.
We now compute 
\begin{equation}\label{eq:2nd_moment}
\E[X^{2}]=\sum_{i=1}^{n}\E[X_{i}^{2}]=2\sum_{i=1}^{n}p_{i}=2S.
\end{equation}
In addition, we have by independence
\begin{align}
\E[X^{4}]=&\E\left[\left(\sum_{i=1}^{n}X_{i}\right)^{4}\right]
=\sum_{i=1}^{n}\E[X_{i}^{4}]+3\sum_{i=1}^{n}\sum_{j \neq i}\E[X_{i}^{2}]\E[X_{j}^{2}]\nonumber\\
=&2\sum_{i=1}^{n}p_{i}+3\sum_{i=1}^{n}2p_{i}(2S-2p_{i})
\leq 2S+12S^{2}.\label{eq:4th_moment}
\end{align}
Furthermore, by H\"{o}lder's inequality we have
$$\E[X^{4}]\cdot \left(\E|X|\right)^{2}\geq \left(\E[X^{2}]\right)^{3}.$$
Combined with \eqref{eq:2nd_moment} and \eqref{eq:4th_moment}, it yields the desired conclusion.
\end{proof}

In \Cref{lem:fakeexpectation}, we used \Cref{lem:control_f} in the following form.
\begin{corollary}\label{cor:control_f}
Suppose $X_{1},\dots,X_{n}$ are independent Bernoulli random variables. For each $i\in [n]$ suppose $\PP[X_{i}]=p_{i}$ such that $\frac{1}{n}\leq p_{i}\leq 1-\frac{1}{n}$. Then for any random variable $Y$ independent with $(X_{i})_{i\in [n]}$, we have
$$\E\left|\sum_{i=1}^{n}X_{i}+Y\right|\geq \frac{1}{4}\sqrt{\sum_{i=1}^{n}\min\{p_{i},1-p_{i}\}}.$$
\end{corollary}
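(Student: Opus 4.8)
The plan is to reduce \Cref{cor:control_f} to \Cref{lem:Bern_to_sym} and \Cref{lem:control_f}, together with the elementary two-sided bound $\sqrt{6S+1}\leq C\sqrt{S}$ valid in the relevant range of $S$. First I would invoke \Cref{lem:Bern_to_sym}: let $Y_1,\dots,Y_n$ be independent symmetric $\{-1,0,1\}$ variables with $\PP[Y_i=1]=\min\{p_i,1-p_i\}$; then, since $Y$ is independent of $(X_i)$, conditioning on $Y=y$ gives
$$\E\left|\sum_{i=1}^{n}X_i+Y\right|=\E_Y\,\E\left|\sum_{i=1}^{n}X_i-(-Y)\right|\geq \inf_{x\in\mathbb{R}}\E\left|\sum_{i=1}^{n}X_i-x\right|\geq \frac{1}{2}\,\E\left|\sum_{i=1}^{n}Y_i\right|=\frac{1}{2}f_n\!\left(\min\{p_1,1-p_1\},\dots,\min\{p_n,1-p_n\}\right).$$

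Next I would apply \Cref{lem:control_f} with parameters $q_i:=\min\{p_i,1-p_i\}\in[0,1/2]$ and $S:=\sum_i q_i$, obtaining $f_n(q_1,\dots,q_n)\geq 2S/\sqrt{6S+1}$. It then remains to check that $\tfrac12\cdot 2S/\sqrt{6S+1}=S/\sqrt{6S+1}\geq \tfrac14\sqrt{S}$, i.e.\ that $S/(6S+1)\geq 1/16$, equivalently $16S\geq 6S+1$, i.e.\ $S\geq 1/10$. Here is where the hypothesis $p_i\geq 1/n$ (and $p_i\leq 1-1/n$) enters: it guarantees $q_i\geq 1/n$ for every $i$, hence $S\geq n\cdot(1/n)=1$, which certainly exceeds $1/10$. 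Assembling these inequalities yields $\E\left|\sum_i X_i+Y\right|\geq S/\sqrt{6S+1}\geq \tfrac14\sqrt{S}=\tfrac14\sqrt{\sum_i\min\{p_i,1-p_i\}}$, as desired.

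There is essentially no obstacle here: the corollary is a direct packaging of the two preceding lemmas, and the only thing to be careful about is that the symmetrization step in \Cref{lem:Bern_to_sym} is stated as an infimum over deterministic shifts $x$, so one must explicitly note that an independent random shift $-Y$ can only do worse than the best deterministic shift (by Jensen/averaging over the value of $Y$). The mild lower bound $p_i\geq 1/n$ is used solely to force $S\geq 1$ so that the clean constant $1/4$ comes out; the upper bound $p_i\leq 1-1/n$ is what makes $\min\{p_i,1-p_i\}=1-p_i$ a legitimate parameter bounded below, though it is not strictly needed for this particular inequality. One could even get a larger constant, but $1/4$ is all that is needed in \Cref{lem:fakeexpectation}.
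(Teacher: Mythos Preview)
Your proposal is correct and follows essentially the same route as the paper's proof: reduce the independent shift $Y$ to a deterministic infimum, apply \Cref{lem:Bern_to_sym} to pass to $f_n$, then apply \Cref{lem:control_f} together with $S\geq 1$ to extract the constant $\tfrac14$. The paper's arithmetic is the variant $2S/\sqrt{6S+1}\leq 2S/\sqrt{7S}\geq \tfrac12\sqrt{S}$, but this is the same computation. One small remark: your closing comment that the upper bound $p_i\leq 1-\tfrac1n$ ``is not strictly needed'' is not quite right---when $p_i>\tfrac12$ it is precisely that bound which forces $q_i=1-p_i\geq\tfrac1n$ and hence $S\geq 1$.
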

\begin{proof}
By independence and then using \Cref{lem:Bern_to_sym}, we have
\begin{equation}\label{eq:cor_control_f_1}
\E\left|\sum_{i=1}^{n}X_{i}+Y\right|\geq \inf_{x\in\mathbb{R}}\E\left|\sum_{i=1}^{n}X_{i}-x\right|\geq \frac{1}{2}f_{n}\Big(\min\{p_{1},1-p_{1}\},\dots,\min\{p_{n},1-p_{n}\}\Big).
\end{equation}
Let $S:=\min\{p_{1},1-p_{1}\}+\dots+\min\{p_{n},1-p_{n}\}$. Then $S \geq 1$. By \Cref{lem:control_f} we then have
\begin{equation}\label{eq:cor_control_f_2}
f_{n}\Big(\min\{p_{1},1-p_{1}\},\dots,\min\{p_{n},1-p_{n}\}\Big)\geq \frac{2S}{\sqrt{6S+1}}\geq \frac{2S}{\sqrt{7S}}\geq \frac{1}{2}\sqrt{S}.
\end{equation}
The conclusion then follows directly by combining \eqref{eq:cor_control_f_1} and \eqref{eq:cor_control_f_2}.
\end{proof}

While the absolute lower bound in \Cref{lem:control_f} is sufficient for \Cref{lem:fakeexpectation}, the analysis in \Cref{sec:gaussian} requires a more careful examination on how $f_{n}$ changes with its arguments $p_{1},\dots,p_{n}$. Our next goal is to establish \Cref{lem:compare_f}, a comparison result in this regard.

\begin{lemma}\label{lem:f_monotone}
The function $f_{n}$ is non-decreasing in each of its arguments.
\end{lemma}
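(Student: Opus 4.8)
The plan is to fix an index $k\in[n]$ and show that $f_n(p_1,\dots,p_n)$ is non-decreasing in $p_k$ with all other arguments held fixed. The natural approach is a coupling argument: write $X = \sum_{i\neq k} X_i$ for the (independent) sum of the other symmetric $\{-1,0,1\}$ variables, so that $f_n(p_1,\dots,p_n) = \E|X + X_k|$ where $X_k$ is an independent symmetric $\{-1,0,1\}$ variable with $\PP[X_k=1]=\PP[X_k=-1]=p_k$. Condition on $X$: it suffices to prove that for every fixed real value $x$, the function $p_k \mapsto \E|x + X_k|$ is non-decreasing on $[0,\tfrac12]$. Since $\E|x+X_k| = (1-2p_k)|x| + p_k|x+1| + p_k|x-1| = |x| + p_k\bigl(|x+1|+|x-1|-2|x|\bigr)$, the claim reduces to the elementary inequality $|x+1|+|x-1| \geq 2|x|$, which is just the triangle inequality (or convexity of $|\cdot|$). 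Averaging over $X$ then gives monotonicity in $p_k$, and since $k$ was arbitrary, the lemma follows.

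First I would set up the decomposition $X = \sum_{i\neq k}X_i$ and invoke independence to write $f_n = \E_X\bigl[\E|X+X_k| \,\big|\, X\bigr]$. Next I would compute $\E|x+X_k|$ explicitly as a linear (in fact affine) function of $p_k$ with slope $|x+1|+|x-1|-2|x|$, and observe this slope is non-negative by the triangle inequality applied to $(x+1)+(x-1) = 2x$, or equivalently $2|x| = |(x+1)+(x-1)| \le |x+1|+|x-1|$. Finally I would note that a non-negative-slope affine function is non-decreasing on $[0,\tfrac12]$, and that taking the (non-negative, finite) expectation over $X$ preserves this monotonicity, so $p_k \mapsto f_n(\dots)$ is non-decreasing; since the argument is symmetric in the choice of $k$, $f_n$ is non-decreasing in each argument.

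There is essentially no obstacle here: the only mild point to be careful about is that $X_k$ is symmetric $\{-1,0,1\}$ with \emph{two} parameters $\PP[X_k=\pm1]$ tied together, so that increasing $p_k$ simultaneously shifts mass from $0$ onto both $+1$ and $-1$ — this is exactly why the relevant increment is $|x+1|+|x-1|-2|x|$ rather than a one-sided difference, and why convexity/triangle inequality gives non-negativity. If one wanted an even softer proof, one could instead couple $X_k$ at parameter $p_k'$ to $X_k'$ at parameter $p_k > p_k'$ so that $X_k' = X_k$ unless $X_k=0$, in which case $X_k'$ becomes $\pm 1$ with equal probability; then $\E[|X+X_k'| \mid X, X_k] \ge |X + X_k|$ on the event $\{X_k = 0\}$ by the same triangle inequality and they agree otherwise. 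Either route is routine; I would present the affine-function computation as it is the shortest.
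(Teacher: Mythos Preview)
Your proof is correct and follows essentially the same route as the paper: condition on $X=\sum_{i\neq k}X_i$, write $\E|x+X_k|$ as an affine function of $p_k$, and verify the slope is non-negative. The only minor difference is that the paper exploits the fact that $X$ is integer-valued to compute the slope exactly as $2\,\PP[X=0]$ (since $|t+1|+|t-1|-2|t|$ equals $2$ for $t=0$ and $0$ for nonzero integers $t$), yielding the identity $f_n(p_1,\dots,p_n)=f_n(0,p_2,\dots,p_n)+2p_1\,\PP\bigl[\sum_{i\geq 2}X_i=0\bigr]$, whereas you bound the slope from below via the triangle inequality for arbitrary real $x$; either argument suffices for the lemma.
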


\begin{proof}
Since $f_{n}$ is a symmetric function, it suffices to show monotonicity in the first argument. Assume $X_{1},X_{2},\dots,X_{n}$ are independent symmetric $\{-1,0,1\}$ random variables, and let $p_{i}=\PP[X_{i}=1]$ for $i\in [n]$. For each integer $t$, note that
$$\E|X_{1}+t|-|t|=\begin{cases}
2p_{1} &\text{if }t=0,\\
0 &\text{if }t\neq 0.
\end{cases}$$
So by independence, we have
$$\E\left|\sum_{i=1}^{n}X_{i}\right|-\E\left|\sum_{i=2}^{n}X_{i}\right|=2p_{1}\cdot\PP\left[\sum_{i=2}^{n}X_{i}=0\right].$$
That is,
$$f_{n}(p_{1},p_{2},\dots,p_{n})=f_{n}(0,p_{2},\dots,p_{n})+2p_{1}\cdot\PP\left[\sum_{i=2}^{n}X_{i}=0\right],$$
from which we see that $f_{n}$ is non-decreasing in $p_{1}$.
\end{proof}

The following lemma is on a basic property of sums of independent symmetric variables, which will be useful in proving \Cref{lem:compare_f}.

\begin{lemma}\label{lem:control_h}
Suppose $p_{1},\dots,p_{n}\in[0, \frac{1}{4}]$ and let $X_{1},\dots,X_{n}$ be independent symmetric $\{-1,0,1\}$ variables with $\PP[X_{i}=1]=p_{i}$. For any integers $a_{1},\dots,a_{n}$ and any integer $x$, we have 
$$\PP\left[\sum_{i=1}^{n}a_{i}X_{i}=0\right]\geq \PP\left[\sum_{i=1}^{n}a_{i}X_{i}=x\right].$$ 
\end{lemma}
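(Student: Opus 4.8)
The statement to prove is \Cref{lem:control_h}: for independent symmetric $\{-1,0,1\}$ variables $X_i$ with $\PP[X_i=1]=p_i\le 1/4$, and any integers $a_i$ and $x$, we have $\PP[\sum a_i X_i = 0]\ge \PP[\sum a_i X_i = x]$. This is a ``mode at zero'' statement for a sum of independent symmetric lattice variables.

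My plan is to prove this by induction on $n$, or really by a one-variable-at-a-time argument using characteristic functions / Fourier analysis on the circle, since that is the cleanest way to handle the symmetry and the constraint $p_i\le 1/4$. Let me think about the Fourier approach. Write the probability generating function on the unit circle: the characteristic function of $a_i X_i$ is $\varphi_i(\theta) = \E[e^{i\theta a_i X_i}] = 1 - 2p_i + 2p_i\cos(a_i\theta) = 1 - 2p_i(1-\cos(a_i\theta))$. The key point is that since $p_i \le 1/4$, we have $\varphi_i(\theta) = 1 - 2p_i(1-\cos(a_i\theta)) \ge 1 - 4p_i \ge 0$ for all $\theta$ — so each $\varphi_i$ is a \emph{nonnegative} real-valued function. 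Hence $\varphi(\theta) := \prod_i \varphi_i(\theta) = \E[e^{i\theta\sum a_i X_i}]\ge 0$ everywhere. Then $\PP[\sum a_iX_i = x] = \frac{1}{2\pi}\int_{-\pi}^{\pi}\varphi(\theta)e^{-ix\theta}\,d\theta$, and since $\varphi\ge 0$ is real, $\PP[\sum a_i X_i = x] = \frac{1}{2\pi}\int_{-\pi}^{\pi}\varphi(\theta)\cos(x\theta)\,d\theta \le \frac{1}{2\pi}\int_{-\pi}^{\pi}\varphi(\theta)\,d\theta = \PP[\sum a_i X_i = 0]$. That's the whole proof. Actually the nonnegativity of $\varphi$ is exactly the crux, and it's precisely where $p_i\le 1/4$ enters.

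So the key steps, in order, are: (1) write down the characteristic function $\varphi_i(\theta) = 1 - 2p_i(1-\cos(a_i\theta))$ of $a_i X_i$ and observe that each is real (by symmetry of $X_i$) and nonnegative (since $1-\cos \le 2$ and $p_i \le 1/4$ give $\varphi_i \ge 1 - 4p_i \ge 0$); (2) conclude $\varphi = \prod \varphi_i \ge 0$ is a nonnegative real function by independence; (3) use the Fourier inversion formula $\PP[\sum a_i X_i = x] = \frac{1}{2\pi}\int_{-\pi}^{\pi} \varphi(\theta)\cos(x\theta)\,d\theta$ (the imaginary part vanishes since $\varphi$ is even/real); (4) bound $\cos(x\theta)\le 1$ and integrate to get the result. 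Alternatively, if one prefers to avoid Fourier analysis entirely, one can do it combinatorially/probabilistically: since $p_i\le 1/4$ one can couple each $X_i$ with a variable $X_i'$ which is a symmetrized version of a fair coin, i.e. write $X_i = \epsilon_i(B_i - B_i')$ appropriately... but this gets fiddly; the Fourier argument is cleanest and I'd go with it.

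The main obstacle — really the only subtle point — is getting the constant right: one needs $p_i \le 1/4$ (not just $\le 1/2$) so that $2p_i(1-\cos(a_i\theta))\le 4p_i \le 1$, making the characteristic function nonnegative. If $p_i$ were between $1/4$ and $1/2$ the characteristic function could go negative and the conclusion can genuinely fail (e.g. a single $X$ with $p$ close to $1/2$ and $a=1$: $\PP[X=0]=1-2p$ can be smaller than $\PP[X=1]=p$). So the hypothesis $p_i\in[0,1/4]$ is used essentially and must be invoked exactly at step (1). Everything else is routine Fourier inversion on $\mathbb{Z}$, and no auxiliary results beyond standard facts about characteristic functions of integer-valued random variables are needed.
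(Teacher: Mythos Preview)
Your proof is correct and clean. The key observation that $p_i\le 1/4$ makes each factor $\varphi_i(\theta)=1-2p_i(1-\cos(a_i\theta))\ge 1-4p_i\ge 0$ nonnegative, so that Fourier inversion combined with $\cos(x\theta)\le 1$ immediately gives the mode-at-zero inequality, is exactly right and uses the hypothesis sharply.

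The paper takes a genuinely different, purely probabilistic route. It observes that when $p_i\le 1/4$ one can write each $X_i$ as a difference $Y_i-Z_i$ of two i.i.d.\ Bernoulli variables with success probability $\tfrac{1-\sqrt{1-4p_i}}{2}$ (this is where $p_i\le 1/4$ enters for them). Then $\PP[\sum a_iX_i=x]=\sum_y\PP[\sum a_iY_i=y+x]\,\PP[\sum a_iZ_i=y]$, and an AM--GM (Cauchy--Schwarz) step together with the fact that $Y$ and $Z$ are identically distributed collapses this to $\PP[\sum a_iX_i=0]$. The two arguments are in fact dual manifestations of the same phenomenon: the decomposition $X_i=Y_i-Z_i$ with $Y_i,Z_i$ i.i.d.\ is equivalent, via characteristic functions, to $\varphi_{X_i}=|\varphi_{Y_i}|^2\ge 0$. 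Your Fourier argument is shorter and avoids the clever coupling; the paper's argument stays entirely within elementary probability and avoids any analytic machinery. Either is perfectly acceptable here.
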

\begin{proof}
Let $Y_{1},\dots,Y_{n},Z_{1},\dots,Z_{n}$ be independent Bernoulli variables with 
$$\PP[Y_{i}=1]=\PP[Z_{i}=1]=\frac{1-\sqrt{1-4p_{i}}}{2}.$$
Then the variables $X_{i}$ and $Y_{i}-Z_{i}$ are identically distributed, for each $i\in [n]$. So for any integer $x$, we have
\begin{align*}
&\quad\PP\left[\sum_{i=1}^{n}a_{i}X_{i}=x\right]=\PP\left[\sum_{i=1}^{n}a_{i}Y_{i}=\sum_{i=1}^{n}a_{i}Z_{i}+x\right]\\
&=\sum_{y\in\mathbb{Z}}\PP\left[\sum_{i=1}^{n}a_{i}Y_{i}=y+x\right]\cdot\PP\left[\sum_{i=1}^{n}a_{i}Z_{i}=y\right]&(\text{by independence})\\
&\leq \frac{1}{2}\sum_{y\in \mathbb{Z}}\PP\left[\sum_{i=1}^{n}a_{i}Y_{i}=y+x\right]^{2}+\frac{1}{2}\sum_{y\in \mathbb{Z}}\PP\left[\sum_{i=1}^{n}a_{i}Z_{i}=y\right]^{2} &(\text{by AM-GM inequality})\\
&=\sum_{y\in\mathbb{Z}}\PP\left[\sum_{i=1}^{n}a_{i}Y_{i}=y\right]\cdot \PP\left[\sum_{i=1}^{n}a_{i}Z_{i}=y\right] &\left(Y_{i}\text{ and }Z_{i}\text{ are identically distributed}\right)\\
&=\PP\left[\sum_{i=1}^{n}a_{i}(Y_{i}-Z_{i})=0\right]=\PP\left[\sum_{i=1}^{n}a_{i}X_{i}=0\right].&\qedhere
\end{align*}
\end{proof}

\begin{lemma}\label{lem:compare_f}
Suppose $p_{1},\dots,p_{n}\in[0,\frac{1}{16}]$. Then
$$f_{n}(4p_{1},\dots,4p_{n})\geq 2f_{n}(p_{1},\dots,p_{n}).$$
\end{lemma}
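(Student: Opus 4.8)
The plan is to prove Lemma~\ref{lem:compare_f} by a coupling-and-conditioning argument, reducing the comparison between $f_n(4p_1,\dots,4p_n)$ and $f_n(p_1,\dots,p_n)$ to a pointwise inequality on an integer random walk. First I would represent a symmetric $\{-1,0,1\}$ variable $X_i$ with parameter $4p_i$ (note $4p_i\le 1/4$, so this is legitimate) as $X_i = \xi_i\eta_i$, where $\xi_i$ is Bernoulli with $\PP[\xi_i=1]=8p_i$ — this needs $8p_i\le 1/2$, i.e.\ $p_i\le 1/16$, which is exactly the hypothesis — and $\eta_i$ is a uniform $\pm 1$ sign, all mutually independent. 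Similarly, a symmetric $\{-1,0,1\}$ variable with parameter $p_i$ can be written as $X_i'=\xi_i'\eta_i$ with $\PP[\xi_i'=1]=2p_i$ and the \emph{same} sign $\eta_i$. The point of sharing the signs is that, conditionally on the signs $(\eta_i)_{i\in[n]}$, both $\sum_i X_i$ and $\sum_i X_i'$ are sums of independent (signed) Bernoulli variables, and I want to compare their $L^1$-norms conditionally and then take expectations.

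The key step is the conditional comparison. Fix the signs $\eta=(\eta_i)$; write $a_i=\eta_i\in\{-1,1\}$. Then I need
$$\E\Big[\Big|\sum_{i=1}^n a_i\xi_i\Big|\ \Big|\ \eta\Big]\ \ge\ 2\,\E\Big[\Big|\sum_{i=1}^n a_i\xi_i'\Big|\ \Big|\ \eta\Big],$$
where $\xi_i\sim\mathrm{Bern}(8p_i)$ and $\xi_i'\sim\mathrm{Bern}(2p_i)$. Here the natural device is a ``thinning'' coupling: realize $\xi_i$ first, and obtain $\xi_i'$ by independently keeping each $1$ with probability $1/4$ (so $\PP[\xi_i'=1\mid\xi_i=1]=1/4$, $\PP[\xi_i'=1\mid\xi_i=0]=0$), so $\xi_i'\le\xi_i$ coordinatewise. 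Intuitively, going from $\xi'$ to $\xi$ adds roughly four times as many $\pm1$ steps, and since each added step is an independent symmetric increment, it should only increase the expected absolute value. To make the factor of $2$ rigorous I would condition on $\xi'=(\xi_1',\dots,\xi_n')$ and on the positions of the ``extra'' ones, and use the following two facts: (i) $\E|W+U|\ge\E|W|$ whenever $U$ is an independent symmetric increment (since $\E|W+U|\ge |\E[W+U\mid U]|$ averaged, or more simply $t\mapsto\E|W+t|$ is convex and even in its symmetrization); and (ii) a quantitative lower bound — essentially that adding $m$ extra independent symmetric $\pm 1$ steps to a symmetric walk with $k$ steps produces expected absolute value at least $\sqrt{(k+m)/(k+1)}$ times the original, via the second-moment / Hölder computation already used in Lemma~\ref{lem:control_f} applied conditionally. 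Since in expectation $\E[\#\text{ones in }\xi] = 4\,\E[\#\text{ones in }\xi']$, a Jensen-type argument on the ratio $\sqrt{(k+m)/(k+1)}$, combined with Lemma~\ref{lem:control_h} to control the ``return probability'' terms, should yield the clean factor $2$.

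The main obstacle I anticipate is precisely turning the heuristic ``four times as many steps $\Rightarrow$ twice the $L^1$-norm'' into an exact inequality with constant $2$ rather than $2-o(1)$; the subtlety is that $\E|\sum a_i\xi_i|$ is not exactly $\sqrt{\E[\#\text{ones}]}$ up to a constant uniformly, so one cannot just compare square roots of means blindly. A cleaner route that avoids this, and which I would pursue first, is to iterate Lemma~\ref{lem:f_monotone}-style identities: using the recursion from the proof of Lemma~\ref{lem:f_monotone}, $f_n(q_1,\dots,q_n)=f_{n-1}(0,q_2,\dots,q_n)+2q_1\PP[\sum_{i\ge2}X_i=0]$, one can try to prove $f_n(4p_1,\dots,4p_n)-2f_n(p_1,\dots,p_n)\ge 0$ by peeling coordinates one at a time and invoking Lemma~\ref{lem:control_h} (the fact that the return probability $\PP[\sum a_iX_i=0]$ dominates all other atoms, valid since $4p_i\le 1/4$) together with Lemma~\ref{lem:f_monotone} to bound the cross terms; the anticoncentration bound $\PP[\sum X_i=0]\le 1$ trivially and the monotonicity keep the leftover terms nonnegative. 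I would expect the thinning coupling to give the conceptual picture and the coordinate-peeling recursion with Lemmas~\ref{lem:f_monotone} and~\ref{lem:control_h} to supply the actual inequality with the exact constant.
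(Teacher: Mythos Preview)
Both branches of your plan have concrete obstacles, and neither recovers the paper's key device.

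For the conditioning-on-signs approach, the conditional inequality you write down is false for some sign configurations. Take $p_i\equiv 1/16$ so $\xi_i\sim\mathrm{Bern}(1/2)$ and $\xi_i'\sim\mathrm{Bern}(1/8)$, and take $2m$ variables with signs $a_1=\cdots=a_m=+1$, $a_{m+1}=\cdots=a_{2m}=-1$. Then $\sum a_i\xi_i=S_1-S_2$ with $S_1,S_2\sim\mathrm{Bin}(m,1/2)$ and $\sum a_i\xi_i'=S_1'-S_2'$ with $S_1',S_2'\sim\mathrm{Bin}(m,1/8)$; by the CLT, $\E|S_1-S_2|\sim\sqrt{m/\pi}$ while $2\,\E|S_1'-S_2'|\sim\sqrt{7m/(4\pi)}$, and the ratio tends to $\sqrt{4/7}<1$. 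So you cannot prove the inequality sign-pattern by sign-pattern; averaging over signs is essential, and your thinning coupling does not exploit it.

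For the coordinate-peeling approach via the Lemma~\ref{lem:f_monotone} recursion, the induction does not close. Peeling the first coordinate from each side gives
\[
f_n(4p_1,\dots,4p_n)-2f_n(p_1,\dots,p_n)=\bigl(f_{n-1}(4p_2,\dots)-2f_{n-1}(p_2,\dots)\bigr)+4p_1\bigl(2\PP_X[\textstyle\sum_{i\ge2}X_i=0]-\PP_Y[\textstyle\sum_{i\ge2}Y_i=0]\bigr),
\]
where $X_i$ have parameter $4p_i$ and $Y_i$ parameter $p_i$. The second bracket requires $2\PP_X[\cdot=0]\ge\PP_Y[\cdot=0]$, but this fails already for $n-1=2$ with $p_2=p_3=1/16$: one computes $\PP_X[X_2+X_3=0]=3/8$ and $\PP_Y[Y_2+Y_3=0]=99/128$, and $2\cdot 3/8=96/128<99/128$. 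The induction step therefore loses, and you have not said how to carry slack from earlier levels to compensate; Lemma~\ref{lem:control_h} does not help here because the two return probabilities live in different distributions.

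The paper's proof avoids both difficulties by a single clean trick: it runs a hybrid interpolation, replacing $2Y_k$ by $X_k$ one coordinate at a time inside $\E\bigl|\sum_{i<k}X_i+\,\cdot\,+\sum_{i>k}2Y_i\bigr|$, where $Y_k$ is symmetric $\{-1,0,1\}$ with parameter $p_k$ and $X_k$ with parameter $4p_k$. Comparing $X_k$ with $2Y_k\in\{-2,0,2\}$ (not with $Y_k$) gives the exact pointwise identity
\[
\E|X_k+t|-\E|2Y_k+t|=\begin{cases}4p_k,&t=0,\\-2p_k,&|t|=1,\\0,&|t|\ge2,\end{cases}
\]
so each swap changes the expectation by $4p_k\PP[T=0]-2p_k\PP[|T|=1]\ge0$, where $T$ is the rest of the sum; this last inequality is precisely Lemma~\ref{lem:control_h} applied to a \emph{single} mixed sum (so no cross-distribution comparison is needed). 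Telescoping yields $\E\bigl|\sum X_i\bigr|\ge\E\bigl|\sum 2Y_i\bigr|=2f_n(p_1,\dots,p_n)$, with the factor $2$ emerging only at the last step. The comparison object $2Y_k$ is exactly what your plan is missing.
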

\begin{proof}
Let $X_{1},\dots,X_{n},Y_{1},\dots,Y_{n}$ be independent symmetric $\{-1,0,1\}$ variables with $\PP[X_{i}=1]=4p_{i}$ and $\PP[Y_{i}=1]=p_{i}$ for each $i\in [n]$. For each $k\in [n]$, observe that for any $t\in \mathbb{Z}$ we have
$$\E|X_{k}+t|-\E|2Y_{k}+t|=\begin{cases}
0 &\text{if }|t|\geq 2,\\
-2p_{k} &\text{if }|t|=1,\\
4p_{k} &\text{if }t=0.
\end{cases}$$
So by independence between the variables, we have
\begin{align}
&\quad\E\left|\sum_{i=1}^{k-1}X_{i}+X_{k}+\sum_{i=k+1}^{n}2Y_{i}\right|-\E\left|\sum_{i=1}^{k-1}X_{i}+2Y_{k}+\sum_{i=k+1}^{n}2Y_{i}\right|\nonumber\\
&=-2p_{k}\cdot\PP\left[\left|\sum_{i=1}^{k-1}X_{i}+\sum_{i=k+1}^{n}2Y_{i}\right|=1\right]+4p_{k}\cdot\PP\left[\sum_{i=1}^{k-1}X_{i}+\sum_{i=k+1}^{n}2Y_{i}=0\right]\geq 0,\label{eq:compare_f}
\end{align}
using \Cref{lem:control_h} in the last step.
Summing over all $k\in [n]$ on both sides of \eqref{eq:compare_f} yields
$$\E\left|\sum_{i=1}^{n}X_{i}\right|\geq \E\left|\sum_{i=1}^{n}2Y_{i}\right|,$$
which is exactly the desired inequality.
\end{proof}

\section{Matching Gaussian Matrices}\label{sec:gaussian}

In this appendix, we provide a degree-profile based algorithm for matching correlated Gaussian matrices with arbitrary and inhomogeneous means. While mathematically our \Cref{thm:graphmatching} is self-contained from this appendix, we believe that the analysis on the Gaussian case serves as a helpful warm up and illustrates well some of the key intuitions in a simplified context. 

Let $(G,H)$ be a pair of random symmetric $n\times n$ matrices where each pair $(G_{ij},H_{ij})$ is sampled from the bivariate normal distribution
$$\mathcal{N}\left(\begin{bmatrix}\mu_{ij}\\ \mu_{ij}\end{bmatrix},
\begin{bmatrix}
1 & \rho_{ij}\\
\rho_{ij} & 1
\end{bmatrix}
\right)$$
independently, for $1\leq i\leq j\leq n$. Here, $\mu_{ij}$ and $\rho_{ij}$ are arbitrary and unknown real numbers, with the assumption that $\rho_{ij}$ is no less than a given threshold $\rho=1-O(\log^{-2}n)$. We want to find an efficient algorithm which succeeds to recover any unknown permutation $\pi \in \mathfrak S_n$ with high probability when the input is given as $(G_{ij})_{i,j\in [n]}$ and the permuted matrix $(H_{\pi^{-1}(i),\pi^{-1}(j)})_{i,j\in [n]}$.

We again base our algorithm on a balls-into-bins-type distance function $D:[n]\times[n]\rightarrow \mathbb{R}$. For each $i$, we directly use entries in the row $(G_{ij})_{j\in [n]}$ (or correspondingly the row $(H_{ij})_{j\in [n]}$) as ``balls'', as in \cite[Section 2]{DMWX21}; and our ``bins'' will be the intervals $I_{m}=[\frac{m}{L},\frac{m+1}{L})$, for $m\in\mathbb{Z}$. Here $L$ is a fixed positive integer. For each $i,j\in [n]$ we define the Bernoulli variables
$$A_{ij}(m)=\mathbbm{1}\{G_{ij}\in I_{m}\}\text{ and }B_{ij}(m)=\mathbbm{1}\{H_{ij}\in I_{m}\}.$$
We define the distance
\begin{equation}\label{eq:Cdef_of_D}
D(i,k)=\sum_{m\in\mathbb{Z}}\left|\sum_{j=1}^{n}A_{ij}(m)-\sum_{j=1}^{n}B_{kj}(m)\right|.
\end{equation}

We will show in this appendix that the same algorithm as \Cref{alg:graphmatching} but based on the distance $D(\cdot,\cdot)$ in \eqref{eq:Cdef_of_D} suffices for the Gaussian random matrices $(G,H)$. Since our algorithm is permutation-oblivious (see \Cref{prop:algoblivious}), we may assume without loss of generality that $\pi=\mathrm{id}$, where $\mathrm{id}$ is the identity permutation.

\begin{theorem}\label{thm:matrixmatching}
Suppose $L\geq 10^{5}\log n$ and $\rho\geq 1-10^{-8}L^{-2}$. Let $\mathcal{A}$ be the algorithm that outputs $\widehat{\pi}$ where $\widehat{\pi}(i)=\argmin_{k\in [n]}D(i,k)$ for $i \in [n]$ (we use the convention that in the case of multiple minimizers an arbitrary one is picked). Then we have
$$\PP\left[\mathcal{A}\left(\left(G_{ij}\right)_{i,j\in[n]},\;\left(H_{ij}\right)_{i,j\in[n]},\;L\right)=\mathrm{id}\right]\geq 1-O(n^{-1}).$$
\end{theorem}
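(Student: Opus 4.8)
The plan is to establish, in parallel with Lemmas \ref{lem:fakepair} and \ref{lem:truepair}, a ``fake pair'' lower bound and a ``true pair'' upper bound for $D(i,k)$, and then combine them with a union bound exactly as in the proof of Theorem \ref{thm:graphmatching}. Concretely, I will show that there are absolute constants so that for every pair $i\neq k$ one has $\PP[D(i,k)\le \tfrac{c\,n}{L}]\le O(n^{-3})$ while for every $i$ one has $\PP[D(i,i)\ge \tfrac{c'n}{L}]\le O(n^{-3})$ with $c'<c$, from which $\PP[\exists i\neq k:\ D(i,i)\ge D(i,k)]=O(n^{-1})$ follows. The Gaussian setting is genuinely simpler than the graph case because the rows $(G_{ij})_{j}$ and $(G_{kj})_{j}$ are \emph{already} independent of each other (no shared edges, no degree coupling), so no elaborate two-stage conditioning is needed: $A_{ij}(m)$ and $B_{kj}(m)$ are just functions of the independent pairs $(G_{ij},H_{ij})$.

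For the \textbf{fake pair} bound, fix $i\neq k$ and condition on nothing. Here $D(i,k)=\sum_{m}|\sum_j A_{ij}(m)-\sum_j B_{kj}(m)|$ is a function of the $2n$ independent variables $(G_{ij})_j$, $(H_{kj})_j$; changing one $G_{ij}$ (resp.\ $H_{kj}$) moves $G_{ij}$ between at most two bins, hence changes $D(i,k)$ by at most $2$ (resp.\ $2$). McDiarmid's inequality (\Cref{thm:Mcdiarmid}) with $2n$ coordinates of bounded difference $2$ gives concentration at scale $\sqrt{n\log n}$, which is $o(n/L)$ since $L=O(\log n)$. For the expectation lower bound, integrate over $t\in[0,\tfrac1L)$ the ``shifted'' distance, or more directly: write $\sum_j A_{ij}(m)-\sum_j B_{kj}(m)=\sum_j\big(A_{ij}(m)-B_{kj}(m)\big)$; the summands are independent with mean zero (since $G_{ij},H_{kj}$ are i.i.d.) and each has variance $\PP[G_{ij}\in I_m](1-\PP[G_{ij}\in I_m])+\PP[H_{kj}\in I_m](1-\PP[H_{kj}\in I_m])$. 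Summing the per-bin $L^1$-norms and using that $\sum_m \PP[\mathcal N(\mu_{ij},1)\in I_m](1-\cdots)\gtrsim 1/L$ uniformly in $\mu_{ij}$ (the Gaussian density is bounded away from $0$ on an $O(1)$-window containing $\gtrsim L$ bins, giving each bin probability $\asymp 1/L$), a second-moment/Khintchine argument (as in \Cref{cor:control_f}) yields $\E[|\sum_j(A_{ij}(m)-B_{kj}(m))|]\gtrsim \sqrt{n/L}$ per bin, hence $\E[D(i,k)]\gtrsim \sqrt{n L}\gg n/L$ once $L\ll n$; more carefully one gets $\gtrsim n/L$ after accounting for cancellation, which is all that is needed.

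For the \textbf{true pair} bound, fix $i$ and use that $(G_{ij}-H_{ij})$ has variance $2(1-\rho_{ij})\le 2(1-\rho)=O(L^{-2})$. By the Cauchy--Schwarz reduction of \Cref{lem:turntrueexptoT}, $\E[D(i,i)^2]\le \sum_m \E[(\sum_j(A_{ij}(m)-B_{ij}(m)))^2]$. Since the pairs $(G_{ij},H_{ij})$ are independent across $j$ and $A_{ij}(m)-B_{ij}(m)$ has mean zero, the cross terms vanish and $\E[(\sum_j(A_{ij}(m)-B_{ij}(m)))^2]=\sum_j \PP[A_{ij}(m)\neq B_{ij}(m)]$; summing over $m$, $\E[D(i,i)^2]\le \sum_j \sum_m \PP[G_{ij},H_{ij}\text{ in different bins}]$. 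Now $\sum_m\PP[\lfloor LG_{ij}\rfloor\neq\lfloor LH_{ij}\rfloor]\le 2L\cdot\E|G_{ij}-H_{ij}|+O(L\cdot\text{(boundary terms)})\lesssim L\sqrt{1-\rho}\lesssim 1/L$ for a typical bin pair, but more robustly one bounds it by $\min\{2,\ c L\,\E|G_{ij}-H_{ij}|\}$ summed appropriately to get $\sum_m\PP[\cdots]\lesssim L\,\E|G_{ij}-H_{ij}|\lesssim L\sqrt{1-\rho}\le L\cdot 10^{-4}L^{-1}=10^{-4}$, hence $\E[D(i,i)^2]\lesssim n\cdot 10^{-4}$ and $\E[D(i,i)]\lesssim \sqrt n/10^2$; comparing with the fake-pair scale $n/L\gg\sqrt n$ this is already tiny, and McDiarmid (same bounded-difference constant $2$, now in the $2n$ variables or $n$ bundled pairs) controls the deviation at scale $\sqrt{n\log n}$.

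\textbf{Main obstacle.} The delicate point is getting the per-bin probability estimates \emph{uniform in the unknown means} $\mu_{ij}$: one must argue that for a Gaussian of variance $1$ and arbitrary mean, discretized into bins of width $1/L$, (i) the total ``boundary-crossing'' mass $\sum_m\PP[\text{$G_{ij}$ and $H_{ij}$ land in different bins}]$ is $O(L\sqrt{1-\rho}) = O(1/L)$ — this needs the joint Gaussian density, a short computation integrating over the $\le 2L$ bin-boundaries near the mean, each boundary crossed with probability $O(\sqrt{1-\rho})$ — and (ii) that no single bin collects too much or too little mass, so the fake-pair variance is genuinely $\asymp n/L$ and not degenerate. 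Unlike the graph case there is no need for the variance-stabilizing square root (variances are already $1$), no Berry--Esseen, and no ``alias world'': the Gaussian density is explicit, so these estimates are direct. After these uniform density estimates, everything else is the same McDiarmid-plus-union-bound skeleton as in Section \ref{subsec:preanalysis}, choosing $L=10^5\log n$ and $\rho\ge 1-10^{-8}L^{-2}$ exactly so that the true-pair scale stays below the fake-pair scale.
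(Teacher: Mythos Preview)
Your skeleton (McDiarmid concentration, expectation gap, union bound) is right, but the true-pair expectation step has a genuine gap. The Cauchy--Schwarz you invoke, ``$\E[D(i,i)^2]\le \sum_m \E\big[(\sum_j(A_{ij}(m)-B_{ij}(m)))^2\big]$'', is false: \Cref{lem:turntrueexptoT} works only because the index set $[0,1)$ there has total measure $1$ (this is exactly why the graph algorithm uses cyclic intervals), whereas here $m$ ranges over $\mathbb Z$ and in fact $(\sum_m|W_m|)^2\ge\sum_m W_m^2$, the reverse inequality. Even if you patch this to the per-bin bound $\E[D(i,i)]\le\sum_m\sqrt{\sum_j\PP[A_{ij}(m)\neq B_{ij}(m)]}$ and use only your summed estimate $\sum_m\PP[\text{diff bins}]\lesssim L\sqrt{1-\rho}$, the resulting upper bound can overshoot the fake-pair lower bound: when the means $\mu_{ij}$ are spread out so that each bin sees only $O(1)$ of them, this Cauchy--Schwarz upper bound on $\E[D(i,i)]$ is of order $10^{-2}n\sqrt{L}$ while $\E[D(i,k)]$ is only of order $n$. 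Separately, your scale bookkeeping slips: $\sqrt{nL}\ll n/L$ for $L\asymp\log n$, not $\gg$, so the proposed threshold $cn/L$ is unattainable on the fake-pair side.

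The paper avoids both problems by not bounding the two expectations separately. It writes both $\E[D(i,k)]$ (from below, via \Cref{lem:Bern_to_sym}) and $\E[D'(i,i)]$ (exactly) through the same function $f_n$ of \Cref{def:Bfunction}, and then compares them bin by bin: the per-bin Gaussian estimate $\PP[A_{ij}(m)=1,B_{ij}(m)=0]\le\tfrac{1}{100}\PP[A_{ij}(m)=1]$ (\Cref{lem:CGaussian}\ref{lem:Gaussian2}), which is strictly stronger than your summed estimate, combines with the scaling lemma $f_n(4p_1,\dots,4p_n)\ge 2f_n(p_1,\dots,p_n)$ (\Cref{lem:compare_f}) and monotonicity (\Cref{lem:f_monotone}) to yield $\E[D'(i,i)]\le\tfrac12\E[D(i,k)]$ uniformly in the unknown means. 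Two smaller points you also miss: a truncation to finitely many ``good'' bins is needed (Definition~\ref{def:good} and \Cref{lem:Ctrunc}), and the rows $(G_{ij})_j$ and $(H_{kj})_j$ are not fully independent because $G_{ik}$ correlates with $H_{ki}=H_{ik}$; the paper absorbs this into an $O(1)$ additive error.
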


\Cref{thm:matrixmatching} follows immediately from the next lemma and a simple union bound.

\begin{lemma}\label{lem:matrixmatching}
Suppose $L\geq 10^{5}\log n$ and $\rho\geq 1-10^{-8}L^{-2}$. Let $i,k \in [n]$ be a pair of distinct indices. We have 
$$\PP[D(i,k)<D(i,i)] \leq O(n^{-3}).$$
\end{lemma}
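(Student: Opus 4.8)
The plan is to compare the expectations $\E[D(i,i)]$ and $\E[D(i,k)]$ and then upgrade to the high-probability statement via McDiarmid's inequality (\Cref{thm:Mcdiarmid}). Indeed, $D(i,i)$ is a function of the independent bivariate Gaussians $(G_{ij},H_{ij})_{j\in[n]}$ and $D(i,k)$ is a function of $(G_{ij})_{j\in[n]}$ together with $(H_{kj})_{j\in[n]}$; moving any single coordinate changes the bin into which one entry falls, and hence changes $A_{i\cdot}(\cdot)$ or $B_{\cdot}(\cdot)$ at at most two bins, so it changes $D$ by at most $4$. Therefore McDiarmid gives $|D(i,i)-\E D(i,i)|\le O(\sqrt{n\log n})$ and $|D(i,k)-\E D(i,k)|\le O(\sqrt{n\log n})$, each with probability $1-O(n^{-3})$. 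It then suffices to prove two expectation bounds, stated in terms of $\lambda_m:=\sum_{j\in[n]}\PP[G_{ij}\in I_m]$ and the concave, subadditive, increasing function $\psi(x):=x/\sqrt{6x+1}$: (i) $\E[D(i,k)]\ge\sum_{m\in\mathbb Z}\psi(\lambda_m)-O(1)$, and (ii) $\E[D(i,i)]\le C\sqrt{(1-\rho)L^{2}}\,\sum_{m\in\mathbb Z}\psi(\lambda_m)+o(1)$ for an absolute constant $C$.

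For (i), fix a bin $m$, set $u(m)=\sum_{j}A_{ij}(m)$ and $v(m)=\sum_{j}B_{kj}(m)$, and drop the terms $j=k$ from $u(m)$ and $j=i$ from $v(m)$ (this perturbs $D(i,k)$ by at most $2$ overall) so that $u(m)$ and $v(m)$ become independent. Writing $u(m)-v(m)=\sum_j A_{ij}(m)+\sum_j\bigl(1-B_{kj}(m)\bigr)-n$ exhibits it as a translate of a sum of $2n$ independent Bernoulli variables whose parameters are all at most $1/L\le 1/2$. By \Cref{lem:Bern_to_sym} we get $\E|u(m)-v(m)|\ge\inf_x\E\bigl|\sum(\text{Bern})-x\bigr|\ge\tfrac12 f_{2n}(\cdot)$ with parameters $\lambda_m$- and $\nu_m$-summing (here $\nu_m=\sum_j\PP[H_{kj}\in I_m]$), and \Cref{lem:control_f} then yields $\E|u(m)-v(m)|\ge\psi(\lambda_m+\nu_m)\ge\psi(\lambda_m)$. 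Summing over $m$ gives (i), and the crude bound $\sum_m\psi(\lambda_m)\gtrsim\sqrt{nL}$ follows from $\sum_m\lambda_m=n$ and $\lambda_m\le (L\sqrt{2\pi})^{-1}n$ together with $\psi(\lambda)\ge\lambda/\sqrt{7\max(\lambda,1)}$.

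Bound (ii) is the crux. Since $H_{ij}-G_{ij}$ has standard deviation $\sqrt{2(1-\rho_{ij})}=O(10^{-4}/L)\ll 1/L$, with overwhelming probability every pair $(G_{ij},H_{ij})$ lies in the same bin or in two adjacent bins; and each $\xi^{(m)}_j:=A_{ij}(m)-B_{ij}(m)$ is a symmetric $\{-1,0,1\}$ variable (by exchangeability of $(G_{ij},H_{ij})$) supported on at most two bins and summing to $0$ over $m$. Grouping the $j$'s by the bin boundary $\beta/L$ that they straddle, one obtains $D(i,i)=\sum_m|N_m-N_{m+1}|\le 2\sum_\beta|N_\beta|$, where $N_\beta=\sum_{j}s_j\mathbbm 1\{j\text{ straddles }\beta/L\}$ with the directions $s_j$ iid uniform on $\{-1,1\}$ conditionally on the random partition. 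Hence $\E|N_\beta|\le\min\{\kappa_\beta,\sqrt{\kappa_\beta}\}$ with $\kappa_\beta:=\E\#\{j:\,j\text{ straddles }\beta/L\}=\sum_j\PP[G_{ij}<\beta/L<H_{ij}\ \text{or}\ H_{ij}<\beta/L<G_{ij}]$. Using the explicit Gaussian density one shows $\PP[j\text{ straddles }\beta/L]\le\bar\theta\,\PP[G_{ij}\in I_{\beta-1}\cup I_\beta]$ with $\bar\theta=O(\sqrt{1-\rho}\,L)=O(10^{-4})$, valid whenever $|\mu_{ij}-\beta/L|=O(\sqrt{\log n})$ (the far-away pairs contributing $o(1)$ in total across all $\beta$), so that $\kappa_\beta\le\bar\theta(\lambda_{\beta-1}+\lambda_\beta)$. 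A short computation from this, using monotonicity and subadditivity of $\psi$ and case-splitting on whether $\kappa_\beta\le\bar\theta$ or $\kappa_\beta>\bar\theta$, gives $\min\{\kappa_\beta,\sqrt{\kappa_\beta}\}\le C'\sqrt{\bar\theta}\,\bigl[\psi(\lambda_{\beta-1})+\psi(\lambda_\beta)\bigr]$ for every $\beta$; summing over $\beta$ yields $\E[D(i,i)]\le C''\sqrt{\bar\theta}\,\sum_m\psi(\lambda_m)+o(1)$, which is (ii) with $\sqrt{(1-\rho)L^2}$ replaced by the comparable quantity $\sqrt{\bar\theta}$.

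Finally, since $(1-\rho)L^2\le10^{-8}$, the factor $C\sqrt{(1-\rho)L^2}$ in (ii) is well below $1$, and since $L\ge10^5\log n$ forces $\sum_m\psi(\lambda_m)\gtrsim\sqrt{nL}\gg\sqrt{n\log n}$, both the $o(1)$ errors and the McDiarmid fluctuations of order $\sqrt{n\log n}$ are negligible next to $\E[D(i,k)]$; combining (i), (ii) and the two concentration bounds gives $D(i,i)<D(i,k)$ with probability $1-O(n^{-3})$. The main obstacle is clearly step (ii): one must control $\E[D(i,i)]$ uniformly over completely arbitrary, possibly very spread-out means $\mu_{ij}$ and over all relations between $L$ and $n$, which is why the argument is routed through the boundary-crossing counts $\kappa_\beta$ and the ``expected signature'' numbers $\lambda_m$ rather than through a bin-by-bin estimate — the disparate regimes (a few heavily occupied bins versus many lightly occupied ones) are reconciled by comparing everything against the single functional $\sum_m\psi(\lambda_m)$, which also governs the lower bound (i). A secondary technicality is discarding the negligible events where some $G_{ij}$ or $H_{ij}$ is far out in its Gaussian tail or where $(G_{ij},H_{ij})$ lands in non-adjacent bins.
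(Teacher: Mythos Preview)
Your argument is essentially correct and takes a genuinely different route from the paper for the true-pair upper bound. Both proofs obtain the fake-pair lower bound via \Cref{lem:Bern_to_sym} and \Cref{lem:control_f}, and both use McDiarmid for concentration. The divergence is in bounding $\E[D(i,i)]$. The paper writes $\E[D'(i,i)]=\sum_m f_n(q_{1m},\dots,q_{nm})$ with $q_{jm}=\PP[A_{ij}(m)=1,B_{ij}(m)=0]$, uses the Gaussian estimate $100\,q_{jm}\le p_{jm}$, and then invokes the comparison lemma $f_n(4p_1,\dots,4p_n)\ge 2f_n(p_1,\dots,p_n)$ (\Cref{lem:compare_f}) twice to dominate $\E[D'(i,i)]$ by $\tfrac14\sum_m f_n(p_{\cdot m})$. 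Your boundary-crossing decomposition bypasses \Cref{lem:compare_f} entirely: you collapse the vector $p_{\cdot m}$ to the scalar $\lambda_m=\sum_j p_{jm}$, bound $\E[D(i,i)]$ by $2\sum_\beta\E|N_\beta|$, and control each $\E|N_\beta|$ through $\kappa_\beta\le\bar\theta(\lambda_{\beta-1}+\lambda_\beta)$. This is more elementary and yields a sharper prefactor $O(\sqrt{\bar\theta})$ rather than $1/2$, at the cost of being more tied to the one-dimensional Gaussian geometry (the ``adjacent-bin'' reduction). The paper's vector approach via $f_n$ is what later generalizes to the graph setting.

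Two small points to clean up. First, you silently treat $(G_{ij})_j$ and $(H_{kj})_j$ as independent in the concentration step, but $G_{ik}$ and $H_{ki}$ are the same (correlated) edge; as the paper notes, decoupling them perturbs $D(i,k)$ by $O(1)$, so this should be stated. Second, your ``crude bound'' $\sum_m\psi(\lambda_m)\gtrsim\sqrt{nL}$ is false when $L>n$, since $\sum_m\psi(\lambda_m)\le\sum_m\lambda_m=n$; the correct statement is $\sum_m\psi(\lambda_m)\gtrsim\min(\sqrt{nL},\,n)$, obtained by the chord inequality $\psi(\lambda)\ge(\lambda/\Lambda)\psi(\Lambda)$ with $\Lambda=n/(L\sqrt{2\pi})$. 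Both regimes still dominate $\sqrt{n\log n}$, so your conclusion is unaffected. Also, the case split you describe for $\min\{\kappa_\beta,\sqrt{\kappa_\beta}\}\le C'\sqrt{\bar\theta}\,[\psi(\lambda_{\beta-1})+\psi(\lambda_\beta)]$ should really be on whether $\lambda:=\lambda_{\beta-1}+\lambda_\beta$ is $\ge 1$ (use $\sqrt{\kappa_\beta}\le\sqrt{\bar\theta\lambda}$ and $\sqrt{\lambda}\le\sqrt7\,\psi(\lambda)$) or $<1$ (use $\kappa_\beta\le\bar\theta\lambda\le\sqrt7\,\bar\theta\,\psi(\lambda)$), not on $\kappa_\beta$ versus $\bar\theta$.
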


The remaining of this appendix is devoted to proving \Cref{lem:matrixmatching}. We fix the indices $i$ and $k$ throughout. The statement of \Cref{lem:matrixmatching} concerns only three rows $(G_{ij})_{j\in [n]}$, $(H_{ij})_{j\in [n]}$ and $(H_{kj})_{j\in [n]}$ of the matrices. The row $(G_{ij})_{j\in [n]}$ is entry-wise correlated with the row $(H_{ij})_{j\in [n]}$ but is independent with the row $(H_{kj})_{j\in [n]}$, except that there is dependence between $G_{ik}$ and $H_{ki}$. As in \cite[Section 2]{DMWX21}, we may assume $G_{ik}$ and $H_{ki}$ are also independent, since by the triangle inequality this changes $D(i,k)-D(i,i)$ by at most an additive constant. In the following, we will show
\begin{equation}\label{eq:matrixmatching}
\PP\left[D(i,k)- D(i,i)\leq 2\sqrt{n\log n}\right]\leq O(n^{-3})
\end{equation}
under the assumption that $G_{ik}$ and $H_{ki}$ are also independent, which implies \Cref{lem:matrixmatching}.

\begin{definition}\label{def:good}
A pair $(j,m) \in [n]\times \mathbb{Z}$ is said to be ``good'' if $I_{m}=\left[\frac{m}{L},\frac{m+1}{L}\right)$ satisfies 
$$\mu_{ij}-\log^{2/3}n\leq \frac{m}{L}<\frac{m+1}{L}\leq \mu_{ij}+\log^{2/3}n.$$    
\end{definition}

\begin{definition}[Truncation]
We define the following approximation for $D(i, i)$: 
\begin{equation}\label{eq:Cdef_of_Dprime}
D'(i,i):=\sum_{m \in \mathbb{Z}}\left|\sum_{j:(j,m) \text{ good pair}}(A_{ij}(m)-B_{ij}(m))\right|.
\end{equation}
\end{definition}

\begin{lemma}\label{lem:Ctrunc}
We have
$$\PP\left[D'(i,i) \neq D(i,i)\right]=\exp\left(-\Omega\left(\log^{4/3}n\right)\right).$$
\end{lemma}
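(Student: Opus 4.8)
\textbf{Proof proposal for Lemma~\ref{lem:Ctrunc}.}

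The plan is to show that with overwhelming probability, every Gaussian entry $G_{ij}$ and $H_{ij}$ (for $j\in[n]$) lands within $\log^{2/3}n$ of its mean $\mu_{ij}$, and that on this event $D'(i,i)$ and $D(i,i)$ coincide term-by-term. First I would observe that $D(i,i)$ and $D'(i,i)$ differ only in that $D'(i,i)$ restricts, within each bin $I_m$, the sum $\sum_j (A_{ij}(m)-B_{ij}(m))$ to those $j$ for which $(j,m)$ is a good pair. So if we let $\mathcal E$ be the event
\[
\mathcal E:=\bigcap_{j=1}^{n}\Big\{|G_{ij}-\mu_{ij}|<\log^{2/3}n\Big\}\cap\bigcap_{j=1}^{n}\Big\{|H_{ij}-\mu_{ij}|<\log^{2/3}n\Big\},
\]
then on $\mathcal E$, for every $j$ the entry $G_{ij}$ lies in some bin $I_m$ with $\mu_{ij}-\log^{2/3}n < \tfrac mL \le G_{ij} < \tfrac{m+1}{L}\le \mu_{ij}+\log^{2/3}n$ — here one uses $1/L<1$ so that the bin has length less than $1$, hence $(j,m)$ is a good pair whenever $A_{ij}(m)=1$; the same holds for $H_{ij}$ and $B_{ij}(m)$. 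Consequently every nonzero contribution $A_{ij}(m)$ or $B_{ij}(m)$ to $D(i,i)$ comes from a good pair $(j,m)$, so the inner sums defining $D(i,i)$ and $D'(i,i)$ agree for every $m$, giving $D(i,i)=D'(i,i)$ on $\mathcal E$. Therefore $\{D'(i,i)\ne D(i,i)\}\subseteq \mathcal E^c$.

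It then remains to bound $\PP(\mathcal E^c)$. Each $G_{ij}$ is a standard normal shifted by $\mu_{ij}$, so $\PP[|G_{ij}-\mu_{ij}|\ge \log^{2/3}n]\le 2\exp(-\tfrac12\log^{4/3}n)$ by the standard Gaussian tail bound, and similarly for $H_{ij}$. A union bound over the $2n$ entries gives
\[
\PP(\mathcal E^c)\le 4n\exp\!\left(-\tfrac12\log^{4/3}n\right)=\exp\!\left(-\Omega(\log^{4/3}n)\right),
\]
since the $\log^{4/3}n$ term dominates $\log n$. This yields the claimed bound $\PP[D'(i,i)\ne D(i,i)]=\exp(-\Omega(\log^{4/3}n))$.

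There is no real obstacle here — the lemma is a routine truncation estimate. The only point requiring the slightest care is checking that a bin $I_m$ witnessing $G_{ij}\in I_m$ under $\mathcal E$ actually satisfies \emph{both} inequalities in Definition~\ref{def:good} (the left endpoint $\ge \mu_{ij}-\log^{2/3}n$ and the right endpoint $\le\mu_{ij}+\log^{2/3}n$), which follows because $I_m$ has length $1/L<1<2\log^{2/3}n$ and one endpoint of $I_m$ already lies within $\log^{2/3}n$ of $\mu_{ij}$; so the whole bin lies within the interval of Definition~\ref{def:good}. I would also note that the exponent $\tfrac12\log^{4/3}n$ in the tail bound is what forces the slightly unusual rate $\exp(-\Omega(\log^{4/3}n))$ (rather than $\exp(-\Omega(\log^2 n))$), consistent with the choice of width $\log^{2/3}n$ in Definition~\ref{def:good}.
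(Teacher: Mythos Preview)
Your proof is correct and follows essentially the same route as the paper: both reduce the event $\{D'(i,i)\neq D(i,i)\}$ to some $G_{ij}$ or $H_{ij}$ deviating from its mean by about $\log^{2/3}n$, then apply the Gaussian tail bound and a union bound over $j\in[n]$. The only quibble is that your boundary justification is off by $1/L$ (on $\mathcal E$ the bin $I_m$ containing $G_{ij}$ is only guaranteed to lie in $[\mu_{ij}-\log^{2/3}n-\tfrac1L,\,\mu_{ij}+\log^{2/3}n+\tfrac1L]$, so your claim that ``one endpoint of $I_m$ already lies within $\log^{2/3}n$ of $\mu_{ij}$'' is not quite established), but this is harmless for the $\exp(-\Omega(\log^{4/3}n))$ conclusion—and the paper's own transition to $\PP[|G_{ij}-\mu_{ij}|\ge\log^{2/3}n]$ glosses over the same $1/L$ slack.
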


\begin{proof}
By union bound
\begin{align*}\PP\left[D'(i,i) \neq D(i,i)\right] \leq& \sum_{m \in \mathbb{Z}}\sum_{j=1}^{n}\mathbbm{1}\{(j,m) \text{ not good}\}\left(\PP[A_{ij}(m) \neq 0]+\PP[B_{ij}(m) \neq 0]\right) \\
\leq & \sum_{j=1}^{n}\left(\PP\left[|G_{ij}-\mu_{ij}| \geq \log^{2/3}n\right]+\PP\left[|H_{ij}-\mu_{ij}| \geq \log^{2/3}n\right]\right)\\
=& \exp\left(-\Omega\left(\log^{4/3}n\right)\right).\qedhere
\end{align*}
\end{proof}

We then study the concentration and expectation of the variable $D(i,k)-D'(i,i)$ separately.

\begin{lemma}[Concentration]\label{lem:Cconcentration} 
We have
$$\PP\left[D(i,k)-D'(i,i) \leq \E\left[D(i,k)-D'(i,i)\right]-10\sqrt{n\log n}\right] \leq O(n^{-3}).$$
\end{lemma}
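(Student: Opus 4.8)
\textbf{Proof proposal for \Cref{lem:Cconcentration}.}

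The plan is to apply McDiarmid's inequality (\Cref{thm:Mcdiarmid}), in the same spirit as the concentration arguments in \Cref{subsec:fakeconcentrate,subsec:trueconcentration}. After the assumed decoupling of $G_{ik}$ from $H_{ki}$, the relevant randomness consists of the $3n$ (or fewer, after accounting for the diagonal) mutually independent scalars $G_{ij}$, $H_{ij}$, $H_{kj}$ for $j\in[n]$; the quantity $D(i,k)-D'(i,i)$ is a measurable function of these. First I would fix $j_0\in[n]$ and bound the effect of resampling one of $G_{ij_0}$, $H_{ij_0}$, $H_{kj_0}$ while holding everything else fixed. Changing (say) $G_{ij_0}$ can only alter the indicators $A_{ij_0}(m)$ across $m\in\mathbb{Z}$; but for each fixed $m$ the value $\sum_j A_{ij}(m)$ changes by at most $1$, and there are at most two values of $m$ affected (the old bin containing $G_{ij_0}$ and the new one), so $\sum_{m}|\sum_j A_{ij}(m)-\sum_j B_{kj}(m)|$ changes by at most $2$, and likewise $D'(i,i)$ changes by at most $2$; hence the bounded-difference constant for each coordinate is $c_{j_0}\le 4$. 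Here one uses \Cref{prop:structureofA}-type reasoning: $A_{ij_0}(\cdot)$ is the indicator of a single bin, so it contributes a total variation of at most $2$ to the $\ell^1$ sum over bins.

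With all $O(n)$ bounded-difference constants at most $4$, McDiarmid's inequality with $\Delta = 3\log n$ gives that, with probability at least $1-2e^{-3\log n}=1-2n^{-3}$,
\[
\left|\bigl(D(i,k)-D'(i,i)\bigr)-\E\bigl[D(i,k)-D'(i,i)\bigr]\right|\le \sqrt{\tfrac{3\log n}{2}\cdot Cn\cdot 16}\le 10\sqrt{n\log n},
\]
for a suitable absolute constant bound on the number of relevant coordinates ($\le 3n$), which yields the one-sided deviation bound claimed in the lemma. I would present the computation so that the constant works out cleanly: $\sqrt{(3/2)\cdot 3n\cdot 16}=\sqrt{72n}<10\sqrt{n}$, then absorb the $\sqrt{\log n}$.

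The main obstacle — really the only point requiring care — is the bounded-difference verification for $D'(i,i)$: because $D'(i,i)$ restricts each inner sum to the ``good'' pairs $(j,m)$, one must check that resampling $G_{ij_0}$ (or $H_{ij_0}$) does not change which pairs are good. This is immediate since goodness of $(j,m)$ depends only on the deterministic mean $\mu_{ij}$ and the bin index, not on the realized values; so changing $G_{ij_0}$ only moves $G_{ij_0}$ between bins and affects at most the two relevant indicators $A_{ij_0}(m)$, exactly as in the untruncated case. Once this is observed, the bounded-difference constants are genuinely $O(1)$ and the rest is the routine McDiarmid estimate above. One should also note that coordinates $H_{kj}$ affect only the $B$-part of $D(i,k)$ and not $D'(i,i)$ at all, which only helps.
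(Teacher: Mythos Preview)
Your overall strategy is the same as the paper's (McDiarmid plus a bin-counting bounded-difference argument), but there is one genuine gap: the $3n$ scalars $G_{ij},H_{ij},H_{kj}$ are \emph{not} mutually independent. By the model definition each pair $(G_{ij},H_{ij})$ is a bivariate Gaussian with correlation $\rho_{ij}$, so you cannot apply McDiarmid with $G_{ij}$ and $H_{ij}$ as separate coordinates. Your numerical calculation $\sqrt{(3/2)\cdot 3n\cdot 16}$ implicitly relies on this false independence.

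The paper fixes this exactly by bundling: it groups the variables into $n$ independent triples $(G_{ij},H_{ij},H_{kj})$, one per $j$, and bounds the marginal effect of resampling an entire bundle. Resampling the whole $j$-th bundle can move the two balls $G_{ij}$ and $H_{kj}$ in $D(i,k)$ (contributing at most $2+2=4$ to its change) and the two balls $G_{ij}$ and $H_{ij}$ in $D'(i,i)$ (again at most $4$), so the bounded-difference constant is $8$ rather than your $4$. With $n$ bundles and $c_j=8$, McDiarmid at $\Delta=3\log n$ gives
\[
\sqrt{\tfrac{3\log n}{2}\cdot n\cdot 64}=\sqrt{96\,n\log n}<10\sqrt{n\log n},
\]
which is precisely the constant in the lemma. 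Your bin-level argument (each $A_{ij_0}(\cdot)$ is the indicator of a single bin, so moving one ball changes the $\ell^1$ sum by at most $2$) and your observation that ``goodness'' is deterministic are both correct and are exactly what the paper uses; only the grouping of coordinates needs to be repaired.
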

\begin{proof}
We group the $3n$ random variables $G_{i1},\dots,G_{in},H_{i1},\dots,H_{in},H_{k1},\dots,H_{kn}$ into $n$ bundles, the $j$-th of which consists of $G_{ij},H_{ij}$ and $G_{kj}$. Note that these $n$ bundles are mutually independent. We now consider $D(i,k)-D'(i,i)$ as a function of the $n$ bundles and analyze the marginal difference a bundle (as a whole) can make on the quantity $D(i,k)-D'(i,i)$. In the following, we use the superscript $\mathsf{new}$ to denote the new random variables obtained after changing the $j$-th bundle. Note that $A_{ij}(\cdot)$, $A_{ij}(\cdot)^{\mathsf{new}}$, $B_{ij}(\cdot)$ and $B_{ij}(\cdot)^{\mathsf{new}}$ are all indicator functions of singleton sets. We then have by triangle inequality
\begin{align*}
\left|D'(i,i)^{\mathsf{new}}-D'(i,i)\right|&\leq \sum_{r=1}^{n}\sum_{m\in\mathbb{Z}}\Big(\left|A_{ir}(m)^{\mathsf{new}}-A_{ir}(m)\right|+\left|B_{ir}(m)^{\mathsf{new}}-B_{ir}(m)\right|\Big)\\
&\leq\sum_{m\in\mathbb{Z}}\left|A_{ij}(m)^{\mathsf{new}}\right|+\sum_{m\in\mathbb{Z}}\left|A_{ij}(m)\right|+\sum_{m\in\mathbb{Z}}\left|B_{ij}(m)^{\mathsf{new}}\right|+\sum_{m\in\mathbb{Z}}\left|B_{ij}(m)\right|\\
&\leq 1+1+1+1=4.
\end{align*}
Similarly $\left|D(i,k)^{\mathsf{new}}-D(i,k)\right|\leq 4$. By the triangle inequality again, for every $j \in [n]$, the marginal difference on the quantity $D(i,k)-D'(i,i)$ caused by the $j$-th bundle is no more than $8$. By McDiarmid's inequality (\Cref{thm:Mcdiarmid}) we conclude that with probability at least $1-O(n^{-3})$,
\begin{align*}
D(i,k)-D'(i,i) &\geq \E\left[D(i,k)-D'(i,i)\right]-\sqrt{\frac{3\log n}{2}\cdot n\cdot 8^{2}}\\
&\geq \E\left[D(i,k)-D'(i,i)\right]-10\sqrt{n\log n}.\qedhere
\end{align*}
\end{proof}
For the analysis of the expectation of $D(i,k)-D'(i,i)$, we identify a class of ``most significant'' pairs among the good pairs.  
\begin{definition}\label{def:super_good}
A pair $(j,m)\in [n]\times\mathbb{Z}$ is said to be super-good if $I_{m}=\left[\frac{m}{L},\frac{m+1}{L}\right)$ satisfies 
$$\mu_{ij}-1\leq\frac{m}{L}<\frac{m+1}{L} \leq \mu_{ij}+1.$$
In particular, for any fixed $j\in [n]$ there are at least $L$ values of $m$ such that the pair $(j,m)$ is super-good. 
\end{definition}

The following lemma concerns only one or two Gaussian variables. Its proof is elementary and straightforward, so we provide only a brief sketch here.

\begin{lemma}\label{lem:CGaussian}
Assume $L\geq 10^{5}\log n$ and $\rho\geq 1-10^{-8}L^{-2}$. If $(j,m)$ is a good pair, then we have
\begin{enumerate}[label = (\arabic*)]
\item \label{lem:Gaussian1} $\PP[A_{ij}(m)=1] \leq 1/L$.
\item \label{lem:Gaussian2} $100\PP[A_{ij}(m)=1, B_{ij}(m)=0] \leq \PP[A_{ij}(m)=1]$ if $n$ is sufficiently large.
\item \label{lem:Gaussian3}
$\PP[A_{ij}(m)=1] \geq (10 L)^{-1}$ if $(j,m)$ is super-good.
\end{enumerate}
\end{lemma}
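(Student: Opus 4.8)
\textbf{Proof proposal for \Cref{lem:CGaussian}.}

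The plan is to reduce everything to elementary facts about one or two standard (resp. bivariate) Gaussians, using that a good pair $(j,m)$ forces the interval $I_m=[\tfrac mL,\tfrac{m+1}L)$ to lie within distance $\log^{2/3}n$ of the mean $\mu_{ij}$, so on $I_m$ the standard Gaussian density associated to $G_{ij}$ (which has mean $\mu_{ij}$, variance $1$) is bounded above and below by explicit constants times its peak value. For part \ref{lem:Gaussian1}, I would write $\PP[A_{ij}(m)=1]=\PP[G_{ij}\in I_m]=\int_{I_m}\varphi(x-\mu_{ij})\d x$ where $\varphi$ is the standard normal density; since $|I_m|=1/L$ and $\varphi\le\varphi(0)=1/\sqrt{2\pi}<1$ everywhere, this is at most $1/L$ with room to spare (indeed $\le 1/(\sqrt{2\pi}L)$). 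For part \ref{lem:Gaussian3}, on a super-good pair the interval $I_m$ is contained in $[\mu_{ij}-1,\mu_{ij}+1]$, so on $I_m$ the density $\varphi(x-\mu_{ij})\ge\varphi(1)=\tfrac1{\sqrt{2\pi}}e^{-1/2}$; integrating over the length-$1/L$ interval gives $\PP[A_{ij}(m)=1]\ge \tfrac{e^{-1/2}}{\sqrt{2\pi}L}\ge (10L)^{-1}$, with the ``for $n$ large'' not even needed here. (The remark that there are at least $L$ super-good values of $m$ for each $j$ is just the observation that $[\mu_{ij}-1,\mu_{ij}+1]$ has length $2$ and contains at least $L$ of the disjoint length-$1/L$ intervals $I_m$.)

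The only genuinely substantive part is \ref{lem:Gaussian2}, which says that for a good pair the ``discordant'' probability $\PP[A_{ij}(m)=1,B_{ij}(m)=0]=\PP[G_{ij}\in I_m,\,H_{ij}\notin I_m]$ is at most $\tfrac1{100}\PP[G_{ij}\in I_m]$. Here the key is the strong correlation $\rho\ge 1-10^{-8}L^{-2}$. I would use the standard representation of the bivariate normal: conditionally on $G_{ij}=x$, the variable $H_{ij}$ is normal with mean $\mu_{ij}+\rho(x-\mu_{ij})$ and variance $1-\rho^2$. Thus
\[
\PP[G_{ij}\in I_m,\,H_{ij}\notin I_m]=\int_{I_m}\varphi(x-\mu_{ij})\,\PP\!\left[H_{ij}\notin I_m\mid G_{ij}=x\right]\d x .
\]
For $x\in I_m$, the conditional mean of $H_{ij}$ differs from $x$ by $|(1-\rho)(x-\mu_{ij})|\le (1-\rho)\log^{2/3}n\le 10^{-8}L^{-2}\log^{2/3}n$, which is much smaller than $|I_m|=1/L$ once $n$ is large (this is where ``$n$ sufficiently large'' enters: we need $L=L(n)\gtrsim\log n$ to dominate, or just that $L^{-1}$ beats $L^{-2}\log^{2/3}n$). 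Moreover the conditional standard deviation is $\sqrt{1-\rho^2}\le\sqrt{2(1-\rho)}\le 2\cdot 10^{-4}L^{-1}$. So conditionally on $G_{ij}=x$ with $x$ in, say, the middle third of $I_m$, the event $H_{ij}\notin I_m$ requires a deviation of at least $\tfrac1{3L}-10^{-8}L^{-2}\log^{2/3}n \ge \tfrac1{4L}$ from the conditional mean, i.e.\ at least $\tfrac1{4L}/(2\cdot10^{-4}L^{-1})\ge 1000$ conditional standard deviations, whose probability is $\le e^{-10^5}$ — utterly negligible. For $x$ in the two outer thirds of $I_m$ I would simply bound $\PP[H_{ij}\notin I_m\mid G_{ij}=x]\le 1$, contributing at most $\int_{\text{outer thirds}}\varphi(x-\mu_{ij})\d x\le \tfrac23\cdot\sup_{I_m}\varphi\cdot\tfrac1L$, while the full integral $\PP[G_{ij}\in I_m]\ge \tfrac1L\inf_{I_m}\varphi$; since on a good pair $\sup_{I_m}\varphi/\inf_{I_m}\varphi$ is bounded (both are $\Theta(1)$ after dividing out the Gaussian tail, because $I_m$ has length $1/L\le 1$ and lies within $\log^{2/3}n$ of the mean)…

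\emph{Remark on the obstacle.} The last point is exactly where one must be a little careful: on a good pair the ratio $\sup_{I_m}\varphi(\cdot-\mu_{ij})/\inf_{I_m}\varphi(\cdot-\mu_{ij})$ is \emph{not} an absolute constant — it is $\exp(\Theta(\log^{2/3}n/L))$ if $I_m$ sits near the far end of the good window — but since $L\gtrsim\log n$ this exponent is $O(\log^{-1/3}n)=o(1)$, so the ratio is $1+o(1)$, which is more than enough to get the factor $100$. Thus I expect the main (mild) obstacle is bookkeeping the interplay of the three small scales $1/L$, $1-\rho\lesssim L^{-2}$, and $\log^{2/3}n$, and making the ``outer thirds vs.\ middle third'' split clean; conceptually there is nothing deep. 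I would organize the write-up as: (i) the density bounds on a good interval; (ii) parts \ref{lem:Gaussian1} and \ref{lem:Gaussian3} as one-line integrals; (iii) the conditional-Gaussian computation for \ref{lem:Gaussian2}, split into the middle third (Gaussian tail bound) and outer thirds (trivial bound plus the near-constant density ratio).
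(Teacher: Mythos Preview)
Your treatment of parts \ref{lem:Gaussian1} and \ref{lem:Gaussian3} is fine and matches the paper. For part \ref{lem:Gaussian2} your overall architecture (conditional-Gaussian representation, split $I_m$ into a ``safe middle'' where the tail bound kills the discordance probability and a ``boundary layer'' where you bound trivially by $1$) is exactly the paper's, but your specific split into \emph{thirds} does not work quantitatively. With the outer two thirds bounded trivially, you get
\[
\frac{\PP[G_{ij}\in I_m,\,H_{ij}\notin I_m]}{\PP[G_{ij}\in I_m]}
\;\le\;
\frac{\int_{\text{outer thirds}}\varphi(x-\mu_{ij})\,\d x}{\int_{I_m}\varphi(x-\mu_{ij})\,\d x}
+\text{(negligible)}
\;\le\;
\frac{2}{3}\cdot\frac{\sup_{I_m}\varphi}{\inf_{I_m}\varphi}+o(1),
\]
and as you yourself observe the density ratio is $1+o(1)$, so the right-hand side is $\approx 2/3$, not $\le 1/100$. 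The density-ratio issue you flagged in your ``Remark on the obstacle'' is not the real obstacle; the real obstacle is that two thirds of the interval is far too much boundary.

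The fix is immediate and is precisely what the paper does: replace ``middle third'' by the sub-interval
\[
J_m=\Big[\tfrac{m}{L}+\tfrac{1}{800L},\;\tfrac{m+1}{L}-\tfrac{1}{800L}\Big],
\]
so that $|I_m\setminus J_m|=\tfrac{1}{400L}=\tfrac{1}{400}|I_m|$ and hence $\PP[G_{ij}\in I_m\setminus J_m]/\PP[G_{ij}\in I_m]\le (1+o(1))/400\le 1/200$. On $J_m$ your Gaussian tail computation goes through unchanged (indeed with more room: one needs only a deviation of order $1/(800L)$, still $\gg\sqrt{1-\rho^2}\le 2\cdot10^{-4}/L$), giving another $\le 1/200$. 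Summing yields $\le 1/100$. So your plan is correct once ``thirds'' is replaced by a boundary layer of relative width $O(1/400)$ or smaller.
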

\begin{proof}[Proof Sketch]
Statements (1) and (3) are trivial by the Gaussian density function and \Cref{def:super_good}. We then sketch a proof of statement (2). Define a sub-interval  
$$J_{m}=\left[\frac{m}{L}+\frac{1}{800L},\frac{m+1}{L}-\frac{1}{800L}\right]\subset I_{m}.$$
Write the bivariate normal vector $(G_{ij},H_{ij})$ as $\left(X,\rho_{ij}X+\sigma_{ij}Y\right)$, where $X$ and $Y$ are independent Gaussian variables and $\sigma_{ij}=\sqrt{1-\rho_{ij}^{2}}$. So
\begin{equation}\label{eq:C_lemma9_break}
\frac{\PP[A_{ij}(m)=1,B_{ij}(m)=0]}{\PP[A_{ij}(m)=1]}\leq \frac{\PP[X \in I_{m}\setminus J_{m}]}{\PP[X\in I_{m}]}+\frac{\PP[X\in J_{m},\,\rho_{ij}X+\sigma_{ij}Y\not\in I_{m}]}{\PP[X\in J_{m}]}.
\end{equation}
Then it is straightforward to compute from the Gaussian density function that
\begin{equation}\label{eq:C_lemma9_1}
\frac{\PP[X \in I_{m}\setminus J_{m}]}{\PP[X\in I_{m}]}\leq \frac{1}{200},
\end{equation}
and that when $n$ is sufficiently large
\begin{equation}\label{eq:C_lemma9_2}
\PP\left[\rho_{ij}X+\sigma_{ij}Y\not\in I_{m}\middle| X\in J_{m}\right]\leq \PP\left[|\sigma_{ij}Y|\geq\frac{1}{10^{3} L}\right]\leq \frac{1}{200}.
\end{equation}
Combining \eqref{eq:C_lemma9_break}, \eqref{eq:C_lemma9_1} and \eqref{eq:C_lemma9_2} then yields statement (2).
\end{proof}
\begin{lemma}[Expectation]\label{lem:Cexpectation}
Assume $L\geq 10^{5}\log n$ and $\rho\geq 1-10^{-8}L^{-2}$. If $n$ is sufficiently large, we have 
$$
\E\left[D(i,k)-D'(i,i)\right] \geq 12\sqrt{n\log n}.$$
\end{lemma}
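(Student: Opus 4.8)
The plan is to bound $\E[D(i,k)]$ from below and $\E[D'(i,i)]$ from above through a common, bin-by-bin estimate, establishing the two inequalities $\E[D(i,k)]\ge 2\,\E[D'(i,i)]$ and $\E[D(i,k)]\ge 24\sqrt{n\log n}$; the lemma then follows from $\E[D(i,k)-D'(i,i)]\ge\E[D(i,k)]-\E[D'(i,i)]\ge\tfrac12\E[D(i,k)]\ge 12\sqrt{n\log n}$. For $m\in\mathbb Z$ write $p_{jm}=\PP[G_{ij}\in I_m]$ and $S_m=\sum_{j=1}^{n}p_{jm}$, and for good pairs $(j,m)$ write $q_{jm}=\PP[A_{ij}(m)=1,\,B_{ij}(m)=0]$ and $Q_m=\sum_{j:(j,m)\text{ good}}q_{jm}$. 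Since a Gaussian density is at most $1/\sqrt{2\pi}<1$, every $p_{jm}\le 1/L<\tfrac12$, hence $S_m\le n/L$; and by part~(2) of \Cref{lem:CGaussian} one has $q_{jm}\le p_{jm}/100$ for good $(j,m)$, hence $Q_m\le S_m/100$.

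First I would lower bound the $m$-th summand of $D(i,k)$. Having reduced to the situation where $G_{ik}$ and $H_{ki}$ are independent, the family $(A_{ij}(m))_{j\in[n]}$ is independent of $(B_{kj}(m))_{j\in[n]}$; so conditioning on the value of $\sum_j B_{kj}(m)$ and applying \Cref{lem:Bern_to_sym} (using $\min\{p_{jm},1-p_{jm}\}=p_{jm}$) followed by \Cref{lem:control_f} gives
\[
\E\Bigl|\sum_{j=1}^{n}A_{ij}(m)-\sum_{j=1}^{n}B_{kj}(m)\Bigr|\ \ge\ \inf_{x\in\mathbb R}\E\Bigl|\sum_{j=1}^{n}A_{ij}(m)-x\Bigr|\ \ge\ \tfrac12\,f_n(p_{1m},\dots,p_{nm})\ \ge\ \frac{S_m}{\sqrt{6S_m+1}}.
\]
Now restrict attention to those bins $m$ that form a super-good pair with at least one index $j$. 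By \Cref{def:super_good}, for each fixed $j$ the super-good bins tile an interval containing $[\mu_{ij}-1+\tfrac1L,\mu_{ij}+1-\tfrac1L)$, whence $\sum_{m\text{ super-good}}S_m\ge\sum_{j}\PP[\,|\mathcal{N}(0,1)|\le 1-\tfrac1L\,]\ge n/2$ for $L\ge5$. Using $S_m\le n/L$ (so $6S_m+1\le 7n/L$ once $n\ge L$) we obtain
\[
\E[D(i,k)]\ \ge\ \sum_{m\text{ super-good}}\frac{S_m}{\sqrt{6S_m+1}}\ \ge\ \frac{1}{\sqrt{7n/L}}\sum_{m\text{ super-good}}S_m\ \ge\ \frac{\sqrt{nL}}{2\sqrt7}\ \ge\ 24\sqrt{n\log n},
\]
the last inequality using $L\ge 10^5\log n$ with ample room.

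Next I would upper bound the $m$-th summand of $D'(i,i)$ and compare. The variables $\bigl(A_{ij}(m)-B_{ij}(m)\bigr)_{j:(j,m)\text{ good}}$ are independent symmetric $\{-1,0,1\}$ variables with $\PP[\cdot=1]=q_{jm}$, so the triangle inequality and the Cauchy--Schwarz inequality respectively give $\E\bigl|\sum_{j:(j,m)\text{ good}}(A_{ij}(m)-B_{ij}(m))\bigr|\le 2Q_m$ and $\le\sqrt{2Q_m}$; hence this summand is at most $\min\{2Q_m,\sqrt{2Q_m}\}$. It remains to verify the pointwise bound $\tfrac{S_m}{\sqrt{6S_m+1}}\ge 2\min\{2Q_m,\sqrt{2Q_m}\}$ for every good bin $m$, using only $S_m\ge 100\,Q_m$: since the left side is increasing in $S_m$ it suffices to check it at $S_m=100Q_m$, and this reduces to a one-variable estimate, split according to whether $Q_m\le\tfrac12$ (where $\min\{\cdot\}=2Q_m$ and $600Q_m+1\le 625$) or $Q_m>\tfrac12$ (where $\min\{\cdot\}=\sqrt{2Q_m}$, $S_m>50$, and squaring reduces to $5200\,Q_m\ge 8$). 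Summing over all good bins $m$ and using $\E[D(i,k)]\ge\sum_{m\text{ good}}\tfrac{S_m}{\sqrt{6S_m+1}}$, we get $\E[D(i,k)]\ge 2\,\E[D'(i,i)]$, which completes the proof.

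The step I expect to be the main obstacle is exactly this pointwise comparison, together with the realization that it cannot be circumvented: a bound on $\E[D'(i,i)]$ in isolation is useless, since straddling pairs can push $\E[D'(i,i)]$ up to order $n$, which beats the order-$\sqrt{n\log n}$ lower bound for $\E[D(i,k)]$ in the regime where the means $\mu_{ij}$ are widely spread. What rescues the argument is that in \emph{every} bin the $\sqrt{\mathrm{mass}}$-type fluctuation driving $D(i,k)$ dominates the one driving $D'(i,i)$, thanks to the factor-$100$ gap in part~(2) of \Cref{lem:CGaussian}; making this robust forces one to switch, bin by bin, between the triangle-inequality bound $2Q_m$ (sharp when the bin is sparsely occupied) and the $L^2$ bound $\sqrt{2Q_m}$ (sharp when it is crowded), according to the bin's occupancy $Q_m$.
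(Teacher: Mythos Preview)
Your proof is correct, and it arrives at the same conclusion by a genuinely different route from the paper's.

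The paper keeps the entire argument inside the function $f_n$. It writes the $m$-th summand of $\E[D'(i,i)]$ exactly as $f_n(q_{1m},\dots,q_{nm})$, and then compares this to $f_n(p_{1m},\dots,p_{nm})$ using two structural lemmas about $f_n$: the scaling inequality $f_n(4p_1,\dots,4p_n)\ge 2f_n(p_1,\dots,p_n)$ (\Cref{lem:compare_f}, applied twice) together with monotonicity (\Cref{lem:f_monotone}) and $16q_{jm}\le p_{jm}$. This gives $\E[D'(i,i)]\le\tfrac14\sum_m f_n(p_{\cdot m})$ and hence $\E[D(i,k)-D'(i,i)]\ge\tfrac14\sum_m f_n(p_{\cdot m})$, after which the super-good count finishes. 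The argument is uniform across bins and never needs a case split, but it does rest on \Cref{lem:compare_f}, whose proof in turn needs the mode-at-zero fact \Cref{lem:control_h} for sums of symmetric variables.

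You instead bypass \Cref{lem:f_monotone} and \Cref{lem:compare_f} entirely. For the upper bound on the $m$-th summand of $D'(i,i)$ you use only the triangle inequality ($\le 2Q_m$) and Cauchy--Schwarz ($\le\sqrt{2Q_m}$), and then verify the elementary pointwise inequality $\tfrac{S_m}{\sqrt{6S_m+1}}\ge 2\min\{2Q_m,\sqrt{2Q_m}\}$ from $S_m\ge 100Q_m$ by a two-case calculation. Your lower bound $\E[D(i,k)]\ge 24\sqrt{n\log n}$ is also slightly different: you use a total-mass argument $\sum_{m\text{ super-good}}S_m\ge n/2$ rather than the per-entry bound $p_{jm}\ge (10L)^{-1}$ of \Cref{lem:CGaussian}\ref{lem:Gaussian3}. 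Both work. What your approach buys is that it is more elementary (no $f_n$ comparison lemmas); what the paper's approach buys is that the bin-by-bin estimate stays in a single clean form $f_n(q_{\cdot m})\le\tfrac14 f_n(p_{\cdot m})$, with no threshold on $Q_m$ to track.

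Your closing remark about why the pointwise comparison cannot be avoided is also on point: with only $q_{jm}\le p_{jm}/100$, the global bound $\E[D'(i,i)]\le 2\sum_m Q_m$ can be of order $n$ when the means $\mu_{ij}$ are spread out, so one really must pair each bin of $D'$ against the corresponding bin of $D(i,k)$, exactly as both proofs do.
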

\begin{proof}
For each good pair $(j,m)$, set $p_{jm}=\PP[A_{ij}(m)=1]$ and $q_{jm}=\PP[A_{ij}(m)=1, B_{ij}(m)=0]$. By \Cref{lem:CGaussian}\ref{lem:Gaussian2} we have $100q_{jm}\leq p_{jm}\leq 1/2 $. For other pairs $(j,m)$, set $p_{jm}=q_{jm}=0$.

Note that for each $m\in\mathbb{Z}$, the $2n$ variables $A_{i1}(m),\dots,A_{in}(m),B_{k1}(m),\dots,B_{kn}(m)$ are mutually independent. By \eqref{eq:Cdef_of_D} and then \Cref{lem:Bern_to_sym} we have
\begin{equation}\label{eq:Cexpec_1}
\E[D(i,k)]\geq \sum_{m \in \mathbb{Z}} \inf_{x \in \mathbb{R}}\E\left|\sum_{j:(j,m)\text{ good pair}}A_{ij}(m)-x\right|\geq \frac{1}{2}\sum_{m\in\mathbb{Z}}f_{n}(p_{1m},\dots,p_{nm}),
\end{equation}
where we used that $\min\{p_{jm},1-p_{jm}\}=p_{jm}$ for every $(j,m)$. 

For each $m\in\mathbb{Z}$, the variables $A_{i1}(m)-B_{i1}(m),\dots, A_{in}(m)-B_{in}(m)$ are mutually independent symmetric $\{-1,0,1\}$ variables. Thus, applying \eqref{eq:Cdef_of_Dprime}, \Cref{lem:compare_f}, \Cref{lem:f_monotone} and \Cref{lem:CGaussian}\ref{lem:Gaussian2} in order, we obtain the following series of transitions (we recall that $f_n$ is defined in \Cref{def:Bfunction}):
\begin{equation}\label{eq:Cexpec_2}
\E[D'(i,i)]=\sum_{m\in\mathbb{Z}}f_{n}(q_{1m},\dots, q_{nm})\leq \frac{1}{4}\sum_{m\in\mathbb{Z}}f_{n}(16q_{1m},\dots,16q_{nm})\leq \frac{1}{4}\sum_{m\in\mathbb{Z}}f_{n}(p_{1m},\dots,p_{nm}).
\end{equation}
Due to \eqref{eq:Cexpec_1} and \eqref{eq:Cexpec_2}, it suffices to prove that 
$$\sum_{m\in\mathbb{Z}}f_{n}(p_{1m},\cdots,p_{nm}) \geq 50\sqrt{n \log n}.$$
For every $m\in\mathbb{Z}$ define $E_{m}:=\{j\in [n]: (j,m)\text{ is a super-good pair}\}$. 
Now we have
\begin{align*}
&\quad\sum_{m\in\mathbb{Z}}f_n(p_{1m},\cdots,p_{nm})\\
&\geq \sum_{m\in \mathbb{Z}}f_n\left(\frac{1}{10L}\mathbbm{1}\{1\in E_m\},\dots,\frac{1}{10L}\mathbbm{1}\{n\in E_m\}\right)&(\text{by \Cref{lem:f_monotone} and \Cref{lem:CGaussian}\ref{lem:Gaussian3}})\\
&\geq \sum_{m\in \mathbb{Z}}\frac{|E_{m}|}{5\sqrt{nL+L^{2}}}=\sum_{j=1}^{n}\sum_{m\in\mathbb{Z}}\frac{\mathbbm{1}\{j\in E_{m}\}}{5\sqrt{nL+L^{2}}}&(\text{by \Cref{lem:control_f}})\\
&\geq \sum_{j=1}^{n}\frac{L}{5\sqrt{nL+L^{2}}}=\frac{n}{5}\sqrt{\frac{L}{n+L}}\geq 50\sqrt{n\log n}&(\text{by \Cref{def:super_good}}),
\end{align*}
where we used $L\geq 10^{5}\log n$ and that $n$ is sufficiently large in the last estimate.
\end{proof}
The proof of \eqref{eq:matrixmatching} now follows easily.
\begin{proof}[Proof of \eqref{eq:matrixmatching}]
Combining \Cref{lem:Cconcentration,lem:Cexpectation} we have
$$\PP\left[D(i,k)-D'(i,i)\leq 2\sqrt{n\log n}\right] \leq O(n^{-3}).$$
From \Cref{lem:Ctrunc} we then have
\[\PP\left[D(i,k)-D(i,i)\leq 2\sqrt{n\log n}\right]\leq O(n^{-3})+\exp\left(-\Omega\left(\log^{4/3}n\right)\right)=O(n^{-3}).\qedhere\]
\end{proof}

\section{Numerical Experiments}\label{sec:numerical}
In this appendix, we conduct an empirical evaluation of our algorithm. The codes of the experiments are available at \href{https://github.com/yuanzheng-wang/code/tree/main/inhomogeneous}{\textcolor{blue}{https://github.com/yuanzheng-wang/code/tree/main/inhomogeneous}}. While our theoretical analysis used a specific distance function as in \eqref{eq:defofdistance}, we include a broader class of distance functions in this empirical study.

We first consider the 1-Wasserstein distance, which is the same distance used in the empirical experiments of \cite[Section 5]{DMWX21}. Recall that $X_{j}$ (respectively $Y_{j}$) denotes the degree of the vertex $j$ in $G$ (respectively $H$). Additionally, we introduce $N_{G}(i)$ and $N_{H}(i)$ as the sets of neighbors of the vertex $i$ in $G$ and $H$, respectively. We define the distance
$$D^{\mathsf{ref}}(i,k):=\int_{\mathbb{R}}\left|\frac{1}{|N_{G}(i)|}\sum_{j\in N_{G}(i)}\mathbbm{1}\{X_{j}\leq t\}-\frac{1}{|N_{H}(k)|}\sum_{j\in N_{H}(k)}\mathbbm{1}\{Y_{j}\leq t\}\right|\d t.$$
This is the $L^{1}$-distance between the empirical cumulative distribution functions of the sample sets $\{X_{j}\}_{j\in N_{G}(i)}$ and $\{Y_{j}\}_{j\in N_{H}(k)}$, and is the distance employed in the numerical experiments of \cite{DMWX21}. We also define the following analogous distance function where we take square roots of degrees:
$$D^{\mathsf{cdf}}(i,k):=\int_{\mathbb{R}}\left|\frac{1}{|N_{G}(i)|}\sum_{j\in N_{G}(i)}\mathbbm{1}\left\{\sqrt{X_{j}}\leq t\right\}-\frac{1}{|N_{H}(k)|}\sum_{j\in N_{H}(k)}\mathbbm{1}\left\{\sqrt{Y_{j}}\leq t\right\}\right|\d t.$$

We next consider balls-into-bins-type distances featured in \Cref{alg:graphmatching}. Recall from \Cref{subsec:intro_comparison} that we have used ``cyclic intervals'' as bins to facilitate the theoretical analysis. In the numerical experiments we simply use the usual intervals as bins. Fix a real number $r>0$ and let $I_{t}=[t-r,t+r]$ for all $t\in\mathbb{R}$. Define
$$D^{\mathsf{bin}(r)}(i,k):=\int_{\mathbb{R}}\left|\sum_{j\in N_{G}(i)}\mathbbm{1}\left\{\sqrt{X_{j}}\in I_{t}\right\}-\sum_{j\in N_{H}(k)}\mathbbm{1}\left\{\sqrt{Y_{j}}\in I_{t}\right\}\right|\d t.$$

\subsection{Numerical study on simulated data}\label{subsec:graph models experiments}
In this subsection, we conduct simulations on three examples of \Cref{def:graphensemble}, where all graphs have 1000 vertices. We will compare the performances of the above variations of the Degree-Prorile algorithm.

\Cref{fig:ER-graph} demonstrates the performance of \Cref{alg:graphmatching} (under different distance functions) in the correlated Erd\H{o}s-R\'{e}nyi model $\mathbb{G}(n,p;1-\delta)$, which is a special case of the model in \Cref{def:graphensemble} with $p_{ij}=p$ and $\delta_{ij}=\delta$ for all $i,j\in [n]$ such that $i\neq j$.
\begin{figure}
\begin{center}
\includegraphics[scale=0.95]{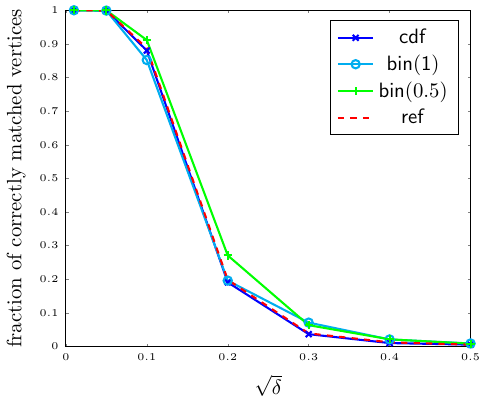}
    \caption{A simulation on correlated Erd\H{o}s-R\'{e}nyi model $\mathbb{G}(n,p;1-\delta)$ with $n=1000$, $p=0.05$ and varying $\sqrt{\delta}$. For each value of $\sqrt{\delta}$, the fraction of correctly matched vertices shown is averaged over 10 independent runs.}
\label{fig:ER-graph}
\end{center}
\end{figure}
Note that in \Cref{fig:ER-graph}, the line produced by the distance function $D^{\mathsf{ref}}$ almost coincides with that produced by $D^{\mathsf{cdf}}$. This corresponds to the fact that in Erd\H{o}s-R\'{e}nyi graphs the degree of each vertex has the same variance, in which case the ``partial standardization'' trick (i.e. taking square roots, as discussed in \Cref{subsec:intro_comparison}) seems unnecessary. 

We then proceed to two inhomogeneous random graph models, which are special cases of \Cref{def:graphensemble}. In the notation of \Cref{def:graphensemble}, the first model has parameters

\begin{equation}\label{eq:model1}
p_{ij}=\frac{1}{2}|i-j|^{-1/2}\text{ and } \delta_{ij}=\delta,\quad\forall i,j \in [n] \text{ such that } i\neq j.
\end{equation}
for varying $\delta$. In \Cref{fig:model1}, we demonstrate the performance of \Cref{alg:graphmatching} (under the same four distance functions) for the above model. The performance of the distance function $D^{\mathsf{ref}}$ remains nearly identical to that of $D^{\mathsf{cdf}}$, but is slightly worse than those of the balls-into-bins-type distances with suitably chosen bin sizes.

\begin{figure}
\begin{center}
\includegraphics[scale=0.95]{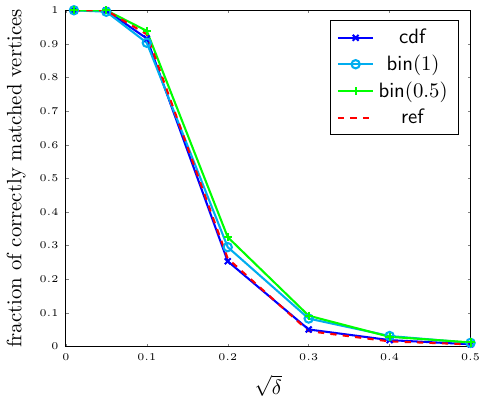}
    \caption{A simulation on correlated inhomogeneous random graph pairs $(G,H)$ following the distribution of \Cref{def:graphensemble} with parameters given in \eqref{eq:model1}. For each value of $\sqrt{\delta}$, the fraction of correctly matched vertices shown is averaged over 10 independent runs.}
\label{fig:model1}
\end{center}
\end{figure}

Finally, we consider the Chung-Lu model introduced in \cite{CL02}, a canonical random graph model where we may specify a sequence of expected degrees. To simulate a scale-free network, we specify the degree sequence by independently sampling its elements from a power law distribution.

We now give the precise definition of our model. Let $\gamma \in (2,3)$ and $b>0$ be two fixed parameters. Let $w_1,\dots,w_n$ be independent random variables, each of which follows a power law distribution in $[b,\infty)$ with density proportional to $x^{-\gamma}$. Let $\delta$ be a positive constant in $(0,1)$. Conditioned on the $n$ random variables $w_1,\dots,w_n$, the pair $(G,H)$ follows the distribution in \Cref{def:graphensemble} with parameters
$$
p_{ij}=\min\left\{\frac{w_{i}w_{j}}{\sum_{k=1}^{n}w_{k}},1\right\}\text{ and }\delta_{ij}=\delta,\quad \forall i,j \in [n] \text{ such that } i\neq j.
$$
In our simulations, $\delta$ is varying, and the parameters $\gamma$ and $b$ are chosen as
\begin{equation}\label{eq:model2}
\gamma=2.5 \text{ and } b=10^{4/3},
\end{equation}
which ensures an average degree of approximately 50.

The results of running \Cref{alg:graphmatching} on this model are presented in \Cref{fig:model2}. In comparison to the previous two models, the difference in performance between the distances is noticeably larger, potentially due to the greater variation in vertex degrees within the current model. Notably, the distances based on balls-into-bins-type signatures now significantly outperforms the CDF-based distances. To provide a more comprehensive comparison, we also include the distance based on discrete bins, as utilized in the theoretical analysis of \cite{DMWX21}. This distance is defined as follows. For fixed real number $r>0$, define
$$D^{\mathsf{disc}(r)}(i,k):=\sum_{m\in\mathbb{Z}}\left|\sum_{j\in N_{G}(i)}\mathbbm{1}\left\{2mr\leq \sqrt{X_{j}}<2(m+1)r\right\}-\sum_{j\in N_{H}(k)}\mathbbm{1}\left\{2mr\leq \sqrt{Y_{j}}<2(m+1)r\right\}\right|.$$
This means we choose the bins to be a partition of the real line into intervals of length $2r$. In the case $r=0.5$, we observe from \Cref{fig:model2} that the algorithm with continuous bins (i.e., using $D^{\mathsf{bin}(r)}$) performs significantly better than this discrete analogue (i.e., using $D^{\mathsf{disc}(r)}$).

\begin{figure}[H]
\begin{center}
\includegraphics[scale=0.9]{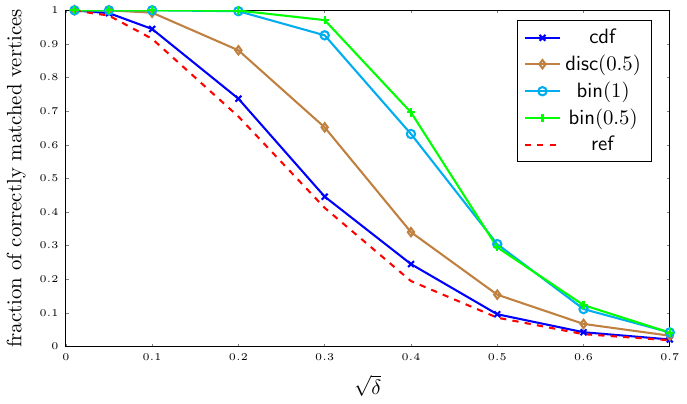}
    \caption{A simulation on the power-law-degree-adapted Chung-Lu model with parameters given in \eqref{eq:model2}. For each value of $\sqrt{\delta}$, the fraction of correctly matched vertices shown is averaged over 10 independent runs.}
\label{fig:model2}
\end{center}
\end{figure}

\subsection{Numerical study on real data}
\label{subsec:real graphs} 
In this subsection, we conduct numerical study on real data, and evaluate the performances of the degree-profile-based algorithms as described in \Cref{subsec:graph models experiments}. Another quadratic programming algorithm based on convex relaxation (QP) \cite{ABK15,FS15} is also experimented for comparison. In QP, we first minimize the Frobenius norm $||GX-XH||_{F}$, where $G,H$ are adjacency matrices of graphs and $X$ ranges over the class of doubly-stochastic matrices (i.e., the convex hull of all the permutation matrices). After solving this quadratic programming problem, we round the solution $\hat{X}$ to a permutation matrix by maximizing $||\Pi-\hat{X}||_{F}$ for $\Pi\in\mathfrak{S}_{n}$, which is essentially a linear assignment problem and can be solved via the Hungarian method.

It turns out that QP is much more computationally expensive then the degree-profile-based algorithms. In our experiment for QP, we adopt the fast solver developed in \cite{DMWX21}, which was based on alternating direction method of multipliers (ADMM) algorithm in \cite{BCP+11}. Still, we will see that QP is much more time-consuming. 

Inspired by \cite{KHG15, DMWX21}, we first consider the Slashdot network, which consists of connections among users of Slashdot, a technology-related news website. Obtained in February 2009, this network is accessible through the Stanford Large Network Database \cite{SNA09}.

As in \cite{DMWX21}, we obtain the pair of correlated graphs via subsampling. The parent graph $A$ is constructed by restricting to the subnetwork induced by users with ID smaller than 750, and connect users $i$ and $j$ if and only if there is a directed link between them. Then $A$ is a deterministic graph with 750 vertices and 3419 edges.\footnote{The edge statistics here is slightly different from that of 3338 edges in \cite{DMWX21}, but we believe this is immaterial.} The two correlated graphs $G$ and $H$ to be matched are then obtained by independently subsampling the edges of $A$ with probability $s$, with vertices of $H$ being permuted uniformly at random. The edge subsampling probability $s$ ranges from $0.6$ to $1$, and we do 10 independent runs for each $s$.

\begin{figure}[h]
\begin{center}
\includegraphics[scale = 0.5]{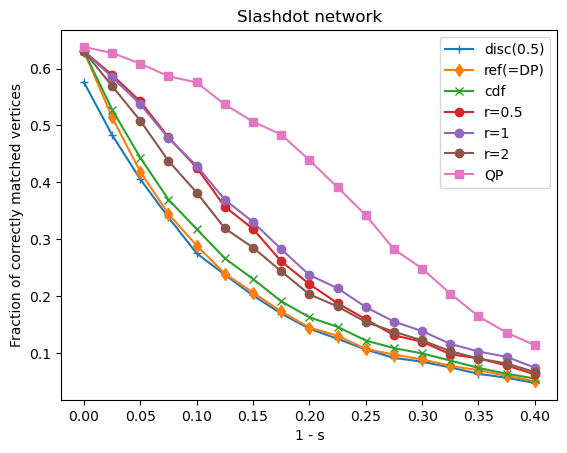}
    \caption{A simulation over pairs of subsampled graphs from the Slashdot network. For each value of $s$, the fraction of correctly matched vertices shown is averaged over 10 independent runs.}
\label{fig:Slashdot Network}
\end{center}
\end{figure}

The results for the experiments are presented in \Cref{fig:Slashdot Network}. Note that even in the noiseless case $s = 1$, the accuracy rates of all the algorithms are only no more than around 0.63; this is due to the fact that our parent graph $A$ has many groups of low-degree (say, 1 or 2) vertices attached to the same high-degree vertices, creating symmetries.

With the graph $A$ being largely inhomogeneous, we can see that in the noisy regime, our algorithm (the three dotted lines) is still the best among all the variations of degree-profile algorithms (including ref, which we recall is the distance employed in the numerical experiments of \cite{DMWX21}), aligning with our theoretical analysis. It is obvious that QP outperforms all the degree-profile algorithms in terms of recovery fractions. However, the degree-profile algorithms have the advantage on the running time: to complete the independent 10 runs, disc, ref(=DP) and cdf take about 10 minutes, our algorithm takes around 21 minutes, while QP costs five and a half hours.

We next consider the networks of Autonomous Systems, drawing inspirations from \cite{FMWX22a}. The dataset is from the University of Oregon Route Views Project \cite{Oregon}, and is available at the Stanford Large Network Database \cite{SNA09,LKF05}.

The data contains networks of Autonomous Systems observed on nine days between March 31, 2001 and May 26, 2001, one week apart for each two neighbouring days. Edges and (a small fraction of) vertices were added and deleted over time; the number of vertices varied between 10,670 and 11,174, while the number of vertices varied between 22,002 and 23,409. Thus all the nine networks can be seen as noisy versions of the first network. We then consider the subgraphs of the networks induced by 10,000 vertices that are present on all the nine days with the largest degrees. We apply independent uniform random permutations on them, and then match them to the first subgraph via the algorithms.

We still expect QP to be the best among all the algorithms in terms of recovery fractions. However, we did not complete running QP because it is too computationally costly (even with the faster solver version) on large graphs. On a Dell Inspiron 7590 laptop, 24 hours is already not enough for QP to match any two of the subgraphs above. For comparison, it takes about one hour to run disc, ref(=DP), and cdf, and two hours to run our algorithm.

The results for all algorithms except QP are displayed in \Cref{subfig:oregon whole}. Although our algorithm (the three dotted lines) still outperforms other variations of DP, all the correction rates dropped significantly in the noisy regime. In addition, even in the noiseless situation, that is, matching the first subgraph to itself, the fraction of correctly matched vertices are yet less than 0.4. These results are due to the same reason as in the Slashdot Network, i.e.~large non-trivial symmetries generated by the numerous small-degree vertices connected to the same high-degree vertices.

We thus turn to the high-degree vertices of the networks of Autonomous Systems as in \cite{FMWX22a}, and consider the top 1,000 vertices with the highest degrees in the first subgraph. With the same matchings on the subgraphs of 10,000 vertices in hand, we evaluate the effectiveness of algorithms by measuring their correctness only on those high-degree vertices. The results are shown in \Cref{subfig:oregon high-degree}. Our algorithm produces substantially better results than others.

\begin{figure}
\centering
\begin{subfigure}{.45\textwidth}
  \centering
  \includegraphics[scale=0.4]{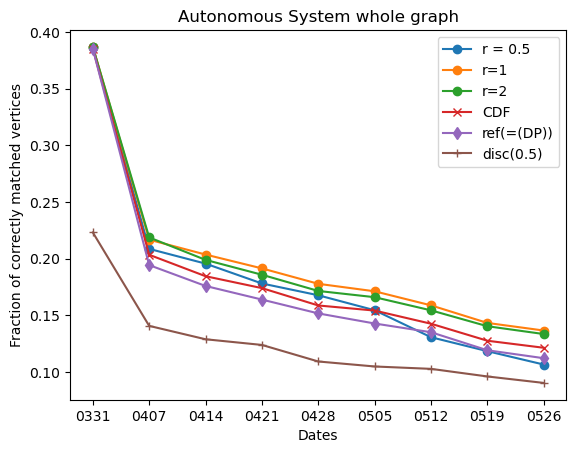}
  \caption{Fraction of correctly matched vertices in whole graph of 10,000 vertices.}
  \label{subfig:oregon whole}
\end{subfigure}%
\quad
\begin{subfigure}{.45\textwidth}
  \centering
  \includegraphics[scale=0.4]{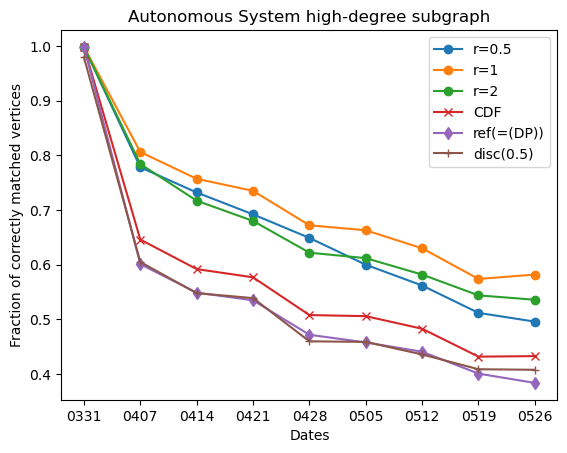}
  \caption{Fraction of correctly matched vertices in high-degree subgraph of 1,000 vertices.}
  \label{subfig:oregon high-degree}
\end{subfigure}
\vspace{0.5cm}
\caption{A simulation over matching networks of Autonomous Systems on nine days to that on the first day.}
\vspace{0.5cm}
\label{fig:oregon}
\end{figure}
\end{document}